\definecolor{grey}{gray}{.4}
\newcommand{\gap}{\hskip 0.4em plus 0.1em minus 0.2em}
\def\l@figure{\@dottedtocline{1}{0 em}{7.5 em}}
\theoremstyle{plain}
\newtheorem{theorem}{Theorem}[chapter] 
\newtheorem{lemma}[theorem]{Lemma}
\theoremstyle{definition}
\newtheorem{definition}[theorem]{Definition}
\renewcommand{\vec}[1]{\mathbf{#1}}
\newcommand{\vc}[1]{\mathbf{#1}}
\newcommand{\mat}[1]{\mathsf{#1}}
\newcommand{\mt}[1]{\mathsf{#1}}
\renewcommand{\bar}{\overline}
\newcommand{\one}{\mathtt{1}}
\newcommand{\zero}{\mathtt{0}}
\newcommand{\FF}{\{\zero,\one\}}
\newcommand{\zo}{\{\zero,\one\}}
\newcommand{\q}{\mathtt{q}}
\newcommand{\I}{\mathbb{I}}
\newcommand{\HH}{\mathbb{H}}
\newcommand{\D}{\mathbb{D}}
\newcommand{\F}{\mathbb{F}}
\newcommand{\bP}{\mathbb{P}}
\newcommand{\PP}{\mathbb{P}}
\newcommand{\bN}{\mathbb{N}}
\newcommand{\bR}{\mathbb{R}}
\newcommand{\RR}{\mathbb{R}}
\newcommand{\EE}{\mathbb{E}}
\newcommand{\bZ}{\mathbb{Z}}
\newcommand{\CC}{\mathbb{C}}
\newcommand{\C}{\mathbb{C}}
\newcommand{\K}{\mathcal K}
\renewcommand{\L}{\mathcal L}
\newcommand{\J}{\mathcal J}
\newcommand{\Khat}{\hat \K}
\newcommand{\zeroone}{ \{\mathtt 0, \mathtt 1 \} }
\newcommand{\quest}{\text{\texttt{?}}}
\newcommand{\e}{\mathrm{e}}
\newcommand{\given}{\mid}
\newcommand{\Prob}{\mathbb{P}}
\DeclareMathOperator{\probab}{\bP}
\newcommand{\prob}[1]{\probab{\left(#1\right)}}
\DeclareMathOperator{\var}{Var}
\DeclareMathOperator{\Cov}{Cov}
\DeclareMathOperator{\CN}{\mathbb{C}N}
\newcommand{\N}{\mathcal{N}}
\newcommand{\link}{\!\!\to\!\!}
\renewcommand{\hat}{\widehat}
\renewcommand{\tilde}{\widetilde}
\newcommand{\ex}{\mathrm{e}}
\newcommand{\ud}{\mathrm{d}}
\newcommand{\ii}{\mathrm{i}}
\newcommand{\defn}[1]{\emph{#1}}
\newcommand{\sn}[1]{\textsc{#1}}
\newcommand{\jn}[1]{\emph{#1}}
\renewcommand{\tt}[1]{\emph{#1}}
\newcommand{\vol}[1]{\textbf{#1}}
\newcommand{\arx}[1]{\texttt{arXiv:#1}}
\newcommand{\SINR}{\mathtt{SINR}}
\newcommand{\DOF}{\mathtt{dof}}
\newcommand{\dof}{\mathtt{dof}}
\newcommand{\sinr}{\mathtt{sinr}}
\newcommand{\INR}{\mathtt{INR}}
\newcommand{\inr}{\mathtt{inr}}
\newcommand{\SNR}{\mathtt{SNR}}
\newcommand{\snr}{\mathtt{snr}}
\newcommand{\Out}{\mathrm{Out}}
\newcommand{\csum}{c_\Sigma}
\newcommand{\Csum}{C_\Sigma} 
\newcommand{\csep}{k_\text{sep}}
\newcommand{\cdec}{k_\text{att}}
\newcommand{\pout}{p_\text{out}}
\newcommand{\JAP}{\text{JAP}}
\newcommand{\JAPB}{\text{JAP-B}}
\renewcommand{\vec}{\mathbf}
\newcommand{\ee}{\mathrm{e}}
\DeclareMathOperator{\Var}{Var}
\DeclareMathOperator{\Ex}{\mathbb{E}}
\newcommand{\hbnew}{\bigskip}
\newcommand{\hbtitle}[1]{\noindent \textbf{#1}. }
\begin{document}

\setSingleSpace{1.2}
\SingleSpace

\nouppercaseheads

\pagestyle{empty}

\bigskip\mbox{}\medskip

\begin{center}
{\HUGE \textbf{Interference Mitigation \\
       in Large Random \\
       Wireless Networks\\}}

\bigskip\mbox{}\bigskip\mbox{}\bigskip\mbox{}\bigskip\mbox{}\smallskip

{\Huge Matthew Aldridge}

\end{center}

\cleardoublepage

\pagestyle{empty}

\bigskip\mbox{}\medskip

\begin{center}
{\HUGE \textbf{Interference Mitigation \\
       in Large Random \\
       Wireless Networks\\}}

\bigskip\mbox{}\bigskip\mbox{}\bigskip\mbox{}\bigskip\mbox{}\smallskip

{\Huge Matthew Aldridge}

\bigskip\mbox{}\bigskip\mbox{}\bigskip

\bigskip\mbox{}\bigskip\mbox{}\bigskip

{\Large \emph{A dissertation submitted to \\
              the University of Bristol in \\
              accordance with the requirements \\
              for award of the degree of \\
              Doctor of Philosophy in \\
              the Faculty of Science \\}}

\bigskip\mbox{}\bigskip

{\LARGE School of Mathematics \\
\smallskip
2011 \\ }

\end{center}

\bigskip\mbox{}\bigskip\mbox{}\bigskip

\begin{flushright}
30,000 words \end{flushright}

\cleardoublepage

\pagestyle{headings}

\addcontentsline{toc}{section}{\protect\hspace{-1.5 em} \emph{Abstract}}
\markboth{Abstract}{Abstract}
\chapter*{Abstract}

A central problem in the operation of large wireless networks is 
how to deal with interference -- the unwanted signals being sent
by transmitters that a receiver is not interested in.  This thesis
looks at ways of combating such interference.

In Chapters 1 and 2, we outline the necessary information and
communication theory background.  We define the concept of capacity --
the highest rate at which information can be sent through a network
with arbitrarily low probability of error.  We also include an overview
of a new set of schemes for dealing with interference known as
interference alignment, paying special attention to a channel-state-based
strategy called ergodic interference alignment.

In Chapter 3, we consider the operation of large regular and random networks
by treating interference as background noise.  We consider the local
performance of a single node, and the global performance of a very
large network.

In Chapter 4, we use ergodic interference alignment to derive the
asymptotic sum-capacity of large random dense networks.  These
networks are derived from a physical model of node placement where
signal strength decays over the distance between transmitters and receivers.

In Chapter 5, we look at methods of reducing the long
time delays incurred by ergodic interference alignment.  We decrease
the delay for full performance of the scheme, and analyse
the tradeoff between reducing delay and lowering the communication rate.

In Chapter 6, we outline a problem of discovering which users interfere with which;
a situation that is equivalent to the problem of pooled group testing for defective items.
We then present some new work that uses information
theoretic techniques to attack group testing.
We introduce for the first time the concept of the group testing
channel, which allows for modelling of a wide range of statistical
error models for testing. We derive new results on the number
of tests required to accurately detect defective items, including
when using sequential `adaptive' tests.

Chapter 7 concludes and gives pointers for further work.


\cleardoublepage

\addcontentsline{toc}{section}{\protect\hspace{-1.5 em} \emph{Acknowledgments}}
\chapter*{Acknowledgments}

This thesis would not exist without the help of a great many people.

My supervisors are Oliver Johnson and Robert Piechocki -- a lot of the work
in this thesis is joint work with them.  Without Olly and Rob's wisdom, insight, knowledge, help, guidance,
encouragement, and gentle prodding, none of this would have been possible.
Thank you, Olly; thank you, Rob.

My research for this thesis was funded by Toshiba Research Europe Ltd.  Thanks to the Toshiba Telecommunication Research Laboratory in Bristol and its directors.  I have also received support from the Engineering and Physical Sciences Research Council, via the University of Bristol `Bridging the Gaps' cross-disciplinary feasibility account (\texttt{EP/H024786/1}). Thanks to Dino Sejdinovic and Olly, who got me involved.


This thesis has benefited, directly or indirectly, from conversations with Olly, Rob, Dino,
Henry Arnold, Kara Barwell, Lee Butler, Laura Childs, Justin Coon, Andreas M\"uller, Clare Raychaudhuri, Magnus Sandell, Andrew Smith,
and Will Thompson; conversations that were usually fun as well as useful.  It has been checked, in whole
or in part, by Olly, Rob, and Laura, and contains
considerably fewer errors (mathematical, grammatical, and typographical) than it would have done
without their care and attention.  My academic reviewers, Ayalvadi Ganesh and Feng Yu, checked three
precursors to this document, and had many helpful suggestions.  Thanks everyone. 

My examiners were Olivier L\'ev\^eque and Ganesh.  They both read this thesis thoroughly, and
made a large number of helpful comments.  Their suggestions for improvements and clarifications have made this
thesis better, a number of references they located have made it more complete,
and fixing the errors they spotted has made it more accurate.

Writing this thesis would not have been possible without free software.
Thanks to Donald Knuth, Leslie Lamport, and the \LaTeX 3 team for \LaTeX;
to the AMS, David Carlise, Lars Madsen, and Peter Wilson for various \LaTeX\ macros;
to Tino Weinkauf, Sven Wiegand, and the \TeX nicCenter team;
and to the Inkscape team.

Thanks Mum, thanks Dad, thanks Alice, for many, many things. And thanks Laura (for being awesome
like a pigeon).

\cleardoublepage

\addcontentsline{toc}{section}{\protect\hspace{-1.5 em} \emph{Author's declaration}}
\chapter*{Author's declaration}

I declare that the work in this dissertation was carried out in accordance with the requirements of the University's Regulations and Code of Practice for Research Degree Programmes and that it has not been submitted for any other academic award. Except where indicated by specific reference in the text, the work is the candidate's own work. Work done in collaboration with, or with the assistance of, others is indicated as such. Any views expressed in the dissertation are those of the author.

\bigskip

\begin{quote}
Signed: 

\medskip

Date:
\end{quote}

\cleardoublepage

\addcontentsline{toc}{section}{\protect\hspace{-1.5 em} \emph{Table of contents}}
\tableofcontents*

\cleardoublepage

\addcontentsline{toc}{section}{\protect\hspace{-1.5 em} \emph{List of definitions and theorems}}
\listoffigures*

\cleardoublepage

\addcontentsline{toc}{chapter}{Introduction}
\markboth{Introduction}{Introduction}
\chapter*{Introduction}

A central problem in the operation of large wireless networks is 
how to deal with \defn{interference} -- the unwanted signals being sent
by transmitters that a receiver is not interested in.  This thesis
looks at ways of combating such interference in large random wireless netoworks.

\bigskip

\noindent In \textbf{Chapter 1: Information}, we briefly summarise
information theory in the single user (point-to-point) case.

A \defn{channel} models how signals are corrupted by noise.
We pay particular attention to the \defn{Gaussian channel}, which is a good
model for real-world wireless communication, and the \defn{finite field
channel}, which can be thought of as a discretisation of the Gaussian channel

The \defn{capacity} of a channel tells us how much information we can send through
the channel for an arbitrarily low probability of error.  \defn{Shannon's channel coding theorem} tells
us how to calculate the capacity of a channel.  We also demonstrate
the capacity of the Gaussian channel under a power constraint.

\defn{Fading} models how signals can decay and distort when sent over long distances.
We investigate three types of fading -- \defn{fixed}, \defn{slow}, and \defn{fast} -- and show
how they affect the channel capacity.

\bigskip

\noindent In \textbf{Chapter 2: Interference}, we extend our study to multiuser
networks.

We look at information theoretic models of wireless networks, concentrating on
the \defn{interference network}, where many transmitter--receiver pairs want to communicate
through the same medium.  This network suffers from the problem of interference.

Weak interference can be ignored and treated as background noise, while strong
interference can be decoded and subtracted.  The main problem for networks
is interference of a similar strength to the desired signal.

We look at \defn{resource division} strategies, which share the channel resources between
the users.  While such schemes are simple to operate, they perform poorly
when the number of users is high.

Of more interest are new \defn{interference alignment} strategies.  These work by the
following idea: if transmitters plan their signals carefully, then for each receiver
the interfering signals can be aligned together, with the desired signal split off separately.
Interference alignment techniques offer potentially far higher performance than resource
division schemes.  We pay particular attention to a channel state-based strategy
called \defn{ergodic interference alignment}.

\bigskip

\noindent \textbf{Chapter 3: Regular and Poisson random networks}, shows how a simple
interference-as-noise technique can be useful when communicating over short hops
in well-structured networks.

In a $d$-dimensional \defn{regular network}, nodes are placed on the grid $\mathbb Z^d$.
We show that if signals decay like $\text{distance}^{-\alpha}$ for $\alpha > d$, then
all nodes can communicate at some fixed rate $r$.  We call this \defn{linear growth},
as the sum-rate of communication of a collection of nodes scales linearly with the
number of nodes.

We also look at nearest-neighbour communication in \defn{Poisson random networks}, where
nodes are placed at random like a Poisson point process.  We give bounds on the
\defn{outage probability}, the chance that a given link is unable to communicate at some
fixed rate.  We also show that linear growth occurs with probability tending to $1$.

\emph{This chapter is joint work with Oliver Johnson and Robert Piechocki.}

\bigskip

\noindent In \textbf{Chapter 4: Sum-capacity of random dense Gaussian interference networks}, we
consider \defn{spatially separated IID networks}  with \defn{power-law
attenuation}, a natural model for wireless networks.  We derive
the asymptotic sum-capacity of such networks by using ergodic interference
alignment to show achievability, and subtle probabilistic and counting techniques
to show the converse.

We also give an alternative proof (with an improved rate of convergence) to a recent
theorem of Jafar on the sum-capacity of large random networks.

\emph{This chapter is joint work with Oliver Johnson and Robert Piechocki.  This
research has been published in \emph{IEEE Transactions on Information Theory} \cite{JAP},
and in the \emph{Proceedings of the 2010 IEEE International Symposium on Information Theory} \cite{AJP}.}

\bigskip

\noindent \textbf{Chapter 5: Delay--rate tradeoff in ergodic interference alignment}
considers the long blocklengths required to perform ergodic interference alignment.
We outline a new scheme called $\JAP(\vec a)$ and study a \defn{beamforming} extension and
derived \defn{child schemes}.

We show how to reduce the time delay for full performance of ergodic interference alignment.
We also show how delay can be reduced even further for the tradeoff of a decrease
in communication rate.  We analyse the best schemes for small networks, and as the
size of the network tends to infinity.

\emph{This chapter is joint work with Oliver Johnson and Robert Piechocki.  This
research has been submiited to \emph{IEEE Transactions on Communication Theory} --
a preprint is available on the \texttt{arXiv} \cite{JAPdelay}.}

\bigskip

\noindent We begin \textbf{Chapter 6: Interference, group testing, and channel coding}
by considering a problem where receivers must aim to detect which transmitters
interfere with them.  We show that our formulation of this problem is equivalent to
the problem of combinatorial \defn{group testing}.

Recent work by Atia and Saligrama has shown how channel coding techniques can shed
light on the problem of group testing.  We extend their results, by defining
\defn{group testing channels}, and identifying the \defn{only-defects-matter property}, under
which an important theorem holds.

We give the first information theoretic analysis of \defn{adaptive group testings} --
where test pools can be constructed sequentially based on previous outcomes -- by drawing
a comparison with the problem of channel coding with feedback.

\bigskip

\noindent The thesis finishes with \textbf{Chapter 7: Conclusions and further work}.

\bigskip

  \begin{center}
    Schematic representation of chapter dependencies
    
    \bigskip

    \includegraphics[scale=0.59]{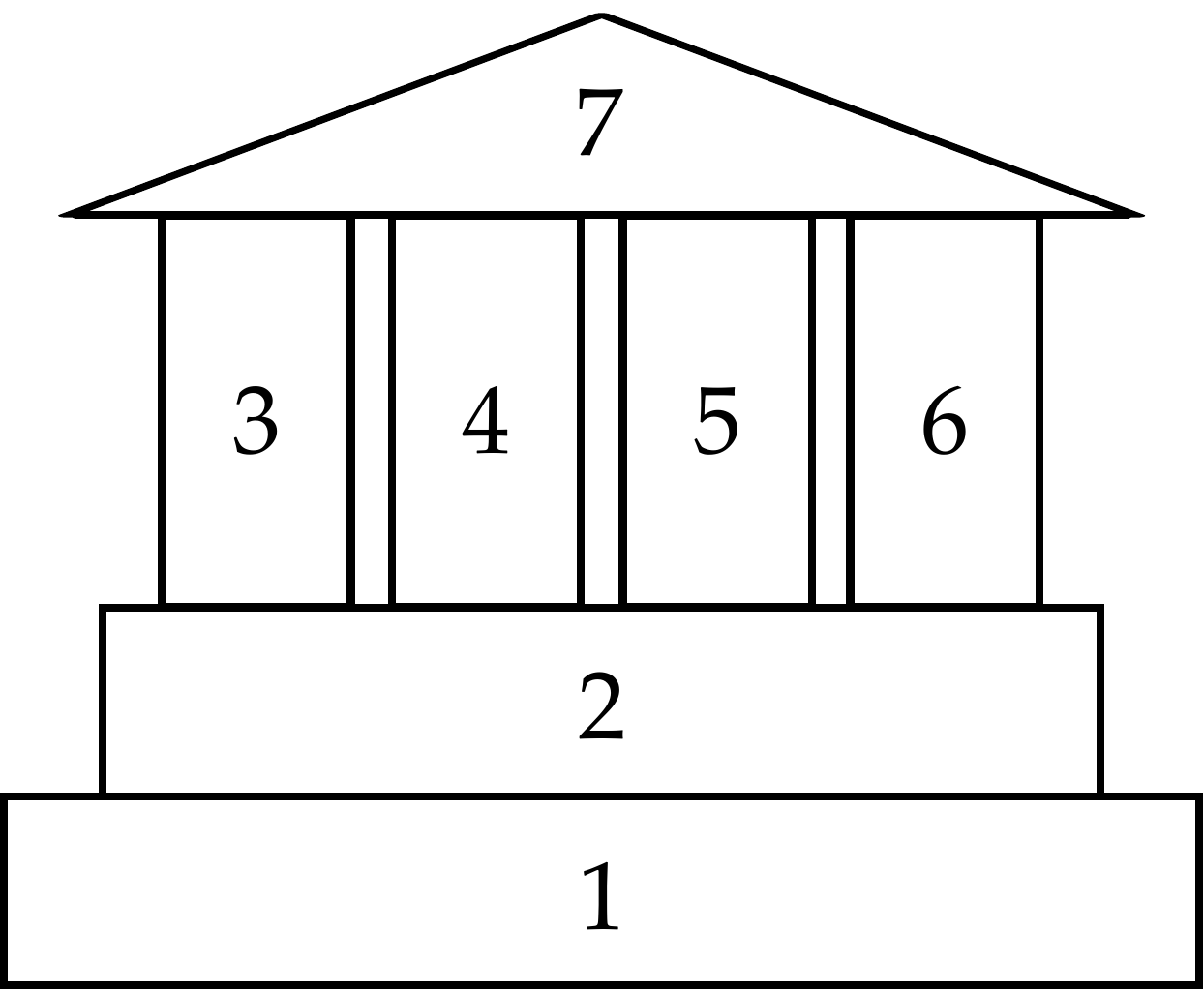}
  \end{center}

\chapter{Information}

\begin{quotation}
  \noindent The fundamental problem of communication is that of reproducing at one
  point either exactly or approximately a message selected at another point.
  
  --- Claude E Shannon
  
  \quad \emph{A Mathematical Theory of Communication} \cite[page 1]{Shannon}
\end{quotation}

In this chapter, we examine the subject of information theory, and in particular
channel coding, which forms the mathematical basis for studying communication.

We start by giving a brief overview of the subject, and making note of
a `handbook' of useful definitions and facts.  We then go through the
more formal mathematics of point-to-point communication, concentrating
on accurate models for real-life wireless communication.  Finally, we
study fading, which allows us to model how signals distort as passed
through space.

\section{Information theory: a very short introduction} \label{vsi}

\defn{Information theory} is the mathematical framework used for studying the transmission of messages in the presence of noise.  It was founded by Claude Shannon in his seminal paper of 1948, ``A mathematical theory of communication'' \cite{Shannon}.  Shannon's information theory involves the sending of a message -- information that a transmitter (typically called Alice) might wish a receiver (Bob) to know -- through a channel, such as a telephone line, an internet connection or a computer cable. 

  \begin{center}
    \includegraphics[scale=0.89]{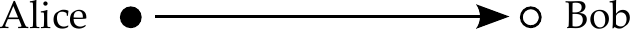}
  \end{center}


The method, language or standard used to transfer the message is called the \defn{code}. (We use `code' in the sense of `Morse code' -- there is no intention to keep the transmitted signal secret as well.)

One of the major goals of information theory involves quantifying how much information can be sent down a channel and how quickly.  For example, if Alice wants to arrange a meeting with Bob, she might send the message ``Hi Bob, meet me at five o'clock on Monday.''  But if the line is very crackly, Bob might interpret the message incorrectly: ``Hi Bob, meet me at nine o'clock on Sunday.''  How could Alice ensure that this doesn't happen?  Perhaps Alice could use a code where she repeats the sentence a number of times, hoping that it would be more likely that Bob could deduce the intended message.  But this takes a longer amount of time -- we say that her \defn{rate} of communication is very low -- and phone calls are expensive, making this undesirable.


Unsurprisingly, there is a trade-off to be made between the rate at which Alice can send the information and the probability that Bob receives it without error.  Before Shannon, it was widely assumed that the only way to make the error probability as small as desired was to reduce the rate of communication toward zero too \cite[Section 5.1]{TseViswanath}.  
However, Shannon discovered that, while this trade-off certainly exists, the error probability can made arbitrarily small while maintaining a communication rate bounded away from zero.  

	\begin{center}
		\includegraphics[width=0.58\textwidth]{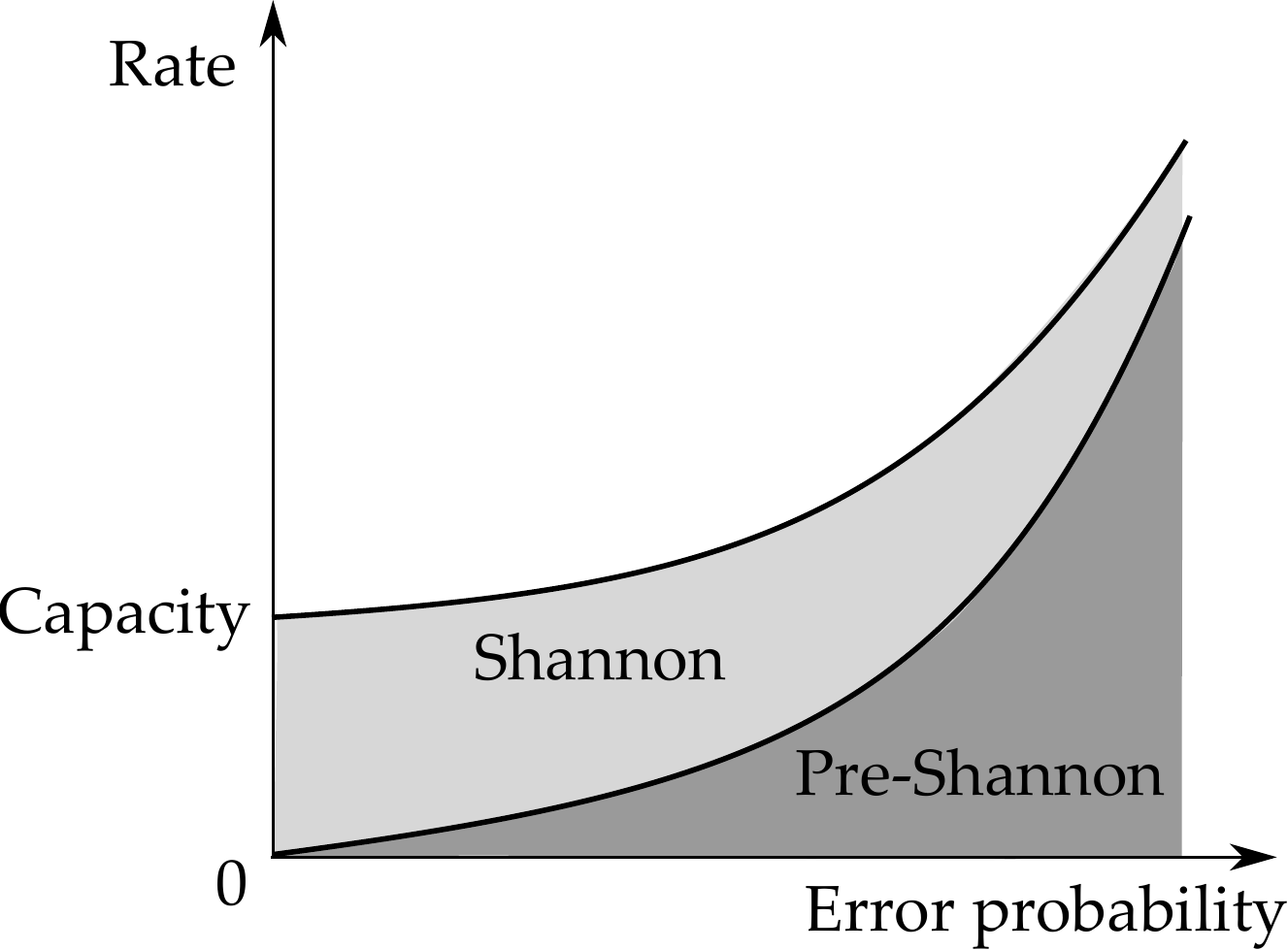}
  \end{center}

In other words, there is a cutoff rate $c$ such that if we attempt to send information at a rate $r<c$, we can do so with an arbitrarily low risk of error, whereas if we attempt to send information at a rate $r>c$, the probability of error is bounded away from $0$.  Shannon called this cutoff rate $c$ the $\defn{capacity}$ of the channel.

	\begin{center}
		\begin{picture}(300,40)(0,0)
      \put(0,20){\vector(1,0){300}}
      \put(20,23){Error-free communication}
      \put(48,10){at rates $r<c$}
      \put(150,17){\line(0,1){6}}
      \put(0,17){\line(0,1){6}}
      \put(-3,8){$0$}
      \put(147,8){$c$}
      \put(213,23){Errors}
      \put(198,10){at rates $r>c$}
      \put(289,10){$r$}
    \end{picture}
  \end{center}

Shannon managed to calculate the capacity of a number of communication channels in terms of the statistical properties of the noise in a channel.
%
%
%
The result is known as \defn{Shannon's channel coding theorem}  (and is stated as Theorem \ref{Shannons2} later).


The aim of this thesis is to find bounds and approximations for capacities of complicated multi-user networks.  In particular, we will be interested in networks that model large real-world wireless networks, such as WiFi computer networks or BlueTooth.

The capacity of a channel, such as a wireless link, tells us the maximum rate at which we can send information while being assured the messages are received accurately.  Note however that merely knowing the capacity does not give us a method of achieving communication at, or even near, capacity.  Nonetheless, the capacity is still a useful benchmark for the quality of a channel.  First, it gives us a `best case' against which we can compare any technologies: if a new code allows us to communicate at a rate near capacity, then this technology is about as good as it's going to get, and there is no need to spend more money on research.  Second, studying the mathematical form of the capacity may help us improve the channel itself; for instance, whether extra resources would be best spent increasing power, bandwidth, or the number of antennas.

\section{Handbook of useful facts}

The following basic concepts of information theory will be referred to often in this thesis; we collect them here for reference.  

More information is available in any basic information theory textbook -- Cover and Thomas's \emph{Elements of Information Theory} \cite{CoverThomas} is a favourite of mine.

\newpage

\noindent\fbox{
\begin{minipage}{0.96\textwidth}

\bigskip

\hbtitle{Mass and density functions} For a discrete random variable $X$, we denote its
\defn{probability mass function} by $p(x) := \Prob(X=x)$.  If $X$ is continuous,
$p(x)$ denotes its \defn{probability density function}.  \defn{Joint} and \defn{conditional mass/density
functions} are denoted $p(x,y)$ and $p(y \given x)$ respectively.  

\hbnew

\hbtitle{Entropy and related concepts} The \defn{entropy} of a discrete random variable
$X$ is
  \begin{equation}
    \HH(X) := \Ex \log \frac{1}{p(X)} = \sum_x p(x) \log \frac{1}{p(x)} , \tag{HB1}
  \end{equation}
where here, as everywhere in this thesis, $\log :\equiv \log_2$ denotes the binary
logarithm.  When $X$ is continuous, the sum is replaced by an integral.  (This
last comment holds for all the following definitions.)

\quad The \defn{joint entropy} of the pair $(X,Y)$ is similarly
  \begin{equation}
    \HH(X,Y) := \Ex \log \frac{1}{p(X,Y)} = \sum_x \sum_y p(x,y) \log \frac{1}{p(x,y)} . \tag{HB2}
  \end{equation}

\quad The \defn{conditional entropy} of $Y$ given $X$ is
  \begin{equation}
    \HH(Y \given X) := \Ex \log \frac{1}{p(Y \given X)} = \sum_x \sum_y p(x,y) \log \frac{1}{p(y \given x)} . \tag{HB3}
  \end{equation}
If $X$ and $Y$ are independent, then it is easy to show \cite[Theorem 2.6.5]{CoverThomas} that
  \begin{equation}
    \HH(Y \given X) = \HH(Y)   \qquad 
    \HH(Y + X \given X) = \HH(Y)   \tag{HB4} 
  \end{equation}
  
\quad The \defn{relative entropy distance} from one probability function $p(x)$ to another
$q(x)$ is
  \begin{equation}
    \D\big( p(x) \,\|\, q(x) \big) := \Ex_{p(x)} \log \frac{p(X)}{q(X)} = \sum_x p(x) \log \frac{p(x)}{q(x)} \geq 0 . \tag{HB5}
  \end{equation}
  
\quad The \defn{mutual information} between $X$ and $Y$ is
  \begin{align}
    \I(X:Y) &:= \D\big( p(x,y) \,\|\, p(x)p(y) \big)  \notag \\
             &= \Ex_{p(x,y)} \log \frac{p(X,Y)}{p(X)p(Y)} \notag \\
             & = \sum_x \sum_y p(x,y) \log \frac{p(x,y)}{p(x)p(y)} . \tag{HB6} 
  \end{align}
It easy to show \cite[Theorem 2.4.1]{CoverThomas} that
  \begin{equation}
    \I(X:Y) = \HH(Y) - \HH(Y \given X)  , \tag{HB7}
  \end{equation}

\smallskip

\end{minipage}
}

\noindent\fbox{
\begin{minipage}{0.96\textwidth}

\bigskip

\textbf{Hu correspondence} A useful way to memorise the relationship between these concepts
is the Hu correspondence.  In the following picture, the area of each rectangle corresponds
to the quantity.

	\begin{center}
		\begin{picture}(250,100)(0,0)
		  
		  \put(100,0){\framebox(65,15){$\I(X:Y)$}}
		  \put(0,20){\framebox(98,15){$\HH(X\given Y)$}}
		  \put(167,20){\framebox(83,15){$\HH(Y\given X)$}}
		  \put(0,40){\framebox(250,15){$\HH(X,Y)$}}
		  \put(100,60){\framebox(150,15){$\HH(Y)$}}
		  \put(0,80){\framebox(165,15){$\HH(X)$}}
		  
%
%
%
		  
    \end{picture}
\end{center}


\hbtitle{The complex Gaussian distribution}
A \defn{circularly-symmetric complex Gaussian random variable} $Z\sim\CN(0,\sigma^2)$ with variance $\sigma^2 = \Ex |Z|^2$
is defined by the probability density function
  \begin{equation}
    p(z) = \frac{1}{\pi\sigma^2} \ex^{-|z|^2/\sigma^2} \qquad z \in \CC . \tag{HB8}
  \end{equation}

\hbnew

\hbtitle{Maximum entropy}
Out of all discrete random variables $X$ on the set $\{\zero, \one, \dots, \q-\one \}$, the maximum entropy
is achieved when $X$ is uniform \cite[Theorem 2.6.4]{CoverThomas}, giving
  \begin{equation}
    \max_X \HH(X) = \HH \big( U (\{\zero, \one, \dots, \q-1 \}) \big) = \log q . \tag{HB9}
  \end{equation}

\quad Of all continuous random variables $Z$ on $\CC$ with power $\Ex |Z|^2$ at most $\sigma^2$, the maximum entropy
is achieved when $Z \sim \CN(0,\sigma^2)$ \cite[Example 12.2.1]{CoverThomas}, giving
  \begin{equation}
    \max_{Z:\Ex |Z|^2 \leq \sigma^2} \HH(Z) = \HH \big(\CN(0,\sigma^2)\big) = \log (\pi \ex \sigma^2) . \tag{HB10}
  \end{equation}    

\hbnew

\hbtitle{Typical set}
Let $X$ be a random variable on a countable set $\mathcal X$. Given a sequence $\vec X =(X[1], X[2], \dots, X[T]) \in \mathcal X^T$ of $T$ random draws from $X$, then $\vec X$ will very likely take one of only $2^{T\HH(X)} \leq |\mathcal X|^T$ different values, and each of these values is almost equally likely.  These values make up the so-called \defn{typical set}, and this property is called the \defn{asymptotic equipartition property}.

\quad We say $(\vec X, \vec Y)$ is \defn{jointly typical} of $(X, Y)$ if $\vec X$ is in the
typical set of $X$, $\vec Y$ is in the typical set of $Y$, and the pair $(\vec X, \vec Y)$
is in the typical set of the pair $(X,Y)$.

\bigskip

\end{minipage}
}

\newpage

\section{Channels, codes, and capacity}

We will now set out mathematically the ideas of channels, codes, and capacity from Section \ref{vsi}.

To specify a channel, we need to say what inputs are allowed into the channel, what outputs can be produced, and how
the noise randomly corrupts the input.

A common example of a channel is the \defn{binary symmetric channel}.  The BSC allows `bits' -- binary digits: $\zero$s and $\one$s -- into the channel.  It then either outputs the same bit or, with some fixed probability, flips to the other bit.

\addcontentsline{lof}{figure}{\numberline{\protect\textbf{Definition 1.1}} Channel}
\begin{definition}
  A communication \defn{channel} consists of
  \begin{enumerate}
    \item a set $\mathcal X$, the \defn{input alphabet};
    \item a set $\mathcal Y$, the \defn{output alphabet};
    \item a \defn{probability transition function}
      $p(y\given x)$ relating the two.
  \end{enumerate}
\end{definition}


%
		  
\begin{center}
		\begin{picture}(335,40)(0,0)

		  \put(90,0){\framebox(55,30)}
		  \put(106,18){Input}
		  \put(93,3){alphabet $\mathcal X$}
		 
		  \put(145,15){\vector(1,0){45}}
		  \put(151,20){$p(y\given x)$}
		  \put(115,33){$x$}
		  
		  \put(190,0){\framebox(55,30)}
		  \put(201,18){Output}
		  \put(193,3){alphabet $\mathcal Y$}
		  \put(215,34){$y$}
	  
    \end{picture}
\end{center}

\addcontentsline{lof}{figure}{\numberline{\protect\textbf{Definition 1.2}} Binary symmetric channel}
\begin{definition}
We can now formally define the \defn{binary symmetric channel} with error probability $p < \frac12$.  This channel is defined by alphabets $\mathcal X = \mathcal Y = \FF$
and transition function
  \begin{align*}
    p(\zero\given\zero) &= 1-p  &  p(\one\given\zero) &= p    \\
    p(\zero\given\one ) &= p    &  p(\one\given\one ) &= 1-p  . 
  \end{align*}

\end{definition}

	\begin{center}
		\begin{picture}(150,130)(0,0)
      \put(20,100){\vector(1,0){115}}
      \put(20,100){\vector(3,-2){115}}
      \put(20,20){\vector(1,0){115}}
      \put(20,20){\vector(3,2){115}}
      
      \put(10,97){$\zero$}
      \put(10,17){$\one$}
      \put(140,97){$\zero$}
      \put(140,17){$\one$}
      
      \put(-5,55){$x$}
      \put(150,55){$y$}
      
      \put(70,105){$1-p$}
      \put(70,10){$1-p$}
      \put(85,70){$p$}
      \put(85,45){$p$}
    \end{picture}
  \end{center}

A channel which is an accurate and widely-used model for wireless communication \cite[Chapter 5.1]{TseViswanath} is the \defn{Gaussian channel}.  It arises from the sampling of a bandlimited continuous-time channel with white noise \cite{ShannonNoise}.  (For convenience, we assume a unit bandwidth.)  Gaussian white noise is used for two reasons.  First, it seems a reasonable model: the superposition of lots of small pieces of noise ought to (due to the central limit theorem) look roughly Gaussian. Second, it is the simplest case mathematically, and often leads to analytically tractable solutions.

The Gaussian channel takes any number as an input and corrupts it by adding random Gaussian noise.  By convention, and for useful modelling reasons, the channel is usually defined in terms of complex numbers \cite[Subsection 2.2.4]{TseViswanath}.

\addcontentsline{lof}{figure}{\numberline{\protect\textbf{Definition 1.3}} Gaussian channel}
\begin{definition}
The \defn{Gaussian channel} with noise power $\sigma^2$ has alphabets $\mathcal X = \mathcal Y = \CC$, and probability transition density $p(y\given x)$ 
defined implicitly by the relationship $Y = x + Z$, where the $Z \sim \CC \text{N} (0,\sigma^2)$ are all independent.

(If the channel is used multiple times, we assume a new random $Z$ is drawn each time.)
\end{definition}

	\begin{center}
       \includegraphics[scale=0.78]{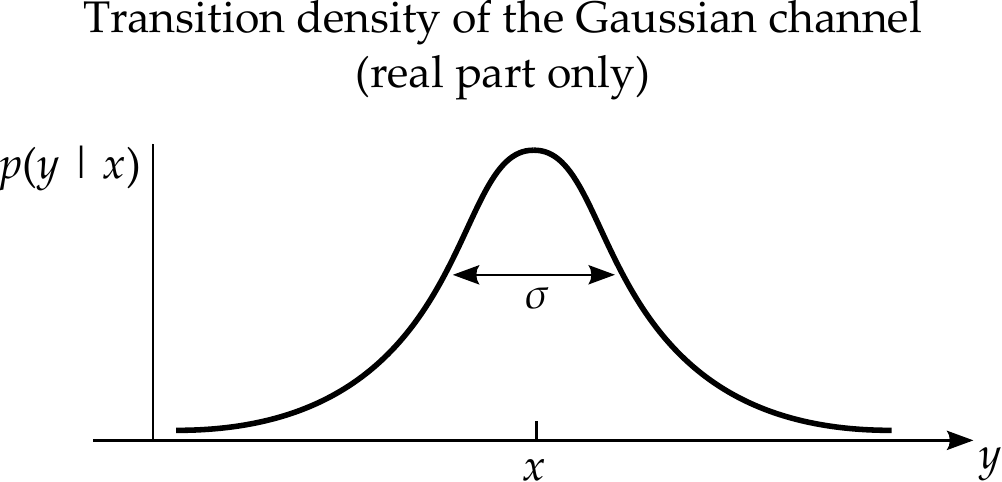}
	\end{center}

Another channel we will examine in this thesis is the finite field channel.  This channel can be a useful model of a Gaussian channel that has been quantised (or discretised).

\addcontentsline{lof}{figure}{\numberline{\protect\textbf{Definition 1.4}} Finite field channel}
\begin{definition}
Let $q$ be prime, and let $Z$ be a random variable defined on the finite field $\F_q = \{\zero, \one, \dots, \q - \one \}$.  The \defn{finite field channel} of size $q$ with noise $Z$ has alphabets $\mathcal X = \mathcal Y = \F_q $, and probability transition density $p(y\given x)$  defined implicitly by the relationship $Y = x + Z \pmod{q}$.
(Again, multiple uses assume independent $Z$s.)
\end{definition}

Note that the BSC is a special case of the finite field channel where $q=2$ and $Z = \one$ with probability $p$ or $Z = \zero$ otherwise.

Now that we have a channel, we can design codes for that channel.  A code takes a set of $M$ messages, and encodes each message into a string -- called a \defn{codeword} -- of length $T$.  After the channel has been used $T$ times to send the codeword, there must be a rule for decoding the received string, to estimate which message was sent.

\addcontentsline{lof}{figure}{\numberline{\protect\textbf{Definition 1.5}} Code}
\begin{definition}
  An \defn{$(M,T)$-code} for the channel $(\mathcal X, \mathcal Y, p(y \given x))$ consists of:
  \begin{enumerate}
    \item a \defn{message set} $\mathcal M$ of cardinality $M$, 
    \item an \defn{encoding function} $\vec x\colon \mathcal M\to\mathcal X^T$,
    \item a \defn{decoding function} $\hat m \colon\mathcal Y^T\to\mathcal M$.
  \end{enumerate}
  
  The set of codewords $\{ \vec x(m) : m \in \mathcal M \}$ is called the \defn{codebook};
  the parameter $T$ is called the \defn{block length}.
\end{definition}

The problem of channel coding works like this:
  \begin{itemize}
    \item The transmitter (Alice) requires to send a message $m \in \mathcal M$.
    \item Alice encodes this message into a codeword $\vec x(m) \in \mathcal X^T$.
    \item Alice sends the first letter  $x(m)[1]$ of the codeword  through the channel to the
      receiver (Bob).  Bob receives a corrupted version $y[1]\in\mathcal Y$ of the letter, where the
      corruption has occurred at random according to $p(y \given x)$.
    \item Alice sends the second letter $x(m)[2]$ of the codeword through the channel to Bob.
      Bob receives a corrupted version $y[2]\in\mathcal Y$ of the letter, where the
      corruption has occurred at random according to $p(y \given x)$.
    \item \mbox{} $\vdots$
    \item Alice sends the final letter $x(m)[T]$ of the codeword through the channel to
      Bob.  Bob receives a corrupted version $y[T]\in\mathcal Y$ of the letter, where the
      corruption has occurred at random according to $p(y \given x)$.
    \item Bob now decodes his received word $\vec y$ to make his estimate $\hat m(\vec y)$
      of Alice's original message $m$.  Hopefully, $\hat m = m$, and the message has
      been communicated successfully.
  \end{itemize}

So the system as a whole looks like this (the channel is in grey):
	\begin{center}
		\begin{picture}(335,40)(0,0)

		  \put(0,0){\framebox(55,30)}
		  \put(9,18){Message}
		  \put(14,3){set $\mathcal M$}
		  \put(24,33){$m$}
		  
   	  \put(55,15){\vector(1,0){35}}
		  \put(70,18){$\vec x$}
		  
		  \color{grey}
		  \put(90,0){\framebox(55,30)}
		  \put(106,18){\textcolor{grey}{Input}}
		  \put(93,3){\textcolor{grey}{alphabet $\mathcal X$}}
		  
		  \put(145,15){\vector(1,0){45}}
		  \put(151,20){\textcolor{grey}{$p(y\given x)$}}
		  \color{black}
		  \put(108,33){$\vec x(m)$}
		  \put(151,4){$T$ times}
		  
		  \color{grey}
		  \put(190,0){\framebox(55,30)}
		  \put(201,18){\textcolor{grey}{Output}}
		  \put(193,3){\textcolor{grey}{alphabet $\mathcal Y$}}
		  \put(214,34){\textcolor{grey}{$\vec y$}}
		  
		  \color{black}
		  \put(245,15){\vector(1,0){35}}
		  \put(259,18){$\hat m$}
		  
		  \put(280,0){\framebox(55,30)}
		  \put(289,18){Message}
		  \put(294,3){set $\mathcal M$}
		  \put(300,33){$\hat m(\vec y)$}
		  
    \end{picture}
\end{center}

All channels considered in this thesis will be \defn{memoryless}, in that the channel's current performance is independent of earlier behaviour.  In other words, the transmission of codewords follows a product distribution
     \[ p(\vec y \given \vec x) = \prod_{t=1}^T p(y[t] \given x[t]) . \]
     
So far, we have considered only \defn{static} channels, where the probability transition function remains fixed over time.  (Later, we will look at fast-fading channels, where the transition function is no longer fixed but changes from timeslot to timeslot.)

\bigskip

\noindent Note that, from a mathematical point of view, what the messages \emph{are} is unimportant -- what matters is how  many of them there are. So it often makes sense to choose the message set $\mathcal M$ to be something convenient. For example, when dealing with a finite field channel of size $q$, we often take $\mathcal M$ to be $\mathbb F_q^S$, which has cardinality $M = q^S$. In particular, when $q=2$, the message set $\mathcal M = \mathbb F_2^S = \zo^S$ is the set of all bit strings of length $S = \log_2 M$.

We define the \defn{rate} of a code to be the number of bits that we can send per channel use. The number of bits is $\log_2 M$, as above, and the number of channel uses is $T$, so the rate is $(\log_2 M)/T$.

\addcontentsline{lof}{figure}{\numberline{\protect\textbf{Definition 1.6}} Rate}
\begin{definition}
  The \defn{rate} of an $(M,T)$-code is defined to be $(\log_2 M)/T$ bits per transmission.
\end{definition}

(From now on, all logarithms are to base $2$, and we just write $\log$ for $\log_2$.)

Also associated with a code we have its \defn{error probability}, the chance that a message is decoded incorrectly.  (We take the average error probability across all messages, but if we were to use the maximum error probability, the main results of this chapter would be the same.)

\addcontentsline{lof}{figure}{\numberline{\protect\textbf{Definition 1.7}} Error probability}
\begin{definition}
  The \defn{average error probability} is
  \begin{align*}
    e &:= \frac1M \sum_{m\in\mathcal M} \Prob{\big(\hat m(\vec Y) \neq m\given \vec x(m) \text{ sent}\big)} \\
       &\phantom{:}= \frac1M \sum_{m\in\mathcal M} \sum_{\vec y \in \mathcal Y^T} p\big(\vec y \given \vec x(m)\big)
                                          \, \mathbb{1}[\hat m(\vec y) \neq m] .
  \end{align*}
\end{definition}

We can now give an example of a code for the BSC.

Suppose there are two messages we might wish to send:
``No'' and ``Yes'', so $\mathcal M = \{ \text{No}, \text{Yes} \}$.

A very simple code could assign $\zero$ to be the codeword for ``No'' and $\one$ to be
the codeword for ``Yes''.  This is a $(2,1)$-code.  It clearly has a rate of $(\log 2)/1 = 1$ bit per transmission and error probability $p$.

To reduce the error probability, we will need a more sophisticated code.  One method of coding would be the repetition code where each symbol is repeated $T$ times, so
  \[ \vec x(\text{No}) = \zero\zero\cdots\zero \in \mathcal X^T \qquad
     \vec x(\text{Yes} ) = \one \one \cdots\one  \in \mathcal X^T . \]

\addcontentsline{lof}{figure}{\numberline{\protect\textbf{Definition 1.8}} Repetition code}     
\begin{definition}
  A $(2,T)$-code is called a \defn{$T$-repetition} code if the codebook consists
  solely of the all-$\zero$ and all-$\one$ codewords of length $T$.
\end{definition}
     
The obvious method of decoding is the `majority rule': if $\vec y$ has more $\zero$s than
$\one$s, decode as ``No''; if $\vec y$ has more $\one$s than $\zero$s, decode as ``Yes''.
(If there are exactly $T/2$ of each, decoding may be performed arbitrarily.)

Note that the rate of this code is $(\log 2)/T = 1/T$, and the error probability is bounded by
$e \geq p^T$, the probability that all $T$ symbols flip.  
%

\bigskip

\noindent What does it mean to be able to communicate through a channel at some desired rate $r$?
Well, it means that there must exist a code for the channel with rate at least $r$ and
a low probability of error.  How low?  As low as we desire.  If we want to limit the
error probability to 5\%, then there must be a code with rate at least $r$ and error
probability no more than 5\%; but if we want the error probability to be as low as 1\%
or even 0.01\%, there has to be a code for that too.

\addcontentsline{lof}{figure}{\numberline{\protect\textbf{Definition 1.9}} Achievable rate}  
\begin{definition}
  Consider a channel $(\mathcal X, \mathcal Y, p(y \given x))$.
  
  A rate $r$ is
  \defn{achievable} if for any error tolerance $\epsilon > 0$, there exists a
  code with rate at least $r$ and error probability lower than $\epsilon$.
  
  Otherwise, $r$ is \defn{not achievable}, in that there exists an error threshold
  $\epsilon$ such that there exists no code with rate at least $r$ and error
  probability lower than $\epsilon$.
\end{definition}

The capacity is defined to be the maximum achievable rate.

\addcontentsline{lof}{figure}{\numberline{\protect\textbf{Definition 1.10}} Capacity}  
\begin{definition}
  Consider a channel $(\mathcal X, \mathcal Y, p(y \given x)) $.  Then we define
  the \defn{capacity} of the channel, $c$, to be the supremum of all achievable
  rates:
    \[ c := \sup \{r : r \text{ is achievable} \}. \]
    
  In other words, all rates $r$ less than $c$ are achievable,
  but no $r$ above $c$ is achievable.
\end{definition}

Shannon calculated the capacity as the maximum mutual information between the input and output of a channel \cite[Theorem 11]{Shannon}.

The mutual information $\I (X:Y)$ between $X$ and $Y$ can be seen as a measure
of `how independent' $X$ and $Y$ are.  If $\I(X:Y)$ is large, then $X$ and $Y$ are highly
dependent, so knowledge of the output $Y$ gives us lots of information about the input
$X$; if $\I(X:Y)$ is small, then $X$ and $Y$ are highly
independent, so knowledge of the output $Y$ gives us little information about the input
$X$.

\addcontentsline{lof}{figure}{\numberline{\protect\textbf{Theorem 1.11}} Shannon's channel coding theorem}  
\begin{theorem}[Shannon's channel coding theorem]\label{Shannons2}
  Consider a discrete channel $(\mathcal X, \mathcal Y, p(y \given x))$, that is, a
  channel where both $\mathcal X$ and $\mathcal Y$ are both countable sets.
  
  Then the capacity $c$ is given
  by the formula $c = \max_X \I(X:Y)$, where $Y$ is related to $X$ through $p(y \given x)$,
  and the maximum is over all input random variables $X$ defined on $\mathcal X$.
\end{theorem}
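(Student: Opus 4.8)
The plan is to prove the two inequalities $c \geq \max_X \I(X:Y)$ (achievability) and $c \leq \max_X \I(X:Y)$ (the converse) separately, matching the two-sided characterisation of capacity in Definition 1.10. Throughout I would fix an input distribution $p(x)$ attaining the maximum, so that the target quantity is $\I(X:Y)$ for the induced joint law $p(x,y) = p(x)\,p(y \given x)$.

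For achievability I would run Shannon's random coding argument. Given a target rate $R < \I(X:Y)$, set $M = 2^{TR}$ and build a random codebook by drawing every letter of every codeword $\vec x(m)$ independently from $p(x)$. Decoding is by joint typicality: output the unique $\hat m$ with $(\vec x(\hat m), \vec Y)$ jointly typical of $(X,Y)$, and declare an error otherwise. Averaged over messages and over the random codebook, the error probability splits into two events. The first is that the transmitted pair fails to be jointly typical, which vanishes by the asymptotic equipartition property of the handbook. The second is that some wrong codeword is jointly typical with $\vec Y$; a union bound over the $M-1$ incorrect codewords, each jointly typical with probability at most $2^{-T(\I(X:Y)-\epsilon)}$, bounds this by roughly $2^{-T(\I(X:Y)-R-\epsilon)}$, which tends to $0$ exactly when $R < \I(X:Y)$. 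Since the average over codebooks is small, some deterministic codebook has error below any prescribed $\epsilon$, so every such $R$ is achievable.

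For the converse I would take any code of rate $R$ with small average error probability $e$ and bound $R$. Let $M$ be uniform on $\mathcal M$, with transmitted word $\vec X = \vec x(M)$, output $\vec Y$, and estimate $\hat m(\vec Y)$. The standard Fano inequality bounds the residual uncertainty by $\HH(M \given \vec Y) \leq 1 + e \log M$, and combining this with (HB7) gives $\log M = \HH(M) \leq \I(M : \vec Y) + 1 + e \log M$. Two further steps control the mutual information: the data-processing inequality applied to the Markov chain $M \to \vec X \to \vec Y$ yields $\I(M : \vec Y) \leq \I(\vec X : \vec Y)$, while memorylessness together with the chain rule and the fact that conditioning reduces entropy gives $\I(\vec X : \vec Y) \leq \sum_{t=1}^T \I(X[t]:Y[t]) \leq T \max_X \I(X:Y) = Tc$. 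Rearranging gives $R = (\log M)/T \leq (c + 1/T)/(1-e)$, so along any sequence of codes with $T \to \infty$ and vanishing error the rate is at most $c$; hence no rate exceeding $\max_X \I(X:Y)$ is achievable.

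The hardest part will be the achievability step, and within it the counting estimate that a random wrong codeword is jointly typical with the output with probability at most $2^{-T(\I(X:Y)-\epsilon)}$: this is precisely where the joint typical set and its size enter, and where the threshold value $\I(X:Y)$ emerges. A secondary subtlety is that the theorem permits countably infinite alphabets, so I would need to check that the typical-set bounds quoted in the handbook remain valid in that generality rather than only for finite $\mathcal X$ and $\mathcal Y$.
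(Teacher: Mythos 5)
Your proof follows essentially the same route as the paper: random coding with a joint-typicality decoder for achievability (with the same union-bound estimate $2^{-T(\I(X:Y)-r)}$ over wrong codewords) and Fano's inequality plus data processing for the converse, which the paper only mentions in passing. Your version is in fact somewhat more complete than the paper's sketch --- you spell out the converse chain of inequalities, the extraction of a deterministic codebook from the random ensemble, and the caveat about countably infinite alphabets --- but these are fillings-in of the same argument rather than a different approach.
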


The characterisation of capacity given by Shannon's channel coding theorem (Theorem \ref{Shannons2}) allows us to calculate the capacity of the finite field channel and the BSC.

\addcontentsline{lof}{figure}{\numberline{\protect\textbf{Theorem 1.12}} Capacities of finite field channel and BSC} 
\begin{theorem}
  The capacity of the finite field channel of size $q$ with noise $Z$ is
  \[ c = \log q - \HH(Z) =: \D(Z) . \]
  
  The capacity of the BSC with error probability $p$ is
  \[ c = 1 - \left( p \log \frac1p + (1-p) \log \frac{1}{1-p} \right) . \]
\end{theorem}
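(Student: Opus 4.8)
The plan is to invoke Shannon's channel coding theorem (Theorem \ref{Shannons2}), which tells us the capacity is $c = \max_X \I(X:Y)$, and then to evaluate this maximum explicitly for the finite field channel, where $Y = X + Z \pmod{q}$ with $Z$ independent of $X$. First I would split the mutual information using the identity (HB7), writing $\I(X:Y) = \HH(Y) - \HH(Y \given X)$. The conditional term is easy to pin down and is independent of the input distribution: given $X = x$, the output $Y = x + Z \pmod{q}$ is a \emph{bijective} (shift) function of $Z$, since addition by the fixed element $x$ is invertible on $\F_q$; hence $\HH(Y \given X = x) = \HH(Z)$ for every $x$, and therefore $\HH(Y \given X) = \HH(Z)$ (this is exactly the content of (HB4) in the modular setting). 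Consequently maximising $\I(X:Y)$ reduces to maximising $\HH(Y)$ alone, and $c = \big(\max_X \HH(Y)\big) - \HH(Z)$.

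Next I would handle the entropy $\HH(Y)$. Since $Y$ is supported on the $q$-element set $\F_q$, the maximum-entropy fact (HB9) gives the bound $\HH(Y) \leq \log q$, with equality if and only if $Y$ is uniform on $\F_q$. It remains to exhibit an admissible input $X$ attaining this bound, and this is the one step that genuinely requires an argument: I claim that taking $X$ uniform on $\F_q$ forces $Y$ to be uniform, \emph{whatever} the noise law of $Z$ is. Indeed, conditioning on $Z = z$, the map $x \mapsto x + z \pmod{q}$ is a bijection of $\F_q$, so it carries the uniform law to the uniform law; averaging over $z$ keeps $Y$ uniform. Thus $\max_X \HH(Y) = \log q$ is attained, and combining with the previous paragraph yields $c = \log q - \HH(Z)$, as required.

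Finally, the BSC is just the special case $q = 2$ with $Z = \one$ having probability $p$ and $Z = \zero$ otherwise, so $\F_q = \FF$, $\log q = 1$, and $\HH(Z) = p \log \frac1p + (1-p)\log\frac{1}{1-p}$; substituting gives $c = 1 - \big(p \log \frac1p + (1-p)\log\frac{1}{1-p}\big)$ immediately. I do not expect a serious obstacle here, as both formulas follow the same template. The only substantive point — and the place where I would be most careful — is the absorbing property of the uniform distribution under modular addition, since that is what simultaneously achieves the $\log q$ bound and makes the maximising input explicit; it relies on the additive group structure of $\F_q$, which guarantees each shift $x \mapsto x + z$ is a bijection.
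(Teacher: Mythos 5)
Your proposal is correct and follows essentially the same route as the paper's proof: apply Shannon's channel coding theorem, decompose $\I(X:Y) = \HH(Y) - \HH(Y\given X)$ via (HB7), reduce the conditional term to $\HH(Z)$ via (HB4), maximise $\HH(Y)$ at $\log q$ via (HB9) with the uniform input, and treat the BSC as the $q=2$ special case. The only difference is that you spell out why a uniform input forces $Y$ to be uniform for any noise law (the shift-bijection argument), a step the paper simply asserts.
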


	\begin{center}
       \includegraphics[scale=0.74]{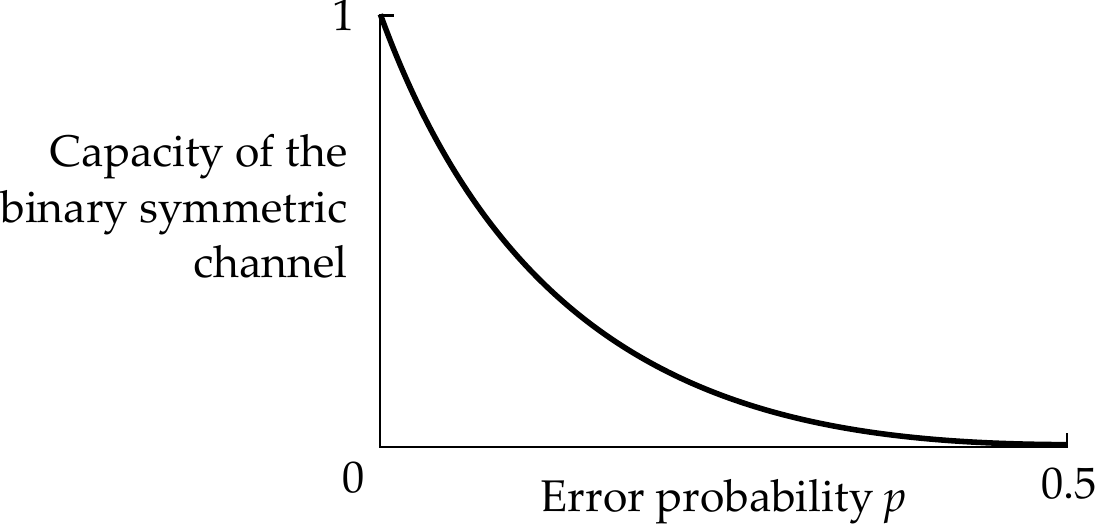}
	\end{center}

(We use the abbreviation $\D(Z) := \log q - \HH(Z)$ since this is equal to the relative
entropy distance $\D(p(z) \,\|\, p(u))$ between $Z$ and a uniform random variable $U$ over $\F_q$.)

\begin{proof}
  \emph{Finite field channel.} By Shannon's channel coding theorem (Theorem \ref{Shannons2})
  we need to calculate the mutual information.  This is
    \begin{align}
      \I(X:Y) &= \HH(Y) - \HH(Y\given X)   \tag{HB7} \\
              &= \HH(Y) - \HH(X+Z\given X) \tag{$Y = X + Z$, Definition 1.3}                \\
              &= \HH(Y) - \HH(Z)  .        \tag{HB4}
    \end{align}
  This is maximised by choosing $X$ to be uniform on $\F_q$, by (HB9), so that $Y$ is uniform also, giving
    \[ c = \max_X \I(X:Y) =\log q - \HH(Z) = \D(Z) , \]
  as required.
  
  \emph{BSC.} This result follows from recalling that the BSC is a special case of the finite field model.
\end{proof}

Later in this thesis, we will often see examples of channels and networks whose
capacities are a constant fraction of the finite field channel capacity.
If a channel or network has capacity $c = d \, \D(Z)$ for some constant $d$, we
say that the channel has $d$ \defn{degrees of freedom} (also known as the
\defn{multiplexing gain} or \defn{pre-$\log$ term}).

\addcontentsline{lof}{figure}{\numberline{\protect\textbf{Definition 1.13}} Degrees of freedom (finite field case)} 
\begin{definition} \label{def:ffdof}
  Given a discrete channel or network with capacity $c$, we define
  the \defn{degrees of freedom} to be
    $ \dof = c / \D(Z)$ . 
\end{definition}

Clearly, the finite field channel itself has a single degree of freedom, that is we have
$\dof = 1$.

\bigskip

\noindent We have not yet talked about the proof of Shannon's channel coding
theorem (Theorem \ref{Shannons2}).

To prove Shannon's channel coding theorem (and related theorems) we
must prove two things:
  \begin{description}
    \item[Achievability] First, we must show that any rate below
      capacity $r<c$ is achievable.  That is, we must find a sequence
      of codes all with rates at least $r<c$, but with arbitrarily
      low error probabilities.
    \item[Converse] Second, we must show that any rate above capacity
      $r>c$ is not achievable.  That is, we must show that for any
      sequence of codes
      all with rates at least $r>c$, the error probabilities must be
      bounded away from $0$.
  \end{description}

The converse part is proved using Fano's inequality \cite{Fano}, which
bounds the error probability in terms of the conditional entropy
across the channel $\HH(Y \given X)$ and the size of the message set $M$.

Shannon's key insight into the achievable part is the following:
instead of trying carefully to design special codes with high rates
and low error probabilities, we can instead just pick the code at
random.  That is, we choose the codeword letters $X(m)[t]$ IID
according to some distribution $X$.  If we set $M = \big\lceil 2^{Tr} \big\rceil$,
then the rate of the code will be at least $r$.  We hope that by choosing
$T$ sufficiently large, the error probability will be driven arbitrarily
low.
(Later, we can optimise over the
choice of $X$.)

This random encoder can be twinned with an effective decoder to show
that any rate $r<c$ can be achieved.  Two different decoders can be
used:
  \begin{description}
    \item[Joint typicality decoder] The receiver takes the channel
      output $\vec y$ and finds the unique codeword $\vec x(m)$ such
      that the pair $(\vec x(m), \vec y)$ is jointly typical of
      $(X,Y)$. (See the handbook, Section 1.2, for definitions.)
      This $m$ is the decoding estimate.  (If there is no
      such $\vec x(m)$ or it isn't unique, we declare an error.)
    \item[Maximum likelihood decoder] The receiver takes the channel
      output $\vec y$ and decodes to the message most likely to
      have yielded it.  That is, we pick $\hat m$ to maximise
        \[ p\big(\vec y \given \vec x(m) \big)
             = \prod_{t=1}^T p \big( y[t] \given x(m)[t] \big) \]
      (If the maximum isn't unique, we declare an error.)
  \end{description}

Shannon himself \cite[Section 13]{Shannon} and most subsequent authors
(for example \cite[Chapter 7]{CoverThomas}, \cite[Chapter 10]{MacKay}) use the joint typicality approach, as it gives a fairly simple
and short proof.

On the other hand, Gallager \cite{Gallager} used the maximum likelihood
approach to prove Shannon's channel coding theorem, by bounding the
error probability by $e \leq 2^{-TE(r)}$, where $E(r)$ is called the
\defn{error exponent}, and examining the error exponent for different
values of $r$.  We shall return to Gallager's maximum likelihood
approach in Chapter 6, when proving a similar theorem for group
testing.

We now outline the achievability proof using the joint typicality
decoder.  Basic facts about typical sets are given in the handbook (Section 1.2).

\begin{proof}[Sketch proof of achievability, Theorem \ref{Shannons2}]
  As above, we set the number of messages to be $M = \big\lceil 2^{Tr} \big\rceil$,
  choose codeword letters $X(m)[t]$ IID at
random according to some distribution $X$, and decode using a joint
typicality decoder.  There are two ways we could get an error.

  First, the actual codeword $\vec X$ and the received output $\vec Y$
could fail to be jointly typical.  But the theory of typical sets tell
us that this event is very unlikely.

 Second, another codeword $\vec X(\hat{m})$ could be jointly typical
with $\vec Y$, despite $\vec X(\hat{m})$ and $\vec Y$ actually being
independent of each other.  Since $\vec X(\hat{m})$ and $\vec Y$ are
very likely to (marginally) typical, joint typicality occurs with
approximate probability
  \begin{align*}
    \frac{ \# \text{ jointly typical }(\vec x, \vec y) }
          { \# \text{ typical }\vec x \times
            \# \text{ typical }\vec y }
       &\approx \frac{2^{T \HH(X,Y)}}{2^{T \HH(X)} 2^{T \HH(Y)}} \\
       &= 2^{-T(\HH(X) + \HH(Y) - \HH(X,Y))} \\
       &= 2^{-T \I(X:Y)} ,
  \end{align*}
by standard facts about typical sets (see the handbook, Section 1.2).
Hence the probability of error is approximately
  \begin{align*}
    e &\leq \sum_{\hat m \neq m} \Pr ( \text{$\vec X(\hat m)$ and $\vec Y$
                                       jointly typical}) \\
      &\approx \sum_{\hat m \neq m} 2^{-T \I(X:Y)} \\
      &= (M-1) 2^{-T \I(X:Y)} \\
      &= \big(\big\lceil 2^{Tr} \big\rceil - 1\big) 2^{-T \I(X:Y)} \\
      &\leq 2^{Tr} 2^{-T \I(X:Y)} \\
      &= 2^{-T (\I(X:Y) - r)} .
  \end{align*}
So provided $r < \I(X:Y)$, then by choosing $T$ large enough, the
error probability can be made arbitrarily small.

Choose $X$ to maximise $\I(X:Y)$ to get the result.
\end{proof}

In a sense, this theorem is quite a `lucky'
result: it turns out that the lower bound on
capacity given by Shannon's random coding argument and the upper bound given by
Fano's inequality coincide, to give us the equality $c = \max_X \I(X:Y)$.

For other networks, we may not be so lucky.  However, similar proof strategies can be useful.
If we can show that all rates below some $r_*$ are achievable, this gives us a lower bound
on capacity: $c\geq r_*$.  Conversely, if we can show that no rates above some $r^*$
are achievable, then we have an upper bound: $c \leq r^*$.  In the point-to-point case, we have
$r_* = r^* = c$.  But even if there is a gap between the upper and lower bounds,
the result can be useful in giving us an approximation to the capacity.  In particular,
sometimes there may be a limiting sense in which the upper and lower bounds are asymptotically equal --
for example, as signal power or number of users tends to infinity.

\bigskip

\noindent It is often useful to think not just of individual codes, but of \emph{families} of codes. One family of codes we have already seen is the repetition code (Definition 1.8).

MacKay \cite[Section 11.4]{MacKay} divides families of codes into three separate categories, depending on how effective they are for their channel.
\begin{description}
  \item[Bad codes] In bad families of codes, as we force the error probability to $0$,
    the rate of the codes approaches $0$ also.
  \item[Good codes] In good families of codes, as we force the error probability to $0$,
    the rate of the codes is bounded above $0$, but below capacity.
  \item[Very good codes] In very good families of codes, as we force the error probability to $0$,
    the rate of the codes can be maintained arbitrarily close to capacity.
\end{description}

Earlier we saw that rate of the repetition code is $1/T$, and its error probability is bounded by
$e > p^T$. Hence, to force the error probability $e$ to $0$, we must send $T \to \infty$, and the rate tends to $0$. Hence, the repetition code is a bad code. (MacKay notes, however, that bad codes are not necessarily practically useless \cite[p. 183]{MacKay}.)

It is sufficient for this thesis to note that Shannon's channel coding theorem (Theorem \ref{Shannons2}) tells us that very good (capacity achieving) codes do exist, and that we can do much better than simple codes like the repetition code. (The design of good and very good practical codes is outside the scope of this thesis.)

In later work in this thesis, rather than finding new codes from
scratch, we will instead \emph{adapt} these very good point-to-point channel codes
for use in large networks.

\bigskip

\noindent A useful class of codes for finite field channels is the class of \defn{linear codes}. If we take $\mathcal M = \mathbb F_q^S$ for the message set again, then a linear code is a code whose encoding function $\vec x \colon \mathbb F_q^S \to \mathbb F_q^T$ is a linear map.

It's often useful to represent this linear map by an $S \times T$ matrix $\mat G$, so $\vec x(\vec m) = \mat G \vec m$. We call $\mat G$ the \emph{generator matrix} of the code. The rate of such a code is $(\log M)/T = (S/T) \log q$.

\addcontentsline{lof}{figure}{\numberline{\protect\textbf{Definition 1.14}} Linear code} 
\begin{definition}
  Consider a finite field channel of size $q$. Then a \defn{linear code} is a $(q^S,T)$-code
  with message set $\mathcal M = \mathbb F_q^S$ and encoding function
  $\vec x(\vec m) = \mat G \vec m$ for some \defn{generator matrix}
  $\mat G \in\mathbb F_q^{S\times T}$. Any decoding function may be used.
  
  We call $S$ the \defn{rank} of the code.
\end{definition}

For example, the $T$-repetition code is a linear code with field size $q=2$, rank $S=1$ and $1 \times T$ generator matrix $\mat G = (\one\ \one \cdots\ \one)$.

The important fact about linear codes (at least for finite field channels) is that, when paired with an optimal decoder, very good (capacity achieving) linear codes exist \cite[Chapter 14]{MacKay}.  Thus, if we restrict our attention only to linear codes, we can still achieve all rates up to the capacity $c = \mathbb D(Z)$ of the finite field channel.

\addcontentsline{lof}{figure}{\numberline{\protect\textbf{Theorem 1.15}} Very good linear codes exist} 
\begin{theorem}
  Very good linear codes exist for all finite field channels with nonzero capacity.
\end{theorem}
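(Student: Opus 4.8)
The plan is to run Shannon's random coding argument, but with the randomness confined to the class of linear codes, so that the code produced is automatically linear. Fix a target rate $r < c = \D(Z)$ and an error tolerance $\epsilon > 0$. Choose integers $S,T$ (with $T$ large) so that the rate $R := (S/T)\log q$ satisfies $r \le R < c$; the code will then have $M = q^S = 2^{TR}$ messages and rate $R \ge r$. Now draw the generator matrix $\mat G$ by taking each of its entries independently and uniformly from $\F_q$, giving the random linear code $\vec x(\vec m) = \mat G \vec m$. I would show that the error probability, averaged over this random choice of $\mat G$, tends to $0$ as $T \to \infty$; this forces the existence of at least one deterministic generator matrix whose error probability is below $\epsilon$, which is exactly what ``very good'' demands.

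For decoding I would use a threshold (typical-set) decoder adapted to the additive structure $\vec y = \mat G \vec m + \vec z$ of the finite field channel: given $\vec y$, decode to the unique $\hat{\vec m}$ for which the inferred noise $\vec y - \mat G \hat{\vec m}$ lies in the typical set of the noise variable $Z$, declaring an error if no such message exists or it is not unique. The linchpin of the whole argument is a structural property of random linear codes: because $q$ is prime, $\F_q$ is a field, and so for every \emph{nonzero} $\vec v \in \F_q^S$ the vector $\mat G \vec v$ is uniformly distributed on $\F_q^T$. One sees this by conditioning on all columns of $\mat G$ but one corresponding to a nonzero (hence invertible) coordinate of $\vec v$; scaling a uniform vector by that invertible scalar and adding a constant leaves it uniform. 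Establishing this uniformity lemma cleanly is, I expect, the main obstacle: it is precisely where primality of $q$ is essential, and it is the feature that substitutes for the full independence of codewords used in the proof of Theorem \ref{Shannons2}.

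With this property in hand the error analysis mirrors the sketch of achievability for Theorem \ref{Shannons2}. There are two ways to err. First, the true noise $\vec z$ may fail to be typical; by the asymptotic equipartition property this has probability tending to $0$. Second, some competing message $\vec m' \neq \vec m$ may have $\vec y - \mat G \vec m' = \mat G(\vec m - \vec m') + \vec z$ landing in the typical set of $Z$. Since $\vec m - \vec m' \neq \zero$ and $\mat G$ is independent of $\vec z$, the uniformity lemma shows that, conditioned on any value of $\vec z$, the inferred noise is uniform on $\F_q^T$; it is therefore typical with probability at most $2^{T\HH(Z)}/q^T = 2^{-T(\log q - \HH(Z))} = 2^{-T\D(Z)}$. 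A union bound over the at most $q^S = 2^{TR}$ competing messages bounds this contribution by $2^{TR} 2^{-T\D(Z)} = 2^{-T(\D(Z) - R)}$, which vanishes precisely because $R < \D(Z) = c$.

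Finally I would tidy up two points. By the translation invariance of linear codes the error probability is identical for every codeword, so the average error probability equals the maximal one, and there is no need for the expurgation step required for general random codes. And since the \emph{expected} error over the random $\mat G$ is small, some fixed $\mat G$ performs at least as well, yielding a genuine deterministic linear code of rate $R \ge r$ with error below $\epsilon$. As $r$ ranges up to $c$ and $\epsilon$ down to $0$ this produces a very good linear family, establishing the theorem. The only genuinely new ingredient beyond the proof of Shannon's theorem is the uniformity of $\mat G \vec v$, so that is where I would concentrate the effort.
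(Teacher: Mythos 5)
The paper does not actually prove this theorem: it states it and refers the reader to MacKay \cite[Chapter 14]{MacKay}, so there is no internal proof to compare against. Your argument is correct and is essentially the standard one that the cited reference gives for the binary symmetric channel, generalised to additive noise over $\F_q$: draw the generator matrix uniformly at random, decode by testing whether the inferred noise $\vec y - \mat G \hat{\vec m}$ is typical for $Z$, and use the fact that $\mat G \vec v$ is uniform on $\F_q^T$ for every nonzero $\vec v$. You have correctly identified that lemma as the crux and your proof of it (condition on all but one column, use invertibility of a nonzero coordinate, note that adding an independent constant preserves uniformity) is sound; this is exactly where primality of $q$ enters. Two points you handle implicitly but correctly: the union bound over competing messages needs only this pairwise uniformity, not mutual independence of codewords, so the loss of full independence relative to Shannon's argument costs nothing; and even degenerate draws of $\mat G$ (with nontrivial kernel) are absorbed into the average error probability, so the extracted good $\mat G$ automatically behaves well. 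Your observation that the error probability is the same for every message under this decoder -- because the error event depends only on $\vec z$ and the set $\{\mat G\vec v : \vec v \neq \vec 0\}$ -- is also valid and does dispense with expurgation. The only cosmetic wrinkle is the orientation of $\mat G$: the paper declares $\mat G \in \F_q^{S \times T}$ yet writes $\vec x(\vec m) = \mat G \vec m$ with $\vec m \in \F_q^S$ and $\vec x \in \F_q^T$, which is inconsistent; your column-conditioning argument tacitly (and reasonably) treats $\mat G$ as $T \times S$. In short, you have supplied a correct proof of a statement the paper leaves to an external reference.
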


The current state of the art for high-rate practical codes -- that is codes with low encoding and decoding complexity and moderate block lengths -- is a class of random linear codes called low-density parity-check codes \cite[Chapter 47]{MacKay}. (See the textbook of Richardson and Urbanke \cite{RichardsonUrbanke} for more details.)


\section{Power}

We have not yet looked at codes for the Gaussian channel.

The capacity of the Gaussian channel is infinite, as there exist codes with arbitrarily
high rates and simultaneously arbitrarily low error probabilities.

To see this, consider the following.  Let $\mathcal{M} = \{1,2,\dots,M\}$, encode
using $x(m) = mN$ for some very large $N$, and decode to the nearest positive integer to $y/N$
(which should be roughly $m$).  This is an $(M,1)$-code,
with rate $\log M$. By picking $N$ large enough, the error probability can be made arbitrarily
small; but by picking $M$ large enough, the rate can be made arbitrarily high.

This is neither mathematically interesting nor physically realistic.  Antennas for wireless networks are not
capable of transmitting at arbitrarily high powers.  Thus, we introduce a \defn{power constraint}: that for all
codewords $\vec x$, the power -- the mean square value -- is limited by a prescribed value $P$.

\addcontentsline{lof}{figure}{\numberline{\protect\textbf{Definition 1.16}} Power}
\begin{definition}
  The power of a codeword $\vec x$ is defined to be
  \[ \overline{| x|^2} := \frac1T \sum_{t=1}^T |x[t]|^2. \]

The power of a code is defined to be the maximum power of any codeword.
\end{definition}

So we want to limit our attention to codes whose power is at most
the power constraint $P$.

\addcontentsline{lof}{figure}{\numberline{\protect\textbf{Definition 1.17}} Achievable rate with power constraint}
\begin{definition}
  Consider a channel $(\mathcal X, \mathcal Y, p(y \given x))$.
  
  A rate $r$ is
  \defn{achievable with power $P$} if for any error tolerance $\epsilon > 0$, there exists a
  code of power at most $P$ with rate at least $r$ and error probability lower than $\epsilon$.
  
  Otherwise, $r$ is \defn{not achievable with power $P$}, in that there exists an error threshold
  $\epsilon$ such that there exists no code of power at most $P$ with rate at least $r$ and error
  probability lower than $\epsilon$.
\end{definition}

\addcontentsline{lof}{figure}{\numberline{\protect\textbf{Definition 1.18}} Capacity with power constraint}
\begin{definition}
  Consider a channel $(\mathcal X, \mathcal Y, p(y \given x)) $.  Then we define
  the \defn{capacity of the channel with power $P$} to be the supremum of all achievable
  rates:
    \[ c := \sup \{r : r \text{ is achievable with power } P \}. \]
    
  In other words, all rates $r$ less than $c$ are achievable with power $P$,
  but no $r$ above $c$ is achievable.
\end{definition}

Shannon calculated the capacity of the Gaussian channel with a power constraint in his original paper \cite[Theorem 17]{Shannon}

\addcontentsline{lof}{figure}{\numberline{\protect\textbf{Theorem 1.19}} Capacity of Gaussian channel with power constraint}
\begin{theorem}
  Consider the Gaussian channel with power constraint $P$ and noise power $\sigma^2$.
  
  Then the capacity is given
  by the formula $c = \log{(1+\snr)}$, where we have defined the \defn{signal-to-noise ratio}
  $\snr := P/\sigma^2$ to be the ratio of the signal power to the noise power
\end{theorem}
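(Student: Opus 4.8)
The plan is to prove that capacity equals the power-constrained maximum mutual information, $c = \max_{X : \Ex|X|^2 \le P} \I(X:Y)$, and then to evaluate this maximum using the maximum-entropy facts from the handbook. The characterisation of capacity as a maximum of mutual information is exactly the content of Shannon's channel coding theorem (Theorem \ref{Shannons2}), but that theorem was stated for discrete channels, so the first task is to argue that the same characterisation survives in the continuous Gaussian setting once the power constraint is imposed. For achievability I would reuse the random-coding/joint-typicality argument sketched after Theorem \ref{Shannons2}, drawing the codeword letters IID from the maximising input distribution $X$ but discarding any codeword whose empirical power exceeds $P$; by the law of large numbers almost all randomly drawn codewords satisfy this, so the rate is essentially unaffected. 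The converse follows from Fano's inequality together with the constraint $\Ex|X|^2 \le P$. I would state this reduction and then concentrate on the optimisation, since that is where the explicit formula comes from.

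Given the reduction, evaluating the maximum is short. Writing $\I(X:Y) = \HH(Y) - \HH(Y \given X)$ by (HB7) and using $Y = X + Z$ with $Z \sim \CN(0,\sigma^2)$ independent of $X$, the conditional term collapses: by (HB4) we get $\HH(Y \given X) = \HH(X + Z \given X) = \HH(Z)$, and by (HB10) this equals $\log(\pi \ex \sigma^2)$, a constant independent of the input. Hence maximising $\I(X:Y)$ over admissible inputs is the same as maximising $\HH(Y)$.

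To bound $\HH(Y)$ I would compute the power of the output. Since $X$ and $Z$ are independent and $Z$ has zero mean, the cross terms vanish and $\Ex|Y|^2 = \Ex|X|^2 + \Ex|Z|^2 \le P + \sigma^2$. The maximum-entropy fact (HB10) then gives $\HH(Y) \le \log\big(\pi \ex (P + \sigma^2)\big)$, with equality if and only if $Y \sim \CN(0, P + \sigma^2)$. This bound is attained by the natural choice $X \sim \CN(0,P)$: then $X$ meets the power constraint and, as the sum of two independent circularly-symmetric complex Gaussians, $Y = X + Z \sim \CN(0, P + \sigma^2)$. Subtracting the two entropies yields
\[ \max_{X : \Ex|X|^2 \le P} \I(X:Y) = \log\big(\pi \ex (P+\sigma^2)\big) - \log(\pi \ex \sigma^2) = \log\frac{P+\sigma^2}{\sigma^2} = \log(1 + \snr), \]
as claimed.

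The main obstacle is the first paragraph rather than the computation: justifying that the discrete statement of Shannon's theorem carries over to the continuous Gaussian alphabet with a power constraint. The delicate points are that the random codebook must be conditioned to obey the power constraint without losing rate, and that both the joint-typicality decoder and the Fano converse go through with differential entropies in place of discrete entropies. Everything after that reduction is a routine application of the handbook identities (HB4), (HB7), and (HB10).
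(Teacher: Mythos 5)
Your proposal is correct and follows essentially the same route as the paper's sketch proof: reduce to $c = \max_{X : \Ex|X|^2 \leq P} \I(X:Y)$ (which the paper likewise only asserts "can be shown" in the manner of Theorem \ref{Shannons2}), then use (HB7), (HB4), and the maximum-entropy property (HB10) to conclude that the optimum is attained at $X \sim \CN(0,P)$, giving $\log(1+\snr)$. Your extra remarks on expurgating codewords that violate the power constraint fill in a detail the paper glosses over, but do not change the argument.
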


	\begin{center}
       \includegraphics[scale=0.75]{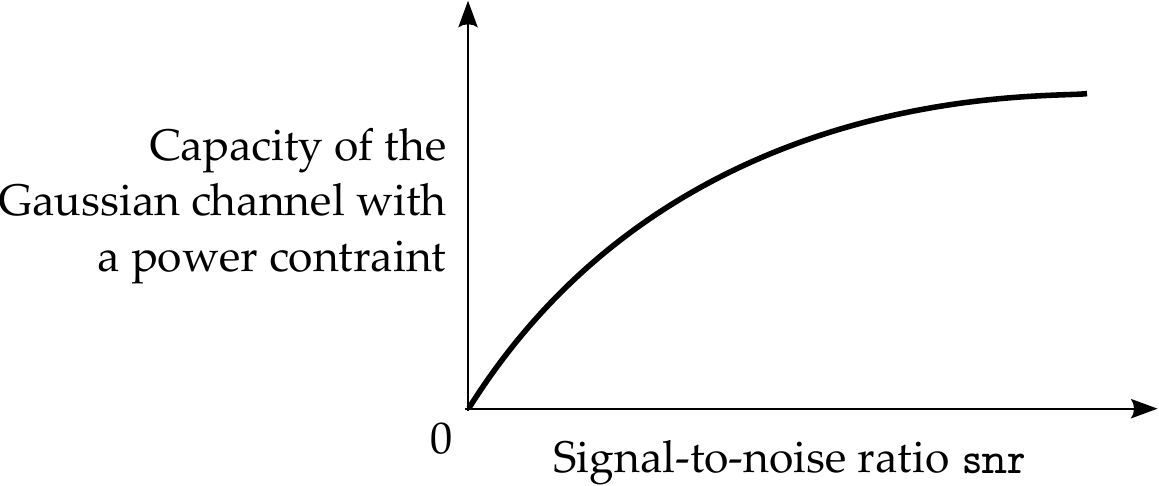}
	\end{center}

Two useful approximations for the capacity of the Gaussian channel are
  \begin{equation} \label{eq:gapprox}
    c = \log(1+\snr) \approx \begin{cases} \snr \log \ex & \text{for small $\snr$,} \\
                                           \log\snr      & \text{for large $\snr$.} \end{cases} 
  \end{equation}
So at low $\snr$, capacity grows linearly with $\snr$; whereas at high $\snr$, capacity
only grows logarithmically.

Later in this thesis, we will often see examples of channels and networks whose
capacities at high-$\snr$ are often a constant fraction of the Gaussian channel capacity $c \approx \log \snr$.
If a channel or network has capacity $c = d \log \snr + o(\log \snr)$ at $\snr\to\infty$ for some constant $d$, we
say that the channel has $d$ \defn{degrees of freedom}.  (This is the Gaussian analogy
to the finite field degrees of freedom in Defintion \ref{def:ffdof}.)

\addcontentsline{lof}{figure}{\numberline{\protect\textbf{Definition 1.20}} Degrees of freedom (Gaussian case)}
\begin{definition} \label{def:gaussiandof}
  Given a channel with capacity $c(\snr)$ and power constraint $P = \snr\, \sigma^2$,
  we define the \defn{degrees of freedom} to be
    \[ \dof = \lim_{\snr \to \infty} \frac{c(\snr)}{\log \snr}  \]
  where this limit exists.
\end{definition}

From \eqref{eq:gapprox} the Gaussian channel itself has a single degree of freedom,
that is, we have $\dof = 1$.
                                            
\begin{proof}[Sketch proof of Theorem 1.19]
  In a similar manner to Shannon's channel coding theorem (Theorem \ref{Shannons2}),
  it can be shown that the capacity of the Gaussian channel with a power constraint
  is $\max_{X : \Ex |X|^2 \leq P} \I(X:Y)$.  (This is certainly a believable result:
  it is the same formula as for discrete channels, with the additional constraint
  that the expected power satisfies the constraint.)
  
  It remains to calculate the mutual information.  This is
    \begin{align}
      \I(X:Y) &= \HH(Y) - \HH(Y\given X)   \tag{HB7} \\
              &= \HH(Y) - \HH(X+Z\given X) \tag{$Y = X + Z$, Definition 1.3}                \\
              &= \HH(Y) - \HH(Z)           \tag{HB5}\\
              &= \HH(Y) - \log(\pi \ee \sigma^2). \tag{$Z \sim \CN (0, \sigma^2)$, Definition 1.3} 
    \end{align}
  Note that 
    \[ \Ex |Y|^2 = \Ex |X+Z|^2 = \Ex|X|^2 + \Ex |Z|^2 \leq P + \sigma^2 . \]
  Hence the entropy of $Y$  is maximised by choosing $X$ to be complex Gaussian with variance $P$,
  by (HB10), so that $Y$ is Gaussian also, with power $P + \sigma^2$, giving
    \begin{align*}
      c := \max_X \I(X:Y) 
        &= \log\big((\pi \ee (P+\sigma^2)\big) - \log(\pi \ee \sigma^2) \\
        &= \log\left( \frac{\pi \ee (P+\sigma^2)}{\pi \ee \sigma^2} \right) \\
        &= \log\left( 1 + \frac{P}{\sigma^2} \right) ,
    \end{align*}
  as required.
\end{proof}

Note that the input distribution that achieves capacity is $X \sim \CN(0,P)$.  So
the signal is statistically the same -- that is, distributed in the same
parametric family -- as noise, but with a different power.  This fact will be useful later.

As we mentioned earlier, Theorem 1.19 tells us that very good (capacity achieving)
codes exist for the Gaussian channel.  Later, we will adapt these very good point-to-point codes
for use in large Gaussian networks.

\section{Fading}

In their standard forms, the finite field and Gaussian channels are represented
by the formula $Y[t] = x[t] + Z[t]$.  We can interpret this as the signal being
transmitted perfectly through the channel, except for the addition of some noise.
However, for a more realistic model of wireless networks, we need to account for
the way the signal itself transforms as it is sent through the channel.  For example,
in the Gaussian channel, we might expect the signal power to decay over long
distances, and standard physical models  suggest that the phase of the signal
$\arg x[t]$ will alter as it is transmitted through space \cite[Section 2.1]{TseViswanath}.

We can model these concepts by introducing a \defn{fading} (or
\defn{channel state}) \defn{coefficient} $H[t]$. Our channels now become
$Y[t] = H[t]x[t] + Z[t]$.

We are interested in three cases:
  \begin{description}
    \item[Fixed fading] where $H[t] = h$ is a fixed deterministic constant
      (Subsection 1.4.1);
    \item[Fast fading] where the $H[t]$ are IID random
      (Subsection 1.4.2);
    \item[Slow fading] where $H[t]=H$ is random, but fixed for all time
      (Subsection 1.4.3).
  \end{description}

\bigskip

\noindent Before we continue, we have a useful simplification to make.  Since the fading Gaussian channel
  \[ Y[t] = H[t]x[t] + Z[t] \qquad Z[t] \sim \CN (0,\sigma^2) \qquad \overline{|x|^2} \leq P  \]
will be used a lot in this thesis, it makes sense to change our units, so that the noise power
and power constraint are both unity.
To that end, set
  \[ \tilde Y[t] := \frac{1}{\sigma} Y[t]  \qquad
     \tilde H[t] := \sqrt{\frac{P}{\sigma^2}} H[t]  \qquad 
     \tilde x[t] := \frac{1}{\sqrt P} x[t]  \qquad
     \tilde Z[t] := \frac{1}{\sigma} Z[t] . \]
Under this change of units we have, after dividing through by $\sigma$,
  \begin{equation}\label{units}
    \tilde Y[t] = \tilde H[t]\tilde x[t] + \tilde Z[t] \qquad \tilde Z[t] \sim \CN (0,1)
    \qquad \overline{|\tilde x|^2} \leq 1 .
  \end{equation}
From now on, we will solely use this model, so will shall drop the tildes.
Note that under this change of units,
the signal-to-noise ratio
  \[ \frac{P |H|^2}{\sigma^2} = \frac{1 |\tilde H|^2}{1} \]
remains the same.
(Note also that under this change, the Gaussian channel with no fading inherits a
fixed fading coefficient $H[t] = h = \sqrt{P/\sigma^2}$.)

\subsection{Fixed fading}

Fixed fading models fading that is constant and predictable, such
    as the decay in signal power between a non-moving transmitter and a non-moving receiver within a fixed
    environment.
    
    We model the fading coefficient as a deterministic constant fixed for all
    time, $H[t] = h$ for all $t$, giving $Y[t] = hx[t] + Z[t]$.
    
    How does the capacity alter now?
    
    For the finite field channel we have $h \in \F_q$.  Note that for nonzero
    $h$, the function $x \mapsto hx$ is a bijection, so the channel is
    equivalent to that without fading ($h=\mathtt{1}$), and still has capacity
    $\D(Z)$, from Theorem 1.12.  On the other hand, if $h = \zero$, then $Y$ is always $\zero$,
    and all signals are indistinguishable, so the capacity is $0$.  Hence,
    for the finite field channel, the capacity is
      \[ c = c(h) = \begin{cases} 0     & \text{if $h=\zero$} \\
                                  \D(Z) & \text{otherwise.}  \end{cases} \]
		(For simplicity, we often just assume that $h$ is nonzero, so the capacity
		is unchanged as $c = \D(Z)$.)

    For the Gaussian channel with power constraint $P=1$ we have $h \in \C$.  The
    power constraint is now $|hx|^2 = |h|^2 |x|^2 \leq |h|^2$.
    So this channel is equivalent to one with no fading, but with the power
    constraint changed from $1$ to $|h|^2$.  The capacity is thus
      \begin{equation*} \label{ffgauss}
        c = c(h) = \log (1 + |h|^2 ) = \log(1+\snr) ,
      \end{equation*}
    with the new convention that $\snr$ denotes the signal-to-noise ratio
    at the receiver: $\snr = |h|^2 P /\sigma^2 = |h|^2$.
    
    To summarize:
    \addcontentsline{lof}{figure}{\numberline{\protect\textbf{Theorem 1.21}} Capacity with fixed fading}
      \begin{theorem}
        For $h\neq\zero$, the capacity of the finite field channel with fixed
        fading is $c = \D(Z)$.
        
        The capacity of the Gaussian channel with fixed fading is
        $c = \log(1 + \snr)$, where $\snr = |h|^2$.
      \end{theorem}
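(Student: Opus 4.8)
The plan is to reduce each fading channel to its already-solved non-fading counterpart, so that Theorems 1.12 and 1.19 can be invoked directly rather than re-deriving the capacity from scratch. The unifying observation is that, for $h \neq \zero$, the map $x \mapsto hx$ is an invertible transformation of the input alphabet, so multiplying by $h$ merely relabels (and, in the Gaussian case, rescales) the inputs without destroying information.

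For the finite field channel, I would first note that since $\F_q$ is a field and $h \neq \zero$, the map $x \mapsto hx$ is a bijection with inverse $y \mapsto h^{-1}y$. Given any $(M,T)$-code for the non-fading channel $Y = x + Z$, pre-multiplying every codeword letter by $h^{-1}$ yields an $(M,T)$-code for the fading channel $Y = hx + Z$ of identical rate and identical error probability, since $h(h^{-1}x) + Z = x + Z$ lets the receiver reuse the original decoder unchanged. The correspondence runs both ways, so the two channels have exactly the same achievable rates and hence the same capacity $\D(Z)$ by Theorem 1.12. Alternatively one can argue via Shannon's theorem: by the same translation-invariance used in Theorem 1.12 we get $\I(X:Y) = \HH(Y) - \HH(Z)$, and because $X \mapsto hX$ is a bijection, a uniform $X$ makes $hX$ uniform and thus $Y = hX + Z$ uniform, giving $\HH(Y) = \log q$ and $c = \log q - \HH(Z) = \D(Z)$.

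For the Gaussian channel, I would track what multiplication by $h \in \C$ does to the power constraint. Writing the effective input as $x' = hx$, the constraint $\overline{|x|^2} \leq 1$ becomes $\overline{|x'|^2} = |h|^2\,\overline{|x|^2} \leq |h|^2$, so the fading channel is precisely a non-fading Gaussian channel $Y = x' + Z$ whose input $x'$ obeys the power constraint $|h|^2$ in place of $1$. (The phase $\arg h$ is immaterial: it is an invertible transformation affecting neither $|x'|$ nor the rotationally symmetric form of the constraint.) Applying Theorem 1.19 with effective power $P' = |h|^2$ and unit noise power then gives $c = \log(1 + |h|^2) = \log(1 + \snr)$ with $\snr = |h|^2$, as claimed.

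I expect the only real subtlety to be bookkeeping rather than genuine difficulty. In the finite field case one must verify that a uniform $X$ really does produce a uniform $Y$ — which holds because $hX$ is uniform and independent of $Z$, and adding an independent variable to a uniform variable on the finite additive group $\F_q$ preserves uniformity. In the Gaussian case one must be careful that the power constraint scales by $|h|^2$ while the noise power $\sigma^2$ is held fixed, so that it is the received ratio $|h|^2/\sigma^2$ — the $\snr$ — that governs capacity. Since both channels reduce to previously established results, no new machinery is required.
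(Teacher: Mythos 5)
Your proposal is correct and follows essentially the same route as the paper: for the finite field case the paper likewise observes that $x \mapsto hx$ is a bijection for $h \neq \zero$, reducing to the non-fading channel of Theorem 1.12, and for the Gaussian case it likewise absorbs $h$ into the power constraint, turning it into $|h|^2$ and invoking Theorem 1.19. The extra detail you supply (the explicit codeword correspondence and the uniformity check) is sound but not needed beyond what the paper sketches.
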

    
    For modelling wireless networks, we will often use a Gaussian channel with
    fixed fading coefficient $h$ that decays like a power law
    over distance.  That is, we have $h = k \rho^{-\alpha/2}$, where $\rho > 0$ is the distance
    between a transmitter and receiver and $k>0$ is a constant.  
    
    The parameter $\alpha>0$ -- called
    the \defn{attenuation} -- represents how resistive the environment is to
    the transmission of radio waves.  (Some authors call $\alpha/2$ the attenuation;
    we do not.)  Low $\alpha$ represents an environment with few
    obstacles to signals; high $\alpha$ implies that a lot of the signal power is
    absorbed before reaching the receiver.  In free space, standard physical
    considerations imply that $\alpha = 2$ or $3$; for built-up areas,
    values of $\alpha$ of roughly $4$ or $5$ seem more appropriate \cite[Section 2.1]{TseViswanath}.
    The capacity of such a channel is $c = \log (1+k^2 \rho^{-\alpha})$, by Theorem 1.21.
    
    (Power-law attenuation fails to be
    realistic for small distances $\rho \ll 1$.  Here the received power would be greater
    than the transmitted power, which violates the conservation of energy.
    Some authors therefore prefer alternative models such as $h = \min\{1,k\rho^{-\alpha/2}\}$
    or $h = k(\rho+\rho_0)^{-\alpha/2}$ for some fixed constant $\rho_0$.)
    
    We could also include a fixed phase change in this model by setting
    $h = k \rho^{-\alpha/2} \ex^{\ii\theta}$.
    In free space, the phase would scale linearly with distance, so
    $h = k \rho^{-\alpha/2} \ex^{2\pi\ii\rho/\lambda}$, where $\lambda$ is the carrier
    wavelength \cite{leveque3}.
    Note that we still have $|h|^2 = k^2 \rho^{-\alpha}$, so the capacity is the
    same.
    
\subsection{Fast fading}

Fast fading models a situation where the state of the channel is changing
    rapidly, such as a commuter using a mobile phone on a train.  We model this as
    $H[t]$ being random according to some distribution $H$ but renewing at each
    channel use; that is, the $H[t]$ are independent and identically distributed
    like $H$.
    
When we deal with fast fading, the performance of a channel will depend on
whether the transmitter and receiver know the current value of $H[t]$, or
just the general distribution $H$, and whether the transmitter can use this
knowledge to vary their power.

Throughout this thesis, we assume that both the transmitter and the receiver
know $H[t]$.  This is known as having \defn{perfect channel state information at the
transmitter} (CSIT) and \defn{at the receiver} (CSIR).  We presume that this
knowledge is \defn{causal}, that is, the receiver and transmitter learn $H[t]$ immediately
prior to the transmission of $x[t]$ and reception of $y[t]$ respectively.  In other
words, they have no prediction of the channel future to use (except, of course, knowing
the future channel states will be IID according to $H$).

When the transmitter has CSIT in a Gaussian channel,
we must specify whether or not she can use this information
to operate at varying power.  So there are two different types of power constraint.  Let
$x[t](h)$ be the $t$th codeword letter chosen in channel state $h$,
and let $\mathcal H$ be the support of $H$.  (Recall from our simplification
\eqref{units} that we now have $P=1$.)
  \begin{description}
    \item[Universal] A universal power constraint demands that the power
      constraint is held universally over each channel state realisation.
      That is, we demand
        \[ \overline{|x[t](h)|^2} = \frac1T\sum_{t=1}^T |x[t](h)|^2 \leq 1 \qquad \text{for all $h \in \mathcal H$.} \]
    \item[Average] An average power demands that the power constraint
      is held when averaged over all channel state realisations.  That is,
      we demand
        \[ \mathbb E_H \overline{|x[t](H)|^2}
             = \frac1T \sum_{h \in \mathcal H} \Prob(H=h) \sum_{t=1}^T |x[t](H)|^2
             \leq 1 . \]
      (The second term assumes $H$ is discrete; the summation can be replaced
      by an integral if $H$ is continuous.)
  \end{description}

An average power constraint allows a transmitter to use extra power when the channel is at
its strongest, and save power when the channel is weak.  (For more details, see the
textbook of Tse and Viswanath \cite[Subsection 5.3.3]{TseViswanath}.)
  
In this thesis, we always assume a universal power constraint.  First, it is mathematically
simpler to deal with.  Second, it is physically unrealistic for transmitters to operate
above their average power for long periods of time.

(Similarly, in a frequency-selective channel, one must specify whether the the power constraint
is enforced in each individual subchannel, or an average across all frequencies.)

So, assuming perfect CSIT and CSIR (with a universal power constraint in the Gaussian case)
we have the following.
\addcontentsline{lof}{figure}{\numberline{\protect\textbf{Theorem 1.22}} Capacity with fast fading}
      \begin{theorem}\label{fastfading}
        Consider a fast fading channel, and
        let $c(h)$ be the capacity of the channel under fixed fading parameter $h$.
        
        Then the fast fading capacity is equal to the average fixed fading capacity, in that $c = \EE c(H)$.
      \end{theorem}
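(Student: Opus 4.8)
The plan is to establish matching achievability and converse bounds, both equal to $\EE c(H)$, by exploiting the fact that perfect CSIT and CSIR turn the fast fading channel into a family of parallel fixed fading channels indexed by the state. The single unifying observation is that, since Alice and Bob both see the IID state sequence, the channel in slot $t$ behaves exactly like the fixed fading channel of parameter $H[t]$, whose capacity $c(H[t])$ we already know from Theorem 1.21.

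For \emph{achievability} I would use an interleaving argument. Fix a small $\delta>0$ and, for each state $h$ in the support $\mathcal H$, take a fixed-fading code of rate $c(h)-\delta$ with tiny error probability; these exist by Theorem 1.21 together with Shannon's channel coding theorem (Theorem \ref{Shannons2}), and in the Gaussian case the universal power constraint $\overline{|x(h)|^2}\le 1$ is precisely the per-state constraint under which $c(h)=\log(1+|h|^2)$. Over a block of length $T$, Alice and Bob observe the same sequence $H[1],\dots,H[T]$, so they agree on which slots carry state $h$; by the weak law of large numbers there are about $T\,\Prob(H=h)$ of them, and on that sub-block they run the state-$h$ code. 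The total number of bits sent is then roughly $\sum_h T\,\Prob(H=h)\,(c(h)-\delta)=T(\EE c(H)-\delta)$, so the rate approaches $\EE c(H)$ as $\delta\to 0$ while a union bound keeps the overall error small. When $H$ is continuous I would first quantise $\mathcal H$ into finitely many cells at a cost that vanishes with the mesh.

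For the \emph{converse} I would invoke Fano's inequality, just as in the converse to Theorem \ref{Shannons2}. Writing $\vec H=(H[1],\dots,H[T])$, the message $M$ is independent of $\vec H$, so a reliable code must satisfy $\log M\le \I(M:\vec Y\mid \vec H)+T\epsilon_T$ with $\epsilon_T\to 0$. By the data-processing inequality and memorylessness, $\I(M:\vec Y\mid \vec H)\le \I(\vec X:\vec Y\mid \vec H)\le \sum_{t=1}^{T}\I(X[t]:Y[t]\mid H[t])$, the last step following because conditioning reduces $\HH(\vec Y\mid\vec H)$ to a sum of single-letter terms while $\HH(\vec Y\mid\vec X,\vec H)$ already splits up. Conditioned on $H[t]=h$ the channel is the fixed fading channel, so $\I(X[t]:Y[t]\mid H[t]=h)\le c(h)$ by Theorem 1.21; averaging over $H[t]\sim H$ gives $\I(X[t]:Y[t]\mid H[t])\le\EE c(H)$, and summing and dividing by $T$ shows the rate cannot exceed $\EE c(H)$.

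The main obstacle I anticipate is bookkeeping rather than ideas. On the achievability side one must control the randomly (and causally) revealed sub-block lengths so that the rate loss from their fluctuation, and from quantising a continuous state space, both vanish as $T\to\infty$. More importantly, in the Gaussian case it is crucial that the constraint be the \emph{universal}, per-state one: this is exactly what licenses both the per-state code design in the achievability and the per-state single-letterisation $\I(X[t]:Y[t]\mid H[t]=h)\le c(h)$ in the converse. Under an average power constraint the states would share a common power budget, the optimisation would couple across states via water-filling, and the clean identity $c=\EE c(H)$ would fail.
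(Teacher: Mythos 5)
Your proposal follows essentially the same route as the paper's sketch: the achievability is the same interleaving idea of collecting together the slots in which each state $h$ occurs, running a fixed-fading code of rate close to $c(h)$ on that sub-block, and invoking the law of large numbers for the proportions, while the converse via Fano's inequality and the quantisation step for continuous $H$ are exactly what the paper defers to. Your write-up simply fills in more of the single-letterisation details in the converse, and your closing remark about why the \emph{universal} power constraint is essential is a correct and worthwhile observation.
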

    
    In the most general case, this theorem is due to Goldsmith and Varaiya
    \cite{GoldsmithVaraiya}.  We sketch the achievability proof for the case
    when $H$ is discrete.  (Goldsmith and Varaiya attribute the result for
    this simpler case to Wolfowitz \cite[Theorem 4.6.1]{Wolfowitz}.)

    \begin{proof}[Sketch proof]
    Assume $H$ is discrete, in that it can only take values in some
    countable set $\mathcal H$.  Then if we `collect together' all occasions
    when $H[t]$ has some particular value $h$, we can treat that collection of
    channel uses as being through a fixed fading channel with deterministic fading
    coefficient $h$.  So at these times, we can achieve rates up to $c(h)$.
    
    Writing $\pi(h,T)$ for the proportion of time periods when $H[t]= h$,
      \[ \pi(h,T) := \frac1T \big| \big\{ t \in \{1,2,\dots,T\} : H[t] = h \big\} \big| \qquad h\in\mathcal H , \]
    we can achieve the rate
      \[ r = \lim_{T\to\infty} \sum_{h\in\mathcal H} \pi(h,T) c(h) . \]
    But by the strong law of large numbers, we have the ergodicity property
    that $\lim_{T\to\infty} \pi(h,T) = \prob{H=h}$ (almost surely).
    Hence (again almost surely),
      \[ c \geq \sum_{h\in\mathcal H} \prob{H=h} c(h) = \EE c(H) .\]
    
    The converse can be proved using Fano's inequality, as with Shannon's
    channel coding theorem (Theorem \ref{Shannons2}).
    
    The result for continuous $H$ can be derived from this using a quantisation
    argument \cite[Appendix]{GoldsmithVaraiya}.
    \end{proof}
    
    Since this result relies on the sequence of fading parameters being ergodic,
    $c$ is sometimes called the \defn{ergodic capacity}.  So the above result can
    be interpreted as `the ergodic capacity is the average capacity.'  (Later, we
    will see how using interference alignment in networks can allow us to achieve an ergodic
    capacity that is higher than the average capacity.)
    
    Applying Theorem \ref{fastfading} to the finite field channel (Theorem 1.12), we get
      \[ c = \sum_{h\in\mathcal H} \prob{H=h} c(h) = \big( 1 - \prob{H=\zero} \big) \D(Z) . \]
    (Again, we often assume $H$ is never $\zero$, so $c = \D(Z)$ still.)
    
    For the Gaussian channel (Theorem 1.19), we have
      \[ c = \EE c(H) = \EE \log (1 + |H|^2 ) = \EE \log(1 + \SNR). \]
    (Since $\SNR$ is random here, we capitalise it.)
    
    One type of fast fading for the Gaussian channel could be a rapidly changing phase,
    $H[t] = k \ex^{\ii \Theta[t]}$, where $\Theta[t] \sim \text{U}[0,2\pi)$ IID
    over $t$.  This is a good model of wireless communication when there are many paths
    a signal could take from transmitter to receiver (in a built-up area, for example)
    \cite[Subsection 2.4.2]{TseViswanath}.
    Note that here the signal-to-noise ratio is in constant, so the capacity is unchanged.
    
    Another model of wireless communication is \defn{Rayleigh fading} \cite[Subsection 2.4.2]{TseViswanath}, where
    $H[t] \sim \mathbb{C}\text{N}(0,\tau^2)$ for some $\tau>0$.  In this case,
    $|H[t]|^2$ is exponentially distributed with mean $\tau^2$  \cite[(2.53)]{TseViswanath}.
      
\subsection{Slow fading}

Slow fading models the situation where the state of a channel is varying, but is doing so very slowly,
or where the channel state can only be modelled as random, but remains fixed.  Here we take $H[t] = H$ as
initially random, but remaining fixed for all times $t = 1, 2, \dots, T$.

Since the channel state is random, so is the capacity $C=c(H)$: if the fading is particularly deep,
$H \approx 0$, then the capacity is likely to be very low; if the fading is lighter, then the
capacity will be higher.  Specifically, under the event that $H = h$ we have $C = c(h)$.

(As with $H$,
when the capacity is a random variable, we capitalise it as $C$.)

One way to summarise the random variable $C$ would be through its cumulative distribution function
$\pout(r) := \Prob{(C \leq r)} = \Prob{(c(H) \leq r)}$, known as the \defn{outage probability}.
We can interpret this as the following: if we are trying to communicate at some fixed rate $r$, then
then $\pout(r)$ is the probability that we are unable to do so -- we say the channel is in
$\defn{outage}$.

\addcontentsline{lof}{figure}{\numberline{\protect\textbf{Definition 1.23}} Outage probability}
\begin{definition}
  For a slow fading channel with (random) capacity $C$, the \defn{outage probability}
  $\pout\colon \bR_+ \to [0,1]$ of the channel
  is defined by $\pout(r) := \Prob{(C \leq r)}$.
  
  The event $\{C \leq r\}$ is called \defn{outage}.
\end{definition}

For the Gaussian channel, we have (following Theorem 1.19 and recalling that $\log$
denotes $\log_2$)
  \[ \pout(r) = \Prob{(C \leq r)} = \Prob{\big(\log (1 + |H|^2 ) \leq r\big)} = \Prob{( \SNR \leq 2^r - 1)} , \]
where $\SNR = |H|^2$ is the signal-to-noise ratio.

For the finite field channel, we have (following Theorem 1.12)
  \[ \pout(r) = \begin{cases} 
                              \Prob{(H=\zero)} & \text{if $0 \leq r \leq \D(Z)$}  \\
                              0              & \text{if $r > \D(Z)$.}   \end{cases} \]

In wireless networks, a good model is to position nodes at random and use
distance-based attenuation fading.  Since distances between nodes are random,
the fading is random too. But once the nodes are positioned, the distances remain
fixed.  Hence, this gives a form of slow fading.

\section*{Notes}
\addcontentsline{toc}{section}{Notes}

The section consists of a review of the existing literature; the mathematical
content is not claimed to be new.

The basic concepts of information theory as outlined in this chapter are all due to Shannon's
original paper \cite{Shannon}.  An exception is the concept of relative entropy distance, due to
Kullback and Leibler \cite{KullbackLeibler}; and the Hu correspondence, due to Hu \cite{Hu}.

The presentation here closely follows the textbook of
Cover and Thomas \cite[Chapters 2, 7--9, 15]{CoverThomas}.  The textbooks of MacKay \cite[Part II]{MacKay}
and Tse and Viswanath \cite[Chapters 2, 5, 6]{TseViswanath} were also useful.

Although Shannon \cite[Theorem 11]{Shannon}
first came up with the channel coding theorem (Theorem \ref{Shannons2}), he provided only a sketch proof;
the sketch proof provided here is along the lines of the
rigorous proof by Cover \cite{CoverShannon}.  The maximum likelihood approach is
due to Gallager \cite{Gallager}.

Fading was first studied by Shannon \cite{ShannonSide}.
Our treatment of fading follows closely that of Tse and Viswanath \cite[Sections 2.1, 5.4]{TseViswanath}.
The review paper of Biglieri, Proakis, and Shamai (Shitz) \cite{BiglieriProakisShamai},
and a paper by Caire and Shamai (Shitz) \cite{CaireShamai} were useful.

\addtocontents{lof}{\protect\addvspace{20 pt}}

\chapter{Interference}

In this chapter, we will look at ways of dealing with interference in
communications networks.

To start with, we will define information theoretic networks, in a similar
manner to our definition of channels in Chapter 1.

We will then look at methods of combating interference -- the unwanted signals
from other transmitters that a receiver is not interested in.

For the purpose of definiteness, we will consider these in the context
of the interference networks and (mostly) the fading Gaussian case.  
However,
the techniques are useful in wider contexts.

We look at some simple schemes -- interference as noise,
decode and subtract, and resource division -- and then look at a
family of new schemes known as interference alignment. We pay particular
attention to a scheme called ergodic interference alignment, which
we will use later in Chapters 4 and 5.

\section{Wired and wireless networks}

In this chapter, we will outline the theory of networks.  We will concentrate
on accurate models of real-world wireless networks.

Wireless communications are becoming increasingly ubiquitous. From older
technologies like radios, to cutting-edge innovations such as WiFi, Bluetooth
and ZigBee, the convenience of the untethered nature of wireless is popular on
both large and small scales for businesses and consumers alike.

Compared to a wired (or wireline) network, wireless networks provide much greater
challenge to engineers and technicians.  The main problems are:
  \begin{description}
    \item[Broadcast] Each receiver can send only one signal, regardless of how
      many messages they are trying to send to how many people.
    \item[Interference] Receivers receive not just the signal corresponding
      to messages intended for them, but also all of the other transmitted signals
      as well.  These signals are called \defn{interference}.
    \item[Superposition] Receivers cannot tell which signal corresponds to which
      message, but rather receive the superposition (that is, the sum) of all
      such signals.
  \end{description}

\begin{center}
\begin{small}
\begin{tabular}{m{0.17\textwidth}m{0.35\textwidth}m{0.35\textwidth}}
  \toprule
     & \textbf{Wired networks} & \textbf{Wireless networks} \\
  \hline
    Transmission & Different signal transmitted down each wire & Same signal broadcast to all receivers \\
    Channel & Interference-free, independent noise along each wire & Interference from other transmitters and background noise \\
    Reception & Different signal received from each wire & Superposition of all signals \mbox{received} \\
    Central \mbox{difficulty} & Scheduling and routing \mbox{messages} around the network & Dealing with interference \\
  \bottomrule
\end{tabular}
\end{small}
\end{center}

In this thesis, we will mainly be looking at networks where each transmitter wishes to send a single message to a single
receiver, and each receiver requires a single message from a single transmitter.  Thus, the broadcast and superposition problems are less important that that of interference.

We capture this problem mathematically by modelling the network by a probability transition function
  \[ p(y_1, \dots, y_n \given x_1, \dots x_n) \]
relating all transmitted signals to all received signals.

  \begin{center}
    \includegraphics[scale=0.89]{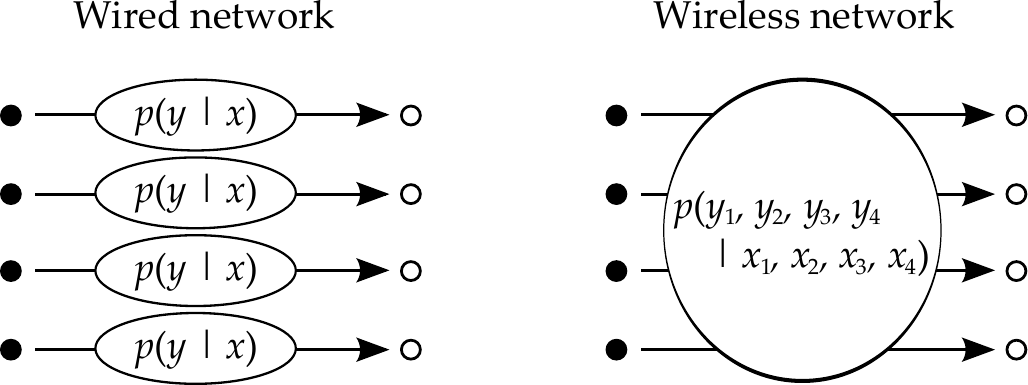}
  \end{center}

Later in this chapter, we consider a number of methods for dealing with such interference.

\section{Networks}

Point-to-point links, as we discussed in the previous chapter, are fairly well understood.  Networks, however, are much trickier.

By a \defn{network}, we mean a number of transmitters and receivers, all trying to send and receive messages through the same medium.

\addcontentsline{lof}{figure}{\numberline{\textbf{Definition 2.1}} Network}
\begin{definition}
A communications \defn{network}  consists of
  \begin{enumerate}
    \item a set $\mathcal T$ of transmitters, each with an input alphabet $\mathcal X_i$;
    \item a set $\mathcal R$ of receivers, each with an output alphabet $\mathcal Y_j$;
    \item a \defn{probability transition function}
      $p\big( (y_j : j\in \mathcal R) \given (x_i : i\in \mathcal T) \big)$ relating them.
  \end{enumerate}
\end{definition}

(In general, an agent is allowed to be a \defn{duplex} operator, that is to be both a transmitter and a receiver,
which can act as a relay in a network.
However, duplex operation will not be used in this thesis, so our definition precludes this.)

Again, we will be interested in the Gaussian and finite field channels with fading.  Because there
are now many transmitters and receivers, we will let $H_{ji}[t]$ denote the fading coefficient
at receiver $j$ from transmitter $i$.

The Gaussian and finite-field networks work in much the same way as the point-to-point channels, with
the change that now receivers experience the superposition (that is, sum) of all signals sent.

\addcontentsline{lof}{figure}{\numberline{\textbf{Definition 2.2}} Gaussian and finite field networks}
\begin{definition}
  \defn{Gaussian networks} have $\mathcal X_i = \mathcal Y_j = \C$ for all $i$ and $j$.  The probability
  transition measure is implied by the relationship
    \[ Y_j[t] = \sum_{i \in \mathcal T} H_{ji}[t] x_i[t] + Z_j[t] \qquad j \in \mathcal R , \]
  where $Z_j[t] \sim \CN(0,1)$ independently across $j$ and $t$.
  
  The \defn{finite field network} of size $q$ with noise $Z$ has $\mathcal X_i = \mathcal Y_j = \F_q$
  for all $i$ and $j$.  The probability transition measure is implied by the relationship
    \[ Y_j[t] = \sum_{i \in \mathcal T} H_{ji}[t] x_i[t] + Z_j[t] \pmod q \qquad j \in \mathcal R , \]
  where $Z_j[t]$ are independently and identically distributed like $Z$.
\end{definition}

When it's convenient, we will write these networks in matrix form, that is
  \[ \vec Y[t] = \mathsf{H}[t] \vec x[t] + \vec Z[t] , \]
where $\vec Y[t] = \big(Y_j[t] : j\in\mathcal R \big)$ is the received vector, $\vec x[t] = \big(x_i[t] : i\in\mathcal T \big)$
is the transmitted vector, $\vec Z[t] = \big(Z_j[t] : j\in\mathcal R \big)$ is the noise vector,
and $\mat H[t] = \big(H_{ji}[t] : i \in \mathcal T, j\in\mathcal R \big)$ is the channel-state matrix.

To design a code for a network, we need to specify which transmitters are trying to send a message to which receivers;
then each transmitter needs an encoding function, and each receiver a decoding function (or more than one, if they are
receiving many messages).

\addcontentsline{lof}{figure}{\numberline{\textbf{Definition 2.3}} Code for a network, rate, sum-rate, error probability}
\begin{definition}
  A \defn{code} for the network
    \[ \bigg( \mathcal T, \mathcal R, (\mathcal X_i : i \in \mathcal T),
        (\mathcal Y_j : j \in \mathcal R), p\big( (y_j : j\in \mathcal R) \given (x_i : i\in \mathcal T) \big)\bigg) \]
  consists of
  \begin{enumerate}
    \item a set $\mathcal L \subseteq \mathcal T \times \mathcal R$ of $L$ direct links (we call
      the other links in $(\mathcal T \times \mathcal R) \setminus \mathcal L$ the
      \defn{crosslinks});
    \item a \defn{message set} $\mathcal M_{ij}$ of cardinality $M_{ij}$ for
      each link $i\link j \in \mathcal L$;
    \item an \defn{encoding function} $\vec x_i\colon \prod_{j:i\to j\in\mathcal L} \mathcal M_{ij} \to\mathcal X_i^T$
      for each transmitter $i \in \mathcal T$;
    \item a \defn{decoding function} $\hat m_{ij} \colon\mathcal Y_j^T\to\mathcal M_{ij}$
      for each link $i\link j \in \mathcal L$.
  \end{enumerate}
  
  On link $i \link j \in \mathcal L$, the rate is $r_{ij} := \log M_{ij} / T$, the rate vector
  is $\vec r = (r_{ij} : i\link j \in \mathcal L )$, and the sum-rate is
  $r_\Sigma := \sum_{i \to j \in \mathcal L} r_{ij}$.  The error probability on link $i \link j$ is
    \[ e_{ij} := \frac1{M_{ij}} \sum_{m\in\mathcal M_{ij}} \prob{\hat m_{ij}(\vec y_j) \neq m\given \vec x_i(m) \text{ sent}} . \]
    
  When dealing with the Gaussian case, the power of transmitter $i$ is
  the maximum value of $\overline{|x_i|^2} := 1/T \sum_{t=1}^T |x_i[t]|^2$ over all $i$'s codewords $\vec x_i$.
\end{definition}

Some common examples of networks are the following:

\addcontentsline{lof}{figure}{\numberline{\textbf{Definition 2.4}} Examples of networks}
\begin{definition}
The \defn{point-to-point link} is just a special case of a network with
    \[ \mathcal T = \{ \text{Alice} \} , \qquad
       \mathcal R = \{ \text{Bob} \} , \qquad
       \mathcal L = \mathcal T \times \mathcal R = \{ \text{Alice} \link \text{Bob} \} . \]

  \begin{center}
    \includegraphics[scale=0.89]{figures/networkp2p}
  \end{center}
  
The \defn{multiple-access network} has multiple transmitters sending to one receiver, so
    \[ \mathcal T = \{ 1, \dots, n \} , \qquad
       \mathcal R = \{ \text{Bob} \} , \qquad
       \mathcal L = \mathcal T \times \mathcal R = \{ 1 \link \text{Bob} , \dots,  n \link \text{Bob} \} . \]
    
      \begin{center}
    \includegraphics[scale=0.89]{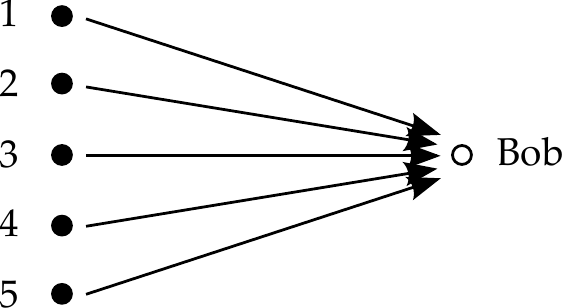}
  \end{center}
    
The \defn{broadcast network} has one transmitter sending to multiple receivers, so
    \[ \mathcal T = \{ \text{Alice} \} , \quad
       \mathcal R = \{ 1, \dots, n \} , \quad
       \mathcal L = \mathcal T \times \mathcal R = \{ \text{Alice} \link 1 , \dots,  \text{Alice} \link n \} . \]
       
         \begin{center}
    \includegraphics[scale=0.89]{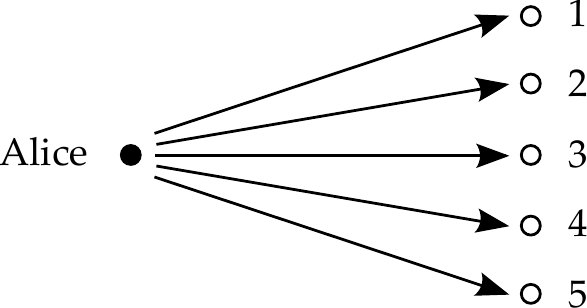}
  \end{center}
       
The \defn{interference network} consists of multiple point-to-point links communicating
  over the same medium, so
    \[ \mathcal T = \{ 1, \dots, n \} , \qquad
       \mathcal R = \{ 1, \dots, n \} , \qquad
       \mathcal L = \{ 1 \link 1 , 2 \link 2 \dots,  n \link n \} . \]
  (Note that this differs from $n$ independent point-to-point links, since each receiver is also receiving
  the signals from the other transmitters, even though they have no use for that signal.)
       
         \begin{center}
    \includegraphics[scale=0.89]{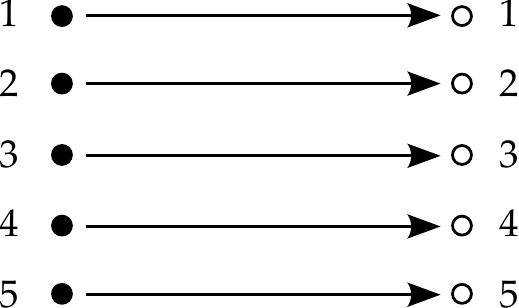}
  \end{center}
       
The \defn{X network} consists of an equal number of transmitters and receivers communicating across
    all possible links, so
    \[ \mathcal T = \{ 1, \dots, n \} , \ 
       \mathcal R = \{ 1, \dots, n \} , \ 
       \mathcal L = \mathcal T \times \mathcal R = \{ 1 \link 1 , 1 \link 2, 1\link 3 \dots,  n \link n \} . \]
       
         \begin{center}
    \includegraphics[scale=0.89]{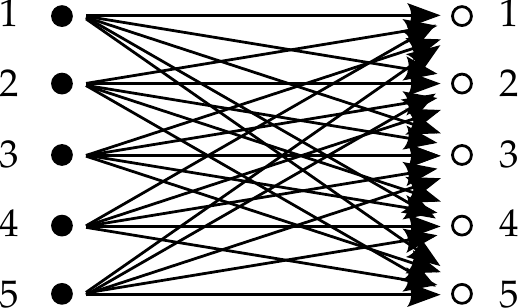}
  \end{center}
       
\end{definition}

As with point-to-point links, we are interested in the maximum rate at which we can send information through
a network.  However, since we now have several competing links in the same network, no one benchmark
will describe this.  For example, achieving a high rate on one particular link may use up a lot of
channel resources, leading to slower communication on another link.  (Consider trying to hold a conversation
in a room where lots of other people are shouting.)

Instead the set of achievable rate vectors will be a region of $L$-space;
we call this the \defn{capacity region}.  (Recall from Definition 2.3 that
$L = |\mathcal L|$ is the number of links in the network.)

\addcontentsline{lof}{figure}{\numberline{\textbf{Definition 2.5}} Achievable rate vector}
\begin{definition}
  Consider a network $\big( \mathcal T, \mathcal R, (\mathcal X_i ),
        (\mathcal Y_j ), p( (y_j ) \given (x_i ) )\big) $. 
  
  A rate vector $\vec r = (r_{ij} : i\link j \in \mathcal L)$ is
  \defn{achievable} for links $\mathcal L$ if for any error tolerance
  $\epsilon > 0$, there exists a code for $\mathcal L$ with rate on
  each link $i\link j$ at least $r_{ij}$ and all error probabilities
  lower than $\epsilon$.
  
  Otherwise, $\vec r$ is \defn{not achievable}, in that there exists an error threshold
  $\epsilon$ such that there exists no code for $\mathcal L$ with rates at least $r_{ij}$
  and all error probabilities lower than $\epsilon$.
\end{definition}

\addcontentsline{lof}{figure}{\numberline{\textbf{Definition 2.6}} Capacity region}
\begin{definition}
  Consider a network $\big( \mathcal T, \mathcal R, (\mathcal X_i ),
        (\mathcal Y_j ), p( (y_j ) \given (x_i ) )\big) $.    Then we define
  the \defn{capacity region} of the channel, $\mathcal C$, to be the closure of
  the set of all achievable vectors:
    \[ \mathcal C := \overline{ \{\vec r \in \bR_+^L : \vec r \text{ is achievable} \} }. \]
    
  In other words, all rate vectors $\vec r$ in the interior of $\mathcal C$ are achievable,
  but no $\vec r$ outside $c$ is achievable.
\end{definition}

Note that the capacity region will always be convex: Suppose the rate vectors $\vec r_1$ and $\vec r_2$ are
both achievable.  Then the rate vector $\lambda \vec r_1 + (1-\lambda) \vec r_2$, for $\lambda \in [0,1]$ is achievable by operating
at $\vec r_1$ for $\lambda T$ of the time points and at $\vec r_2$ for the remaining $(1-\lambda)T$ timeslots.  This
strategy is known as \defn{time sharing}; we discuss this further in Subsection 2.4.1.

\addcontentsline{lof}{figure}{\numberline{\textbf{Definition 2.7}} Sum-capacity}
\begin{definition}
  We define the \defn{sum-capacity} $c_\Sigma$ to be the maximum achievable sum-rate, so
    \[ c_\Sigma := \max_{\vec r \in \mathcal C} r_\Sigma = \sup \{ r_\Sigma : \text{$\vec r$ is achievable} \} . \]
\end{definition}

The current knowledge of capacity regions for these networks in the Gaussian and general cases is
summarised in the table below.

\begin{center}
    \begin{small}
	
		\begin{tabular}{m{2.4cm}m{3.2cm}m{4.8cm}}
		  \toprule
		    \textbf{Network} & \textbf{General case} & \textbf{Gaussian case} \\
		  \hline
		    Point-to-point & known (Theorem 1.11) & known (Theorem 1.19) \\
		    Multiple-access & known \cite{Ahlswede,Liao} & known (Theorem 2.8) \\
		    Broadcast & unknown; known for some special cases \cite{CoverComments} & known \cite{CoverBroadcast} \\
		    Interference & unknown;  known for some special cases \cite{KramerInterference} & unknown; known for some special cases \cite{KramerInterference}; sum-capacity known for most two-user cases \cite{KramerSum} \\
		    X & unknown & unknown \\
		  \bottomrule	
		\end{tabular}
		
		\end{small}
\end{center}

Later, we will use the capacity region of the multiple-access network.  It was discovered independently by
Ahlswede \cite{Ahlswede} and Liao \cite{Liao} in the 1970s.  In the Gaussian case, it simplifies to
the following:

\addcontentsline{lof}{figure}{\numberline{\textbf{Theorem 2.8}} Capacity region of the multiple-access channel}
\begin{theorem}
  The capacity region of the multiple-access network of $n$ transmitters with fixed fading is
  the set of $(r_1,r_2,\dots,r_n) \in \bR_+^n$ satisfying
    \[ \sum_{i \in \mathcal S} r_i \leq \log\left(1+ \sum_{i \in \mathcal S} \snr_i \right) \]
  for all $\mathcal S \subseteq \{1,2,\dots,n\}$, where $\snr_i := |h_i|^2$ is the
  signal-to-noise ratio from transmitter $i$.
  
  The sum-capacity is
    \[ c_\Sigma = \log\left(1+ \sum_{i =1}^n \snr_i \right) . \]
\end{theorem}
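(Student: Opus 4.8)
The plan is to establish the two inclusions separately — achievability (every rate vector in the interior of the stated region lies in the capacity region) and the converse (no vector outside the region is achievable) — by adapting the single-user tools of Chapter~1 to the $n$-transmitter setting, with the complex Gaussian input playing the extremal role throughout. Both halves reduce the multiuser problem to a collection of effectively point-to-point computations, one for each subset $\mathcal S \subseteq \{1,\dots,n\}$.

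For achievability I would run Shannon's random-coding argument simultaneously at all transmitters: transmitter $i$ draws its codeword letters $X_i(m_i)[t]$ IID $\CN(0,1)$, so the power constraint holds, and Bob decodes with a joint-typicality rule, seeking the unique tuple $(\vec x_1(m_1),\dots,\vec x_n(m_n))$ jointly typical with $\vec Y$. The true tuple is jointly typical with high probability by the asymptotic equipartition property, so the dominant error event is that some incorrect tuple is jointly typical. I would classify such events by the set $\mathcal S$ of transmitters decoded wrongly: given the correct messages off $\mathcal S$, the competing codewords on $\mathcal S$ are independent of $\vec Y$, and a fixed such tuple is jointly typical with probability approximately $2^{-T\,\I(X_{\mathcal S} : Y \mid X_{\mathcal S^c})}$. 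A union bound over the roughly $2^{T\sum_{i\in\mathcal S} r_i}$ competing tuples drives the error to zero provided $\sum_{i\in\mathcal S} r_i < \I(X_{\mathcal S} : Y \mid X_{\mathcal S^c})$ for every $\mathcal S$, exactly as in the sketch of Theorem~\ref{Shannons2} but with one constraint per subset. To evaluate this conditional mutual information, note that conditioning on $X_{\mathcal S^c}$ removes the known terms $\sum_{i\notin\mathcal S} h_i x_i$ from $Y$, leaving an effective point-to-point Gaussian channel with input $\sum_{i\in\mathcal S} h_i X_i$; since the $X_i$ are independent with unit power this signal has power $\sum_{i\in\mathcal S}|h_i|^2 = \sum_{i\in\mathcal S}\snr_i$, and the computation of Theorem~1.19 gives $\I(X_{\mathcal S}:Y\mid X_{\mathcal S^c}) = \log\bigl(1+\sum_{i\in\mathcal S}\snr_i\bigr)$.

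For the converse I would single-letterize using Fano's inequality, as in the converse to Theorem~\ref{Shannons2}. Fixing an achievable vector and a subset $\mathcal S$, Fano gives $T\sum_{i\in\mathcal S} r_i \le \I(\vec X_{\mathcal S} : \vec Y \mid \vec X_{\mathcal S^c}) + T\epsilon_T$ with $\epsilon_T\to 0$. Expanding via (HB7), using the memoryless property and that conditioning cannot increase entropy, I would bound $\I(\vec X_{\mathcal S}:\vec Y\mid \vec X_{\mathcal S^c}) \le \sum_t\bigl[\HH(Y[t]\mid X_{\mathcal S^c}[t]) - \HH(Z[t])\bigr]$. Because the transmitters do not cooperate, their codewords are independent functions of independent messages, so the aggregate $\sum_{i\in\mathcal S} h_i X_i[t]$ has power at most $\sum_{i\in\mathcal S}\snr_i$; the maximum-entropy property (HB10) then caps each term by $\log\bigl(\pi\ee(1+\sum_{i\in\mathcal S}\snr_i)\bigr)-\log(\pi\ee)$, yielding $\sum_{i\in\mathcal S} r_i \le \log\bigl(1+\sum_{i\in\mathcal S}\snr_i\bigr)$ in the limit. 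The sum-capacity then follows by taking $\mathcal S = \{1,\dots,n\}$: that single constraint caps $\rsum$ from above, while a corner point of the achievable region meets it with equality, so $\csum = \log\bigl(1+\sum_{i=1}^n\snr_i\bigr)$.

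I expect the main obstacle to be the converse single-letterization — in particular, making rigorous the step where independence of the transmitters' signals is invoked to control the power of the aggregate $\sum_{i\in\mathcal S} h_i X_i[t]$ so that (HB10) applies with precisely the effective power $\sum_{i\in\mathcal S}\snr_i$. The achievability direction is a fairly mechanical extension of the Chapter~1 random-coding sketch once the per-subset classification of error events is set up.
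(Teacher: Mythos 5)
The paper does not actually prove this theorem: it states it as a known result, citing Ahlswede and Liao for the general multiple-access region and leaving the Gaussian specialisation to the literature, so there is no in-text proof to compare against. Your argument is the standard Cover--Thomas treatment of the Gaussian MAC and is essentially sound: random Gaussian codebooks with joint-typicality decoding, error events classified by the misdecoded subset $\mathcal S$, the identity $\I(X_{\mathcal S}:Y\mid X_{\mathcal S^{\comp}})=\log\bigl(1+\sum_{i\in\mathcal S}\snr_i\bigr)$ for independent $\CN(0,1)$ inputs, and a Fano/single-letterisation converse. One remark worth making: because the Gaussian input simultaneously maximises every subset constraint, no convex hull over input distributions is needed here, which is why the single polymatroid is already the whole region.

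The obstacle you flag in the converse is real but has a standard resolution. For non-cooperating transmitters the codeword letters $X_i[t]$ are independent but need not be zero-mean, so the second moment of the aggregate $\sum_{i\in\mathcal S}h_iX_i[t]$ can exceed $\sum_{i\in\mathcal S}\snr_i$ through the cross terms $\Ex[X_i]\overline{\Ex[X_j]}$. The fix is to apply the maximum-entropy bound in its variance form, $\HH(W)\le\log\bigl(\pi\ee\,\Ex|W-\Ex W|^2\bigr)$, rather than the second-moment form of (HB10): variances of independent summands add, and $\Var(X_i[t])\le\Ex|X_i[t]|^2$, after which an average over $t$ using concavity of $\log$ and the per-codeword power constraint gives exactly $\sum_{i\in\mathcal S}r_i\le\log\bigl(1+\sum_{i\in\mathcal S}\snr_i\bigr)$. (Equivalently, one may centre each codebook, which cannot decrease mutual information and only reduces power.) With that adjustment your proof goes through; the final sum-capacity claim follows since the corner points of the polymatroid attain the full-set constraint with equality.
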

  
In this thesis, we will mostly be interested in the $n$-user interference network. We will
use the word `user' to denote a matching transmitter--receiver pair.  Hence, an $n$-user
network consists of $n$ transmitters and $n$ receivers.  We will mostly be
interested in the large $n$ limit.

Recent work by Jafar \cite{Jafar} in the fixed $\snr$, $n\to\infty$ regime has
shown much promise.  We review Jafar's work in detail later in this chapter and in Chapter 4, and extend it
to physical models of wireless networks.


%
%
%

Alternatively, in the fixed $n$, $\snr\to\infty$ regime, Cadambe and Jafar \cite{CadambeJafar} 
used interference alignment to deduce the limiting behaviour within
$o(\log(\snr))$.  These techniques were extended by the same authors \cite{CadambeJafar2}
to more general models  in the presence of feedback and other effects.

For small $n$, the classical bounds due to Han and Kobayashi \cite{HanKobayashi}
as refined by Chong, Montani, Garg and El Gamel \cite{chong} for the two-user Gaussian interference network
have recently been extended. For example Etkin, Tse, and Wang
\cite{EtkinTseWang} have produced a characterization of capacity accurate to within one
bit. These results were extended by Bresler, Parekh, and Tse \cite{BreslerParekhTse},
using insights based on a deterministic channel which approximates the Gaussian
channel with sufficient accuracy, to prove results for many-to-one and one-to-many
Gaussian interference channels.

A different approach towards finding the capacity of large communications
networks is given by the deterministic approach of Avestimehr, Diggavi, and Tse \cite{Avestimehr}. They
show how capacities can be calculated up to a gap determined by the
number of users $n$, across all values of $\snr$. 

More generally, in problems concerning networks with a large number of 
nodes, the work of Gupta and Kumar \cite{GuptaKumar} uses
techniques based on Voronoi tesselations to establish scaling laws.
(See also the survey paper of Xue and Kumar \cite{XueKumar} for a review of the information theoretical
techniques that can be applied to this problem.)

{\"O}zg{\"u}r, L{\'e}v{\^e}que, and Tse  \cite{Ozgur} and {\"O}zg{\"u}r and L{\'e}v{\^e}que \cite{Ozgur2}
use a similar model of
dense random network placements, though using the same points as both transmitters
 and receivers.
They describe a hierarchical scheme, where nodes are successively
assembled into groups of increasing size, each group collectively acting as a multiple antenna
transmitter or receiver, and restrict to transmissions at a common
rate. They show \cite[Theorems 3.1, 3.2]{Ozgur} that
for any $\epsilon > 0$ there
exists a constant $k = k(\epsilon)$ depending on $\epsilon$ and a fixed constant $K$ such that 
\begin{equation} \label{eq:ozgur}
k n^{1-\epsilon} \leq \csum \leq K n \log n.
\end{equation}
These bounds are close to stating that $\csum$
grows like $n$,  but without 
the explicit constant
that Jafar \cite{Jafar} and the work in Chapter 4 of this thesis achieve. 
(Later, we produce a version of the upper bound  
(\ref{eq:ozgur}) without the logarithmic factor 
and being explicit about the constant $K$. Note that this result is proved under a model 
that differs from that of {\"O}zg{\"u}r, L{\'e}v{\^e}que, and Tse  \cite{Ozgur}, and the fact that we have a total of $2n$ nodes rather than $n$ -- although this is unimportant for asymptotic results. Further, in their work, local collaboration and relaying are both
allowed, meaning that the true rate in their scenario
could indeed be $n \log n$ .)

\section{Interference as noise}

Recall that we mentioned in Section 1.4 that the optimal input distribution
to the Gaussian channel is $\CN(0,P)$, while the noise is
$\CN(0,\sigma^2)$.  Thus, interference has the same distribution as noise
(after an appropriate scaling).
So to receiver $j$, the received signal
  \[ Y_j = \sum_{i=1}^n h_{ji} x_i[t] + Z_j[t] \qquad Z_j[t] \sim \CN(0,1) \]
is statistically indistinguishable from
  \[ \tilde Y_j = h_{jj}x_j[t] + \tilde Z_j[t] \qquad Z_j[t] \sim \CN \left(0,1+\sum_{i\neq j}|h_{ji}|^2 \right) . \]

In other words, 
treating interference as noise allows user $j$ to communicate at
rate
  \[ r_j = \log (1 + \sinr_j) = \log \left(1 + \frac{|h_{jj}|^2}{1+\sum_{i\neq j} |h_{ji}|^2} \right) .\]
Here, $|h_{jj}|^2$ is the received power of the signal, and
$\sum_{i\neq j} |h_{ji}|^2$ the received power of the interfering
signals from other transmitters.  We call 
  \[\sinr_j := \frac{|h_{jj}|^2}{1+\sum_{i\neq j} |h_{ji}|^2} \]
the \defn{signal-to-interference-plus-noise ratio} at receiver $j$.

\addcontentsline{lof}{figure}{\numberline{\textbf{Theorem 2.9}} Achievable rates treating interference as noise}
\begin{theorem}
  Consider an $n$-user Gaussian interference network.  The rates
  $r_j = \log(1+\sinr_j)$ are simultaneously achievable.
  
  The convex hull of the set of $\vec r$ with $r_j \leq \log(1+\sinr_j)$
  for all $j$ is an inner bound for the capacity region.
\end{theorem}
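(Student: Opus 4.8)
The plan is to reduce each of the $n$ links to an equivalent point-to-point Gaussian channel and then quote the single-user achievability already in hand. Concretely, I would have every transmitter $i$ use a codebook whose letters are drawn IID from the optimal single-user input distribution $\CN(0,1)$ (which respects the normalised power constraint $P=1$ of \eqref{units}), and have every receiver $j$ treat the aggregate interference $\sum_{i\neq j} h_{ji}x_i[t]$ as though it were part of the noise. The computation preceding the theorem then does the essential work: because the $x_i[t]$ are independent $\CN(0,1)$ and sums of independent complex Gaussians are again Gaussian with variances adding, the effective noise $\tilde Z_j[t] = \sum_{i\neq j} h_{ji}x_i[t] + Z_j[t]$ is distributed $\CN\!\big(0,\,1+\sum_{i\neq j}|h_{ji}|^2\big)$ and is independent of the desired input $x_j$. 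Thus link $j$ is statistically identical to a point-to-point Gaussian channel with receive-SNR $\sinr_j$, whose mutual information is $\I(X_j:Y_j)=\log(1+\sinr_j)$ by the computation of Theorem 1.19.

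To turn this into a genuine achievability statement I would run the random-coding argument exactly as in the sketch proof of Theorem \ref{Shannons2}, but over all $n$ codebooks at once. Draw every codeword letter $x_i(m_i)[t]$ IID $\CN(0,1)$, independently across transmitters, messages and time, and set $M_j=\lceil 2^{Tr_j}\rceil$. Each receiver $j$ decodes by joint typicality against the pair distribution of $(X_j,Y_j)$. There are two error events per link, as before: the true codeword $\vec x_j(m_j)$ may fail to be jointly typical with $\vec y_j$, which is unlikely by the asymptotic equipartition property since $\tilde Z_j$ is IID and independent of $\vec x_j(m_j)$; and some wrong codeword $\vec x_j(\hat m_j)$, $\hat m_j\neq m_j$, may be jointly typical with $\vec y_j$. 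The latter codeword is, by construction, independent of $\vec y_j$, so each such coincidence has probability $\approx 2^{-T\I(X_j:Y_j)}$, and a union bound over the fewer than $2^{Tr_j}$ competitors gives expected link error $\EE e_j \lesssim 2^{-T(\log(1+\sinr_j)-r_j)}$, which tends to $0$ whenever $r_j<\log(1+\sinr_j)$.

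Finally I would derandomise simultaneously for all links. Since the expected total error satisfies $\EE \sum_j e_j = \sum_j \EE e_j \to 0$, there exists a fixed choice of the $n$ codebooks for which $\sum_j e_j$, and hence every individual $e_j$, is as small as desired. This shows that any rate vector with $r_j<\log(1+\sinr_j)$ for all $j$ is achievable; since a transmitter may always lower its rate, the whole box $\{\vec r : 0\le r_j\le\log(1+\sinr_j)\}$ lies in the capacity region, and as this region is convex (by the time-sharing argument following Definition 2.6) its convex hull is an inner bound, as claimed.

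I expect the main obstacle to be the honest justification that the interference may be absorbed into an IID Gaussian noise term independent of the intended codeword. For a single \emph{fixed} codebook the interfering signal is a deterministic codeword, not noise, so this equivalence holds only in the averaged, random-coding sense; the care lies in verifying that under the random ensemble the competing codeword $\vec x_j(\hat m_j)$ really is independent of $\vec y_j$ despite $\vec y_j$ being built from the shared interfering codebooks, and that one codebook can be fixed to control all $n$ receivers' error probabilities at once.
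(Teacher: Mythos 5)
Your proposal is correct and follows essentially the same route as the paper, which justifies Theorem 2.9 only by the informal observation immediately preceding it that the aggregate interference from independent Gaussian codebooks is statistically indistinguishable from $\CN\big(0,1+\sum_{i\neq j}|h_{ji}|^2\big)$ noise, reducing each link to a point-to-point Gaussian channel of capacity $\log(1+\sinr_j)$. You simply make rigorous what the paper leaves as a sketch (the random-coding ensemble, the simultaneous derandomisation over all $n$ links, and the time-sharing step for the convex hull), and you correctly identify the one genuine subtlety, namely that the interference-equals-noise equivalence holds only in the averaged random-coding sense.
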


When the interference is weak, that is when $\sum_{i\neq j}|h_{ji}|^2 \ll 1$,
this strategy will be quite effective, as we will have
  \[ \sinr_j = \frac{|h_{jj}|^2}{1+\sum_{i\neq j} |h_{ji}|^2} \approx |h_{jj}|^2 = \snr_j . \]
Hence, in this situation, user $j$ can communicate at almost the
same rate it could were there no interference at all.

However, if the interference is strong, this strategy will lead to a
dramatic decrease in the rate.  In this case we will need different strategies.

\section{Decode and subtract}

The tactic of treating interference as noise suggests a method of
dealing with strong interference.

Suppose we have just one interfering
link $2\link 1$ that is very strong.  Then we can treat the interference as signal,
and treat the signal itself as noise.  This allows
us to decode the interfering signal $x_2$ (with $x_1$ as
noise). Once we have decoded $x_2$, we know the interference
$h_{12}x_2$ (since we have perfect channel state information), allowing us to subtract it.  This forms the
interference free signal
  \[ \tilde Y_1 := Y_1 - h_{12}x_2 = (h_{11} x_1 + h_{12}x_2 + Z_1) - h_{12}x_2 = h_{11}x_1 + Z_1 . \]
  
In this case, receiver $1$ requires 
  \[ r_1 \leq \log \left(1 + \frac{|h_{12}|^2}{|h_{11}|^2+1} \right) \]
to decode the interference, and then $r_1 \leq \log(1+\snr_1)$ to decode
the signal, once the interference  has been subtracted.

If the `interference-to-noise-plus-signal ratio' $|h_{12}|^2/(|h_{11}|^2+1)$ is large
-- that is, if the interference $|h_{12}|^2$ is strong compared to the signal $|h_{11}|^2$--
then this will be almost as effective as if no noise were present.


\section{The problem of mid-level interference}

So far, we have two principles:
\begin{description}
  \item[Weak interference] should be treated as noise.
  \item[Strong interference] should first be decoded, and then subtracted.
\end{description}

This leads to a natural question: what about mid-level interference?  That is, what
is the best way of dealing with interference when the power of the interference is
roughly equal to the power of the signal?

Indeed, there are many plausible real-life situations where the mid-level interference
would seem to be the most likely.

For example, in cellular networks (such as mobile phone networks), we have a
phenomenon known as the \defn{edge-of-cell effect}.  This describes the phenomenon
that near the edge of a cell, the strength of a signal is of a very similar
level to if it just fell out of the cell.  Combating the mid-level interference
of these edge-of-cell transmitters is essential to maintaining a high-quality system.

	\begin{center}
		\includegraphics[width=0.92\textwidth]{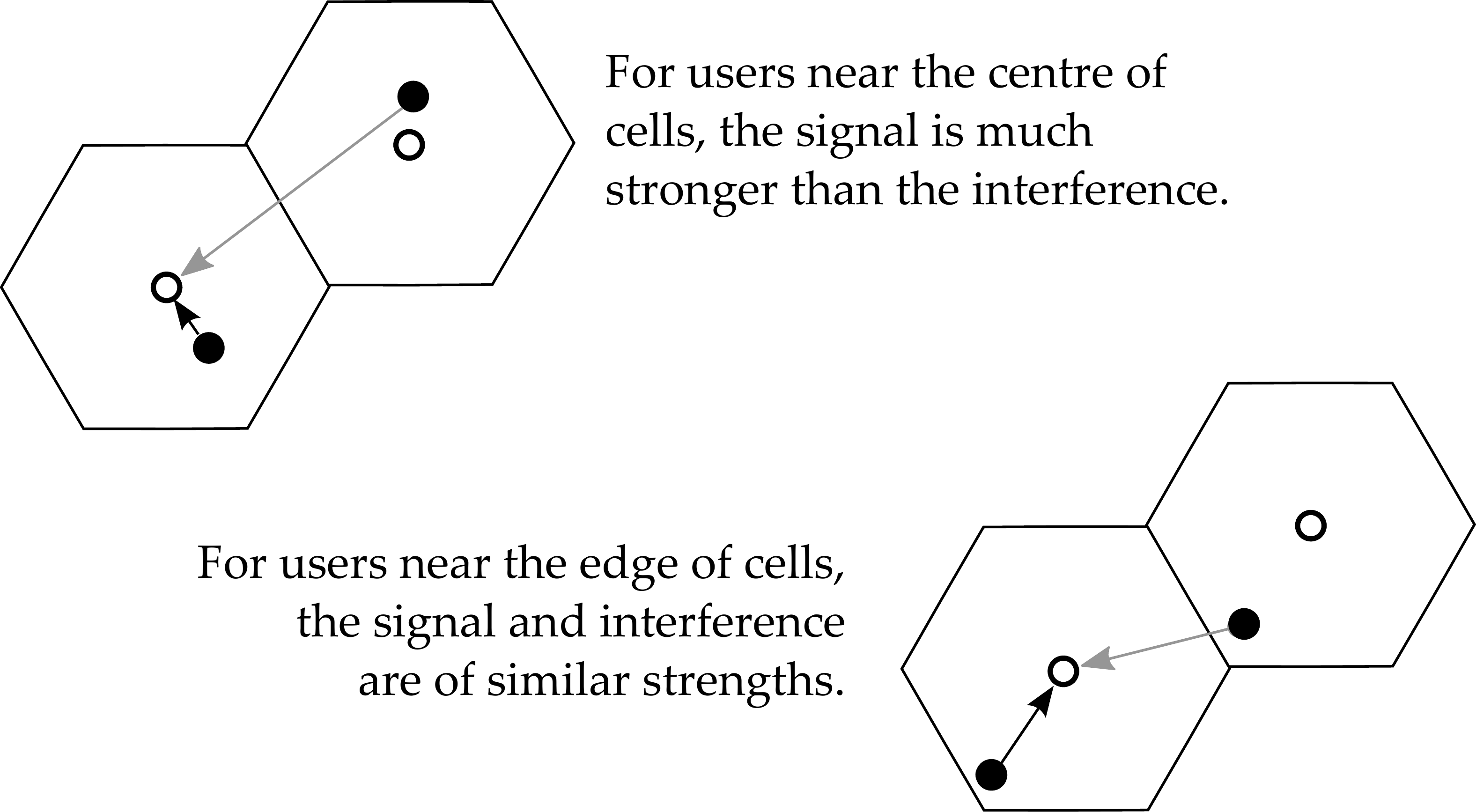}
\end{center}

In fact, dealing with this mid-level interference turns out to be particularly
important in multi-user networks; while weak and strong interference are easily dealt
with, even rare occurrences of mid-level interference can severely restrict the
performance of such a network.  The following example, due to Jafar \cite{Jafar},
illustrates this.

Consider a two-user Gaussian interference network, as governed by the
input--output equations
  \begin{align*}
    Y_1[t] &= h_{11} x_1[t] + h_{12} x_2[t] + Z_1[t] \\
    Y_2[t] &= h_{21} x_1[t] +h_{22} x_2[t] + Z_2[t].
  \end{align*}

We will use a model with a fast-fading phase.  So
 \begin{align*}
   h_{11}[t] &= \sqrt{\snr_1} \exp(\ii \Theta_{11}[t]) &
   h_{12}[t] &= \sqrt{\inr_{12}} \exp(\ii \Theta_{12}[t]) \\
   h_{21}[t] &= \sqrt{\inr_{21}} \exp(\ii \Theta_{21}[t]) &
   h_{22}[t] &= \sqrt{\snr_2} \exp(\ii \Theta_{22}[t]) ,
  \end{align*}
where the $\Theta_{ji}[t]$ are IID uniform on $[0, 2\pi)$.

Here $\inr_{ji} = |h_{ji}|^2$ is the \defn{interference-to-noise}
ratio at receiver $j$ from transmitter $i$.

For simplicity, we shall fix the direct links to be of equal strength:
$\snr_1 = \snr_2 =: \snr$.

Now suppose just one of the two interfering crosslinks is precisely this
mid-level interference: so $\inr_{21} = \snr$ too.

	\begin{center}
		\includegraphics[width=0.48\textwidth]{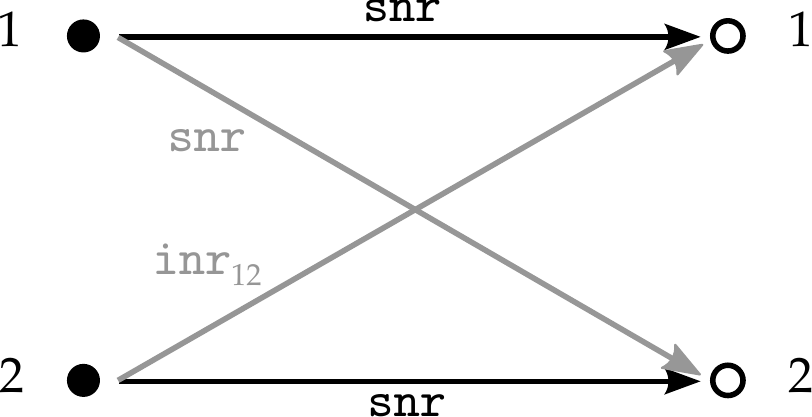}
  \end{center}

Jafar \cite[Lemma 1]{Jafar} then showed the following surprising result:

\addcontentsline{lof}{figure}{\numberline{\textbf{Theorem 2.10}} Sum-capacity of a network with a bottleneck link}
\begin{theorem}
The sum-capacity of the above network is
$\csum = \log(1+2\snr)$, regardless of the value of
$\inr_{12}$.
\end{theorem}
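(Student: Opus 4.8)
The plan is to recognise receiver~$2$ as a \emph{bottleneck} that behaves like a two-user multiple-access receiver. Since $\inr_{21}=\snr_2=\snr$, both $x_1$ and $x_2$ arrive at receiver~$2$ with equal strength $\snr$, so the multiple-access sum-capacity there (Theorem 2.8, whose value is unaffected by the phase fading because the magnitudes stay constant) is exactly $\log(1+2\snr)$. I would prove the bounds $\csum \le \log(1+2\snr)$ and $\csum \ge \log(1+2\snr)$ separately, writing $\vec Y_j = (Y_j[1],\dots,Y_j[T])$ and $\vec x_i = (x_i[1],\dots,x_i[T])$, with independent messages $W_1$ and $W_2$.

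For the converse, the idea is a genie argument showing that, once it is handed its own message $W_2$, receiver~$2$ can also recover $W_1$, hence effectively decodes \emph{both} messages and is constrained by the multiple-access sum bound. The crux is the claim that, conditioned on $W_2$, the variables form a Markov chain $W_1 \to \vec Y_2 \to \vec Y_1$. Knowing $W_2$ lets receiver~$2$ subtract its own signal to form $\widetilde{\vec Y}_2 = \big(h_{21}[t]x_1[t] + Z_2[t]\big)_t$, a clean observation of $x_1$ through a $\sqrt\snr$-magnitude fade. Because $\snr_1=\inr_{21}$ and $\CN(0,1)$ noise is rotation-invariant, one can, using only $\widetilde{\vec Y}_2$, the known causal channel phases, and $x_2$ (recoverable from $W_2$), reconstruct a variable with exactly the law of $\vec Y_1$: rotate each coordinate by $\exp(\ii(\Theta_{11}[t]-\Theta_{21}[t]))$ to turn $h_{21}[t]x_1[t]$ into $h_{11}[t]x_1[t]$ (the noise remains $\CN(0,1)$), then add the known interference $h_{12}[t]x_2[t]$. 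This establishes the Markov chain, so $\I(W_1:\vec Y_1\mid W_2)\le \I(W_1:\vec Y_2\mid W_2)$ by data processing; combined with $\I(W_1:\vec Y_1)\le\I(W_1:\vec Y_1\mid W_2)$ (from $W_1\perp W_2$), Fano's inequality, and the chain rule $\I(W_2:\vec Y_2)+\I(W_1:\vec Y_2\mid W_2)=\I(W_1,W_2:\vec Y_2)$, I obtain
\[
T(r_1+r_2) - T\epsilon_T \le \I(W_1:\vec Y_1) + \I(W_2:\vec Y_2) \le \I(W_1,W_2:\vec Y_2) \le T\log(1+2\snr),
\]
the last inequality being the per-symbol multiple-access bound at receiver~$2$. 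Letting $T\to\infty$ gives $\csum \le \log(1+2\snr)$.

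For achievability I would have receiver~$2$ jointly decode $(W_1,W_2)$ as a multiple-access channel, which is possible anywhere on the dominant face $r_1+r_2=\log(1+2\snr)$ with $r_j\le\log(1+\snr)$; the only remaining task is to make receiver~$1$ decode $W_1$, and here I expect to split on $\inr_{12}$. When $\inr_{12}\ge\snr$ the interference is strong, so receiver~$1$ can also decode both messages (decode-and-subtract), and the choice $r_2=\log(1+\snr)$, $r_1=\log\frac{1+2\snr}{1+\snr}$ lies in receiver~$1$'s multiple-access region. When $\inr_{12}\le\snr$, receiver~$1$ instead treats interference as noise (Theorem 2.9), achieving $r_1=\log\big(1+\snr/(1+\inr_{12})\big)$; a short computation confirms the complementary $r_2=\log(1+2\snr)-r_1$ then satisfies $r_2\le\log(1+\snr)$, so $(r_1,r_2)$ sits on receiver~$2$'s sum-rate face. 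Either way the sum-rate $\log(1+2\snr)$ is attained.

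The main obstacle is the Markov-chain (degradedness) step in the converse: making precise that $\vec Y_1$ is a stochastically degraded version of receiver~$2$'s post-subtraction observation. This is exactly where the hypotheses bite --- the equality $\snr_1=\inr_{21}$ matches the signal magnitudes, while the IID uniform phases together with CSIR and the rotation-invariance of $\CN(0,1)$ let the phase mismatch be absorbed for free. Everything else (Fano, the chain rule, and the single-letter multiple-access bound) is routine, so I would concentrate the effort on a clean statement and proof of that coupling.
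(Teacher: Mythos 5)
Your proposal is correct, and its converse is essentially the paper's argument made rigorous: both rest on the observation that, once receiver $2$ knows $W_2$ and strips out $x_2$, its residual observation $h_{21}x_1+Z_2$ is (because $\inr_{21}=\snr_1$ and the phases are IID uniform with circularly symmetric noise) statistically as good as receiver $1$'s view of $x_1$, so receiver $2$ can decode both messages and is bound by the multiple-access sum-capacity $\log(1+\snr+\inr_{21})=\log(1+2\snr)$. Your explicit phase-rotation coupling, the simulation of a variable with the law of $\vec Y_1$, and the Fano/data-processing/chain-rule bookkeeping supply exactly the details the paper's sketch elides --- indeed the paper's write-up appears to have the receiver indices transposed (as literally written it compares $h_{22}x_2+Z_2$ with $h_{12}x_2+Z_1$, which are not equivalent for general $\inr_{12}$, and invokes the MAC at receiver $1$, whose sum-capacity is $\log(1+\snr+\inr_{12})$); your version is the correct instantiation. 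Where you genuinely diverge is achievability: the paper simply cites ergodic interference alignment (Theorem 2.15 gives each user $\tfrac12\log(1+2\snr)$) or timesharing, whereas you give a direct construction --- receiver $2$ jointly decodes $(W_1,W_2)$ on the dominant face of its MAC region, with a case split at $\inr_{12}=\snr$ between decode-and-subtract and interference-as-noise at receiver $1$. Your route is more elementary (no alignment machinery, no long blocklengths over channel states) and makes visible \emph{why} the bound is tight for every $\inr_{12}$, at the cost of the case analysis and of not generalising to $n>2$ users the way ergodic alignment does; the checks that your chosen rate pairs lie in both receivers' constraint sets do go through.
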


\emph{``Regardless of the value of $\inr_{12}$''!}
This is worth emphasising: just one crosslink of this mid-level
interference has completely determined the sum-capacity of the whole
network.

In Chapter 4, we shall see similar phenomena for networks with
many more users, where the study of bottleneck links will be vitally
important.

\begin{proof}
  \emph{Direct part}.  Achievability follows from using ergodic interference alignment (Theorem 2.14),
    which we consider later in Section 2.6, or by timesharing (Section 2.6.1).
    
  \emph{Converse part.} Suppose we have a code allowing us to achieve the sum-rate $r_\Sigma = r_1 + r_2$.
  
  Suppose a genie provides receiver $2$ with transmitter $1$'s message (which could
  only increase the capacity of the network). This allows receiver $2$ to cancel the interference
  due to $x_1$.
  
  By assumption, receiver $1$ can decode his own message, and thus cancel the
  intended signal $x_1$ from his received signal $y_1$.
  
  This leaves the two receivers with statistically equivalent signals.
  Therefore, if receiver $2$ can decode message $m_2$ -- which it can
  by assumption -- then so can receiver $1$.
  
  Since receiver 1 is able to decode both messages, the sum rate cannot be more
  than the sum rate capacity of the multiple access channel seen at receiver $1$,
  which by Theorem 2.8 is $r_\Sigma \leq \log(1+2\snr)$.
\end{proof}

(We will use a similar proof strategy later to prove Lemma 4.11.)

This shows that dealing well with this mid-level interference will be
particularly important. We examine ways of doing so in the next section.

\section{Resource division}

When faced with mid-level interference, a simple way of dealing with it
is to share out the channel resources between the transmitters, to stop
the interference from getting in the way.  (This technique is often known
as orthogonalisation.)

This idea is best illustrated by examples, of which we give three below.

A big advantage of resource-division strategies is that they are fairly
easy to set up, and require neither detailed ongoing channel knowledge nor
high computational complexity.  There is also little need for cooperation
between users after setup.

In this section, we assume for simplicity that all $\snr$s are equal.

\subsection{\dots by time}

In schemes which share out the time resource, each transmitter is given
sole use of the channel for some period of time, in return for which they
may not transmit the rest of the time.

Consider the case of the finite-field model.  At any particular time,
only one user has control of the channel, and
they can communicate up to their single-user capacity $\mathbb D(Z)$.  All the other
transmitters are silent, so the sum-rate is also $r_\Sigma = \mathbb D(Z)$.

Each user can communicate at a rate $r = \mathbb D(Z)/n$,
for $\dof = 1/n$ degrees of freedom each (recall Definition 1.13), which in particular
tends to $0$ as the number of users $n$ gets large.  Naturally this is undesirable.

For the Gaussian model, transmitters can take advantage of the fact that they
will only be transmitting part of the day, yet are operating under an \emph{average}
power constraint.  Hence, when they are transmitting, transmitters can use
power $nP$ instead.  Hence, each user can communicate at a rate $r = \frac 1n \log(1+n\snr)$.

Note that for low $\snr$ we have
  \[ \frac{\csum}{n} = \frac1n \log(1+n\snr) \approx \frac1n n\,\snr \log \ee = \snr \log \ee, \]
which is the same as if there was no interference at all.  Hence, for low $\snr$, these schemes
are optimal.

For high $\snr$ (the more common case), we have 
  \[ \frac{\csum}{n} = \frac1n \log(1+n\snr) \approx \frac1n (\log \snr + \log n) \approx \frac1n \log \snr , \]
a reduction over the single-user case of a factor of $1/n$ again.  That is,
each user has $\dof = 1/n$ degrees of freedom each, for a total of $\dof_\Sigma = 1$ degrees of freedom all together
(recall Definition 1.20).

In summary:

\begin{theorem}\addcontentsline{lof}{figure}{\numberline{\textbf{Theorem 2.11}} Achievable rates using resource division}
  Consider an $n$-user finite field interference network.  Then the
  rates $r_i = \D(Z)/n$ are simultaneously achievable, for a sum-rate
  of $r_\Sigma = \D(Z)$ and $\dof_\Sigma = 1$.
  
  Consider an $n$-user Gaussian interference network.  Then the
  rates $r_i = \frac 1n \log(1+n\snr_i)$ are simultaneously achievable.
  If all $\snr$s are equal, this has a sum-rate 
  of $r_\Sigma = \log(1+n\snr)$, which has $\dof_\Sigma = 1$.
\end{theorem}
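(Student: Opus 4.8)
The plan is to prove the two claims (finite field and Gaussian) by the same basic device: partition the $T$ timeslots into $n$ equal blocks, assign block $i$ to user $i$, and have all other users remain silent during that block. Because each user transmits alone during its block, the network reduces to a point-to-point channel on that block, and we may invoke the single-user capacity results already established (Theorem 1.12 for the finite field case, Theorem 1.19 for the Gaussian case). The achievable rate for user $i$ is then the single-user rate multiplied by the fraction $1/n$ of time it is active.

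For the \textbf{finite field} case, during user $i$'s block the channel is exactly a fixed-fading finite field channel with $h_{ii}\neq\zero$, which by Theorem 1.21 has capacity $\D(Z)$. Running a capacity-achieving code (these exist by Theorem 1.15) on that block, user $i$ sends at rate $\D(Z)$ for a $1/n$ fraction of the time, giving an overall rate $r_i = \D(Z)/n$. Summing over $i$ yields $r_\Sigma = n \cdot \D(Z)/n = \D(Z)$, and since the single-user finite field channel has $\dof=1$ (Definition 1.13) the sum gives $\dof_\Sigma = \D(Z)/\D(Z) = 1$.

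For the \textbf{Gaussian} case the key extra ingredient is the average power constraint: since user $i$ is silent for a $(1-1/n)$ fraction of the time, it may concentrate its power budget into its active block, transmitting at power $n$ (recall the normalisation $P=1$ from \eqref{units}) while still meeting the average constraint $\tfrac1T\sum_t |x_i[t]|^2 \leq 1$. During that block the channel is a fixed-fading Gaussian channel, but now with effective received power $n\,\snr_i$, so by Theorem 1.19 (and Theorem 1.21) the in-block capacity is $\log(1+n\,\snr_i)$. Scaling by the active fraction $1/n$ gives the achievable rate $r_i = \tfrac1n\log(1+n\,\snr_i)$. When all $\snr_i = \snr$ are equal, summing gives $r_\Sigma = n\cdot\tfrac1n\log(1+n\,\snr) = \log(1+n\,\snr)$, and the high-$\snr$ expansion $\log(1+n\,\snr)\approx\log\snr+\log n$ already computed in the text shows this contributes a single degree of freedom, $\dof_\Sigma=1$.

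The only subtle point — the main obstacle worth stating carefully — is justifying that the per-block codes combine into a single valid network code with simultaneously small error probability on all links. Since the blocks are disjoint in time and only one user is active per block, there is no interference during any block, so the $n$ links decouple completely and each per-block decoder sees a genuine point-to-point channel; a union bound over the $n$ links keeps the total error probability below any $\epsilon>0$ for $T$ large enough. I would also note that the average-power manoeuvre in the Gaussian case relies essentially on the average (rather than universal) power constraint being permitted for this scheme, which is consistent with the resource-division discussion preceding the statement.
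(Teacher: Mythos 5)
Your proposal is correct and matches the paper's own argument: the theorem is stated as a summary of the time-sharing discussion immediately preceding it, which uses exactly your device of giving each user sole use of the channel for a $1/n$ fraction of the time and, in the Gaussian case, concentrating the power budget to $nP$ during the active block under the codeword-averaged power constraint. Your explicit treatment of the union bound over links and the power-constraint subtlety only makes the argument more careful than the paper's informal version.
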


H\o st-Madsen and Nosratinia showed that $\dof=1/2$ each, for a total of
$\dof_\Sigma = 1$ is optimal for $n=2$ users,
and, more generally, showed that \cite[Section IV]{HostMadsenNosratinia}
  \begin{equation} \label{HMN}
    1 \leq \dof_\Sigma \leq \frac n2 \qquad \text{for all $n\geq 2$.}
  \end{equation}

They further conjectured that in
fact it is the lower bound that is tight, and that $\dof_\Sigma = 1$ is optimal
\cite[Section IV]{HostMadsenNosratinia}.  In other words, they conjectured
that resource division strategies were also optimal at high $\snr$.  We
will later see that this is not the case.

Note that this scheme, as with all resource division schemes,
requires a small amount of precoordination between users, to decide
which user is alloted which timeslot.  We do not consider in this thesis
the problem of how to conduct this cooperation.

\subsection{\dots  by frequency}

In schemes that share out the the frequency resource, each user is alloted
a section of the frequency spectrum along which they may transmit, while
remaining quiet over all other bandwidths.

This leads again to an average per-user capacity of $c/n$ again.

Second-generation (GSM) mobile phone networks use a mixture of resource
division by time and by frequency to share a channel of bandwidth
$25$\,MHz between up to $1000$ transmitters.  First the spectrum is shared,
giving $125$ channels of bandwidth $200$\,kHz each.  Then within each
of these sub-channels, the time is divided between up to $8$ transmitters in
time slots of $577\,\mu$s at a time \cite[Example 3.1]{TseViswanath} \cite[Example 14.2]{Goldsmith}.

\subsection{\dots  by codeword space}

Code division multiple access (CDMA) is another way of allowing multiple users to
use a shared channel.  It works as follows.  (For the purpose of simplifying this example,
we shall think of a noiseless binary channel.) 

Assume each transmitter $i$ wishes to send a symbol $x_i \in \zo$.  Using CDMA
it does this over $T$ channel uses.  Each transmitter $i$ is given a vector
$\vec v_i \in \zo^T$.  If that transmitter wish to send $x_i = \mathtt 1$, she
instead transmits the vector $\vec v_i$ over $T$ channel uses; if she wishes
to send $x_i = \mathtt 0$, she instead transmits $\vec 0$, the zero vector of length
$T$.  In other words, transmitter $i$ sends $x_i \vec v_i$.

The receivers then receive the superposition $\vec y = \sum_{i=1}^n x_i \vec v_i$.
If the vectors were linearly independent (for which $T\geq n$ is necessary but
not sufficient), then each $x_i$ can be recovered.

Schemes such as this can be thought of as sharing the dimensions of the codeword space
$\zo^T$ among the $n$ users, and so is another form of resource division.

\section{Interference alignment}

Interference alignment is the name for a new class of schemes for
dealing with interference based on the following idea: if transmitters
plan their signalling correctly, interference can `align' at each receiver,
with the desired signal split off separately.  This
allows receivers to share their resources just two ways -- half for the
signal, and half for all the `aligned' interference.  Thus all users
can communicate at (roughly) the rate they could if there were just one interfering link.

In particular, this means that each user can obtain $1/2$ a degree of freedom, leading
to $\dof = n/2$ degrees of freedom overall.  So in fact, the upper bound \eqref{HMN} of
H\o st-Madsen and Nosratinia \cite[Section IV]{HostMadsenNosratinia} turns out to be correct,
disproving their conjecture that the lower bound $\dof = 1$ was tight.
(The conjecture was formally settled in the
negative by Cadambe, Jafar, and Wang \cite{CadambeJafarWang}.)

Like resource division, interference alignment can be performed in a number
of different ways.  We show these by three toy examples, before concentrating
on the specific case that will be useful to us later.

These schemes require channel state information at the transmitter
(CSIT), in that the signal set in any time slot depends on the
channel state coefficients at that time.

\subsection{\dots  by codeword space}

Note that CDMA (see Section 2.6.3) was, in some sense, wasteful for the interference network, since 
it allowed every receiver to decode every message, as if it were an X network.  Whereas in fact, we only
required each receiver to decode its own message.

Interference alignment by codeword choice, due to Cadambe and Jafar \cite{CadambeJafar},  develops this idea
further.
Each receiver receives a superposition of all the transmitters' (faded)
signals, but by `aligning' the interference, the receiver can
work out its own signal, at less of a sacrifice than CDMA.

Consider the following $3$-user fading interference network:

  \begin{align*}
    Y_1 &= x_1 + \ii x_2 + \ii x_3 + Z_1 \\
    Y_2 &= \ii x_1 + x_2 + \ii x_3 + Z_2 \\
    Y_3 &= \ii x_1 + \ii x_2 + \ii x_3 + Z_3
  \end{align*}

Note for this toy example we have chosen the unusual channel state matrix
  \[ \mat H = \begin{bmatrix} 1   & \ii & \ii \\
                              \ii & 1   & \ii \\
                               \ii & \ii & 1  \end{bmatrix}, \]
which has diagonal entries (corresponding to direct links) all equal to $1$,
and off-diagonal entries (interfering links) all equal to $\ii$.

Suppose now that transmitters send their signal as just real numbers (taking
advantage of the power constraint to send at twice the power).
The receivers will receive their desired signal in the real subspace,
but the interference will be aligned in the (purely) imaginary subspace.

Thus, each user can communicate interference-free at rate $r_i = \frac12 \log(1+2\snr)$.
Since we saw earlier that this was optimal at high $\snr$ for $2$ users, it must certainly be optimal
for $3$ users too.  Hence, this interference network has a sum-capacity of
 \[ c_\Sigma = \frac32 \log (1 + 2\snr) > \log(1 + 3\snr)   , \]
better than the $r_\Sigma = \log(1 + 3\snr)$ achievable by resource division.

Cadambe and Jafar \cite{CadambeJafar} managed to develop this idea, to show that it was possible
for any values of fading parameters (provided there are no `unexpected' linear
dependencies -- this would be avoided almost surely if the fading coefficient were from continuous
distributions, for example).

They showed how transmitters can construct their signals so that at each receiver
the signal is contained in one subspace of the signal space, and all the interference
in another disjoint subspace, with both subspaces using roughly
half of the available dimensions.

Specifically, they showed the following \cite[Theorem 1]{CadambeJafar}.  (Recall
the definition of degrees of freedom from Definition 1.20.)

\addcontentsline{lof}{figure}{\numberline{\textbf{Theorem 2.12}} Sum-capacity using interference alignment}
\begin{theorem}
  Consider a Gaussian interference network with fixed fading coefficients
  $h_{ji}$.  Then the total number of degrees of freedom is
  $\dof_\Sigma = n/2$.
  
  That is, for equal $\snr$s, the sum-capacity is
    \[ c_\Sigma = \frac{n}{2} \log \snr + o(\log \snr) \qquad \text{as $\snr \to \infty$.}\]
\end{theorem}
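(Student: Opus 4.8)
The statement packages together a converse and an achievability bound, so I would prove the two inequalities $\dof_\Sigma \le n/2$ and $\dof_\Sigma \ge n/2$ separately. The converse is the easy direction: it is precisely the upper bound \eqref{HMN} of H\o st-Madsen and Nosratinia, valid for every $n \ge 2$, so I would simply invoke it (its proof is a genie-aided argument in the same spirit as the converse to Theorem 2.10). All the genuine content therefore lies in achievability, $\dof_\Sigma \ge n/2$, i.e. exhibiting a family of schemes approaching $n/2$ degrees of freedom — this is the interference-alignment construction.

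For achievability I would work over a symbol extension. Grouping $\mu$ consecutive channel uses into a single supersymbol turns the network into a MIMO-type network on $\CC^\mu$: receiver $j$ sees $\vec Y_j = \sum_i \mat H_{ji}\vec x_i + \vec Z_j$, where each $\mat H_{ji}$ is a fixed diagonal $\mu\times\mu$ matrix built from the coefficients $h_{ji}$. Each transmitter $i$ then encodes $d_i$ independent Gaussian streams through a beamforming matrix $\mat V_i \in \CC^{\mu\times d_i}$, sending $\vec x_i = \mat V_i \vec s_i$. The design goal is to choose the columns of the $\mat V_i$ so that at every receiver $j$ the entire interference $\sum_{i\neq j}\mat H_{ji}\mat V_i$ collapses into one common subspace of dimension at most $\mu - d_j$, leaving $d_j$ clean dimensions for the desired term $\mat H_{jj}\mat V_j$.

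The key step is the explicit alignment. I would build the beamforming columns from a single fixed generating vector $\vec w = (1,1,\dots,1)^\top$ acted on by monomials in the diagonal channel matrices, taking the span of $\{\mat T_1^{a_1}\cdots \mat T_k^{a_k}\vec w\}$ over a bounded range of exponents, where the $\mat T$'s are suitable ratios of the $\mat H_{ji}$. Because these matrices are diagonal and hence commute, multiplying such a product set by one more factor merely shifts the exponents, so the nesting relations $\mat H_{ji}\mat V_i \subseteq \mathrm{span}(\text{interference subspace at }j)$ hold by construction. A bookkeeping count of exponents then shows that, as $\mu\to\infty$, the stream counts satisfy $\sum_i d_i/\mu \to n/2$, so each user asymptotically claims half of the available dimensions.

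The hard part, where I expect the real work to lie, is the \emph{decodability} (linear-independence) condition: having aligned the interference, one must show that at each receiver $j$ the desired subspace $\mat H_{jj}\mat V_j$ and the aligned interference subspace are in general position, so that together they span $\CC^\mu$ and zero-forcing recovers the $d_j$ desired streams. This reduces to proving that a certain $\mu\times\mu$ matrix, whose columns are the interference basis together with those of $\mat H_{jj}\mat V_j$, is nonsingular; since its entries are monomials in the fixed channel coefficients, this amounts to showing a polynomial in the $h_{ji}$ does not vanish — a Vandermonde-style genericity argument exploiting that the coefficients are distinct. Once the $d_j$ interference-free dimensions are secured, each behaves like an independent point-to-point Gaussian link carrying $\log\snr + o(\log\snr)$ by Theorem 1.19; summing over users and letting $\mu\to\infty$ gives $\csum = \frac{n}{2}\log\snr + o(\log\snr)$, which matches the converse and completes the proof.
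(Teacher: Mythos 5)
The paper does not actually prove this statement: Theorem 2.12 is quoted from Cadambe and Jafar \cite[Theorem 1]{CadambeJafar}, and the text around it offers only the $3$-user toy example plus an informal description of subspace alignment. Your sketch is, in substance, the Cadambe--Jafar achievability argument (symbol extensions, monomial beamforming vectors generated from $\vec w = (1,\dots,1)^\top$, alignment via commutativity of diagonal matrices, a Vandermonde-type genericity argument for decodability) together with the H\o st-Madsen--Nosratinia outer bound \eqref{HMN} for the converse. That is the right skeleton, and the converse step is fine as stated.

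There is, however, one genuine gap between your construction and the theorem as written. The statement speaks of \emph{fixed} fading coefficients $h_{ji}$, constant over time. If the coefficients really are constant, then the $\mu\times\mu$ diagonal symbol-extension matrices are $\mat H_{ji} = h_{ji}\mat I$, every ratio $\mat T$ is a scalar matrix, every monomial $\mat T_1^{a_1}\cdots\mat T_k^{a_k}\vec w$ is a scalar multiple of $\vec w$, and both the desired and the interference subspaces collapse onto the single direction $\vec w$ -- the linear-independence condition you correctly identify as the hard part cannot hold, and the scheme yields only $1$ degree of freedom total. The monomial construction needs the diagonal entries of each $\mat H_{ji}$ to be distinct, i.e.\ the channel must vary across the $\mu$ extended symbols (time- or frequency-selectivity with coefficients drawn from a continuous distribution), which is the setting of Cadambe and Jafar's actual theorem. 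For genuinely constant channels the $n/2$ degrees of freedom result requires the entirely different machinery of real interference alignment over the rationals (Section 2.7.4 of this thesis, \cite{RealIA}), where the role of the signal dimensions is played by rationally independent real generators and the genericity argument becomes Diophantine. So either restrict your proof to the time-varying case (matching the cited source, and arguably what the paper intends despite its wording) or replace the symbol-extension step by the real-alignment construction.
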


Note that the per-user capacity is
  \[ \frac{c_\Sigma}{n} = \frac{\frac{n}{2}\log(\snr) + o(\log \snr)}{n} = \frac12\log \snr + o(\log \snr) , \]
compared with a single-user rate of
  \[ c = \log(1+\snr) = \log \snr + o(\log \snr) , \]
so the rate has been roughly halved.  This compares well to
the reduction to $1/n$ of resource division strategies, at least
when $n > 2$.

El Ayach, Peters, and Heath \cite{AyachPetersHeath} have conducted experiments that show that
this interference alignment technique can perform well in real life
for $n=3$ users, with performance close to that predicted by theory.

\subsection{\dots  by time}

If a network has time delays, we can take advantage of these delays
to align interference in the time domain.  Interference alignment by
time was first considered by Grokop, Tse and Yates \cite{GrokopTseYates}.
However, due to the computational complexity of such schemes and
the lack of physical applicability, it has received little
attention since.

Specifically let $\tau_{ji}$ be the time delay between transmitter
$i$ and receiver $j$.  Thus we have the model
  \[ Y_j[t] = \sum_{i=1}^n h_{ji} x_i[t-\tau_{ji}] + Z_j[t] ,\]
(with the convention that $x_i[t]$ is $0$ for $t\leq 0$).

Consider a toy example with the following time delays:
  \begin{align*}
    \tau_{11} &= 3  &  \tau_{12} &= 4  &  \tau_{13} &= 6 \\
    \tau_{21} &= 4  &  \tau_{22} &= 7  &  \tau_{23} &= 2 \\
    \tau_{31} &= 2  &  \tau_{32} &= 8  &  \tau_{33} &= 1
  \end{align*}
Note that this has been set up so that delays on direct links
are odd numbers, while delays on crosslinks are even numbers.

This allows us to use the following strategy. Transmitters only send
symbols at the odd numbered times, $t=1,3,5,\dots$.  Then at even-numbered
times, receivers will only get their desired signal
(since $\text{odd}+\text{odd}=\text{even}$), and at odd-numbered times, two
lots of `aligned' interference ($\text{odd}+\text{even}=\text{odd}$).

Hence, users can communicate at half their single-user rate.

Grokop, Tse, and Yates showed a generalisation of this, but it is quite complicated:
see their paper \cite[Theorem 3.1]{GrokopTseYates} for details.

\subsection{\dots by channel state}

Consider a fast fading $3$-user interference network where the channel
state matrix can take either of the following two values
  \[ \mat H  = \begin{bmatrix} 1 & 0 & 0 \\
                               0 & 0 & 1 \\
                               1 & 1 & 1 \end{bmatrix} \qquad
     \mat H' = \begin{bmatrix} 0 & 0 & 0 \\
                               0 & 1 & 1 \\
                               1 & 1 & 0 \end{bmatrix} . \]
Note that that this toy example has been set up such that
$\mat H + \mat H' = \mat I \pmod{2}$.

Nazer, Gastpar, Jafar, and Viswanath \cite{Nazer} discovered a method
of interference alignment that can code across the two channel states
to recover a single message.  Transmitters send the
same signal in both states, and receivers combine two estimates to
recover the desired message.

Nazer and coauthors named this scheme \defn{ergodic interference
alignment}.  We investigate this further in the next section.

\subsection{\dots over the rational numbers}

Interference alignment by codeword space and by channel state both require a channel
which changes over time (or, equivalent, across the frequency spectrum), while interference
alignment by time requires the existence of time delays.  For some time, this left open
the question of whether a form of interference alignment could be performed over
a static channel without delays.

The question was answered in the positive Motahari, Gharan, Maddah-Ali,
and Khandani, with a scheme they call \emph{real interference alignment} \cite{RealIA}.

The strategy works by effectively `vectorising' the channel.  Specifically, we can
treat the real numbers $\mathbb R$ as a vector space over the rational numbers $\mathbb Q$.
Then we can treat a real signal $x \in \mathbb R$ as a vector $x = \sum_k \lambda_k v_k$, where
$\lambda_k \in \mathbb Q$ and the $v_k \in \mathbb R$ are some basis real numbers.

Using theorems about Diophantine rational approximations to real numbers, Motahari and
coauthors deduce that real interference alignment achieves $n/2$ total degrees of freedom
for the Gaussian interference channel.

Interested readers are referred to the original paper for further details \cite{RealIA}.

\section{Ergodic interference alignment}

It's easiest to analyse ergodic interference alignment by first
looking at the finite field channel.  For convenience, 
we will assume that the
fast-fading coefficients are IID uniform on $\F_q \setminus \{\zero\}$.

Recall from Theorem 1.12 that the single-user capacity of the finite field channel
with non-zero fading is $\mathbb D(Z) := \log q - \mathbb H(Z)$. 

The main lemma that gets ergodic interference alignment to work is
the following \cite[Theorem 1 and Corollary 2]{NazerGastpar}. It is
based on this observation: although receiver $j$ would normally wish
to reconstruct just its own message ${\vec m}_j$, it is, in fact, easier to
reconstruct the `pseudomessage' $\tilde{\vec m}_j := \sum_{i=1}^n H_{ji}{\vec m}_i$.

\addcontentsline{lof}{figure}{\numberline{\textbf{Lemma 2.13}} Reconstructing a pseudomessage}
\begin{lemma}
  Let $\mathcal M = \mathbb F_q^S$.  Consider
  the finite field interference network.  Then
  each receiver $j$ can decode the linear combination of messages
  $\tilde {\vec m}_j = \sum_{i=1}^n H_{ji}\vec m_i$ at rate $\D(Z)$.
\end{lemma}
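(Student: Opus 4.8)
The plan is to exploit the \emph{linearity} of a capacity-achieving linear code. If every transmitter encodes with one and the same very good linear code, then the faded superposition that receiver $j$ observes is itself a single codeword — precisely the codeword that encodes the pseudomessage $\tilde{\vec m}_j$ — and so can be decoded exactly as though it had arrived over a single-user finite field channel. The insight to seize on is that it is the \emph{combination} $\tilde{\vec m}_j$, not $\vec m_j$, that carries this clean structure.

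First I would invoke Theorem 1.15 to fix a very good linear code for the finite field channel: a generator matrix $\mat G \in \F_q^{S\times T}$ whose rate $(S/T)\log q$ is as close to $\D(Z)$ as we wish and whose optimal decoder has error probability tending to $0$. Every transmitter $i$ then sends $\vec x_i = \mat G \vec m_i$. The heart of the argument is the algebraic collapse: writing out the received vector at receiver $j$ and pulling the common map $\mat G$ out of the sum,
\[
  \vec Y_j = \sum_{i=1}^n H_{ji}\, \mat G \vec m_i + \vec Z_j
           = \mat G\Big(\sum_{i=1}^n H_{ji}\, \vec m_i\Big) + \vec Z_j
           = \mat G \tilde{\vec m}_j + \vec Z_j ,
\]
where the middle step is just the linearity of $\vec m \mapsto \mat G \vec m$. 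Since the $Z_j[t]$ are IID copies of $Z$, the right-hand side is exactly the output of the single-user finite field channel fed the codeword $\mat G \tilde{\vec m}_j$. Because a very good linear code has full rank $S$, the map $\tilde{\vec m}_j \mapsto \mat G \tilde{\vec m}_j$ is injective, so running the code's decoder on $\vec Y_j$ returns $\tilde{\vec m}_j \in \F_q^S$ with vanishing error at rate $(S/T)\log q \to \D(Z)$. Note that, unlike in the later recovery of $\vec m_j$ itself, nonzero fading coefficients are not needed for this step.

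The step I expect to need the most care is the factoring-out of $H_{ji}$ in the display above, which is legitimate only when the coefficient is constant across the whole length-$T$ block. Under the fast-fading assumption $H_{ji}[t]$ changes every symbol and cannot be extracted from $\mat G$ as written, so the block must be taken within a single channel state. The natural remedy is the grouping/ergodicity device already used for the fast-fading capacity (Theorem \ref{fastfading}): collect the channel uses on which the state matrix takes a given value, treat each such subsequence as an effectively fixed-fading block, and run the encode/collapse/decode argument inside it, with the strong law guaranteeing that each value occurs a proportion $\prob{\mat H = \mat H}$ of the time so that the rate $\D(Z)$ is preserved. Verifying this bookkeeping — that a full block's worth of identical-state uses can be assembled and that the per-state pseudomessages are the claimed $\tilde{\vec m}_j$ — is the main obstacle; the linearity identity itself, once the common-code idea is in place, does essentially all the remaining work.
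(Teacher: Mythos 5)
Your proof is correct and is essentially the paper's own argument: all transmitters share one very good linear code with generator matrix $\mat G$, linearity collapses the received signal to $\mat G\tilde{\vec m}_j + \vec Z_j$, and Theorem 1.15 gives decoding of the pseudomessage at rate $\D(Z)$. Your extra care about the coefficients being constant over the block is sensible but is already handled outside the lemma — the paper applies it to the $T$ occurrences of a single fixed channel state $\mat H$, with the state-matching bookkeeping done in the surrounding ergodic interference alignment argument rather than in the lemma's proof.
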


\begin{proof}
  The key here is for all transmitters to use the same linear code.  Let
  the generator matrix of this code be $\mat G$.  
  Write $\vec g[t]$ for the $t$th row of $\mat G$, so $x_i[t] = \vec g[t] \vec m_i$
  (since $\vec m_i$ is a column vector).
  Then each receiver $j$ sees signal
    \begin{align*}
      Y_j[t] &= \sum_{i=1}^n H_{ji} x_i[t] + Z_j[t] \\
             &= \sum_{i=1}^n H_{ji} (\vec g[t] \vec m_i) + Z_j[t] \\
             &= \vec g[t]  \sum_{i=1}^n H_{ji} \vec m_i + Z_j[t] \\
             &= \vec g[t]  \tilde{\vec m}_j + Z_j[t] .
    \end{align*}
    
  But this is precisely as if the single message $\tilde{\vec m}_i$ was
  sent with the linear code.  Since very good linear codes exist (Theorem 1.15),
  this can be done at rates up to the single-user capacity
  $c = \D(Z)$.
\end{proof}

The technique proceeds as follows: Match a state $\mat H$ with the
\defn{complementary state} $\mat H' = \mat I - \mat H$.
Transmitters send the same signal (encoding the same message) in both states.
Then after $T$ occurrences of the first state receiver $j$ decodes $\tilde m_j = \sum_{i=1}^n H_{ji}m_i$,
at rate $\D(Z)$ and after $T$ occurrences of the complementary state decodes
  \[ \tilde {\vec m}_j' = \sum_{i=1}^n H_{ji}' {\vec m}_j = \sum_{i=1}^n (\delta_{ji} - h_{ji}) {\vec m}_i \]
also at rate $\D(Z)$.
Receiver $j$ then computes the message estimate
  \[ \hat {\vec m}_j = \tilde {\vec m}_j + \tilde {\vec m}_j'
           = \sum_{i=1}^n (h_{ji} + \delta_{ji} - h_{ji}){\vec m}_i
           = \sum_{i=1}^n \delta_{ji} {\vec m}_i = {\vec m}_j ,          \]
as desired.  Since decoding the message required twice the blocklength, the rate
is half what it would be, $(\log 2^{T\D(Z)})/2T = \D(Z)/2$.

(Observe that the receiver needs to perform two separate
estimates, and cannot simply add the channel outputs together.
To do this would lead to a channel of the form $Y = x_j + Z*Z$ , where the convolution $Z*Z$
means the sum of two IID copies of $Z$. The overall rate in this
case is $\D(Z * Z) \leq  \D(Z)/2$, unless $Z$ is deterministic, with strict inequality unless
$\D(Z) = 0$ \cite{JohnsonSuhov}. In general, the $K$-fold convolution has
relative entropy from the uniform $\D(Z*\cdots *Z)$ that usually decreases exponentially in $K$ \cite{JohnsonSuhov}.)

Nazer and coauthors use a typical set argument to show that sufficiently many
channel states can be matched up in this way, showing that with high probability
each matrix and its complement show up almost the same number of times. They use this
to prove the following theorem \cite[Lemma 3 and Theorem 2]{Nazer}:

\addcontentsline{lof}{figure}{\numberline{\textbf{Theorem 2.14}} Achievable rate using ergodic\\ interference alignment (finite field case)}
\begin{theorem}
  For the model as outlined above, the rates
    $r_i = \D(Z)/2$ are simultaneously achievable.
\end{theorem}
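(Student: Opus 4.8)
The plan is to promote the single-state guarantee of Lemma 2.13 into a complete scheme by pairing each realised channel state with its complement, and to control that pairing with a concentration (typical-set) argument.

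First I would fix a common very-good linear code $\mat G$ and have \emph{every} transmitter use it, exactly as in Lemma 2.13. The structural observation is that the state sequence $\mat H[1], \mat H[2], \dots$ is IID from the finite alphabet $\mathcal H$ of matrices with all entries in $\F_q \setminus \{\zero\}$, so it is a finite-alphabet IID source. The encoding rule is that each transmitter sends the same codeword (for the same message $\vec m_i$) in every slot whose state is $\mat H$ as in every slot whose state is the complement $\mat H' = \mat I - \mat H$.

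Next I would do the decoding and rate bookkeeping assuming a perfect matching. Collecting the $N$ slots with state $\mat H$, Lemma 2.13 lets receiver $j$ decode the pseudomessage $\tilde{\vec m}_j = \sum_i H_{ji}\vec m_i$ at rate $\D(Z)$; collecting the $N$ slots with state $\mat H'$, it decodes $\tilde{\vec m}_j' = \sum_i (\delta_{ji} - H_{ji})\vec m_i$, again at rate $\D(Z)$. Adding the two estimates gives $\tilde{\vec m}_j + \tilde{\vec m}_j' = \sum_i \delta_{ji}\vec m_i = \vec m_j$, so receiver $j$ recovers its own message. Since the message $\vec m_j$ is carried over both the $\mat H$-slots and the $\mat H'$-slots, it consumes $2N$ channel uses to convey the $N\,\D(Z)$ bits that a single length-$N$ pseudomessage would carry; hence the per-user rate is $N\D(Z)/2N = \D(Z)/2$.

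The step I expect to be the main obstacle is justifying that almost every slot can indeed be matched, and this is exactly where the typical-set argument enters. The matching requires complementary states to occur equally often, i.e.\ $\prob{\mat H[t] = \mat H} = \prob{\mat H[t] = \mat H'}$, which holds whenever the fading law is invariant under the complement map $\mat H \mapsto \mat I - \mat H$ (note also that no valid state is self-complementary, since $\mat H = \mat I - \mat H$ would force zero off-diagonal entries). Given this symmetry, the strong law of large numbers, together with a union bound over the finitely many matrices in $\mathcal H$, shows that with probability tending to $1$ the counts of $\mat H$ and of $\mat H'$ differ by only $o(T)$ for every $\mat H$ simultaneously. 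Matching occurrences of each complementary pair one-to-one then leaves only an $o(T)$ fraction of slots unpaired, which are discarded at negligible rate cost. Letting the blocklength grow, the decoding error vanishes by the two applications of Lemma 2.13 and the matching-failure probability vanishes by concentration, so the rates $r_i = \D(Z)/2$ are simultaneously achievable.
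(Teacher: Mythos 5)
Your proposal follows essentially the same route as the paper: every transmitter reuses the same linear codeword in a state $\mat H$ and in its complement $\mat I - \mat H$, receiver $j$ decodes the two pseudomessages via Lemma 2.13 and sums them to recover $\vec m_j$, the doubled blocklength halves the rate to $\D(Z)/2$, and a law-of-large-numbers/typical-set argument shows that all but an $o(T)$ fraction of slots can be matched. The one point both you and the paper's sketch leave implicit is that $\mat I - \mat H$ must itself lie in the support of the fading law (its diagonal entry $\one - H_{jj}$ vanishes when $H_{jj} = \one$), a detail deferred to the cited work of Nazer and coauthors.
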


Using a quantisation argument, they show the following \cite[Theorem 3]{Nazer}:

\addcontentsline{lof}{figure}{\numberline{\textbf{Theorem 2.15}} Achievable rate using ergodic\\ interference alignment (Gaussian case)}
\begin{theorem}
  For the fast fading Gaussian interference channel with
  symmetric fading (that is, $\mat H$ and $-\mat H$ have the
  same distribution), the rates $r_i = \frac12 \mathbb{E}  \log(1+2\SNR)$
  are simultaneously achievable.
\end{theorem}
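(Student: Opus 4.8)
The plan is to transplant the finite-field matching scheme of Theorem 2.14 to the continuous setting, the one genuinely new ingredient being a quantisation of the space of fading matrices. First I would fix the pairing rule. Given a realisation $\mat H$, let $\mat H^c$ denote the matrix obtained by negating only the off-diagonal (interference) entries, so that $H^c_{jj}=H_{jj}$ while $H^c_{ji}=-H_{ji}$ for $i\neq j$; equivalently $\mat H+\mat H^c$ is diagonal. Under symmetric fading the entries are independent and each $H_{ji}$ has the same law as $-H_{ji}$, so $\mat H^c$ has exactly the same distribution as $\mat H$ and hence, by ergodicity, is observed asymptotically as often. As in the finite-field scheme, every transmitter sends the identical codeword letter in a slot of type $\mat H$ and in its matched slot of type $\mat H^c$.

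The purpose of this pairing is that receiver $j$ can simply add its two observations. Writing $Y_j$ and $Y_j'$ for the outputs of the two matched slots, the interference cancels while the direct links add coherently:
\[ Y_j + Y_j' = 2H_{jj}x_j + (Z_j+Z_j') , \]
an effective point-to-point Gaussian channel with signal power $4|H_{jj}|^2$ and noise power $2$, and therefore received signal-to-noise ratio $2\SNR$ where $\SNR=|H_{jj}|^2$. (The sum is a sufficient statistic for $x_j$: the difference $Y_j-Y_j'$ carries only interference and noise independent of $x_j$.) By the single-user Gaussian result (Theorem 1.19) this matched pair supports rate $\log(1+2\SNR)$; since it consumes two channel uses, the per-use rate is $\tfrac12\log(1+2\SNR)$. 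Collecting all matched pairs and invoking the strong law of large numbers exactly as in the proof of Theorem 2.14, the long-run rate at receiver $j$ converges to $\tfrac12\,\EE\log(1+2\SNR)$, simultaneously for every user.

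The step that does not survive verbatim is the matching itself, since in a continuous model the exact complement $\mat H^c$ is never seen; I would therefore replace exact matching by matching within cells. Partition the matrix space into cells of diameter $\delta$; by the symmetry above a cell $\mathcal C$ and its off-diagonal reflection $\mathcal C^c$ carry equal probability and so are visited equally often (a typical-set/SLLN count, precisely the device Nazer and coauthors use in the finite-field case), so almost every slot can be paired with one representative in $\mathcal C$ and one in $\mathcal C^c$. Within such a pair the off-diagonal coefficients are only approximate negatives, accurate to $O(\delta)$, so adding the observations leaves a residual interference term rather than cancelling it. The main obstacle is controlling this residual: each of the $n-1$ residual coefficients is at most $\delta$ and each transmitted signal has power at most $1$, so the residual has power $O(\delta^2 n)$, which for a fixed network and fixed $\SNR$ inflates the effective noise from $2$ to $2+O(\delta^2 n)$ and pushes the effective SNR to $2\SNR$ as $\delta\to0$. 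Lower-bounding the rate by its Gaussian (worst-case-noise) value and then sending $\delta\to0$ recovers $\tfrac12\,\EE\log(1+2\SNR)$. I expect the delicate points to be making this residual bound uniform over the matched pairs and justifying the interchange of the $\delta\to0$ limit with the expectation over $\SNR$.
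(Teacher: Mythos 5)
Your proposal is correct and follows essentially the same route as the paper, which for Theorem 2.15 simply invokes the quantisation argument of Nazer, Gastpar, Jafar, and Vishwanath: pair $\mat H$ with the off-diagonal-negated complement $\mat H^c$, repeat symbols across the matched slots, add the two observations so the interference cancels and the direct link combines coherently for an effective signal-to-noise ratio of $2\SNR$, and handle the continuous fading by matching within $\delta$-cells and sending $\delta \to 0$ with a worst-case-noise bound on the residual. Two minor remarks: the sufficiency claim is unnecessary (achievability only requires that the sum be one admissible processing of the outputs), and the distributional identity of $\mat H$ and $\mat H^c$ relies on the entrywise independence built into the model rather than on the stated symmetry of the whole matrix alone.
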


We will use this result more in Chapter 4.

\section*{Notes}
\addcontentsline{toc}{section}{Notes}

The section consists of a review of the existing literature; the mathematical
contents is not claimed to be new.

On networks, the textbook of Cover and Thomas \cite[Chapter 15]{CoverThomas} and the review papers of
El~Gamal and Cover \cite{ElGamalCover} and Kramer \cite{KramerFT} were useful.

The first detailed study
of the point-to-point link was by Shannon \cite{Shannon},
of the multiple access network was by Ahlswede \cite{Ahlswede} and Liao \cite{Liao},
of the broadcast network was by Cover \cite{CoverBroadcast},
and of the interference network was by Carleial \cite{Carleial}.

The strategies of interference as noise, decode-and-subtract, and resource division
are old and well-known, making tracking down details
of their discovery difficult.  The textbooks by Tse and Viswanath \cite{TseViswanath}
and by Goldsmith \cite{Goldsmith} give good background on this material.

The concept of interference alignment -- first discovered in the alignment
by codeword choice paradigm -- is due to Cadambe and Jafar \cite{CadambeJafar},
and also to Maddah-Ali, Motahari, and Khandani \cite{MaddahAli}, who independently
discovered a similar method (published in the same issue of the same journal).

The toy example of interference alignment by codeword choice is due to
Jafar \cite{Tutorial}.  The example of interference alignment by time is after
Grokop, Tse, and Yates \cite{GrokopTseYates}.  Ergodic
interference alignment was discovered by Nazer and coauthors \cite{Nazer}.

A tutorial by Jafar \cite{Tutorial} was useful for the material on interference alignment.

\addtocontents{lof}{\protect\addvspace{20 pt}}


\chapter[Regular and Poisson random networks][Regular and Poisson random networks]{Regular and Poisson\\random networks}

In this chapter, we look at two networks based on models of how nodes are positioned in space.  In the \defn{regular network}, nodes are positioned at regular spacings, as in a grid; in the \defn{Poisson random network}, nodes are positioned at random according to a Poisson point process.

We examine how large networks can operate using simple `interference as noise' techniques.  In particular, we show the important relationship between the attenuation $\alpha$ of the signals (which describes how quickly signals die off over distance) and the dimension $d$ of the network.

\section{Model}

We use the model of a Gaussian network with slow fading based on power-law attenuation.
That is we have a countable set of points (\emph{nodes}) placed in $d$-dimensional Euclidean
space $\bR^d$ --  each node $i \in \mathbb Z_+$ is positioned at the point $\vec T_i \in \mathcal \bR^d$.

On the $t$th channel use, the signal received by node $j$  is
  \[ Y_j[t] = \sum_{i \neq j} h \rho(i,j)^{-\alpha/2} x_i[t] + Z_j[t] . \]
Here, $\rho(i,j) = \| \vec T_j - \vec T_i \|$ is the Euclidean distance between nodes $i$ and $j$.
We call $h$ the fixed fading parameter.  Large $h$ corresponds to signals being much more powerful than noise;
small $h$ corresponds to signals being much less powerful than noise.  To concentrate on
the interference-limited regime, we will sometimes consider the limit $h \to \infty$, which
is equivalent to the noiseless network with $h = 1$, 
  \[ Y_j[t] = \sum_{i \neq j} \rho(i,j)^{-\alpha/2} x_i[t]. \]



The Euclidean norm in $\bR^d$ will be denoted $\|\ \|$, where $d$ is the
dimension of the network.  It will be useful later to define
    \[ v(d) := \frac{\pi^{d/2}}{\Gamma(1+d/2)} , \]
which is the volume of the Euclidean unit ball in $\bR^d$.
  
The $d=2$ case is the most commonly studied, as this obviously has real-world applications.  The case $d=3$ has applications in, for example, tall office buildings; and the $d=1$ is attracting more attention for `car-to-car' protocols (see for example recent work by the US Department of Transportation \cite{dot}), where a long road can be modelled as a one-dimensional line.



\section{Regular networks}

In a $d$-dimensional \emph{regular network}, (see for example Xie and Kumar \cite{XieKumar}), nodes
are placed at points of $\bZ^d$ in $d$-dimensional space $\bR^d$.  In particular, any two
nodes are a distance at least $1$ from each other.

Below we show regular networks in one and two dimensions respectively.

\begin{center}
  \begin{picture}(170,40)(0,60)
    \put(10,80){\vector(1,0){150}}
  
    \put(20,80){\circle*{5}}
    \put(50,80){\circle*{5}}
    \put(80,80){\circle*{5}}
    \put(110,80){\circle*{5}}
    \put(140,80){\circle*{5}}
  \end{picture}

\medskip

  \begin{picture}(170,160)(180,0)
   \put(190,80){\vector(1,0){150}}
    \put(260,10){\vector(0,1){150}}
    
    \put(200,80){\circle*{5}}
    \put(230,80){\circle*{5}}
    \put(260,80){\circle*{5}}
    \put(290,80){\circle*{5}}
    \put(320,80){\circle*{5}}
    
    \put(200,50){\circle*{5}}
    \put(230,50){\circle*{5}}
    \put(260,50){\circle*{5}}
    \put(290,50){\circle*{5}}
    \put(320,50){\circle*{5}}
    
    \put(200,20){\circle*{5}}
    \put(230,20){\circle*{5}}
    \put(260,20){\circle*{5}}
    \put(290,20){\circle*{5}}
    \put(320,20){\circle*{5}}
    
    \put(200,110){\circle*{5}}
    \put(230,110){\circle*{5}}
    \put(260,110){\circle*{5}}
    \put(290,110){\circle*{5}}
    \put(320,110){\circle*{5}}
    
    \put(200,140){\circle*{5}}
    \put(230,140){\circle*{5}}
    \put(260,140){\circle*{5}}
    \put(290,140){\circle*{5}}
    \put(320,140){\circle*{5}}
  \end{picture}
\end{center}

Each node $\vec t \in \bZ^d$ is a transmitter, transmitting to one of its $2d$ nearest neighbours,
chosen arbitrarily.  We write $\vec t \to \vec r$ to indicate that node $\vec t$ transmits to node $\vec r$.

All nodes will use standard Gaussian codebooks of power $P=1$, generated independently
of each other.  This means that -- power aside -- signals are statistically indistinguishable
from a) each other, and b) the background noise.  We use the principle that interference
should be treated as noise (as in Section 2.3).

Following Gupta and Kumar \cite{GuptaKumar} the interference at any node $\vec r \in \bZ^d$ 
  \begin{equation}
    I = \mathop{\sum_{\vec t \neq \vec r}}_{\vec t \not\to \vec r}  h \|\vec r- \vec t\|^{-\alpha} ,
  \end{equation}
and using interference-as-noise the communication rate
  \[ r := \log(1+\sinr) = \log{ \left( 1 + \frac{h}{I + 1} \right) } \]
is achievable for each link $\vec r \to \vec t$.

That is, if the interference $I$ is finite, then every node can transmit at this fixed rate.  This is known as
\emph{linear growth} for the following reason: if we have a sequence of sets of nodes $(S_n : n \in \bN)$,
where $S_n \subset \bZ$ is of cardinality $n$, then there exists an achievable
rate $n$-tuple $(r_{i} : i \in S_n)$ such that
  \[ \sum_{\vec j \in S_n} r_{j} = nr = O(n) . \]

The following theorem generalises the work of Xie and Kumar \cite{XieKumar}, who proved it for the case $d=2$.

\addcontentsline{lof}{figure}{\numberline{\textbf{Theorem 3.1}} Linear growth in regular networks}
\begin{theorem} \label{regthm}
  The $d$-dimensional linear network supports linear growth, provided the
  ratio of the attenuation to the dimension of the network is sufficiently
  large, specifically if $\alpha>d$.
\end{theorem}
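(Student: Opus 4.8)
The plan is to reduce the theorem to a single analytic fact: the interference $I$ experienced at every node is finite — indeed uniformly bounded across all nodes — precisely when $\alpha>d$. Once that is established, the achievable rate $r=\log(1+h/(I+1))$ guaranteed by the interference-as-noise bound (Theorem 2.9, applied linkwise) is a fixed positive constant, so for any finite set $S_n$ of $n$ nodes the sum-rate is $nr=O(n)$, which is exactly the definition of linear growth given above. So the whole theorem hinges on bounding $I$.

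First I would exploit the translation invariance of $\bZ^d$: since the network looks identical from every lattice point, substituting $\vec s=\vec r-\vec t$ gives
\[
  I = \mathop{\sum_{\vec t \neq \vec r}}_{\vec t \not\to \vec r} h\,\|\vec r-\vec t\|^{-\alpha}
    \leq h \sum_{\vec s \in \bZ^d \setminus \{\vec 0\}} \|\vec s\|^{-\alpha},
\]
a quantity independent of $\vec r$ (discarding the desired-signal term $\vec t \to \vec r$ only removes a nonnegative contribution). Hence it suffices to show that the lattice sum $\sum_{\vec s \in \bZ^d \setminus \{\vec 0\}} \|\vec s\|^{-\alpha}$ converges when $\alpha>d$.

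To control the lattice sum I would compare it to a volume integral. Associate to each $\vec s \in \bZ^d$ the unit cube $C_{\vec s} = \vec s + [-\tfrac12,\tfrac12]^d$; these tile $\bR^d$ and each has unit volume. For $\|\vec s\|\geq \sqrt d$, every $\vec x \in C_{\vec s}$ satisfies $\|\vec x\| \leq \|\vec s\| + \tfrac{\sqrt d}{2} \leq \tfrac32\|\vec s\|$, so $\|\vec s\|^{-\alpha} \leq (\tfrac32)^\alpha \|\vec x\|^{-\alpha}$. Integrating over $C_{\vec s}$ and summing over these $\vec s$,
\[
  \sum_{\|\vec s\|\geq \sqrt d} \|\vec s\|^{-\alpha}
    \leq \left(\tfrac32\right)^{\alpha} \int_{\|\vec x\|\geq \sqrt d/2} \|\vec x\|^{-\alpha}\,\mathrm{d}\vec x .
\]
Passing to spherical coordinates, the surface area of the radius-$\rho$ sphere is $d\,v(d)\,\rho^{d-1}$ (the radial derivative of the ball volume $v(d)\rho^d$, with $v(d)$ as defined earlier), so the integral equals $d\,v(d)\int_{\sqrt d/2}^{\infty} \rho^{\,d-1-\alpha}\,\mathrm{d}\rho$, which is finite exactly when $\alpha>d$. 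Adding back the finitely many terms with $\|\vec s\|<\sqrt d$ leaves the full lattice sum finite, so $I<\infty$ and the theorem follows.

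The main obstacle is just this convergence estimate — making the lattice-to-integral comparison clean (equivalently, one could bin the lattice points into shells $k\leq\|\vec s\|<k+1$, of which there are $O(k^{d-1})$, each contributing at most $k^{-\alpha}$, so the tail behaves like $\sum_k k^{\,d-1-\alpha}$). Everything else is bookkeeping, and the hypothesis $\alpha>d$ appears in exactly the form it does because $d$ is the threshold at which the radial integral $\int^{\infty}\rho^{\,d-1-\alpha}\,\mathrm{d}\rho$ (and correspondingly the series $\sum_k k^{\,d-1-\alpha}$) converges.
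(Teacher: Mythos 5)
Your proposal is correct, and it reaches the same analytic core as the paper -- comparing the lattice sum $\sum \|\vec s\|^{-\alpha}$ to the radial integral $d\,v(d)\int \rho^{d-1-\alpha}\,\dd\rho$, which converges precisely when $\alpha>d$ -- but it gets there by a genuinely different decomposition. The paper proceeds by induction on the dimension: it splits $\bZ^d\setminus\{\vec 0\}$ into the $d$ coordinate hyperplanes (handled by the inductive hypothesis for $d-1$) and the $2^d$ open orthants, and within an orthant it uses the cube comparison $\|\vec t\|^{-\alpha} \leq \int_{t_1-1}^{t_1}\cdots\int_{t_d-1}^{t_d}\|\vec x\|^{-\alpha}\,\dd\vec x$, which works without any multiplicative constant because the integrand there dominates the lattice value coordinatewise. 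You instead tile all of $\bR^d$ by unit cubes centred at the lattice points and pay a factor $(3/2)^{\alpha}$ to make the comparison two-sided, which lets you treat the whole lattice in one shot, with the finitely many points inside radius $\sqrt d$ handled separately. Your route is shorter and avoids the induction entirely; the paper's route keeps the integral comparison constant-free and isolates where the degenerate (lower-dimensional) lattice points go, at the cost of the extra bookkeeping of the orthant split. One small point to be explicit about if you write this up: your bound on $I$ discards not only the self-term $\vec t=\vec r$ but also reinstates the desired transmitter's term $\vec t\to\vec r$ into the sum, which is harmless since it only enlarges the upper bound -- you say this, and it is right, but it is worth keeping visible since the paper's sum likewise drops exactly those two points.
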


\begin{proof}
We proceed by induction on the dimension $d$, showing that the interference
$I = I(\alpha, d)$ is finite for $\alpha > d$.  Without loss of generality,
receiver $\vec 0$ is receiving a message from transmitter $\vec t^* := (1,0, \dots, 0)$.

First, the base case, $d=1$. The interference is
  \[ I(\alpha, 1) = h \sum_{t \neq 0,1} |t|^{-\alpha} = h \left(2 \sum_{t=1}^\infty t^{-\alpha} - 1\right), \]
which is finite for $\alpha > 1$, as desired.

The inductive hypothesis is that $I(\alpha,d-1)$ is finite.

Now the inductive step.  Again, the interference is
  \[ I(\alpha, d) = h \sum_{\vec t \neq \vec 0,\vec t^*} \|\vec t\|^{-\alpha}
               \leq h \sum_{\vec t \neq \vec 0} \|\vec t\|^{-\alpha}  . \]
We now split $\bZ^d$ into the $d$ different $(d-1)$-dimensional coordinate spaces (where
at least one coordinate is $0$), and the $2^d$ open orthants (where all
coordinates are nonzero). This gives
  \begin{equation} \label{ind1}
    I(\alpha, d) \leq d I(\alpha, d-1) + h 2^d \sum_{\vec t \in \mathbb N^d} \| \vec t \|^{-\alpha} .
  \end{equation}
The first term in \eqref{ind1} is finite by the inductive hypothesis;
we concentrate on the second term.  By treating the point
$\vec 1 = (1,1,\dots,1)$ separately, we have
  \begin{equation} \label{ind2}
    h 2^d \sum_{\vec t \in \mathbb N^d} \| \vec t \|^{-\alpha}
      = h 2^d d^{-\alpha/2} + h 2^d \sum_{\vec t \in \mathbb N^d \setminus \{ \vec 1 \} } \| \vec t \|^{-\alpha} .
  \end{equation}
The second term in \eqref{ind2} can be approximated by an integral, since for
$\vec t \in \mathbb N^d$,
  \[ \| \vec t \|^{-\alpha} \leq \int_{t_1-1}^{t_1} \cdots \int_{t_d-1}^{t_d} 
                                   \| \vec t \|^{-\alpha} \, \ud t_1 \cdots \ud t_d . \]
Hence,
  \begin{align*}
    h 2^d \sum_{\vec t \in \mathbb N^d \setminus \{ \vec 1 \} } \| \vec t \|^{-\alpha} 
      &\leq h 2^d \int_{1}^{\infty} \cdots \int_{1}^{\infty} \| \vec t \|^{-\alpha} \, \ud t_1 \cdots \ud t_d \\
      &= h \int_{\bR^d \setminus [-1,1]^d} \| \vec t \|^{-\alpha} \, \ud \vec t \\
      &\leq h \int_{\bR^d \setminus B(\vec 0, 1)} \| \vec t \|^{-\alpha} \, \ud \vec t ,
  \end{align*}
where $B(\vec 0,1)$ is the $d$-dimensional unit ball centered at the origin.  We now use a change
of coordinates to $\rho = \| \vec t \|$, $\vec s = \vec t/\rho$, so that
$\ud \vec t = \rho^{d-1} \, \ud \rho \, \ud \vec s$.  This gives
    \begin{align}
    h 2^d \sum_{\vec t \in \mathbb N^d \setminus \{ \vec 1 \} } \| \vec t \|^{-\alpha} 
      &\leq h\int_{\| \vec s \| = 1} \int_{\rho = 1}^\infty \rho^{-\alpha} \rho^{d-1} \, \ud \rho \, \ud \vec s \notag \\
      &= h d v(d) \int_{\rho = 1}^\infty \rho^{-(\alpha-d)-1} \, \ud \rho \notag \\
      &= h d v(d) \left[ - \frac{1}{\alpha - d} \rho^{-(\alpha - d)} \right]_1^\infty \notag \\
      &= h \frac{d v(d)}{\alpha - d} , \label{ind3}
  \end{align}
which is finite.  Putting together \eqref{ind1}, \eqref{ind2}, and \eqref{ind3},  we get
  \[ I(\alpha, d) \leq I(\alpha, d-1) + h2^d d^{-\alpha/2} + h \frac{d v(d)}{\alpha - d} < \infty . \]
The inductive step is complete and the theorem is proven.
\end{proof}

Note that this result is the best possible. (By `best possible', we mean that Theorem \ref{regthm} is not
true for $\alpha \leq d$.  We do not claim our bounds on $I(\alpha,d)$ are as tight as possible.) If
$\alpha \leq d$, then the interference is
  \[  I(\alpha, d) \geq h \sum_{t=1}^\infty t^d t^{-\alpha} \geq h \sum_{t=1}^\infty t^d t^{-d}
          h \sum_{t=1}^\infty 1 = \infty , \]
and the signal-to-interference-plus-noise ratio is $0$.

It is worth noting that a simple bound for $I(\alpha,d)$ is
  \[ I(\alpha,d) \leq h \frac{\alpha}{\alpha-d} 2^{d-1} (d+1)! . \]
This can be proved inductively, using $v(d) \leq 2^d$ (that is, the volume of the unit sphere is less than
that of the surrounding cube).

\section{Poisson random networks}

In this section, we define the Poisson random network, where nodes are distributed like a Poisson
process.  We give a local result -- bounds on the outage probability of a single transmission -- and
a global result -- showing that linear growth occurs in the network with high probability.

\subsection{Node positioning model}

In a $d$-dimensional Poisson random network (studied extensively by Haenggi \cite{Haenggi1, Haenggi2}, and Dhillon, Ganti, and Andrews \cite{jeffreys},  among others), the set of nodes $\{\vec T_i : i \in \bZ_+\}$ are placed
in $\bR^d$ as a Poisson point process of density $1$ (without loss of generality).  For simplicity,
we will translate the points such that $\vec T_0$ is at the origin $\vec 0$, and relabel the nodes
by order of distance from the origin.  So we have $0 = \|\vec T_0\| \leq \|\vec T_1\| \leq \cdots$ (and
strict inequalities almost surely).

The figures below shows Poisson random networks in one and two dimensions.

		\begin{center}
\begin{picture}(170,40)(0,60)
    \put(10,80){\vector(1,0){150}}
  
    \put(132,80){\circle*{5}}
    \put(61,80){\circle*{5}}
    \put(80,80){\circle*{5}}
    \put(55,80){\circle*{5}}
    \put(111,80){\circle*{5}}
    \put(28,80){\circle*{5}}
  \end{picture}
		
	  \begin{picture}(170,160)(180,0)
	  
	  \put(190,80){\vector(1,0){150}}
    \put(260,10){\vector(0,1){150}}
    
    \put(232,136){\circle*{5}}
    \put(261,51){\circle*{5}}
    \put(221,61){\circle*{5}}
    \put(295,118){\circle*{5}}
    \put(281,93){\circle*{5}}
   
    \put(215,21){\circle*{5}}
    \put(260,80){\circle*{5}}
    \put(252,72){\circle*{5}}
    \put(302,37){\circle*{5}}
    \put(248,22){\circle*{5}}
    
    \put(274,55){\circle*{5}}
    \put(260,102){\circle*{5}}
    \put(209,106){\circle*{5}}
    \put(226,74){\circle*{5}}
    \put(296,70){\circle*{5}}
    
    \put(316,119){\circle*{5}}
    \put(287,55){\circle*{5}}
    \put(203,44){\circle*{5}}
    \put(204,100){\circle*{5}}
    \put(297,39){\circle*{5}}
    
    \put(269,28){\circle*{5}}
    \put(229,93){\circle*{5}}
    \put(260,63){\circle*{5}}
    \put(268,70){\circle*{5}}
    \put(255,26){\circle*{5}}
	  
	  \end{picture}
	  \end{center}

We want to model the scenario of a multihop network; that is, where messages are sent to distant nodes by
being successively passed over short distances by a number of intermediate nodes.  For this model, we shall
assume that each node broadcasts its intended signal, whilst picking up the signal from its nearest
neighbour.  We will concentrate on the communication over just these short hops -- the large-scale
strategy has been studied by many others, particularly the multihop strategy of Gupta and Kumar \cite{GuptaKumar} and the work on hierarchical cooperation
by \"Ozg\"ur, L\'ev\^eque, and Tse \cite{Ozgur}.

If node $i$'s nearest neighbour is node $j$, we again write $j \to i$.  In particular, the
nearest neighbour to node $0$, at the origin, is node $1$ at
$\vec T_1$, so $1 \to 0$.

All nodes will use standard Gaussian codebooks of power $P=1$, generated independently of each other.  Nodes treat
signals other than they are picking up as Gaussian noise, as discussed in Section 2.3.

\subsection{Outage probability}

Given the link $j \to i$ and the positions of nodes $i$ and $j$, the position of other nodes in the
network is random.  In particular, for any given rate $r > 0$, we cannot guarantee that $j$ can communicate
to $i$ at rate $r$.  This is because the other nodes might (with non-zero probability) crowd round $i$,
drowning out the intended signal from $j$.

Hence, we need to study the \emph{outage probability} of the network (as discussed previously
in Subsection 1.5.3).

\addcontentsline{lof}{figure}{\numberline{\textbf{Definition 3.2}} Outage, $\SINR$, interference}
\begin{definition}
We define the outage probability as
  \[ p_{\text{out}}(r) = \mathbb{P} \left( r > \log{(1+\SINR_{ji})}  \right) . \]
Here (and throughout) $\mathbb{P}$ denotes probability over the Poisson point process.

The signal-to-interference-plus-noise ratio at node $j$, $\SINR_{ji}$, has marginal distribution function
  \[ F_{\SINR}(s) = \mathbb{P} \left( \frac{h \|\vec T_j - \vec T_i\|^{-\alpha}}{I_{ji} + 1} \leq s \right) , \]
independent of the link $i\to j$ where the interference is
  \[ I_{ji} = \sum_{k \neq i,j} h \|\vec T_j-\vec T_k\|^{-\alpha} .\]
\end{definition}


From now on, we deal with the link $1 \to 0$, without loss of generality.  We will suppress subscripts that
are no longer necessary.

An important special case is the high power regime $h \to \infty$, which is simpler to deal with mathematically and
is important for studying networks that are interference limited rather than noise limited.  Note that since
  \begin{align*} \lim_{h \to \infty} \SINR &= \lim_{h \to \infty} \frac{h \|\vec T_j - \vec T_i\|^{-\alpha}}{\sum_{k \neq i,j} h \|\vec T_j-\vec T_k\|^{-\alpha} + 1} \\
                                           &= \frac{ \|\vec T_j - \vec T_i\|^{-\alpha}}{\sum_{k \neq i,j}  \|\vec T_j-\vec T_k\|^{-\alpha} } , \end{align*}
this is equivalent to taking $h=1$ and ignoring the noise term.

The next two theorems give upper (Theorem \ref{outagethm}) and lower (Theorem 3.4) bounds on the outage probability
$\pout(r)$.  The figure below shows these bounds for the common case $d=2$, $\alpha = 3$ in the
high power $h \to \infty$ regime.

	\begin{center}
       \includegraphics[scale=0.74]{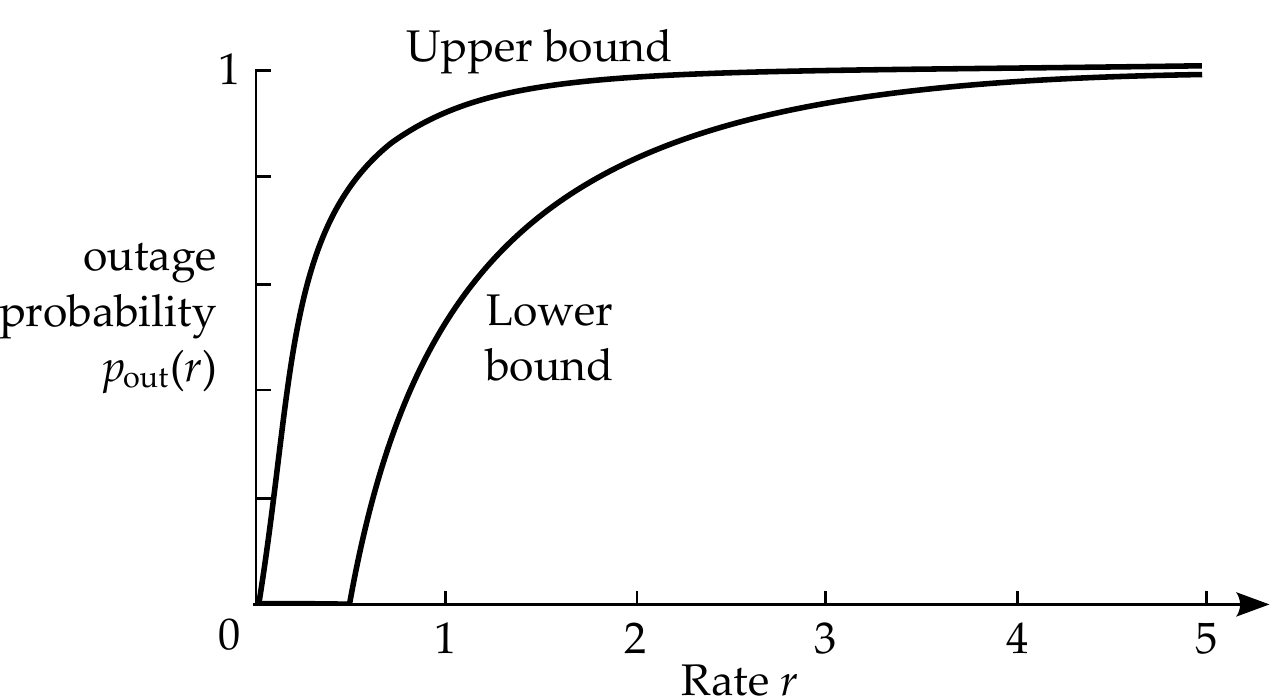}
	\end{center}

\addcontentsline{lof}{figure}{\numberline{\textbf{Theorem 3.3}} Outage probability in Poisson\\ random networks: upper bound}
\begin{theorem}\label{outagethm}
Consider a $d$-dimensional Poisson random network with fixed fading parameter $h$, and
suppose $\alpha > d$.  Then the outage probability is upper-bounded by
  \begin{multline*}
    p_{\text{out}}(r) \leq \frac{d(2^{r}-1)}{\alpha-d} \left(1 + 2\frac{2^r-1}{h} v(d)^{-\alpha/d}
      \,\Gamma\left(2+\frac{\alpha}{d} \right) \right) \\
              + \exp \left(-v(d) \left(2\frac{2^r-1}{h}\right)^{-d/\alpha}\right) .
  \end{multline*}
In the high power $h \to \infty$ regime, we have
  \[ p_{\text{out}}(r) \leq \frac{d(2^{r}-1)}{\alpha-d} \left( 1 - \exp{\left(-\frac{\alpha - d}{d(2^{r}-1)}\right)} \right) . \]
\end{theorem}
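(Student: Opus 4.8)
The plan is to reduce the outage event to a statement about two random quantities — the distance $R_1 := \|\vec T_1\|$ from the origin to its nearest neighbour, and the interference $I = \sum_{k\geq 2} h\|\vec T_k\|^{-\alpha}$ gathered at the origin from all further nodes — and then to bound each using only the first moment of $I$ together with the exact law of $R_1$. Writing $\beta := 2^r - 1$, we have $\pout(r) = \bP(\SINR < \beta) = \bP\big(hR_1^{-\alpha} < \beta(1+I)\big)$. Two distributional facts drive everything. First, by the void probability of a unit-intensity Poisson process, $\bP(R_1 > \rho) = \exp(-v(d)\rho^d)$, so $R_1$ has density $d\,v(d)\,\rho^{d-1}\exp(-v(d)\rho^d)$ and, after the substitution $u = v(d)\rho^d$, the moments $\bE R_1^{d} = v(d)^{-1}$ and $\bE R_1^{d+\alpha} = v(d)^{-1-\alpha/d}\,\Gamma(2+\alpha/d)$. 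Second, conditioned on $R_1 = \rho$, Slivnyak's theorem says the interferers still form a unit-intensity Poisson process, now confined to $\{\|x\| > \rho\}$, so Campbell's theorem gives the conditional mean $\bE[I \mid R_1 = \rho] = h\int_{\|x\|>\rho}\|x\|^{-\alpha}\,\ud x = \tfrac{h\,d\,v(d)}{\alpha - d}\,\rho^{d-\alpha}$, finite precisely because $\alpha > d$.

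For the finite-$h$ bound I would split on the nearest-neighbour distance at the threshold $\rho_* := (h/2\beta)^{1/\alpha}$, chosen so that $\bP(R_1 > \rho_*) = \exp(-v(d)(2\beta/h)^{-d/\alpha})$ — this is exactly the trailing exponential term. On the complementary event $\{R_1 = \rho \leq \rho_*\}$ the outage condition rearranges to $I > m(\rho)$ with $m(\rho) = h\rho^{-\alpha}/\beta - 1 > 0$, and Markov's inequality yields $\bP(\text{outage} \mid \rho) \leq \bE[I\mid\rho]/m(\rho) = \tfrac{d\,v(d)\beta}{\alpha-d}\,\rho^{d}\,(1 - \beta\rho^\alpha/h)^{-1}$. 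Because $\rho \leq \rho_*$ forces $x := \beta\rho^\alpha/h \leq \tfrac12$, I would use the elementary estimate $(1-x)^{-1} \leq 1 + 2x$ to linearise the last factor, giving $\bP(\text{outage}\mid\rho) \leq \tfrac{d\,v(d)\beta}{\alpha-d}\big(\rho^{d} + \tfrac{2\beta}{h}\rho^{d+\alpha}\big)$. Integrating against the density of $R_1$ and enlarging the range to $[0,\infty)$ replaces the two powers of $\rho$ by the moments above, producing $\tfrac{d\beta}{\alpha-d}\big(1 + \tfrac{2\beta}{h}v(d)^{-\alpha/d}\Gamma(2+\alpha/d)\big)$; adding the tail $\bP(R_1 > \rho_*)$ completes the first stated bound.

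For the sharper high-power bound I would pass to the noiseless limit, where $\SINR = R_1^{-\alpha}/I$ with $I = \sum_{k\geq 2}\|\vec T_k\|^{-\alpha}$ and the ``$+1$'' disappears. Conditioning on $R_1 = \rho$ and applying Markov to $\{I > \rho^{-\alpha}/\beta\}$ now gives the clean bound $\bP(\text{outage}\mid\rho) \leq \min\!\big(1,\, C\rho^{d}\big)$ with $C := \tfrac{d\,v(d)\beta}{\alpha-d}$, the truncation at $1$ being what makes this estimate tight rather than lossy. I would then integrate $\min(1, C\rho^d)$ against the law of $R_1$; the substitution $u = v(d)\rho^d$ turns this into $\tfrac{1}{u_\dagger}\int_0^{u_\dagger} u e^{-u}\,\ud u + e^{-u_\dagger}$ with $u_\dagger := v(d)/C = (\alpha-d)/(d\beta)$, and the incomplete-gamma integral $\int_0^{u_\dagger}u e^{-u}\,\ud u = 1 - e^{-u_\dagger}(1+u_\dagger)$ collapses the whole expression to $\tfrac{1}{u_\dagger}(1 - e^{-u_\dagger}) = \tfrac{d\beta}{\alpha-d}\big(1 - e^{-(\alpha-d)/(d\beta)}\big)$, exactly as claimed.

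The main obstacle I anticipate is the conditioning step: making rigorous the claim that, given the nearest neighbour sits at distance $\rho$, the remaining points are an undisturbed unit-intensity Poisson field on $\{\|x\|>\rho\}$, so that Campbell's theorem delivers the conditional mean of $I$. This is exactly the Slivnyak/Palm-calculus input, and it must be handled carefully because nodes $0$ and $1$ are singled out by the construction; everything else is bookkeeping built on Markov's inequality and the two moments of $R_1$. A secondary point worth checking is that the single threshold $\rho_*$ simultaneously produces the exact exponential tail and guarantees $x \leq \tfrac12$, so that the $(1-x)^{-1}\leq 1+2x$ step is licensed on the entire range of integration.
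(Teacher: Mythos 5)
Your proposal is correct and follows essentially the same route as the paper's proof: conditioning on the nearest-neighbour distance $\|\vec T_1\|$ (whose $d$th power is exponential with rate $v(d)$), bounding the conditional outage probability via Markov's inequality with Campbell's theorem supplying $\Ex(I \mid \|\vec T_1\| = t) = h\frac{dv(d)}{\alpha-d}t^{-(\alpha-d)}$, splitting at the same threshold (your $\rho_* = (h/2\beta)^{1/\alpha}$ is exactly the paper's $2^{-1/\alpha}t^{**}$) with the same linearisation $(1-x)^{-1} \leq 1+2x$ for the finite-$h$ bound, and retaining the $\min\{\cdot,1\}$ truncation with the incomplete-gamma computation for the high-power bound. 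Your phrasing via the moments $\Ex R_1^d$ and $\Ex R_1^{d+\alpha}$ and your explicit appeal to Slivnyak's theorem are only cosmetic tidyings of the paper's argument.
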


Note that since
\[ \frac{d}{\alpha-d} = \frac{1}{\alpha/d - 1} , \]
the attentuation $\alpha$ and dimension $d$ only enter this theorem through their ratio.

\begin{proof}
 First, note that the outage probability can be rewritten as
    \[
      p_{\text{out}}(R) = \mathbb{P} \left( r > \log{(1+\SINR)}  \right) 
                        = F_{\SINR}(s) , 
    \]
  where we have defined $s := 2^{r}-1$.  So now we need only bound $F_{\SINR}(s)$, the
  probability that $\SINR$ is large.

  We will bound $F_{\SINR}(s)$ by conditioning on the position of the nearest neighbour.  So
    \begin{align} 
      F_{\SINR}(s) &= \Prob{(\SINR \leq s)} \notag \\
             &= \int_0^\infty f_{\|\vec T_1\|}(t)
                       \Prob{\big( \SNR \leq s \,\big|\, \|\vec T_1 \| = t \big)}\,\ud t \notag \\
             &= \int_0^\infty f_{\|\vec T_1\|}(t)
                      \Prob {\left( I \geq \frac{1}{s} ht^{-\alpha} - 1 \,\Big|\, \|\vec T_1 \| = t \right)} \,\ud t . \label{FSNR}
    \end{align}

  It is known \cite[Theorem 1]{Haenggi3} (and can be easily shown) that $\|\vec T_1\|^d$
  has an exponential distribution with parameter $v(d)$,
    \[ \|\vec T_1\|^d \sim \text{Exp}\big(v(d)\big) , \]
  giving
    \begin{equation}\label{Exp}
      f_{\|\vec T_1\|}(t) = d v(d) t^{d-1} \ex^{-v(d) t^d} . 
    \end{equation}
  
  We will also need to bound the probability that the interference is large, which we do using the
  conditional Markov inequality.  Specifically, when $\frac 1s h t^{-\alpha} - 1 > 0$, we have
    \begin{equation} \label{back}
      \Prob {\left( I \geq \frac{1}{s} ht^{-\alpha} - 1 \,\Big|\, \|\vec T_1 \| = t \right)} 
        \leq \min \left\{ \frac{1}{\frac{1}{s} ht^{-\alpha} - 1}
          \Ex \big( I \,\big|\, \|\vec T_1\| = t \big) , 1 \right\} , 
    \end{equation}
  and when $\frac 1s h t^{-\alpha} - 1 \leq 0$, we take the trivial bound $1$ instead.
  
  We concentrate on the simpler high power scenario first.  Note that in the limit $h \to \infty$,
  we always have $\frac 1s h t^{-\alpha} - 1 > 0$.
  
    For the moment, we concentrate on the first argument. We can write
   \[  \Ex \big( I \,\big|\, \|\vec T_1\| = t \big)
             = \Ex  \sum_{\vec T \in \mathcal P(t)} h\|\vec T\|^{-\alpha} , \]
  where $\mathcal P(t)$ is the set of points of the Poisson process outside the ball
  of radius $t$ about the origin.  Using Campbell's theorem (see for example the monograph
  of Haenggi and Ganti \cite[Theorem A.2]{Haenggi1}), we then have
   \begin{align} 
     \Ex  \sum_{\vec T \in \mathcal P(t)} h\|\vec T\|^{-\alpha}
        &= \int_{\|\vec u\|\geq t} h\|\vec u\|^{-\alpha} \, \ud \vec u \notag \\
        &= hd v(d) \left[ - \frac{1}{\alpha - d} \rho^{-(\alpha-d)} \right]_{\rho = t}^\infty \notag \\
        &= h\frac{d v(d)}{\alpha - d} t^{-(\alpha - d)} . \label{int}
  \end{align}

  Substituting (\ref{int}) and (\ref{Exp}) this back into equation (\ref{FSNR}) gives
    \begin{align*}
      F_{\SINR}(s) 
        &\leq \lim_{h\to\infty} \int_0^\infty d v(d) t^{d-1} \ex^{-v(d) t^d}
                      \min \left\{\frac{1}{ \frac{1}{s} ht^{-\alpha} - 1}
                        h \frac{d v(d)}{\alpha - d} t^{d-\alpha} , 1 \right\} \, \ud t \\
        &= \int_0^\infty d v(d) t^{d-1} \ex^{-v(d) t^d}
                      \min \left\{\frac{1}{ \frac{1}{s} t^{-\alpha}}
                        \frac{d v(d)}{\alpha - d} t^{d-\alpha} , 1 \right\} \, \ud t \\                
        &= \int_0^{t^*} d v(d) t^{d-1} \ex^{-v(d) t^d}
                      st^{\alpha} \frac{d v(d)}{\alpha - d} t^{d-\alpha} \, \ud t \\
        &\qquad\qquad\qquad\qquad\qquad\qquad\qquad\quad {}+ \int_{t^*}^\infty dv(d) t^{d-1} \ex^{-v(d) t^d}
                      st^{\alpha} \, \ud t ,
    \end{align*}  
  where
    \[ t^* = \left( s \frac{d v(d)}{\alpha - d} \right)^{-1/d} \]
  is the point where we cross over from one argument of the `$\min$' to the other.
  
  Calculating these integrals using a substitution $y = v(d) x^d$ gives
    \[ F_{\SINR}(s) \leq \frac{ds}{\alpha-d} \left( 1 - \ex^{-(\alpha - d)/ds} \right) . \]
  This proves the theorem for the high power regime.
  
  We now move back to the general case.  Returning to \eqref{back}, we now have
    \begin{align*}
      &\int_0^\infty f_{\|\vec T_1\|}(t)
                      \Prob {\left( I \geq \frac{1}{s} ht^{-\alpha} - 1 \,\Big|\, \|\vec T_1 \| = t \right)} \,\ud t \\
        &\qquad\quad\leq \int_0^{t^{**}} dv(d) t^{d-1} \ex^{-v(d)t^d}
                       \min \left\{ \frac{1}{\frac1s ht^{-\alpha}-1} h \frac{d v(d)}{\alpha - d} t^{d-\alpha} , 1  
                          \right\} \, \ud t \\
        &\qquad\qquad\qquad\qquad\qquad\qquad\qquad\qquad\qquad\qquad
                 {}+ \int_{t^{**}}^\infty dv(d) t^{d-1} \ex^{-v(d)t^d}  \, \ud t ,
    \end{align*}
  where $t^{**} = (s/h)^{-1/\alpha}$ is the point at which $\frac 1s ht^{-\alpha}-1=0$.
  
  We could evaluate this integral numerically, or express it as a complicated sum of Gamma functions.  Instead,
  we will show a simple bound.
  
  When $t \leq 2^{-1/\alpha} t^{**}$, we have $\frac sht^\alpha \leq \frac12$,
  and hence
    \[ \frac{1}{ \frac1s ht^{-\alpha} - 1 } h
         = \frac{st^\alpha}{1 - \frac sh t^\alpha} \leq st^\alpha \left(1 + 2\frac sh t^\alpha \right) . \]
  This allows us to bound the integral by
     \begin{align*}
       &\int_0^{2^{-1/\alpha}t^{**}} dv(d) t^{d-1} \ex^{-v(d)t^d}
           s \left(1 + 2\frac sh t^\alpha \right) \frac{dv(d)}{\alpha-d} t^d  \, \ud t \\
         &\qquad{}+ \int_{2^{-1/\alpha}t^{**}}^\infty dv(d) t^{d-1} \ex^{-v(d)t^d} \, \ud t \\
         & \qquad\quad\quad\ \leq \frac{ds}{\alpha-d} \left(1 + 2\frac sh v(d)^{-\alpha/d} \Gamma \left(2+\frac{\alpha}{d} \right) \right) 
              + \exp \left( -v(d) \left(2\frac sh \right)^{-d/\alpha} \right) ,
    \end{align*}
  as required, where the first term follows upon replacing the upper limit of the integral by $\infty$.  
\end{proof}

\addcontentsline{lof}{figure}{\numberline{\textbf{Theorem 3.4}} Outage probability in Poisson\\ random networks: lower bound} 
\begin{theorem}
Under the same conditions as Theorem \ref{outagethm}, the
outage probability is bounded below by
\[ p_{\text{out}}(r)  \geq 1 - \frac{1}{ (2^{r} - 1)^{d/\alpha}} .
\]
\end{theorem}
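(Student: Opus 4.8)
The plan is to bound the complementary event. Writing $s := 2^r - 1$, the proof of Theorem~\ref{outagethm} already records that $p_{\text{out}}(r) = \mathbb{P}(\SINR \le s) = 1 - \mathbb{P}(\SINR > s)$, so it suffices to show $\mathbb{P}(\SINR > s) \le s^{-d/\alpha}$. The key idea is to discard all of the interference except the contribution of the single nearest interferer. Working with the link $1 \to 0$ as usual, the receiver sits at the origin, the wanted transmitter is node $1$, and the interference $I = \sum_{k \ne 0,1} h\|\vec T_k\|^{-\alpha}$ contains in particular the term from node $2$, the second-nearest node to the origin. Hence $I + 1 \ge h\|\vec T_2\|^{-\alpha}$, so that
\[ \SINR = \frac{h\|\vec T_1\|^{-\alpha}}{I+1} \le \frac{h\|\vec T_1\|^{-\alpha}}{h\|\vec T_2\|^{-\alpha}} = \left( \frac{\|\vec T_2\|}{\|\vec T_1\|} \right)^{\alpha} . \]
Note this holds for every value of $h$, so the resulting lower bound is in fact valid uniformly in $h$ and requires no assumption relating $\alpha$ and $d$.

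Consequently $\mathbb{P}(\SINR > s) \le \mathbb{P}\big( (\|\vec T_2\|/\|\vec T_1\|)^\alpha > s \big)$, and the whole problem reduces to the joint law of the two smallest distances. Here I would invoke the standard fact (a strengthening of the one used for \eqref{Exp} in the previous proof) that the rescaled volumes $A_k := v(d)\|\vec T_k\|^d$ are the arrival times of a rate-$1$ Poisson process on $[0,\infty)$; equivalently $A_1$ and $A_2 - A_1$ are independent $\text{Exp}(1)$ random variables. Since $\|\vec T_2\|/\|\vec T_1\| = (A_2/A_1)^{1/d}$, the event $(\|\vec T_2\|/\|\vec T_1\|)^\alpha > s$ is exactly $A_2/A_1 > s^{d/\alpha}$.

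It then remains to compute $\mathbb{P}(A_2/A_1 > c)$ for $c := s^{d/\alpha} \ge 1$ (the claimed bound being vacuous otherwise, since $p_{\text{out}} \ge 0$). Writing $A_2 = A_1 + E$ with $E \sim \text{Exp}(1)$ independent of $A_1$, the event becomes $E > (c-1)A_1$; conditioning on $A_1$ and integrating the exponential tail against the $\text{Exp}(1)$ density of $A_1$ gives
\[ \mathbb{P}(A_2/A_1 > c) = \int_0^\infty \ex^{-(c-1)a}\, \ex^{-a}\,\ud a = \frac{1}{c} . \]
Thus $\mathbb{P}(\SINR > s) \le c^{-1} = s^{-d/\alpha}$, and therefore $p_{\text{out}}(r) \ge 1 - s^{-d/\alpha} = 1 - (2^r-1)^{-d/\alpha}$, as required.

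As for difficulty, there is no serious obstacle once the right idea is in hand, namely throwing away every interferer but the closest so that the comparison $\SINR \le (\|\vec T_2\|/\|\vec T_1\|)^\alpha$ collapses the problem to two order statistics. The one step needing care is the distributional reduction: one must correctly identify that the normalised volumes form a unit-rate Poisson process — so that the two smallest yield independent exponentials — and then evaluate the ratio of the first two arrival times. Everything else is a one-line inequality or an elementary integral.
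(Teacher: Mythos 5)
Your proposal is correct and takes essentially the same route as the paper: both proofs discard all interference except the contribution of the second-nearest node, reducing the problem to the joint law of $\|\vec T_1\|$ and $\|\vec T_2\|$, and both end with the same exponential integral (your $\Prob(E>(c-1)A_1)$ computation is exactly the paper's void probability of the annulus $\{t<\|\vec u\|\le ts^{1/\alpha}\}$ integrated against the density of $\|\vec T_1\|$). Your packaging via the Poisson arrival times $A_k=v(d)\|\vec T_k\|^d$ and the identity $\Prob(A_2/A_1>c)=1/c$ is a slightly cleaner way to organise the same calculation, and your observations that the bound is uniform in $h$ and needs no relation between $\alpha$ and $d$ are accurate.
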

\begin{proof}
The key is to observe that the interference $I$ is at least as
large as the contribution coming from the second-nearest neighbour
$\vec T_2$. That is,
\begin{align*}
  \Prob {\left( I \geq \frac 1sh t^{-\alpha} + 1 \,\Big|\, \|\vec T_1 \| = t \right)} 
    &\geq \Prob {\left( I \geq \frac 1sh t^{-\alpha}  \,\Big|\, \|\vec T_1 \| = t \right)} \\
    &\geq \Prob {\left( h \| \vec T_2 \|^{-\alpha} \geq \frac 1sh t^{-\alpha} \,\Big|\, \|\vec T_1 \| = t \right)}.
\end{align*}
Rearranging this, we get
\begin{align*}
  \Prob {\left( h \| \vec T_2 \|^{-\alpha} \geq \frac 1sh t^{-\alpha} \,\Big|\, \|\vec T_1 \| = t \right)}
    &=\Prob {\left( \| \vec T_2 \|^{-\alpha} \geq \frac1s t^{-\alpha} \,\Big|\, \|\vec T_1 \| = t \right)} \\
    &= 1 - \Prob {\left( \| \vec T_2 \| > t s^{1/\alpha} \,\Big|\, \|\vec T_1 \| = t \right)} \\
    &= 1 - \exp\Big(-v(d) \big( (t s^{1/\alpha})^d - t^d\big)\Big),
\end{align*}
where the final result follows on considering the probability that the
annulus $\{ \vec u \in \bR^d : t < \|\vec u\| \leq t s^{1/\alpha} \}$
is empty.

Again, combining this with Equations (\ref{FSNR}) and (\ref{Exp}),
we obtain a lower bound on the outage probability of the form
\begin{multline*}
  \int_0^\infty dv(d) t^{d-1} \ex^{-v(d)t^d} \left( 1 - \ex^{-v(d) t^d (s^{d/\alpha} - 1)} \right) \ud t  \\
  = \int_0^\infty dv(d) x^{d-1} \ex^{-v(d)t^d} dt - \int_0^\infty dv(d) t^{d-1} \ex^{-v(d)t^d s^{d/\alpha}} \ud t,
\end{multline*}
and the result follows on making the change of variables $y = v(d) t^d$.
\end{proof}   

\subsection{Linear growth}

Recall that we say that \defn{linear growth} occurs when the sum-rate of $n$ users scales linearly with $n$.
That is, if we have a sequence of sets of nodes $\big( S_n : n \in \bN \big)$,
where $S_n \subset \bZ_+$ is of cardinality $n$, then there exists an achievable
rate $n$-tuple $(r_{i\to j} : i \in S_n)$ such that $ \sum_{\vec i \in S_n} r_{i\to j} =  O(n)$.

In particular if a `proportion' $p$ of links were to support a given rate $r$, then we would have
  \[ \sum_{\vec i \in S_n} r_{i\to j} \approx pnr = O(n) , \]
which would be sufficient to show linear growth.

In particular, if communication
on distinct links were independent, this would be true with
$p = 1 - \pout(r)$. Although the links are not independent, links
`far enough away' are.  Thus by splitting the network into `close' and
`distant' nodes we can prove the theorem.

\addcontentsline{lof}{figure}{\numberline{\textbf{Theorem 3.5}} Linear growth in Poisson random networks}
\begin{theorem}\label{growth}
Consider a $d$-dimensional Poisson random network with attenuation $\alpha$
in the interference-limited $h\to\infty$ regime.  Suppose $\alpha > d$.  Then we have linear growth with
probability tending to $1$ as $n \to \infty$ at rate $O(n^{-(1-d/\alpha)})$.
\end{theorem}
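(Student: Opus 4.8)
The plan is to treat interference as noise, fix a single target rate, and show that a positive fraction of the $n$ links can simultaneously sustain that rate. Concretely, I would first choose a rate $r>0$ small enough that $p := 1 - \pout(r)$ is a fixed positive constant; this is possible by Theorem \ref{outagethm}, since the upper bound there gives $\pout(r)\to 0$ as $r\to 0$. For each node $i \in S_n$ let $X_i$ be the indicator that the link at receiver $i$ is \emph{not} in outage at rate $r$, i.e. $\log(1+\SINR_i)\ge r$. By stationarity of the Poisson process every link has the same marginal outage probability $\pout(r)$, so $\Ex X_i = p$ and, writing $N = \sum_{i\in S_n} X_i$, we have $\Ex N = pn$. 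On the event that a link is not in outage, interference-as-noise (Theorem 2.9) lets it carry rate $r$ simultaneously with all the others; assigning rate $r$ to the good links and $0$ to the rest gives sum-rate exactly $rN$, which is trivially $O(n)$ and will be $\Omega(n)$ once we show $N$ is of order $n$ with high probability. So the whole theorem reduces to the concentration statement $\Prob(N < pn/2) \to 0$, at the stated rate.

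The obstacle is that the $X_i$ are \emph{not} independent, because nearby receivers share interferers. To control this I would truncate the interference at a radius $R$: let $X_i^{(R)}$ be the analogue of $X_i$ computed using only the interferers within distance $R$ of receiver $i$, so that $X_i^{(R)}$ is a function of the Poisson configuration in the ball $B(\vec T_i, R)$ alone. Two ingredients then drive the proof. First, the truncation error is small: the expected far-field interference $\int_{\|\vec u\|\ge R}\|\vec u\|^{-\alpha}\,\ud\vec u = \tfrac{d\,v(d)}{\alpha-d}R^{-(\alpha-d)}$ --- finite precisely because $\alpha>d$, as computed in \eqref{int} --- is $O(R^{-(\alpha-d)})$, so by Markov the chance that the discarded interference flips a link's status is $O(R^{-(\alpha-d)})$ per node, giving $\Ex\sum_i |X_i - X_i^{(R)}| = O(nR^{-(\alpha-d)})$. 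Second, $X_i^{(R)}$ and $X_j^{(R)}$ are independent whenever $\|\vec T_i - \vec T_j\| > 2R$, since they then depend on disjoint regions of the process; hence $\Cov(X_i^{(R)},X_j^{(R)})=0$ except for the $O(nR^d)$ pairs of receivers lying within distance $2R$ of one another (a density-$1$ process has $O(R^d)$ points in such a neighbourhood). Bounding each surviving covariance by $\tfrac14$ gives $\var\big(\sum_i X_i^{(R)}\big) = O(nR^d)$.

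Chebyshev's inequality then yields $\Prob\big(\, |\sum_i X_i^{(R)} - \Ex\sum_i X_i^{(R)}| > \epsilon n \,\big) = O(R^d/n)$, and combining with the truncation error the total failure probability is $O(R^d/n) + O(R^{-(\alpha-d)})$. Optimising over the truncation radius by setting $R = n^{1/\alpha}$ balances the two terms, each becoming $n^{d/\alpha - 1} = n^{-(1-d/\alpha)}$, which is exactly the claimed rate. On the complementary event $N \ge \tfrac{p}{2}n$, so the sum-rate $rN = \Omega(n)$ and linear growth holds.

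I expect the genuinely delicate step to be making the independence argument rigorous, since ``which receivers lie more than $2R$ apart'' is itself a random event depending on the very configuration that determines the $X_i^{(R)}$. The clean way around this is to express $N$ as a sum over points of the process and invoke the independence of a Poisson process on disjoint sets through the (second-order) Palm/Mecke formalism, rather than conditioning on an unknown set of ``far'' pairs; the variance bound $O(nR^d)$ survives because the second-order intensity of a Poisson process is constant. The remaining care is in the truncation step, where ``the far interference flips the outage status'' must be bounded by the probability that the far interference exceeds a fixed level together with the probability that $\SINR$ lies in a narrow band around the threshold $2^r-1$; both are controlled by the same estimates already used in Theorem \ref{outagethm}.
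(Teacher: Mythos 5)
Your proposal is correct and follows essentially the same route as the paper: split interferers into those within and beyond radius $n^{1/\alpha}$ of the receiver, kill the far-field contribution by a first-moment (Markov) bound using the integral in \eqref{int}, control the near-field contribution by Chebyshev with a variance bound of order $n^{1+d/\alpha}$ coming from the local independence of the Poisson process, and balance the two error terms to get $O(n^{-(1-d/\alpha)})$. The one delicacy you flag — that Markov alone does not bound the probability the far interference \emph{flips} a link's status — is sidestepped in the paper by defining two separate outage events $\Out_C$ and $\Out_D$, each at threshold $s/2$, so that outage at $s$ forces at least one of them to occur; this avoids any need for a narrow-band estimate on the $\SINR$ distribution and only costs a constant factor in the achievable rate.
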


The following Chernoff-type bound on Poisson random variables will be useful later.

\addcontentsline{lof}{figure}{\numberline{\textbf{Lemma 3.6}} Chernoff bound for Poisson random variables}
\begin{lemma}\label{lem:tech}
Let $X \sim \text{Po}(\lambda)$. Then $\Prob{(X \geq \ex \lambda)} \leq \ex^{-\lambda}$.
\end{lemma}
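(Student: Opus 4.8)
The plan is to use the standard Chernoff bounding technique: apply Markov's inequality to the exponentiated variable $\ex^{tX}$ for a well-chosen $t > 0$. First I would recall the moment generating function of a Poisson random variable: for $X \sim \text{Po}(\lambda)$ and any real $t$,
\[ \Ex \ex^{tX} = \sum_{k=0}^\infty \ex^{tk}\, \ex^{-\lambda}\frac{\lambda^k}{k!} = \ex^{\lambda(\ex^t - 1)} , \]
which follows by recognising the inner sum as the exponential series for $\lambda\ex^t$.

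Next, for any $t > 0$, since $\ex^{tX}$ is nonnegative and $x \mapsto \ex^{tx}$ is increasing, Markov's inequality gives
\[ \Prob(X \geq \ex\lambda) = \Prob\big(\ex^{tX} \geq \ex^{t\ex\lambda}\big) \leq \frac{\Ex \ex^{tX}}{\ex^{t\ex\lambda}} = \ex^{\lambda(\ex^t - 1) - t\ex\lambda} = \ex^{\lambda(\ex^t - 1 - t\ex)} . \]
It then remains only to optimise the exponent over $t > 0$.

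The final step is to minimise $f(t) := \ex^t - 1 - t\ex$. Differentiating gives $f'(t) = \ex^t - \ex$, which vanishes at $t = 1$; and $f''(t) = \ex^t > 0$, so $t=1$ is the global minimum. Evaluating, $f(1) = \ex - 1 - \ex = -1$, so the exponent becomes $\lambda f(1) = -\lambda$, yielding $\Prob(X \geq \ex\lambda) \leq \ex^{-\lambda}$, as claimed.

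There is no genuine obstacle here — this is a textbook Chernoff computation. The only mild subtlety worth flagging is that the threshold $\ex\lambda$ has been chosen precisely so that the optimising parameter is the clean value $t=1$, at which the exponent collapses exactly to $-\lambda$. One could instead bound $\Prob(X \geq a\lambda)$ for a general multiple $a > 1$ and then specialise, but since only $a = \ex$ is needed for the application in Theorem \ref{growth}, I would simply take $t = 1$ directly and avoid the detour.
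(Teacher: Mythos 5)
Your proposal is correct and is essentially the paper's own argument: the paper applies Markov's inequality directly to $\ex^{X}$ (i.e.\ takes $t=1$ from the outset) and uses $\Ex\,\ex^{X} = \ex^{(\ex-1)\lambda}$ to get $\ex^{-\lambda}$ immediately. Your extra step of optimising over $t$ merely confirms that $t=1$ is the right choice, as you yourself note at the end.
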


\begin{proof}
  We have, using Markov's inequality,
    \[ \Prob{(X \geq \ex \lambda)} = \Prob{(\ex^X \geq \ex^{\ex \lambda})}
      \leq \frac{\Ex \ex^X}{\ex^{\ex \lambda}} = \frac{ \ex^{(\ex -1)\lambda}}{\ex^{\ex \lambda}} = \ex^{-\lambda}. \qedhere \]
\end{proof}

We are now in the position to prove Theorem \ref{growth}.

\begin{proof}[Proof of Theorem \ref{growth}]
  First recall that $h\to\infty$ is equivalent to taking $h=1$ an
  deleteing the noise term.

  Let $n$ be some fixed integer.  (Later, we will consider the sum rate of $n$ nodes, and let $n \to \infty$.)
  Fix the nearest-neighbour link $i\!\to\!j$.
  
  Given a node $j$ situated at the point $\vec T_j$, we divide the other nodes into those \emph{close} to $j$
    \[ C(j) := \{ k \neq i, j : \| \vec T_j - \vec T_k \| \leq n^a \}  \]
  and those \emph{distant} from $j$
    \[ D(j) := \{ k \neq i : \| \vec T_j - \vec T_k \| > n^a \} , \]
  where $a$ is some parameter to be chosen later.
  
  Outage occurs when the signal-to-interference-plus-noise ratio is insufficient to support some given rate
  $r = \log{(1+s)}$.  We will consider whether an outage event is caused primarily by
  close or distant nodes.  Specifically, we define the events
    \begin{align*}
      \Out_C(j) &:= \left\{  \frac{\| \vec T_j - \vec T_i \|^{-\alpha}}
                      {\sum_{k\in C(j)}\| \vec T_j - \vec T_k  \|^{-\alpha}} \leq \frac{s}{2} \right\} , \\
      \Out_D(j) &:= \left\{  \frac{\| \vec T_j - \vec T_i \|^{-\alpha}}
                      {\sum_{k\in D(j)}\| \vec T_j - \vec T_k  \|^{-\alpha}} \leq \frac{s}{2} \right\} .
    \end{align*}
  Note that if a link $i\!\to\!j$ is in outage, at least one of $\Out_C(j)$ and $\Out_D(j)$ must be occurring.  Note
  also that the (marginal) distributions of $\Out_C(j)$ and $\Out_D(j)$ are independent of the node $j$.  When the
  node index is irrelevant, we suppress it.

  Suppose $n$ nodes all try to communicate at the rate $r = \log{(1+s)}$.  Then
  for $\epsilon \in (0,\frac12)$,
    \begin{align}
      &\Prob{(\text{total rate}\leq (1-2\epsilon)nr)} \nonumber \\
        &\quad {}\leq  \Prob{(\text{number of outages}\geq 2\epsilon n)} \nonumber \\
        &\quad {}= \Prob{(\text{number of distant outages}\geq \epsilon n}
                      \cup \text{number of close outages}\geq \epsilon n) \nonumber \\
        &\quad {}\leq \Prob{(\text{number of distant outages}\geq \epsilon n)} 
                      + \Prob{( \text{number of close outages}\geq \epsilon n)} \nonumber \\
        &\quad {}= \Prob{\left( \sum_{i=0}^{n-1} \mathbb{1}[\Out_D(j)] \geq \epsilon n \right)} 
                      + \Prob{\left( \sum_{i=0}^{n-1} \mathbb{1}[\Out_C(j)] \geq \epsilon n \right)} . \label{here}
    \end{align}  
  We bound the two terms above separately.
  
  For the first term in \eqref{here}, by Markov's inequality, we have
    \[
      \Prob{\left( \sum_{j=0}^{n-1} \mathbb{1}[\Out_D(j)] \geq \epsilon n \right)}
        \leq \frac{ \Ex \sum_{j=0}^{n-1} \mathbb{1}[\Out_D(j)]}{\epsilon n}
        = \frac{1}{\epsilon} \Prob{\big(\Out_D\big)}.
    \]
  Using the same ideas as in the proof of Theorem \ref{outagethm}, we can show that this term tends to zero for $a >0$ at rate
  $O(n^{-a(\alpha-d)})$.  (The key is to change the lower limit in the integral in \eqref{int} to $n^a$.)

  We bound the second term in \eqref{here} using the idea that 
 for most $j$ and $k$, $C(j)$ and $C(k)$ are disjoint, so the
corresponding contributions will be independent.

Let $N_j$ denote the number of nodes whose `close' regions overlap with $j$'s `close' region; that is
  \[  N_j := \# \{ k \neq j : C(j) \cup C(k) \neq \varnothing \} . \]
  Note that $N_j$ is the number of nodes in a ball of radius
$2n^a$, so is Poisson with mean $2^d v(d) n^{ad}$.  
  
We write 
$\N = \{ \max_{0 \leq j \leq n-1} N_j \geq 2^d  v(d) n^{ad} \ee \}$ for the event that
one of the $N_j$ is particularly large. We will argue that the
event $\N$ can be ruled out with high probability, allowing 
good control of the growth of the variance.

We exploit 
the fact that, conditioned on the event $\N^c$, for any
$j$ there
are at most $1+2^d v(d) n^{ad} \ee$ indices $k$ such that
$\Cov{\big(\mathbb{1}[\Out_C(j)], \mathbb{1}[\Out_C(k)] 
\given  \N^c \big)}$ is non-zero,
and each such covariance is no greater than $1$.
Hence we can control the growth of the variance of the sum as
\begin{align}
  \Var \left( \left. \sum_{j=0}^{n-1} \mathbb{1}[\Out_C(j)] \,\right|\, \N^c \right) 
    &\leq \sum_{j=0}^{n-1} (1 + 2^d v(d) n^{a d} \ee) \var \left( \mathbb{1}[\Out_C(j)] \,|\, \N^c \right) \notag \\
    &\leq (1 + 2^d v(d) n^{a d} \ee) n. \label{eq:varbd}
\end{align}

By the union bound and Lemma 
\ref{lem:tech}, we have
   \begin{equation}
     \Prob( \N) \leq \sum_{j=0}^{n-1} \Prob (N_j \geq 2^d v(d) n^{ad}) \leq n \ex^{-2^d v(d) n^{ad}} \label{eq:chernoff} .
   \end{equation}
  
Using the law of total probability, and substituting (\ref{eq:varbd}) and
(\ref{eq:chernoff}) into  Chebyshev's inequality gives
  \begin{align*}
    \Prob{\left( \sum_{j=0}^{n-1} \mathbb{1}[\Out_C(j)] \geq \epsilon n \right)}  
      &= \Prob{(\N)} \Prob{\left( \left. \sum_{j=0}^{n-1} \mathbb{1}[\Out_C(j)] \geq \epsilon n
             \,\right|\, \N \right)} \\
           & \qquad\qquad  {}+ \Prob{(\N^c)} \Prob{\left( \left. \sum_{j=0}^{n-1} \mathbb{1}[\Out_C(j)] \geq \epsilon n 
             \, \right|\, \N^c \right)} \\
      &\leq n \ex^{-2^d v(d) n^{ad}} + \Prob{\left( \left. \sum_{j=0}^{n-1} \mathbb{1}[\Out_C(j)] \geq \epsilon n 
             \,\right|\, \N^c \right)}
  \end{align*}
  
  \begin{align*}
  \phantom{\Prob{\left( \sum_{j=0}^{n-1} \mathbb{1}[\Out_C(j)] \geq ik \right)} }
      &\leq n \ex^{-2^d v(d) n^{ad}} +
             \frac{\Var \left( \sum_{j=0}^{n-1} \mathbb{1}[\Out_C(j)] \given \N^c \right)}
             {n^2 \big( \epsilon - \Ex (\mathbb{1}[\Out_C] \,\big|\, \N^c) \big)^2} \\
      &\leq n \ex^{-2^d v(d) n^{ad}} +
             \frac{ (1 + 2^d v(d) n^{a d} \ex) n}
             {n^2 \big( \epsilon - \Prob{ (\Out_C \given \N^c})  \big)^2}.
  \end{align*}
  As we send $n \to \infty$, this tends to $0$ at rate $O(n^{ad-1})$.
  
  Setting $a = 1/\alpha$ means the two terms both tend to $0$ at rate
    \[ ad - 1 = -a(\alpha - d) = - \left( 1 - \frac{d}{\alpha} \right) , \]
  as desired.
 
\end{proof}

\section{Further work}

The strength of a Poisson model is the extensibility and mathematical flexibility of the Poisson
point process.  

For just one example, in a spatially inhomogeneous Poisson process, there is a density function
$\lambda \colon \bR^d \to \bR_+$, such that the number of nodes in some region $A$ is
$\text{Po} \left(\int_A \lambda (\vec x)\,\ud\vec x\right)$.  Under what conditions on $\lambda$
is linear growth still possible?

Many other open questions remain.

\section*{Notes}
\addcontentsline{toc}{section}{Notes}

The new work in this chapter is joint work with Oliver Johnson and Robert Piechocki.
It has not previously been published, although the research was conducted prior to some since-published work
\cite{Bacelli, Haenggi1}

The monographs of Baccelli and B\l aszczyszyn \cite{Bacelli} and Haenggi and Ganti \cite{Haenggi1} were
useful background on the SINR approach to stochastic networks, as was the work of
Haenggi \cite{Haenggi2}, Xie and Kumar
\cite{XieKumar}, and Gupta and Kumar \cite{GuptaKumar}.

The textbook of Kingman \cite{Kingman} was useful background material for Poisson processes.

\addtocontents{lof}{\protect\addvspace{20 pt}}

\chapter[Sum-capacity of random dense Gaussian interference networks][Sum-capacity of random Gaussian interference networks]{Sum-capacity of\\random dense Gaussian\\interference networks}

In this chapter, we will find approximations to the sum-capacity of interference networks
with many users. Our interference networks will be motivated by physical models of wireless
networks.

When we say many users, we will be allowing the number of users $n$ to tend to infinity,
and looking at asymptotic behaviour.  We will be doing this without expanding
the area in which the nodes reside, so the nodes will get packed closer and closer
together -- for this reason, such networks are called \defn{dense} networks.

We will model transmitters and receivers as being placed randomly in space. This
means that signal- and interference-to-noise ratios will be random too, but fixed for the
duration of communication -- and hence a form of slow fading.

The crucial insight to prove these results is that the performance in these networks
is tightly constrained by the performance on a few so-called bottleneck links.
Ideas from interference alignment (specifically ergodic interference alignment) are
crucial in performing well on these links, and hence in the whole network.


The main result (Theorem 4.3) will be to show, for the model we will consider,
that the sum-capacity $C_\Sigma$ is roughly $\frac n2 \mathbb E \log(1 + 2\SNR)$. Specifically,
we show that $C_\Sigma / n$ converges in probability to 
$\frac12 \mathbb E \log(1 + 2\SNR)$ as $n \to \infty$.

In this chapter, random positions will give a form of slow fading, and we will
also add fast fading, to make a realistic model of real-world networks.  All
results in this chapter hold provided that this fast fading is symmetrical (in the sense
that $H_{ji}$ and $-H_{ji}$ are identically distributed, for all $i$ and $j$).

For simplicity and concreteness, we will choose to give all our results in
the context of random phase fading, so
  \[ H_{ji}[t] = |H_{ji}| \exp(\ii \Theta_{ji}[t]) = \begin{cases}
       \sqrt{\SNR_i} \exp(\ii \Theta_{ji}[t]) & \text{for $i = j$,} \\
       \sqrt{\INR_{ji}} \exp(\ii \Theta_{ji}[t]) & \text{for $i \neq j$.} \end{cases} \]
In Section 4.6, we briefly discuss how to apply other fading models such as Rayleigh fading.

\section{Introduction}

Recently, progress has been made on many-user approximations to the sum-capacity
$\Csum$ of random Gaussian interference networks.

In particular, in a 2009 paper, Jafar \cite[Theorem 5]{Jafar} proved a result on the
asymptotic sum-capacity of a particular random Gaussian interference network:

\addcontentsline{lof}{figure}{\numberline{\textbf{Theorem 4.1}} Asymptotic sum-capacity of the Jafar network}
\begin{theorem}
  Suppose direct $\SNR$s are fixed and identical, so $\SNR_i = \snr$ for all $i$,
  and suppose that all $\INR$s are IID random and supported on some neighbourhood
  of $\snr$.  Then the average per-user capacity $\Csum/n$ tends in probability
  to $\frac12 \log(1+2\snr)$ as $n\to\infty$.
\end{theorem}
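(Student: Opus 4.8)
The plan is to sandwich $\Csum/n$ between a deterministic achievable lower bound and a converse upper bound that holds with high probability, both equal to $\tfrac12\log(1+2\snr)$ up to a vanishing error, and then conclude convergence in probability. The achievability direction is essentially immediate. Since the random-phase fading is symmetric and every direct magnitude is fixed at $\SNR_i=\snr$, I would apply ergodic interference alignment (Theorem 2.15) directly: the rates $r_i=\tfrac12\EE\log(1+2\SNR_i)=\tfrac12\log(1+2\snr)$ are simultaneously achievable, so $\Csum\ge \tfrac n2\log(1+2\snr)$. This bound uses nothing about the $\INR$s, so $\Csum/n\ge\tfrac12\log(1+2\snr)$ holds deterministically.

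The converse rests on an approximate bottleneck lemma (the promised Lemma 4.11, in the spirit of Theorem 2.10). I would prove: if a pair $(i,j)$ has $\SNR_i=\SNR_j=\snr$ and one crosslink satisfies $\INR_{ji}\in[\snr,\snr+\delta)$, then $r_i+r_j\le\log(1+2\snr)+\eta(\delta)$, where $\eta(\delta)=\log\frac{1+2\snr+\delta}{1+2\snr}\to0$. First a genie hands receivers $i$ and $j$ all other users' messages, letting them cancel every other interferer; since a genie only raises the sum rate, this reduces the bound to the two-user interference subnetwork on $\{i,j\}$. Within it, a second genie gives receiver $i$ the message $m_j$, so receiver $i$ removes $x_j$ and decodes $m_i$ from $h_{ii}x_i+Z_i$ at SNR $\snr$. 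Because $\INR_{ji}\ge\snr$, receiver $j$'s post-cancellation signal $h_{ji}x_i+Z_j$ is a less noisy version of receiver $i$'s, so receiver $j$ decodes $m_i$ as well; decoding its own $m_j$ too, receiver $j$ faces a multiple-access channel and Theorem 2.8 gives $r_i+r_j\le\log(1+\INR_{ji}+\snr)\le\log(1+2\snr)+\eta(\delta)$. Only magnitudes enter, so the known random phases are harmless.

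The remaining, and most delicate, step is the probabilistic matching. Call a pair $\{i,j\}$ \emph{good} if $\INR_{ji}\in[\snr,\snr+\delta)$ or $\INR_{ij}\in[\snr,\snr+\delta)$. Since the $\INR$ law is supported on a neighbourhood of $\snr$, each such event has probability $p>0$, so a pair is good with probability $q=1-(1-p)^2>0$, and goodness of disjoint pairs is independent because the $\INR$s are IID. I would then show that with high probability the users can be partitioned into good pairs leaving only $o(n)$ unmatched: start from a fixed pairing, which by Chernoff leaves a $\approx(1-q)$-fraction of pairs bad, then re-pair the unmatched vertices using fresh (hence independent) crosslinks and iterate $O(\log n)$ times, the unmatched set shrinking by a factor $\approx(1-q)$ each round (alternatively, invoke that a dense random graph has a near-perfect matching whp). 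Summing the bottleneck bound over the disjoint good pairs, and bounding each of the $o(n)$ leftover users by its point-to-point rate $\log(1+\snr)$, gives $\Csum/n\le\tfrac12\log(1+2\snr)+\tfrac12\eta(\delta)+o(1)$ with high probability; letting $\delta\to0$ yields $\Csum/n\le\tfrac12\log(1+2\snr)$ in probability.

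Combining the two bounds gives $\Csum/n\to\tfrac12\log(1+2\snr)$ in probability. The main obstacle is the converse: making the near-perfect matching rigorous while keeping its ingredients independent (the ``subtle counting''), and simultaneously tuning the one-sided window $[\snr,\snr+\delta)$ so that it has positive probability yet the per-pair penalty $\eta(\delta)$ stays $o(1)$. Care is also needed to verify that the stacked genie arguments — first cancelling all other users, then applying the degradedness/MAC bound — remain valid upper bounds in the fast-fading model.
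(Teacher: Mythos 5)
Your proposal is correct and follows the same overall architecture as the paper's proof in Section 4.3: achievability via ergodic interference alignment (Theorem 2.15), which indeed depends only on the fixed direct $\snr$s and so holds deterministically; and a converse that pins the sum-rate of suitable two-user ``bottleneck'' pairs at $\log(1+2\snr)+o(1)$ by the genie/degradedness argument of Theorem 2.10, then covers almost all users by disjoint such pairs. Your explicit one-sided window $\INR_{ji}\in[\snr,\snr+\delta)$ is a clean sufficient condition for Jafar's bottleneck property (the lower end gives the degradedness needed for receiver $j$ to decode both messages, the upper end controls the penalty $\eta(\delta)$), and the support hypothesis makes its probability positive; this is essentially conditions B1--B3 of Definition 4.10 specialised to the Jafar network.

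The one place you genuinely diverge is the combinatorial step. The paper splits the users into two halves $\mathcal V,\mathcal W$ of size $n/2$, puts an edge between $i\in\mathcal V$ and $j\in\mathcal W$ whenever either crosslink is an $\epsilon$-bottleneck (independent events, since they involve disjoint index pairs), and invokes the Erd\H{o}s--R\'enyi/Walkup blocking-pair bound to obtain a \emph{perfect} matching with probability $1-\ee^{-cn}$; every user is then in a matched pair and the converse follows with exponential convergence, which is the stated point of the paper's alternative proof. Your iterative re-pairing is workable --- conditional on which vertices survive a round, the crosslinks examined in the next round are fresh and hence still IID, so the unmatched set shrinks geometrically and the $o(n)$ leftovers are absorbed by the single-user bound --- but it yields only a near-perfect matching and a correspondingly weaker (non-exponential) failure probability. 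So you prove the theorem as stated (convergence in probability) but lose the rate of convergence that the bipartite matching theorem buys. The caveats you flag about stacking genies and applying the MAC bound under known fast-fading phases are treated in the paper at exactly the same level of rigour, so there is no gap there.
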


We examine Jafar's result in detail later.

(Here and elsewhere, we use $\Csum$ to denote the sum-capacity of the network,
and interpret $\Csum/n$ as the average per-user capacity.)

A subsequent result by Johnson, Aldridge and Piechocki \cite[Theorem 4.1]{JAP} concerned a more
physically realistic model, the standard dense network:

\addcontentsline{lof}{figure}{\numberline{\textbf{Theorem 4.2}} Asymptotic sum-capacity of\\ the standard dense network}
\begin{theorem}
  Suppose receivers and transmitters are placed IID uniformly at random on the unit square $[0,1]^2$,
  and suppose that signal power attenuates like a polynomial in $1/\mathrm{distance}$.
  Then the average per-user capacity $\Csum/n$ tends in probability
  to $\frac12 \EE \log(1+2\SNR)$ as $n\to\infty$.
\end{theorem}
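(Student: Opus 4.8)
The plan is to sandwich $\Csum/n$ between matching lower and upper bounds, each holding with probability tending to $1$. For the lower bound I would invoke ergodic interference alignment (Theorem 2.15); for the upper bound I would exploit the bottleneck phenomenon of Theorem 2.10, extended from two users to the whole network. Throughout, the random node positions induce IID direct-link ratios: writing $\rho_i$ for the distance between transmitter $i$ and its own receiver, the quantities $\SNR_i = |H_{ii}|^2$ are IID functions of the IID transmitter--receiver pairs, and $\EE\log(1+2\SNR)$ is finite because $\log(1+2\SNR)$ grows only logarithmically in $1/\rho$ while the density of $\rho_i$ vanishes linearly near $0$.

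\textbf{Achievability (lower bound).} Conditioning on the node positions, the only remaining randomness is the symmetric phase fast fading, so Theorem 2.15 applies to the resulting fast-fading Gaussian interference network and guarantees that user $i$ can communicate at rate $\tfrac12\EE_\Theta\log(1+2\SNR_i)=\tfrac12\log(1+2\SNR_i)$ simultaneously across all $i$ (the average over phases is trivial, since $|H_{ii}|^2=\SNR_i$ does not depend on the phase). Hence
\[
  \frac{\Csum}{n} \geq \frac1n \sum_{i=1}^n \tfrac12 \log(1+2\SNR_i) .
\]
By the weak law of large numbers for the IID summands $\tfrac12\log(1+2\SNR_i)$, the right-hand side converges in probability to $\tfrac12\EE\log(1+2\SNR)$, which establishes the lower bound.

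\textbf{Converse (upper bound).} Here I would partition the users into roughly $n/2$ disjoint \emph{bottleneck pairs} and bound the sum-rate of each pair. For a pair $(i,j)$ whose crosslink is ``mid-level'' -- meaning transmitter $i$ reaches receiver $j$ at a strength $\INR_{ji}$ comparable to $\SNR_i$, with $\SNR_j$ also close to $\SNR_i$ -- a genie argument of exactly the form used for Theorem 2.10 (and reused in Lemma 4.11) shows that decodability of one message forces decodability of the other, collapsing the pair to a two-user multiple-access channel seen at a single receiver and yielding $r_i+r_j \leq \log(1+2\SNR_i)+\epsilon$. Summing over the $\approx n/2$ pairs then gives
\[
  \Csum \leq \tfrac12 \sum_{i=1}^n \log(1+2\SNR_i) + o(n) ,
\]
where the error absorbs the per-pair slack and the few unpaired users; a second application of the weak law produces the matching upper limit $\tfrac12\EE\log(1+2\SNR)$. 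Combining the two bounds yields convergence in probability.

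\textbf{The main obstacle.} The hard part is establishing that such a bottleneck pairing exists with probability tending to $1$. Because the attenuation is a continuous function of distance, each crosslink strength takes a continuum of values and can never exactly equal a target $\SNR_i$; one must therefore (i) quantise the range of attainable ratios into finitely many bins, (ii) show by a covering/counting argument that, for all but a vanishing fraction of transmitters $i$, the dense random cloud of receivers contains one, say $j$, at distance $\approx\rho_i$ from transmitter $i$ with $\rho_j\approx\rho_i$ (so that both $\INR_{ji}\approx\SNR_i$ and $\SNR_j\approx\SNR_i$), and (iii) extract from these candidate partners an actual \emph{disjoint} matching covering almost all users. Controlling the dependence between overlapping candidate neighbourhoods -- much as in the close/distant splitting of Theorem \ref{growth} -- and quantifying the resulting $o(n)$ error and the rate of convergence is where the subtle probabilistic and counting techniques are needed, and is the step I expect to be most delicate.
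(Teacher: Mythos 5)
Your achievability argument is exactly the paper's: condition on positions, apply ergodic interference alignment (Theorem 2.15) to get the simultaneous rates $\tfrac12\log(1+2\SNR_i)$, and invoke the weak law of large numbers. The gap is in the converse.

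Your converse rests on extracting a disjoint, near-perfect matching of the users into bottleneck pairs, and you correctly identify this as the delicate step --- but it is not merely delicate, it is the step the authors state they could not carry out in this setting. The bipartite-matching construction (via Walkup's blocking-pair bound) is used in the paper only for the Jafar network, where the crosslink strengths are IID and so the events ``$i\link j$ is a bottleneck'' are independent across pairs; for geometrically placed nodes, all bottleneck events involving a common node are functions of the same random positions, the independence needed for the Erd\H{o}s--R\'enyi/Walkup estimates disappears, and the paper explicitly says it was unable to extend the random-bipartite-graph method to IID networks. A secondary issue with your pair bound: the genie argument needs a definite ordering (receiver $i$ must hear transmitter $j$ at least as strongly as receiver $j$ does, the paper's condition B3), not just that the crosslink is ``comparable'' in strength to the direct links; and your bound $r_i+r_j\le\log(1+2\SNR_i)+\epsilon$ is random, so summing it over a matching and controlling the unmatched users requires additional uniform-integrability care.

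The paper avoids the matching entirely. It redefines a bottleneck link by the conditions $S_{ii}\le E+\epsilon/2$, $S_{ji}\le E+\epsilon/2$ and the ordering condition, so that Lemma 4.11 gives the \emph{deterministic} bound $r_i+r_j\le 2E+\epsilon$ on every bottleneck link, where $E=\tfrac12\EE\log(1+2\SNR)$. Then, rather than selecting disjoint pairs, it sums this constraint over \emph{all} ordered crosslink pairs: if $r_\Sigma/n>E+\epsilon$ then the weighted average $U=\tfrac{1}{n(n-1)}\sum_{i\ne j}B_{ij}(r_i+r_j)$ has conditional expectation strictly exceeding that of $V=\tfrac{1}{n(n-1)}\bigl(\sum_{i\ne j}B_{ij}\bigr)(2E+\epsilon)$ by a margin $\beta_n\epsilon$, yet achievability forces $U\le V$. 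Chebyshev's inequality then shows this reversal is unlikely, using the facts that the bottleneck probability is bounded away from zero, that $B_{ij}$ and $B_{kl}$ are independent for disjoint index pairs (so the variance of $\sum B_{ij}$ is $O(n^3)$ rather than $O(n^4)$), and a truncation event $\{\max_i S_{ii}\le n^{\eta/2}\}$ (Lemma 4.12) to keep the rates appearing in $\Var U$ under control. If you want to salvage your matching route you would need a new argument handling the positional dependencies; otherwise the counting-plus-Chebyshev route is the one that closes the proof.
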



In this chapter, we prove a similar, but more general, result to Theorem 4.2, with a neater proof, using ideas
from Jafar's proof of Theorem 4.1.  We assume transmitters and receivers are situated independently at random
in space (not necessarily uniformly), and that the power of a signal depends in a natural way on the distance it travels.

Specifically our result is the following (full definitions of non-italicised technical terms are in Section 4.2):

\addcontentsline{lof}{figure}{\numberline{\textbf{Theorem 4.3}} Asymptotic sum-capacity of IID networks}
\begin{theorem} \label{thm:main}
  Consider a Gaussian interference network formed by $n$ pairs of 
  nodes placed in a \emph{spatially separated IID network} with \emph{power-law attenuation}.
  Then the average per-user capacity $\Csum/n$ converges in probability to 
  $\frac12 \EE \log(1 + 2 \SNR)$, in that for all $\epsilon > 0$
    \[  \PP \left( \left| \frac{\Csum}{n} - \frac12 \EE \log(1 + 2 \SNR) \right| > \epsilon \right)
          \to 0 \quad \text{as $n \to \infty$.} \]
\end{theorem}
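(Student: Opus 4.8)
The plan is to establish the convergence in probability by sandwiching $\Csum/n$ between a lower bound (achievability) and an upper bound (converse), each holding to within $\epsilon$ with probability tending to $1$. Concretely, I would prove separately that $\PP\big(\Csum/n < \tfrac12\EE\log(1+2\SNR) - \epsilon\big) \to 0$ and $\PP\big(\Csum/n > \tfrac12\EE\log(1+2\SNR) + \epsilon\big) \to 0$; together these give the stated convergence.

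\textbf{Achievability (lower bound).} For a fixed realisation of the node positions the direct gains $|H_{ii}|$ are frozen and only the phases $\Theta_{ji}[t]$ fast-fade; since the phases are uniform, $\mat H[t]$ and $-\mat H[t]$ are identically distributed, so the symmetry hypothesis of ergodic interference alignment (Theorem 2.15) holds. Applied per user, the scheme lets every user $i$ communicate simultaneously at rate $r_i = \tfrac12\log(1+2\SNR_i)$ (the expectation over phases is trivial, since $\SNR_i = |H_{ii}|^2$ does not depend on $t$). Hence
\[ \frac{\Csum}{n} \geq \frac1n \sum_{i=1}^n \tfrac12 \log(1+2\SNR_i) . \]
Because the network is a \emph{spatially separated IID network}, the transmitter--receiver pairs are IID across $i$, so the $\SNR_i$ are IID with finite mean $\EE\log(1+2\SNR)$ (finiteness coming from the spatial-separation and power-law hypotheses), and the weak law of large numbers delivers the lower bound.

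\textbf{Converse (upper bound).} This is the hard direction and rests on the bottleneck phenomenon of Theorem 2.10. First I would prove its multi-user generalisation (the role of the forthcoming Lemma 4.11): in the full network, handing receiver $j$ the messages of all transmitters except $i$ -- a genie that can only raise rates -- collapses the pair $(i,j)$ to a two-user subchannel, and the genie/MAC argument of Theorem 2.10 (via Theorem 2.8) then gives $r_i + r_j \le \log\big(1 + \INR_{ji} + \SNR_j\big)$ whenever the crosslink is a bottleneck, $\INR_{ji} \gtrsim \SNR_i$. The strategy is then to pair up all but a vanishing fraction of the users into disjoint pairs $(i,j)$ with $\SNR_i \approx \SNR_j$ and a mid-level crosslink $\INR_{ji} \approx \SNR_i$. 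Summing the pairwise bounds over this matching is legitimate because the pairs are disjoint, and yields $\Csum \lesssim \sum_{\text{pairs}} \log(1+2\SNR_{\text{pair}}) \approx \sum_{i} \tfrac12\log(1+2\SNR_i)$, which concentrates at $\tfrac{n}{2}\EE\log(1+2\SNR)$ by the same law of large numbers. To construct the matching I would discretise the range of $\SNR$ values into finitely many bins, so that ``comparable'' direct SNRs and ``mid-level'' crosslinks become precise notions with controlled error, and then argue probabilistically that in a dense IID placement almost every receiver has some transmitter at the distance required to make a crosslink mid-level relative to its own direct link (the expected number of such candidate partners being $\Theta(1)$), so that a near-perfect bin-respecting matching exists with high probability.

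\textbf{Main obstacle.} The delicate part is precisely this probabilistic/counting matching step: showing that, uniformly across the SNR bins and with probability tending to $1$, enough mid-level crosslinks exist to match all but $o(n)$ of the users, while simultaneously controlling (i) the contribution of the unmatched users to $\Csum/n$, (ii) the discretisation error introduced by binning, and (iii) the heavy tail of the power-law $\SNR$ distribution (very close node pairs produce very large $\SNR_i$), which must be truncated so that both the law-of-large-numbers step and the crude bound on unmatched users remain valid. Driving these three error terms to zero together as $n\to\infty$ is the technical heart of the converse; by contrast, the achievability argument and the genie/MAC bottleneck bound itself are comparatively routine given Theorems 2.15, 2.10 and 2.8.
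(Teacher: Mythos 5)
Your achievability argument is exactly the paper's: ergodic interference alignment gives $r_i = \frac12\log(1+2\SNR_i)$ per user, and the weak law of large numbers (with finiteness of $E=\EE S_{ii}$ guaranteed by spatial separation and power-law attenuation, cf.\ Lemma 4.12) finishes that direction. The problem is the converse, where you propose to pair up all but $o(n)$ of the users into disjoint bottleneck pairs and sum the pairwise MAC bounds over the matching. This is precisely the strategy the chapter uses for the \emph{Jafar} network (the alternative proof of Theorem 4.1, via Walkup's blocking-pair bound and K\"onig/Hall), and it works there only because in the Jafar network the events ``crosslink $i\link j$ is a bottleneck'' are independent across edges, so the random bipartite graph is genuinely Erd\H{o}s--R\'enyi. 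In the spatially separated IID network the edges share node positions and are \emph{not} independent, and the paper states explicitly that the authors were unable to extend the matching method to this setting. Your proposed substitute -- that the expected number of suitable partners per user is $\Theta(1)$, hence a near-perfect bin-respecting matching exists with high probability -- is a first-moment heuristic; an expected $\Theta(1)$ candidates is compatible with a constant fraction of users having \emph{no} partner, and even granting positive density of candidate edges, concluding that a near-perfect matching exists in a dependent random graph is exactly the step that is missing. So the converse as proposed has a genuine gap at its central step.

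The paper's converse circumvents the matching entirely. It defines an $\epsilon$-bottleneck link by conditions B1--B3, chosen so that the genie/MAC argument (Lemma 4.11) yields the bound $r_i + r_j \leq 2E + \epsilon$ \emph{directly in terms of the target constant} $E$, rather than in terms of a pair-dependent quantity $\log(1+2\SNR_{\text{pair}})$ that would later have to be averaged over a representative matching. It then compares the weighted bottleneck count $U = \frac{1}{n(n-1)}\sum_{i\neq j}B_{ij}(r_i+r_j)$ with $V = \frac{1}{n(n-1)}\bigl(\sum_{i\neq j}B_{ij}\bigr)(2E+\epsilon)$: if $r_\Sigma/n > E+\epsilon$ then $\EE_n U$ exceeds $\EE_n V$ by $\beta_n\epsilon$, while achievability forces $U\leq V$, and Chebyshev rules this out once one knows that the bottleneck probability is bounded away from zero (Lemma 4.13) and that $\Var_n\bigl(\sum B_{ij}\bigr)=O(n^3)$ rather than $O(n^4)$ (Lemma 4.14, using that $B_{ij}$ and $B_{kl}$ are independent for disjoint index pairs). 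The heavy tail of $\SNR$, which you flag as an obstacle, is handled by conditioning on $\{\max_i S_{ii}\leq n^{\eta/2}\}$ (Lemma 4.12) so that the rates entering $\Var_n U$ are bounded by $2n^{\eta/2}$. If you want to salvage your route, you would need a matching theorem for dependent geometric random graphs; otherwise the second-moment argument is the way the proof actually closes.
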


The direct part of the proof uses interference alignment; specifically, we take advantage of ergodic interference alignment (see Section 2.6).

The converse part of the proof uses the idea of `bottleneck links' developed by Jafar
\cite{Jafar}. An information theoretic argument gives a capacity bound on such bottleneck
links, and probabilistic counting arguments show there are sufficiently many such links
to tightly bound the sum-capacity of the whole network.

Before going any further, we should mention two similar results in the same area using different
interference alignment techniques.

A paper by \"Ozg\"ur and Tse \cite{OzgurTse} proves linear scaling in interference networks
by showing that for any value of $\snr$ and $n$, the sum-rate is bounded below by
$k_1 n \log (1+k_2 \snr)$, where $k_1, k_2 >0$ are universal constants.

Secondly, a paper by Niesen \cite{Niesen} bounds the capacity region (rather than just the
sum-capacity) of arbitrary dense networks, albeit with a factor of $O(\log n)$ separating
the inner and outer bounds.

\section{Model}

We outline the model we will use. We will model separately how messages
are transmitted, and how nodes are positioned.

These ideas were introduced in an earlier paper \cite{JAP}, but were not
fully exploited, due to that paper's concentration on the standard dense network.

\subsection{Communication model}

We will use the $n$-user Gaussian interference network as our main model.  That is, we have
\[ Y_j[t] = \sum_{i=1}^n H_{ji}[t] x_i[t] + Z_j[t] . \]

The fading coefficients $H_{ji}[t]$ will be made up of a fast fading part, representing
the moment-to-moment changes in the channel, and a slow fading part, the power attenuation
due to node placing.

The results require the fast fading to be symmetric, in that $H_{ji}$ and $-H_{ji}$ are identically
distributed -- this is ensured by the random phase we use (and is also satisfied by Rayleigh fading).
For ease of notation and to simplify exposition, will assume the fast fading takes the form of a (uniformly)
random phase change (although we briefly discuss more general models in Section 4.6).
Therefore, we can write the fading coefficients in modulus--argument form as
  \[ H_{ii} = \exp(\ii \Theta_{ii}[t]) \sqrt{\SNR_i}, \qquad H_{ji} = \exp(\ii \Theta_{ji}[t]) \sqrt{\INR_{ji}} \quad j\neq i ,\]
where the $\Theta_{ji}[t]$ are IID uniform on $[0,2\pi)$, and
$\SNR_i = |H_{ii}|^2$ and $\INR_{ji} = |H_{ji}|^2$ are the
squared moduli (which are constant over time).

Our results are in the context of so-called `line of sight'
communication models, without multipath interference. That is,
we consider a model where signal strengths attenuate
deterministically with distance according to
some function $a$. 


\addcontentsline{lof}{figure}{\numberline{\textbf{Definition 4.4}} $\SNR$, $\INR$}
\begin{definition}
\label{def:transprot}
Fix transmitter node positions $\{ \vec T_1, \ldots, \vec T_n \} 
\in \RR^d$ and receiver node positions
$\{\vec R_1, \ldots, \vec R_n \} \in \RR^d$, and consider Euclidean distance
$\| \quad \|$ and an \defn{attenuation function} $a\colon \bR_+ \to \bR_+$.

We define $\SNR_i = a(\|\vec R_i-\vec T_i\|)$, and for all pairs with $i \neq j$, define 
$\INR_{ji} = a(\|\vec R_j-\vec T_i\|)$.
\end{definition}

We consider the $n$-user Gaussian interference network. So
transmitter $i$ sends a message encoded as a codeword $\vec{x}_i = (x_i[1], \ldots, x_i[T])$ to receiver $i$, under a power constraint
$\frac{1}{T} \sum_{t=1}^T |x_i[t]|^2
 \leq 1$ for each $i$. The $t$th symbol received at receiver $j$ is given as
\[ \label{eq:transmod} Y_j[t] =  \exp(\ii \Theta_{jj}[t]) 
\sqrt{\SNR_{j}} x_j[t] 
+ \sum_{i\neq j} \exp(\ii \Theta_{ji}[t]) 
\sqrt{\INR_{ji}} x_i[t]  + Z_j[t],
\]
where the noise terms $Z_j[t]$ are independent standard complex Gaussian random variables,
and the phases $\Theta_{ji}[t]$ are
independent $U[0,2\pi)$ random variables independent of all other terms.
The $\INR_{ji}$ and $\SNR_i$ remain fixed over time, since the node positions themselves are fixed,
but the phases are fast-fading, in that they are renewed for each $t$.

\addcontentsline{lof}{figure}{\numberline{\textbf{Definition 4.5}} Power-law attenuation}
\begin{definition} \label{def:decay}
We say an attenuation function $a$ has \defn{power-law attenuation} if there exist constants
$\alpha$ and $\cdec$ such that for all $\rho$, we have
  $ a(\rho)  \leq \cdec \rho^{-\alpha} $.
\end{definition}

In particular, standard power-law decay of the form $a(\rho) = h\rho^{-\alpha/2}$
clearly satisfies this with $\cdec = h$ and $\alpha$ set to $\alpha/2$.  Other models we discussed in Section 1.4 such
as $a(\rho) = h\max \{ 1, \rho^{-\alpha/2} \}$ and $a(\rho) = h(\rho+\rho_0)^{-\alpha/2}$
also satisfy this.

For brevity, we write $S_{ji}$ for the random variables $\frac12 \log(1 + 2 \INR_{ji})$ (when $i \neq j$),
and $S_{ii}$ for $\frac12 \log(1 + 2 \SNR_{i})$
which  are functions of the distance
between the transmitters and receivers.
 In particular, since the nodes are positioned
independently, under this model
the random variables $S_{ji}$ are identically distributed, and $S_{ji}$ and
$S_{lk}$ are IID when $\{i,j\}$ and $\{k,l\}$ are disjoint. 

We will also write
$E = \EE S_{ii} = \frac12 \EE \log(1 + 2 \SNR)$,
noting that this is independent of $i$.  (It is also true that
$E = \EE S_{ji}$ for all $i$ and $j$.) Lemma 4.12 later ensures
that $E$ is indeed finite.


\subsection{Node position model}

We believe that our techniques should work in a variety of models for the node positions. We outline
one very natural
scenario here.

\addcontentsline{lof}{figure}{\numberline{\textbf{Definition 4.6}} IID network}
\begin{definition} \label{def:nodeplace}
Consider two probability distributions $\PP_T$ 
and $\PP_R$ defined on $d$-dimensional space $\RR^d$. Given an integer $n$, we sample  
$n$ transmitter node positions
$\vec T_1, \ldots, \vec T_n$ independently from the distribution $\PP_T$. Similarly, we sample
$n$ receiver node positions $\vec R_1, \ldots, \vec R_n$ independently from distribution $\PP_R$. We
refer to such a model of node placement as an \defn{IID network}.
\end{definition}

(Equivalently, we could state that transmitter and receiver positions are distributed like two independent
non-homogeneous Poisson processes, conditioned such that there are $n$  points 
of each type.)

 We pair the transmitter and receiver nodes up so that transmitter $i$ at $\vec T_i$ wishes to communicate 
with receiver $i$ at $\vec R_i$ for each $i$.

\begin{center}
  \includegraphics[width=0.79\textwidth]{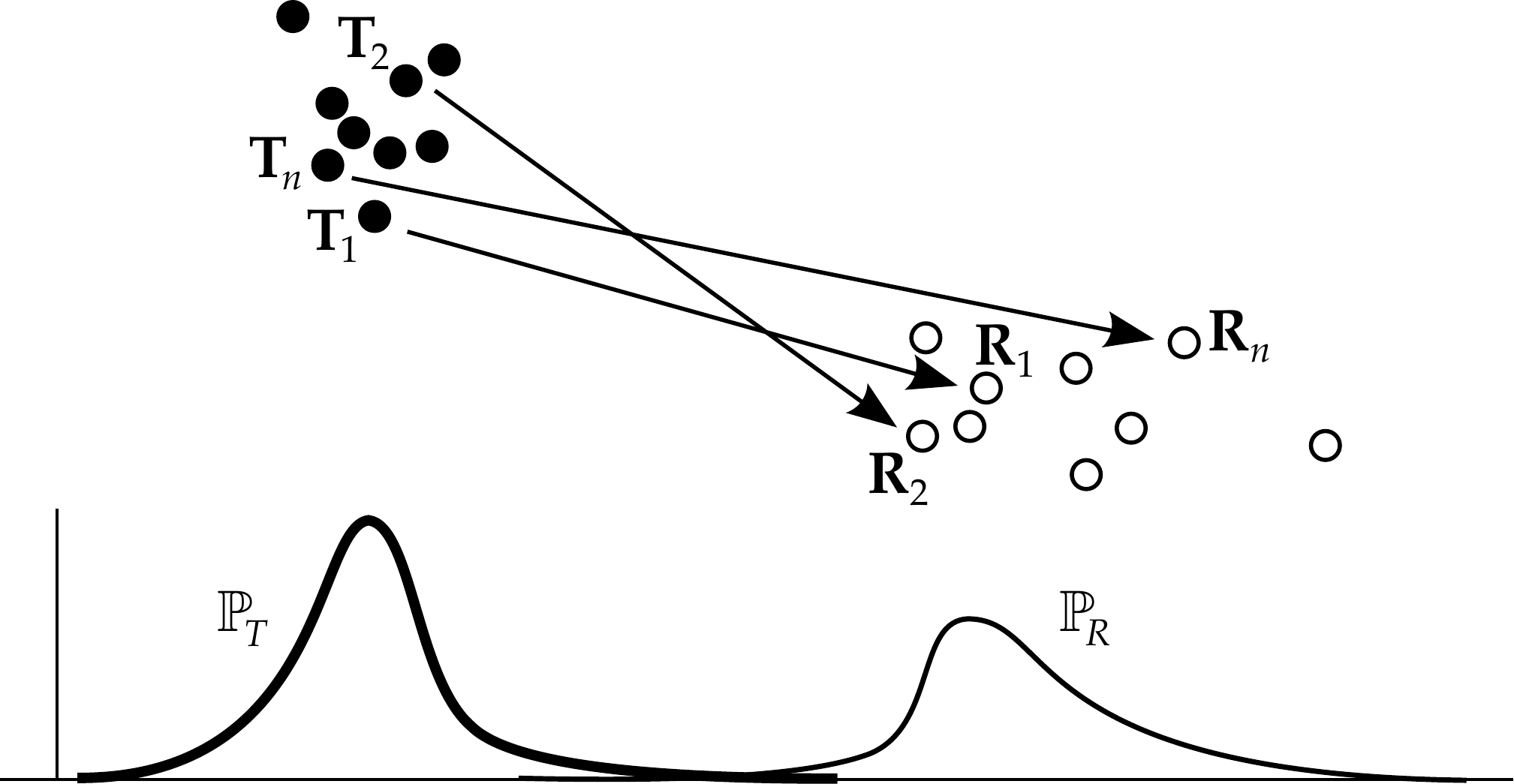}
\end{center}

Transmitters and receivers that are very close
to each other will lead to very strong interference or signals.  These extreme occurences could
prevent the network from operating as we would like.  Also, as we remarked in Section 1.4,
our attenuation models lose physical accuracy at very small distances.  For this
reason, we will demand that our node positioning model ensures that this is
rare, a property we call \defn{spatial separation}.

\addcontentsline{lof}{figure}{\numberline{\textbf{Definition 4.7}} Spatial separation}
\begin{definition} \label{def:spatsep}
Let $\vec T \sim \PP_T$ and $\vec R \sim \PP_R$ be placed independently in $\RR^d$.
We say the IID network is \defn{spatially separated} if there exists constants $\beta>0$
and $\csep$ such that for all $\rho$
  \[ \PP( \| \vec R - \vec T \| \leq \rho) \leq \csep \rho^{\beta} . \]
\end{definition}

In particular, we can show that the standard dense network
is spatially separated.

\addcontentsline{lof}{figure}{\numberline{\textbf{Definition 4.8}} Standard dense network}
\begin{definition}
  The \defn{$d$-dimensional standard dense network} is an IID
  network defined by $\PP_T$ and $\PP_R$ being
  independent uniform measures on $[0,1]^d$.
\end{definition}

  \begin{center}
%
  \includegraphics[scale=0.45]{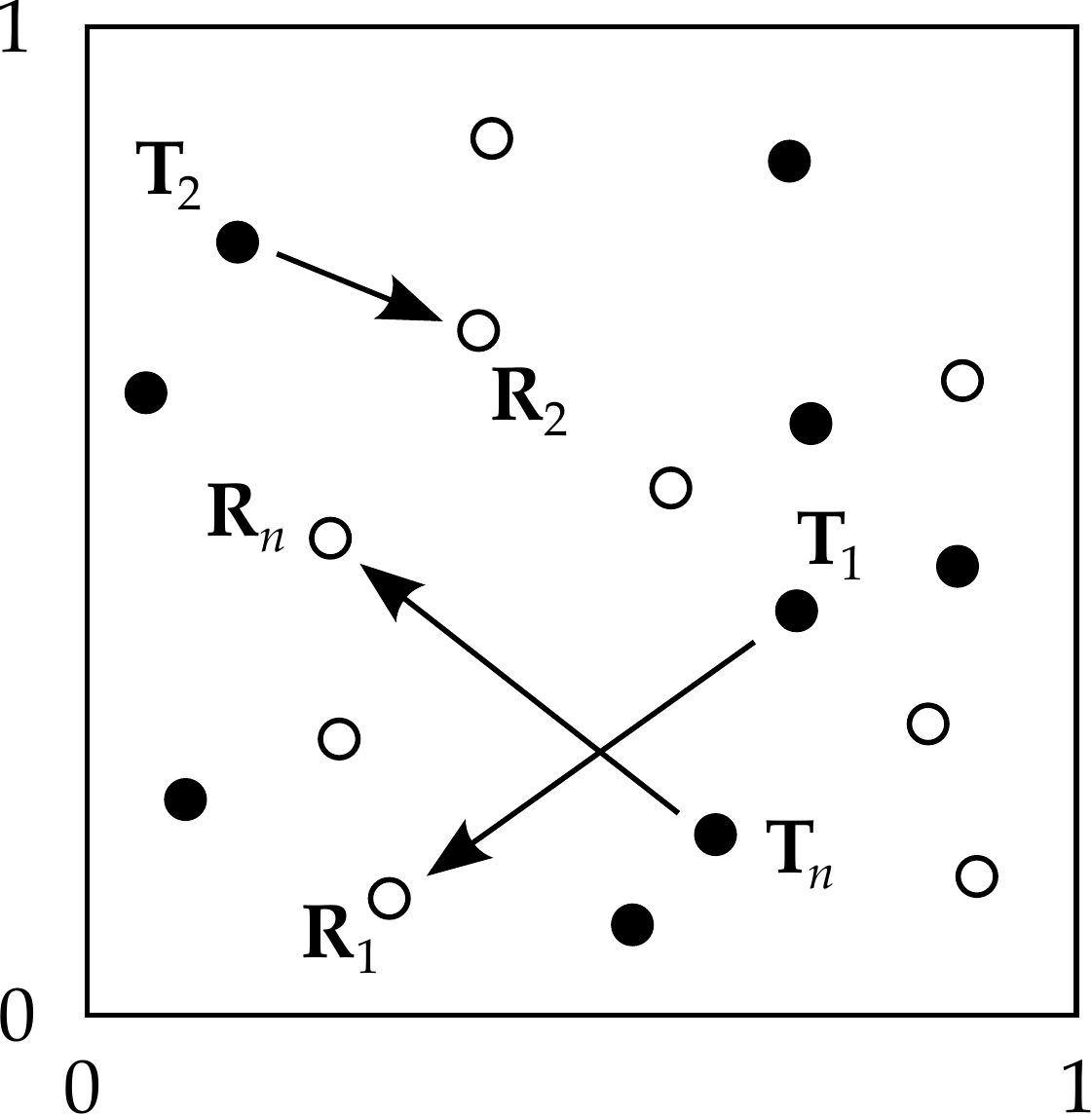}
  \end{center}

\addcontentsline{lof}{figure}{\numberline{\textbf{Lemma 4.9}} The standard dense network is spatially separated}
\begin{lemma}
  The standard dense network is spatially
  separated.
\end{lemma}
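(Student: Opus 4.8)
The plan is to condition on the transmitter position and reduce the problem to a statement about the volume of a Euclidean ball. Write $\vec T \sim \PP_T$ and $\vec R \sim \PP_R$, both uniform on $[0,1]^d$ and independent. Fixing $\vec T = \vec t$, the event $\{\|\vec R - \vec t\| \leq \rho\}$ is exactly the event that $\vec R$ falls in the closed ball $B(\vec t, \rho)$. Since $\vec R$ is uniform on the unit cube, the conditional probability is the Lebesgue measure of $B(\vec t, \rho) \cap [0,1]^d$.

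The key step is to discard the cube and bound this intersection by the volume of the whole ball. For every $\vec t$ we have
  \[ \PP\big( \|\vec R - \vec T\| \leq \rho \,\big|\, \vec T = \vec t \big)
       = \mathrm{Vol}\big( B(\vec t, \rho) \cap [0,1]^d \big)
       \leq \mathrm{Vol}\big( B(\vec t, \rho) \big)
       = v(d)\, \rho^d , \]
using the definition $v(d) = \pi^{d/2}/\Gamma(1+d/2)$ of the volume of the unit ball from Section~4.1. Crucially this bound is uniform in $\vec t$, so integrating against the distribution of $\vec T$ (whatever it is, but here uniform) preserves it, giving $\PP(\|\vec R - \vec T\| \leq \rho) \leq v(d)\rho^d$ for all $\rho$. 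Taking $\beta = d$ and $\csep = v(d)$ then matches Definition~4.7 exactly.

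There is essentially no hard step here; the only thing to check is that the claimed inequality really holds for \emph{all} $\rho$, not just small ones. For small $\rho$ the ball bound is the substantive content, while for large $\rho$ the right-hand side $v(d)\rho^d$ already exceeds $1$, so the inequality holds trivially because the left-hand side is a probability. I would remark that throwing away the cube $[0,1]^d$ makes the bound crude (the true intersection volume is strictly smaller once the ball pokes outside the cube), but spatial separation only requires an upper bound of the form $\csep \rho^\beta$, so the crude estimate is entirely sufficient. It is worth noting that the argument never used uniformity of $\PP_T$ at all, and used uniformity of $\PP_R$ only through the fact that its density on $[0,1]^d$ is bounded by $1$; the same proof therefore shows spatial separation, with $\beta = d$, for any IID network in which one of the two placement distributions has a bounded density.
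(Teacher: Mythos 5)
Your proof is correct and follows essentially the same route as the paper: condition on $\vec T = \vec t$, bound the probability that $\vec R$ lands in $B(\vec t,\rho)\cap[0,1]^d$ by the full ball volume $v(d)\rho^d$ uniformly in $\vec t$, and integrate, taking $\csep = v(d)$ and $\beta = d$. Your additional remarks (the bound holds trivially for large $\rho$, and only boundedness of one density is really used) are accurate but not needed.
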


\begin{proof}
  We need to show that Definition 4.7 is fulfilled.  By conditioning
  on $\vec T = \vec t$, we get  
  
\begin{minipage}{0.35\linewidth}
  \begin{align*}
      &\Prob (\| \vec R - \vec T \| \leq \rho) \\
        &\qquad{}= \int_{[0,1]^d} \Prob_R \big(  B(\vec t, \rho) \cap [0,1]^d \big) \, \ud \vec t \\
        &\qquad{}\leq \int_{[0,1]^d} \Prob_R \big(  B(\vec t, \rho)  \big) \, \ud \vec t \\
        &\qquad{}\leq \int_{[0,1]^d} v(d) \rho^d \, \ud \vec t \\
        &\qquad{}= v(d) \rho^d,
   \end{align*}
\end{minipage}
\hspace{0.02\linewidth}
\begin{minipage}{0.55\linewidth}
    \includegraphics[scale=0.78]{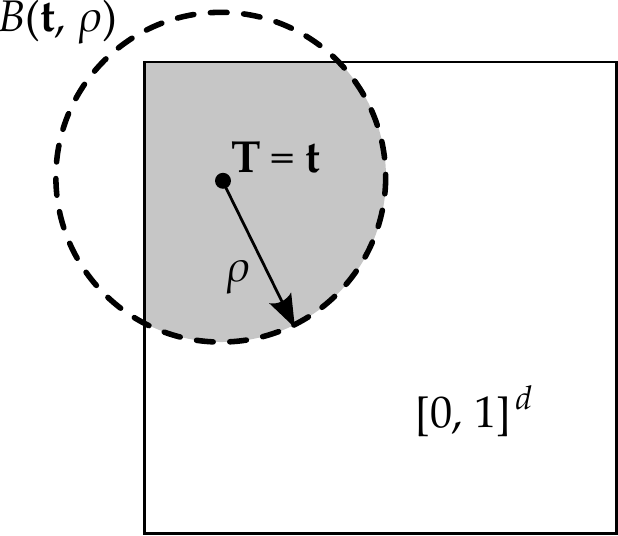}
\end{minipage}   
%
\mbox{}
\medskip

\noindent where $v(d)$ is the volume of the $d$-dimensional unit ball $B(\vec 0,1)$.  Taking $\csep = v(d)$, $\beta =d$ gives the result.
\end{proof}


The standard dense network has been the subject of much research (see the review paper
of Xue and Kumar \cite{XueKumar} and references therein). However, we emphasise that our result holds for a wider range
of models.

\section{Jafar network}

We now review in more detail Jafar's important result.  We call a network with
fixed equal $\snr$s and IID $\INR$s a \defn{Jafar network}. (Note that the Jafar
network cannot be written as an IID network with power-law attenuation.)

\medskip

\noindent \textbf{Theorem 4.1 restated.}
  \emph{Suppose direct $\SNR$s are fixed and identical, so $\SNR_i = \snr$ for all $i$,
  and suppose that all $\INR$s are IID random and supported on some neighbourhood
  of $\snr$.  Then the average per-user capacity $\Csum/n$ tends in probability
  to $\frac12 \log(1+2\snr)$ as $n\to\infty$.}
\medskip

Proving the direct part of this result is simple: the central result of ergodic interference
alignment (Theorem 2.15) tells us that the rates $\frac12 \log(1+2\snr)$ are simultaneously achievable by all users.

For the converse part, Jafar defines \cite[proof of Theorem 5]{Jafar} the crosslink $i \link j$ as being
a \defn{$\epsilon$-bottleneck link} if the sum-capacity of the two-user network with links $i\link i$ and $j \link j$
with crosslink $i \link j$ has sum-rate bounded by 
  \begin{equation} r_i + r_j \leq \log(1+2\snr) + \epsilon . \label{bneck} \end{equation}
for some fixed $\epsilon > 0$.
Analysis of these bottleneck links then
gives the converse result.

We will use a similar method to this to prove our main result (Theorem 4.3). We will
need to alter the definition of bottleneck links slightly to fit our needs. Also, the entire
converse part is made more complicated: while the Jafar network has lots of convenient independences
between the $\snr$ and $\INR$s, we are not so lucky.  Therefore extra care must be taken.

We also give here an alternative proof of Theorem 4.1. This proof uses techniques from
graph theory, and gives a faster rate of convergence than Jafar's own proof
-- exponential, rather than $O(n^{-2})$.

\begin{proof}[Alternative proof of Theorem 4.1]
We need to show that every user is involved in a bottleneck
link.  We will first review some facts from random graph theory.

Form a random bipartite graph by taking 
two sets $\mathcal V$, $\mathcal W$ of vertices each of size $K$, and
making each edge from $\mathcal V$ to $\mathcal W$ present independently with probability
$\delta$ (there are no edges within either $\mathcal V$ or $\mathcal W$).
A \defn{matching} is a set of $k$ of the edges such that every vertex in
the graph is adjacent to one edge -- so each vertex $v \in \mathcal V$
is matched to a vertex $w \in \mathcal W$ by an edge $vw$.

  \begin{center}
    \includegraphics[scale=0.89]{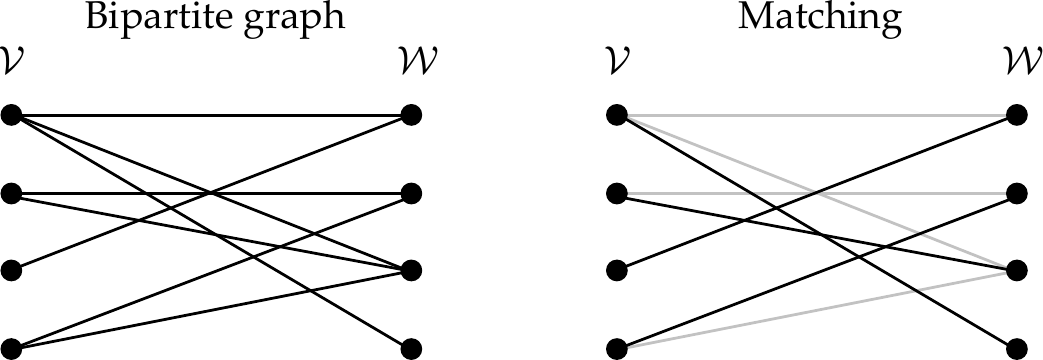}
  \end{center}

A classical result due to Erd\"{o}s and R\'{e}nyi \cite[Theorem 2]{ErdosRenyi}
(originally stated in terms of random matrices) states that
the probability that a matching 
fails to exist tends to $0$ for any $\delta = \delta(K) = (\log K + \omega(1))/K$,
where $\omega(1)$ is, using Bachmann--Landau asymptotic notation, a term that tends to $\infty$. 

We recall the argument where $\delta$ is a fixed constant, so that we
can be precise about the bounds, rather than just working asymptotically.
Following Walkup \cite[Definition 1]{Walkup}, we say that a subset $\bar{\mathcal V} \subseteq \mathcal V$ of
size $k$ and a subset $\bar{\mathcal W} \subseteq \mathcal W$ of size $K-k+1$ form a \defn{blocking pair} of 
size $k$ if no
edge of the graph connects $\bar{\mathcal V}$ to $\bar{\mathcal W}$. Walkup \cite[Section 3]{Walkup} uses
K\"{o}nig's theorem (equivalently one can use Hall's marriage theorem) to deduce that 
  \begin{align*} 
    \Prob( \text{no matching from $\mathcal V$ to $\mathcal W$} )  
      &\leq \sum_{k=1}^K \mathop{\sum_{|\bar{\mathcal V}| = k}}_{|\bar{\mathcal W}| = K-k+1}
                            \Prob \left( (\bar{\mathcal V}, \bar{\mathcal W}) \text{ a blocking pair} \right) \\
      &= \sum_{k=1}^{K} \binom{K}{k} \binom{K}{K-k+1} (1-\delta)^{k(K-k+1)} \\                  
      &= 2 \sum_{k=1}^{(K+1)/2} \binom{K}{k} \binom{K}{k-1} (1-\delta)^{k(K-k+1)} .
  \end{align*}
We split this sum into the terms where $k \leq \sqrt{K}$ and those where $k > \sqrt{K}$.
Using the bounds
  \[ (1-\delta)^{k(K-k+1)} \leq \begin{cases}
                     \displaystyle \exp \left( - \delta\, \frac{K+1}{2} \right) & \text{for $k \leq \sqrt{K}$,} \\
                     \displaystyle \exp \left( - \delta\, \frac{K^{3/2}}{2} \right) & \text{for $k > \sqrt{K}$,}
                                \end{cases} \]
we get the bound
  \[ \Prob( \text{no matching}) \leq
        2 \sqrt{K} K^{2 \sqrt{K}}  \exp \left( - \delta\, \frac{K+1}{2} \right)
            + 2^{2K} \exp \left( - \delta\, \frac{K^{3/2}}{2} \right) .\]
We deduce that the probability
of a complete matching failing to exist decays at an exponential rate in $K$.

We can now prove Theorem 4.1.

Construct a random bipartite graph by dividing the receiver-transmitter links 
into two groups $\mathcal V$ and $\mathcal W$ of 
size $K = n/2$.
Choose $\epsilon > 0$ and, for $i \in \mathcal V$ and $j \in\mathcal W$ include
the edge $ij$ if either of the crosslinks $i\link j$ or $j \link i$ is an
$\epsilon$-bottleneck link, which in the Jafar network occurs independently with some probability
$\delta$.

We seek a matching on this graph.
For each pair $(i,j)$ that is successfully matched up, the corresponding two-user channel is an 
$\epsilon$-bottleneck, and
hence has the bound $r_i + r_j \leq \log (1+ 2 \snr) + \epsilon$ by \eqref{bneck}.
for any achievable rates.
If every edge is matched this way, we have
  \[ r_\Sigma = \sum_{i=1}^n r_i = \sum_{\text{matches $(i,j)$}} r_i + r_j
       \leq \frac{n}{2} \left(\log (1+ 2 \snr) + \epsilon\right) \]
The high probability of a matching existing implies exponential decay of
  \[ \Prob \left(\left| \frac{\csum}{n} - \frac12 \log(1+2 \snr)\right| > \epsilon \right) \to 0\]
proving the theorem with exponential convergence.
\end{proof}

For our IID networks, extra dependencies between links make the picture
much more difficult, so we have been unable to find a proof that extends
this random bipartite graph method.  Whether or not exponential convergence holds
for IID networks with power-law attenuation is an open problem.

\section{Proof: achievability}

We can now prove our main theorem, Theorem 4.3, by breaking the probability into two terms, which we deal
with separately. So
\begin{equation} \label{eq:unionbd}
 \PP \left( \left| \frac{\Csum}{n} - E \right| > \epsilon \right) 
 = \PP \left( \frac{\Csum}{n} - E  < -\epsilon \right) +
\PP \left( \frac{\Csum}{n} - E  > \epsilon \right).\end{equation}

Bounding the first term of (\ref{eq:unionbd}) corresponds to the achievability part of the proof.
Bounding the second term of (\ref{eq:unionbd}) corresponds to the converse part, and
represents our major contribution.

We prove the direct part using ergodic interference alignment.

\begin{proof}
The first term of (\ref{eq:unionbd}) 
can be bounded relatively simply, using ergodic interference alignment. 
A theorem of Nazer, Gastpar, Jafar, and Vishwanath (Theorem 2.15 of this thesis) 
implies that the rates
  \[ R_i = \frac12 \log(1 + 2 \SNR_i) = S_{ii} \]
are simultaneously achievable.

This implies that
$\Csum \geq \sum_{i=1}^n R_i = \sum_{i=1}^n S_{ii}$. This allows us to bound the first term in
(\ref{eq:unionbd}) as 
  \[ \PP \left( \frac{\Csum}{n} - E  < -\epsilon \right)
       \leq \PP \left( \frac{\sum_{i=1}^n S_{ii}}{n}   < E-\epsilon \right).
\label{eq:lln}
\]
But $E = \EE S_{ii}$, so this probability tends to $0$ by the weak law of large numbers.
\end{proof}

\section{Proof: converse}

We now need to show that the second term of (\ref{eq:unionbd}) tends to $0$ too.
Specifically, we must prove the following: for all $\epsilon > 0$
  \begin{equation} \label{converse}
    \Prob \left( \frac{C_\Sigma}{n} \geq E + \epsilon \right)
                                                              \to 0  \qquad \text{as $n \to \infty$.}
  \end{equation}

The proof of the converse part is the major new part of this chapter.
First, bottleneck links are introduced, and we prove a tight information-theoretic
bound on the capacity of such links.  Second, a probabilistic counting argument
ensures there are (with high probability) sufficiently many bottleneck links to bound
the sum-capacity of the entire network.

\subsection{Bottleneck links}

The important concept is that of the bottleneck link, an idea first used
by Jafar \cite{Jafar} and later adapted by Johnson, Aldridge, and Piechocki \cite{JAP} in the following form:

\addcontentsline{lof}{figure}{\numberline{\textbf{Definition 4.10}} Bottleneck link}
\begin{definition}
  We say the link $i\link j$, $i\neq j$, is an \emph{$\epsilon$-bottleneck
  link}, if the following three conditions hold:
    \begin{description}
      \item[B1] $S_{ii} \leq E + \epsilon/2$,
      \item[B2] $S_{ji} \leq E + \epsilon/2$,
      \item[B3] $S_{jj} \leq S_{uj}$.
    \end{description}
  
  We let $B_{ij}$ be the indicator function that the crosslink $i\link j$ is a $\epsilon$-bottleneck
  link.  We also define the \emph{bottleneck probability} $\beta :=
  \Ex B_{ij}$ to be the probability that a given link is an $\epsilon$-bottleneck which is
  independent of $i$ and $j$ for an IID network.  (We suppress the $\epsilon$ dependence for simplicity.)
\end{definition}

\begin{wrapfigure}{r}{0.41\textwidth}
  \vspace{-0.2cm}
  \begin{center}
    \includegraphics[scale=0.79]{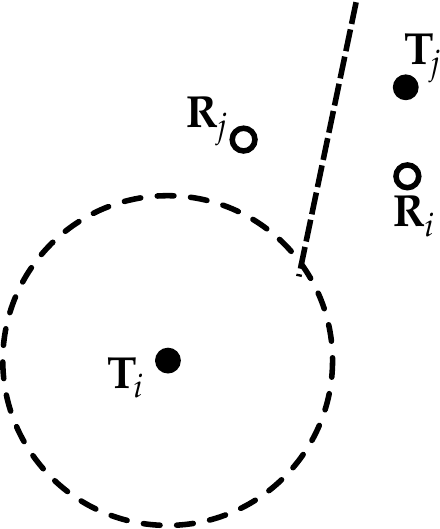}
  \end{center}
  \vspace{-0.2cm}
\end{wrapfigure}
This definition does have a physical interpretation (although the only
motivation for it is that it allows us to prove the bound in Lemma
\ref{links}).

The physical interpretation is this: fix the position
$\vec T_i$ of transmitter $i$.  Condition B1 requires that receiver
$i$ is sufficiently far away from $\vec T_i$, and condition B2 requires that receiver
$j$ is sufficiently far away too.  Condition B3 requires that transmitter
$j$ is closer to receiver $j$ than to receiver $i$.


The crucial point about bottleneck links is the constraints they place
on achievable rates in a network.

\addcontentsline{lof}{figure}{\numberline{\textbf{Lemma 4.11}} Sum-capacity of bottleneck link}
\begin{lemma} \label{links}
  Consider a crosslink $i\link j$ in a $n$-user Gaussian interference
  network.  If $i\link j$ is a $\epsilon$-bottleneck link, then the
  sum of their achievable rates is bounded by
  $r_i + r_j \leq 2E + \epsilon$.
\end{lemma}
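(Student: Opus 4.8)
The plan is to imitate the genie-aided converse used to prove Theorem 2.10, generalised to handle the three bottleneck conditions and the randomness of the received powers. The guiding principle is that handing a receiver extra messages, or deleting interfering transmitters, can only enlarge the capacity region; hence any rate pair achievable in the original network remains achievable after such modifications, and every bound we derive for the modified network is a valid upper bound on $r_i+r_j$.

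First I would strip the network down to the two users $i$ and $j$. A genie supplies receivers $i$ and $j$ with the messages $m_k$ of all other users $k\neq i,j$, so that each can subtract the term $\sum_{k\neq i,j}H_{\cdot k}x_k$ exactly. Since this cannot lower any achievable rate, it is enough to bound $r_i+r_j$ in the two-user interference channel in which receiver $i$ sees $x_i$ at power $\SNR_i$ and $x_j$ at power $\INR_{ij}$, while receiver $j$ sees $x_j$ at power $\SNR_j$ and $x_i$ at power $\INR_{ji}$.

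Next, mirroring Theorem 2.10, I would argue that both messages can be decoded at a single receiver. Give receiver $j$ the message $m_i$ so that it can cancel $x_i$; receiver $i$ decodes $m_i$ by hypothesis and cancels it from $Y_i$. The two receivers are then left with observations of the \emph{same} signal $x_j$, through powers $\INR_{ij}$ and $\SNR_j$ respectively. Condition B3, which fixes the ordering of transmitter $j$'s two distances (so that $\INR_{ij}\le\SNR_j$), is precisely what lets a further genie upgrade the weaker of these two looks to match the stronger, making the residual channels statistically equivalent. It follows, exactly as in Theorem 2.10, that the receiver which decodes its own message can also decode the other, so that $r_i+r_j$ is controlled by the sum-capacity of the multiple-access channel seen at that receiver. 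In the fixed-magnitude phase-fading model the uniform phases do not alter the received powers, so Theorem 2.8 makes this sum-capacity a clean expression $\log(1+\,\cdot\,)$ in those powers.

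Finally I would convert the multiple-access bound into $2E+\epsilon$. This is where conditions B1 and B2 are used: they bound the relevant direct and cross powers through $S_{ii}\le E+\epsilon/2$ and $S_{ji}\le E+\epsilon/2$, and the ergodic pairing of a channel state with its complement, which underlies the achievability in Theorem 2.15, is what turns the multiple-access expression into the sum $S_{ii}+S_{ji}$ rather than a single logarithm, so that $r_i+r_j\le S_{ii}+S_{ji}\le 2E+\epsilon$. I expect the real difficulty to lie in this last collapse together with the equivalence step: unlike Jafar's network, where the direct powers are equal and the crosslink powers independent, here $\SNR_i,\SNR_j,\INR_{ij},\INR_{ji}$ are dependent functions of the node positions, so the equalising genie must be built with care, its validity argued from B3 and from the symmetry $H_{ji}\sim-H_{ji}$ of the fast fading, and the resulting bound in all four powers must be shown to reduce to exactly $S_{ii}+S_{ji}$ and no more.
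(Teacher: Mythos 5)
Your overall architecture matches the paper's: reduce to the two-user subnetwork, use a genie plus condition B3 to argue that one receiver can decode \emph{both} messages, invoke the multiple-access sum-capacity (Theorem 2.8) at that receiver, and then use B1 and B2 to convert the resulting bound into $2E+\epsilon$. Up to the MAC bound your reasoning is essentially the paper's (modulo the direction of B3: for receiver $i$ to inherit decodability of $m_j$ from receiver $j$, receiver $i$'s look at $x_j$ must be at least as strong as receiver $j$'s, i.e.\ $\INR_{ij}\ge\SNR_j$ rather than the $\le$ you wrote; the paper's prose gloss of B3 is itself muddled, but its proof explicitly uses ``the weighting is larger'' at receiver $i$).

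The genuine gap is in your closing step. The MAC converse at receiver $i$ gives the single logarithm $r_i+r_j\le\log(1+\SNR_i+\INR_{ij})$; it does not give $r_i+r_j\le S_{ii}+S_{ji}$, and ``ergodic pairing of a channel state with its complement'' is achievability machinery with no role in a converse. Indeed $\frac12\log(1+2\SNR_i)+\frac12\log(1+2\INR_{ij})\le\log(1+\SNR_i+\INR_{ij})$ by the AM--GM inequality, so the bound you assert is strictly \emph{stronger} than what the MAC gives and is unjustified as stated. The correct finish is elementary algebra, with no alignment involved: B1 and B2 give $1+2\SNR_i\le 2^{2E+\epsilon}$ and $1+2\INR_{ij}\le 2^{2E+\epsilon}$; averaging the two inequalities yields $1+\SNR_i+\INR_{ij}\le 2^{2E+\epsilon}$, hence $\log(1+\SNR_i+\INR_{ij})\le 2E+\epsilon$. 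This averaging trick is precisely why the bottleneck conditions are phrased in terms of $\frac12\log(1+2\,\cdot\,)$ with the factor of $2$ inside the logarithm.
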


\begin{proof}
  First, note that we make things no worse by considering the two-user
  subnetwork:
    \begin{align*}
      Y_i &= \exp (\ii\Theta_{ii}) \sqrt{\SNR_i}X_i
               + \exp (\ii\Theta_{ij}) \sqrt{\INR_{ij}}X_j + Z_i \\
      Y_j &= \exp (\ii\Theta_{ij}) \sqrt{\INR_{ji}}X_i
               + \exp (\ii\Theta_{ji}) \sqrt{\SNR_j}X_j + Z_j
    \end{align*}
  where receiver $i$ needs to determine message $m_i$, and receiver $j$
  message $m_j$.  (The time index is omitted for clarity.)
  
  From bottleneck conditions B1 and B2 we have
    \[ 1 + 2\SNR_i \leq \exp (2E + \epsilon) , \quad
       1 + 2\INR_{ij} \leq \exp (2E + \epsilon) .     \]
  Summing and taking logs gives
    \begin{equation} \label{logs}
      \log (1+\SNR_i+\INR_{ij}) \leq 2E + \epsilon .
    \end{equation}
    
  We combine this with the argument given by Jafar \cite{Jafar}, which we discussed earlier (Section 2.3).
  Let $r_i$ and $r_j$ be jointly achievable rates for the subnetwork.
  In particular, receiver $i$ can determine message $m_i$ with an arbitrarily
  low probability of error.  

  We certainly do no worse if a genie presents message $m_i$ to receiver
  $j$, so we assume receiver $j$ can indeed recover $m_i$.  But condition B3
  ensures that it is easier for receiver $i$ to determine $m_j$ than
  it is for receiver $j$ (since the weighting is larger in the first
  case).  So since receiver $j$ can recover $m_j$ (as $r_j$ is achievable),
  receiver $i$ can recover $m_j$ also.
  
  Because receiver $i$ can determine both $m_i$ and $m_j$, these
  two signals must have been transmitted at a sum-rate no higher
  than the sum-capacity of the Gaussian multiple-access channel
  (Theorem 2.8).
  Hence,
    \[ r_i + r_j \leq \log(1+\SNR_i+\INR_{ij}) \leq 2E + \epsilon, \]
  where the second inequality comes from \eqref{logs}.
\end{proof}

\subsection{Three technical lemmas}

A few technical lemmas are required in order to prove 
\eqref{converse}.

First, we need to ensure that very high $\SNR$s are very rare (Lemma \ref{SNRs}).
Second, we need to show that bottleneck links will actually occur (Lemma \ref{beta}).
Last, we must show that the number of bottleneck links cannot vary
too much (Lemma \ref{Var}).

Under any network model where these three lemmas are true, our theorem will hold.
We emphasise that our model of IID networks with power-law attenuation is one such
model; we believe the result holds more widely.

\addcontentsline{lof}{figure}{\numberline{\textbf{Lemma 4.12}} Very high $\SNR$ unlikely}
\begin{lemma} \label{SNRs}
  Consider a spatially separated IID network, with power-law attenuation.  Then for any $\eta > 0$,
    \[ \Prob \bigg(\max_{1\leq i\leq n} S_{ii} > n^{\eta/2}\bigg)  = O(n^{-1})
                                     \qquad \text{as $n \to \infty$.} \]
\end{lemma}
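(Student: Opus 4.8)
The plan is to reduce the maximum to a union bound over single links, and for each link to convert the (very large) threshold on $S_{ii}$ into a (very small) threshold on the transmitter--receiver distance, whereupon spatial separation finishes the estimate. The point is that $S_{ii}$ exceeding $n^{\eta/2}$ forces an enormous $\SNR_i$, which under power-law attenuation forces $\vec T_i$ and $\vec R_i$ to be implausibly close, an event whose probability spatial separation controls.

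First I would translate the event. Since $S_{ii} = \tfrac12 \log(1 + 2\SNR_i)$ is strictly increasing in $\SNR_i$, the event $\{S_{ii} > n^{\eta/2}\}$ is exactly $\{\SNR_i > s_n\}$, where
  \[ s_n := \tfrac12\big(2^{\,2 n^{\eta/2}} - 1\big) . \]
By power-law attenuation (Definition 4.5), $\SNR_i = a(\|\vec R_i - \vec T_i\|) \leq \cdec \|\vec R_i - \vec T_i\|^{-\alpha}$, so $\SNR_i > s_n$ can hold only if $\cdec \|\vec R_i - \vec T_i\|^{-\alpha} > s_n$, i.e.\ only if $\|\vec R_i - \vec T_i\| < (\cdec / s_n)^{1/\alpha}$. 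Applying spatial separation (Definition 4.7) then gives, for each $i$,
  \[ \Prob\big(S_{ii} > n^{\eta/2}\big) \leq \Prob\Big( \|\vec R_i - \vec T_i\| < (\cdec/s_n)^{1/\alpha} \Big) \leq \csep \big(\cdec/s_n\big)^{\beta/\alpha} . \]

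Next I would take a union bound over the $n$ links. Since the $S_{ii}$ are identically distributed,
  \[ \Prob\Big( \max_{1\leq i\leq n} S_{ii} > n^{\eta/2} \Big) \leq \sum_{i=1}^n \Prob\big(S_{ii} > n^{\eta/2}\big) \leq n\, \csep\, \cdec^{\beta/\alpha}\, s_n^{-\beta/\alpha} . \]
Finally I would check the decay rate. Because $s_n$ grows like $2^{\,2 n^{\eta/2}} = \exp\big((2\ln 2)\, n^{\eta/2}\big)$, which outgrows every polynomial in $n$ (here $\eta > 0$ is what guarantees $n^{\eta/2} \to \infty$), the factor $s_n^{-\beta/\alpha}$ decays faster than any power of $n$. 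Hence the whole right-hand side tends to $0$ faster than any polynomial, and in particular is $O(n^{-1})$, as claimed.

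There is no real obstacle here: the estimate is entirely routine once the event is rewritten in terms of distance. The only thing to verify carefully is that the threshold $s_n$ really does grow super-polynomially in $n$, so that the extra factor of $n$ from the union bound is swamped; this is immediate from the double-exponential form of $s_n$, and in fact yields a bound far stronger than the stated $O(n^{-1})$.
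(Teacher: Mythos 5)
Your proposal is correct and is essentially identical to the paper's proof: both use a union bound over the $n$ links, rewrite the event $\{S_{ii} > n^{\eta/2}\}$ as a tiny upper bound on $\|\vec R_i - \vec T_i\|$ via power-law attenuation, and then apply spatial separation, with the super-polynomial growth of the threshold swamping the factor of $n$. The paper likewise notes that the resulting decay is much faster than the stated $O(n^{-1})$.
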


In fact, in our case convergence to $0$ is considerably quicker than
$O(n^{-1})$, but this is sufficient.

\begin{proof}
First, we have by the union bound
  \begin{equation} \Prob (\max S_{ii} > n^{\eta/2}) \leq n \Prob ( S_{11} > n^{\eta/2}) .  \label{ubd} \end{equation}

Now we apply the definition of $S_{11} := \frac12 \log(1 + 2\SNR)$ and
recall that $\SNR_1 = a(\|\vec R_1 - \vec T_1\|)$ (Definition 4.4) to get
  \begin{align*}
    \Prob ( S_{11} > n^{\eta/2})
      &= \Prob \left( \SNR_{1} > \frac12 (2^{2n^{\eta/2}}-1) \right) \\ 
      &= \Prob \left( a(\| \vec R_1 - \vec T_1 \|) > \frac12 (2^{2n^{\eta/2}}-1) \right) .
  \end{align*}
Since $a$ is a power-law attenuation function, we have
  \begin{align*}
    \Prob ( S_{11} > n^{\eta/2})
      &\leq \Prob \left( \cdec \| \vec R_1 - \vec T_1 \|^{-\alpha} > \frac12 (2^{2n^{\eta/2}}-1) \right) \\ 
      &= \Prob \left( \| \vec R_1 - \vec T_1 \| < \left(\frac{1}{2\cdec} (2^{2n^{\eta/2}}-1) \right)^{-1/\alpha} \right);
   \end{align*}
and since the network is spatially separated, we have
  \[
    \Prob ( S_{11} > n^{\eta/2})
      \leq \csep \left(\frac{1}{2\cdec} (2^{2n^{\eta/2}}-1) \right)^{-\beta/\alpha} = O(n^{-2})
   \]   
(and obviously much tighter than $O(n^{-2})$).
Together with \eqref{ubd}, this gives the result.
\end{proof}

(It is worth noting that this fast decay in the tails of $S_{ii}$ ensures that the
expectation $E = \EE S_{ii}$ does indeed exist and is finite.)

We will often condition off this event; that is, condition on the complementary event
$\{ \max S_{ii} \leq n^{\eta/2} \}$.  We use $\PP_n$, $\EE_n$ and $\Var_n$ to
denote such conditionality, and write $\beta_n=\EE_nB_{ij}$ for the
conditional bottleneck probability.

The next two lemmas concern showing that conditional probabilities are
nonzero.  However,  we have for any event $A$,
	\begin{align*}
	  \Prob(A) &= \Prob(A \mid \max S_{ii} \leq n^{\eta/2}) \Prob(\max S_{ii} \leq n^{\eta/2}) \\
                   &\qquad\qquad\qquad\qquad\qquad {}+ \Prob(A \mid \max S_{ii} > n^{\eta/2}) \Prob(\max S_{ii} > n^{\eta/2}) .
  \end{align*}
and hence by Lemma \ref{SNRs} we have the two bounds
  \begin{align*}
    \Prob(A) &\leq \Prob(A \mid \max S_{ii} \leq n^{\eta/2}) + \Prob(\max S_{ii} > n^{\eta/2}) 
             = \Prob_n(A) + O(n^{-1})  \\
	  \Prob(A) &\geq\Prob(A \mid \max S_{ii} \leq n^{\eta/2}) \Prob(\max S_{ii} \leq n^{\eta/2}) 
             = \Prob_n(A) \big(1-O(n^{-1})\big),
  \end{align*}
and so
  \begin{equation} \Prob(A) = \Prob_n(A) + O(n^{-1}) . \label{bigo} \end{equation}
This will be useful in the next two proofs.

\addcontentsline{lof}{figure}{\numberline{\textbf{Lemma 4.13}} The bottleneck probability is bounded away from zero}
\begin{lemma} \label{beta}
  Consider a spatially separated IID network, with power-law attenuation.  Then the conditional
  bottleneck probability $\beta_n$ is bounded away from $0$ for
  all $n$ sufficiently large.
\end{lemma}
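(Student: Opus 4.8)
The plan is to reduce the statement to a single \emph{unconditional} positivity fact and then transfer it to $\beta_n$ via the comparison \eqref{bigo}. For a fixed pair $i\neq j$ the event $B_{ij}$ depends only on $(\vec T_i,\vec T_j,\vec R_i,\vec R_j)$, whose joint law is $\PP_T^{\otimes 2}\otimes\PP_R^{\otimes 2}$ and is independent of $n$; since $E$ is a fixed finite constant (Lemma~\ref{SNRs}), the unconditional bottleneck probability $\beta=\EE B_{ij}$ is likewise a constant independent of $n$. Equation \eqref{bigo} then gives $\beta_n=\beta-O(n^{-1})$, so once I show $\beta>0$ it follows that $\beta_n\geq\beta/2>0$ for all $n$ large enough. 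Everything therefore reduces to proving $\beta>0$.

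Fix the link $1\link 2$ and set $\tau:=\tfrac12(2^{2E+\epsilon}-1)$, so that conditions B1 and B2 read $a(\|\vec R_1-\vec T_1\|)\leq\tau$ and $a(\|\vec R_2-\vec T_1\|)\leq\tau$, while B3 reads $a(\|\vec R_2-\vec T_2\|)\leq a(\|\vec R_1-\vec T_2\|)$. The only real difficulty is that these three events share variables ($\vec R_1$ occurs in B1 and B3, $\vec R_2$ in B2 and B3), so they are not independent; crucially, however, the transmitter $\vec T_2$ occurs only in B3. I would therefore condition on $(\vec R_1,\vec R_2)$: since $\vec T_1\perp\vec T_2$, the event $\text{B1}\cap\text{B2}$ (a function of $\vec T_1$) and the event B3 (a function of $\vec T_2$) become conditionally independent, giving $\beta=\EE[\,p(\vec R_1,\vec R_2)\,q(\vec R_1,\vec R_2)\,]$, where $p=\PP_{\vec T_1}(\text{B1}\cap\text{B2}\mid\vec R_1,\vec R_2)$ and $q=\PP_{\vec T_2}(\text{B3}\mid\vec R_1,\vec R_2)$.

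The key step, which avoids having to assume any monotonicity of $a$, is a symmetrization in the exchangeable variables $\vec R_1,\vec R_2$. The factor $p$ is symmetric under the swap $\vec r_1\leftrightarrow\vec r_2$ (the intersection $\text{B1}\cap\text{B2}$ is symmetric), whereas $q(\vec r_1,\vec r_2)+q(\vec r_2,\vec r_1)\geq 1$, because for every $\vec T_2$ at least one of $a(\|\vec r_2-\vec T_2\|)\leq a(\|\vec r_1-\vec T_2\|)$ and its reverse must hold. Using $(\vec R_1,\vec R_2)\stackrel{d}{=}(\vec R_2,\vec R_1)$ together with the symmetry of $p$, I get $\beta=\EE[\,p(\vec R_1,\vec R_2)\,q(\vec R_2,\vec R_1)\,]$ as well, and adding the two expressions yields $2\beta\geq\EE[\,p(\vec R_1,\vec R_2)\,]=\PP(\text{B1}\cap\text{B2})$. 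So it remains only to show $\PP(\text{B1}\cap\text{B2})>0$.

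For this last point I would condition instead on $\vec T_1$. Given $\vec T_1$, the events B1 and B2 are independent functions of $\vec R_1$ and $\vec R_2$ and, because $\vec R_1,\vec R_2$ are identically distributed, have equal conditional probability $\psi(\vec T_1)$; hence $\PP(\text{B1}\cap\text{B2})=\EE[\psi(\vec T_1)^2]\geq(\EE\psi(\vec T_1))^2=\PP(\text{B1})^2$ by Cauchy--Schwarz. Finally $\PP(\text{B1})=\PP(S_{11}\leq E+\epsilon/2)>0$, since otherwise $S_{11}>E+\epsilon/2$ almost surely would force $\EE S_{11}>E$, contradicting $\EE S_{11}=E$ (recall $S_{11}\geq 0$ and $E$ finite). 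Chaining the bounds gives $\beta\geq\tfrac12\PP(\text{B1})^2>0$, and then $\beta_n\geq\beta/2>0$ for large $n$. The main obstacle is precisely the entangled dependence of B1, B2, B3 through the shared receiver positions; the two conditionings (on the receivers to split off B3, on $\vec T_1$ to handle $\text{B1}\cap\text{B2}$) together with the symmetrization are what disentangle it, and notably the argument needs neither monotonicity of $a$ nor the spatial-separation hypothesis beyond what already guarantees $E<\infty$.
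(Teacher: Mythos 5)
Your proof is correct and follows essentially the same route as the paper's own: reduce to the unconditional bottleneck probability via \eqref{bigo}, use the exchangeability of $\vec R_i$ and $\vec R_j$ to dispose of B3 at the cost of a factor $\tfrac12$, and then deduce $\Prob(\text{B1 and B2})>0$ from $\Prob(\text{B1})>0$ using the conditional independence of the two receivers given $\vec T_i$. If anything your write-up is more careful than the paper's sketch --- the symmetrization identity $q(\vec r_1,\vec r_2)+q(\vec r_2,\vec r_1)\geq 1$ and the Jensen step $\EE[\psi(\vec T_1)^2]\geq(\EE\psi(\vec T_1))^2$ make precise exactly the two points the paper only gestures at.
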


\begin{proof}
  First note that by \eqref{bigo}, we need only show that the
  unconditional bottleneck probability $\beta$ is nonzero.

  Second, note that by the exchangeability of $\vec R_i$ and $\vec R_j$,
  we have
    \[ \Prob(\text{B1 and B2 and B3}) \geq \frac12 \Prob(\text{B1 and B2}) .\]
  It is left to show that $\Prob(\text{B1 and B2})$ is non-zero.
  
  Note that B1 requires $S_{ii}$ to be less than its expectation plus $\epsilon$.
  So $\vec R_i$ must be situated such that this has nonzero probability.  So $\vec T_i$
  has a nonzero probability of being positioned such that B1 occurs.  But
  $\vec T_i$ and $\vec T_j$ are also exchangeable, so we are done.
\end{proof}

\addcontentsline{lof}{figure}{\numberline{\textbf{Lemma 4.14}} A bound on the variance of the\\ number of bottleneck links}
\begin{lemma} \label{Var}
  Consider a spatially separated IID network with power-law attenuation.
  Then, conditional on $\{\max_i S_{ii} < n^{\eta/2} \}$,
  \[ \Var_n (\# \text{\emph{ bottleneck links}})
       = \Var_n \left( \sum_{i\neq j} B_{ij} \right)
       = O(n^3) ,                                  \]
  where the sum is over all crosslink pairs $(i,j)$, $i\neq j$.
\end{lemma}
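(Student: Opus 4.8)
The plan is to bound the variance of $\sum_{i\neq j} B_{ij}$ by expanding it as a double sum of covariances and then classifying index-pairs according to how many nodes they share. Writing
\[
  \Var_n\left(\sum_{i\neq j} B_{ij}\right)
    = \sum_{i\neq j}\sum_{k\neq l} \Cov_n(B_{ij}, B_{kl}),
\]
the key observation is that the bottleneck indicator $B_{ij}$ depends only on the positions of the nodes $\vec T_i, \vec R_i, \vec T_j, \vec R_j$ (through $S_{ii}$, $S_{ji}$, and the comparison $S_{jj}\leq S_{uj}$ in condition B3). Since in an IID network distinct nodes are placed independently, two indicator variables $B_{ij}$ and $B_{kl}$ are independent — and hence have zero covariance — whenever the index sets $\{i,j\}$ and $\{k,l\}$ are disjoint.

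First I would make this disjointness reduction precise. Each term $\Cov_n(B_{ij},B_{kl})$ vanishes unless $\{i,j\}\cap\{k,l\}\neq\emptyset$, so only pairs of crosslinks sharing at least one index contribute. I would then count how many such $(i,j,k,l)$ tuples there are: fixing the ordered pair $(i,j)$ (there are $O(n^2)$ choices), the overlapping partner $(k,l)$ must reuse at least one of the two indices $i$ or $j$ in at least one of its two slots, leaving only $O(n)$ free choices for the remaining index. This gives $O(n^2)\cdot O(n) = O(n^3)$ contributing terms in total.

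Next I would bound each surviving covariance. Since the $B_{ij}$ are indicator ($\{0,1\}$-valued) random variables, we have the crude but sufficient bound $|\Cov_n(B_{ij},B_{kl})| \leq 1$ for every pair. One subtlety is that these are \emph{conditional} covariances, conditioned on the event $\{\max_i S_{ii} < n^{\eta/2}\}$; but the bound $|\Cov_n(\cdot,\cdot)|\leq 1$ holds for conditional covariances of bounded random variables just as for unconditional ones, because conditioning yields a genuine probability measure under which the variables still lie in $[0,1]$. Multiplying the $O(n^3)$ count of nonzero terms by the uniform bound $1$ on each term immediately yields
\[
  \Var_n\left(\sum_{i\neq j} B_{ij}\right) = O(n^3),
\]
as claimed.

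The main obstacle — really the only point requiring genuine care rather than routine bookkeeping — is justifying that the covariance is \emph{exactly} zero on disjoint index sets even after conditioning on $\{\max_i S_{ii} < n^{\eta/2}\}$. The conditioning event couples \emph{all} the $S_{ii}$ together, so disjoint $B_{ij}$ and $B_{kl}$ are no longer literally independent under $\PP_n$. I would handle this by falling back to the unconditional measure, where independence is clean: using the transfer identity \eqref{bigo} relating $\PP_n$ and $\PP$, the conditional covariances differ from the (zero) unconditional ones only by $O(n^{-1})$ error terms per pair. Even distributing these errors across all $O(n^4)$ pairs contributes at most $O(n^4)\cdot O(n^{-1}) = O(n^3)$, which is absorbed into the stated bound. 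Thus the disjoint-index terms, while not exactly zero under $\PP_n$, are collectively negligible at the $O(n^3)$ scale, and the variance estimate survives intact.
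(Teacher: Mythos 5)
Your proposal is correct and follows essentially the same route as the paper: expand the variance as a double sum of covariances, use independence of node placements to kill the disjoint-index terms unconditionally, count the $O(n^3)$ overlapping pairs each bounded by a constant, and then use the transfer identity \eqref{bigo} to show the $O(n^4)$ disjoint pairs contribute only $O(n^{-1})$ each under the conditional measure, for a total of $O(n^3)$. The only cosmetic difference is that the paper bounds each surviving covariance by $\Var B_{ij} = \beta(1-\beta) \leq \tfrac14$ rather than by $1$.
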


In general, one might assume that $\Var_n (\# \text{ bottleneck links})$
would be proportional to the square of the total number of links, and thus be $O(n^4)$.
However, because of the independences in the IID network, the variance is in fact
much lower.

\begin{proof}
  First consider the unconditional version.  We have
    \[ \Var \left( \sum_{i\neq j} B_{ij} \right)
         = \sum_{i\neq j} \sum_{k\neq l} \Cov(B_{ij}, B_{kl}) . \]
  The important observation is that for $i,j,k,l$ all distinct,
  $B_{ij}$ and $B_{kl}$ are independent giving $\Cov(B_{ij}, B_{kl}) = 0$.
  (This is because they depend only on the position of distinct and
  independently-positioned nodes.) Hence there are only $O(n^3)$ non-zero
  terms in the sum. Each non-zero covariance term in the sum is bounded by
  variance of the indicator function, so
   \[ \Cov(B_{ij}, B_{kl}) \leq \Var B_{ij} = \beta(1-\beta) \leq \frac14 . \]
  
  But by \eqref{bigo}, if $\Cov(B_{ij}, B_{kl}) = 0$, then the
  conditional covariance is very small $\Cov_n(B_{ij}, B_{kl}) = O(n^{-1})$.
  Hence, 
    \[ \Var_n \left( \sum_{i\neq j} B_{ij} \right) \leq O(n^3)\frac14 + O(n^4)O(n^{-1}) = O(n^3), \]
  as desired.
\end{proof}

\subsection{Completing the proof}

We are now in a position to prove \eqref{converse}, and hence prove
Theorem 4.3.

\begin{proof}
We need to show
  \[
    \forall\, \epsilon > 0 \quad  \forall\, \delta>0 \quad \exists\, N
       \quad  \forall\, n\geq N \quad
         \Prob\left( \frac{C_\Sigma}{n} \geq E + \epsilon \right)
           \leq \delta.      
  \]
So choose $\epsilon>0$, $\delta>0$, fix $n \geq N$ (where $N$ will
be determined later), and pick a fixed rate vector $\mathbf{r}
\in \bR_+^n$ with sum-rate
  \begin{equation} \label{sumrate}
    \frac{r_\Sigma}{n} > E + \epsilon ;
  \end{equation}
we need to show that $\Prob(\mathbf r\text{ is achievable}) \leq \delta$.
(Here, we are writing $r_\Sigma := \sum_{i=1}^n r_i$ for the sum-rate.)

We divide into two cases: when there is a very high $\SNR$, which is unlikely
to happen; and when there is not, in which case $\mathbf r$ is unlikely to be
achievable.  Formally,
  \begin{align}
    &\Prob(\mathbf r \text{ achievable}) \notag \\ 
       &\qquad\qquad = \Prob(\mathbf r\text{ achievable} \given \max S_{ii} \leq n^{\eta/2})
           \Prob(\max S_{ii} \leq n^{\eta/2}) \notag \\
         &\qquad\qquad \qquad\qquad\qquad {}+ \Prob(\mathbf r\text{ achievable} \given \max S_{ii} > n^{\eta/2})
             \Prob(\max S_{ii} > n^{\eta/2}) \notag \\
       &\qquad\qquad \leq \Prob(\mathbf r\text{ achievable} \given \max S_{ii} \leq n^{\eta/2}) 
       				+ \Prob(\max S_{ii} > n^{\eta/2}) \notag \\
       &\qquad\qquad  \leq \Prob_n (\mathbf r\text{ achievable} )
           + \frac{\delta}{2} , \label{bound}
  \end{align}
for $n$ sufficiently large, by Lemma \ref{SNRs}.  We need to bound the first term in 
\eqref{bound}.

First, note that our assumption on $\max_i S_{ii}$ means that if $r_i > 2n^{\eta/2}$,
then we break the single-user capacity bound, since we would have
  \[
     r_i > 2n^{\eta/2} \geq 2 \max_j S_{jj} 
       \geq 2 S_{ii} = \log (1+2\SNR_i) > \log (1+\SNR_i)
  \]
meaning $\mathbf r$ is not achievable, and we are done.  Thus we assume this does
not hold; that
  \begin{equation} \label{boundR}
    r_i \leq 2n^{\eta/2} \quad \text{for all $i$}.
  \end{equation}

(The rest of our argument closely follows Jafar \cite[proof of Theorem 5]{Jafar}.)

Now, if $\mathbf r$ is achievable, it must at least satisfy the constraints on the
$\epsilon$-bottleneck links $i\link j$ from Lemma \ref{links}, and hence also the sum of those constraints.  So
  \begin{align}
    \Prob_n(\mathbf r \text{ achievable})  
      &\leq \Prob_n ( r_i + r_j \leq 2E + \epsilon
          \text{ on bottleneck links $i\link j$} ) \notag \\
      &\leq \Prob_n \left( \sum_{i\neq j} B_{ij} (r_i + r_j)
          \leq \bigg( \sum_{i\neq j} B_{ij} \bigg) (2E + \epsilon) \right) \notag \\
      &= \Prob_n (U \leq V) , \label{UV}
  \end{align}
where we have defined
  \begin{align*}
    U &:= \frac{1}{n(n-1)} \sum_{i\neq j} B_{ij} (r_i + r_j) , \\
    V &:= \frac{1}{n(n-1)} \bigg( \sum_{i\neq j} B_{ij} \bigg) (2E + \epsilon) .
  \end{align*}

The conditional expectations of $U$ and $V$ are
  \[ \Ex_n U = 2\beta_n \frac{r_\Sigma}{n}, \quad
     \Ex_n V = \beta_n (2E + \epsilon)
                 = 2\beta_n \left(E + \frac{\epsilon}{2} \right). \]
Note that since $\beta_n > 0$ by Lemma 6, we can rewrite
\eqref{sumrate} as
  \[ \Ex_n U > \Ex_n V + \beta_n\epsilon , \]
or equivalently,
  \[ \Ex_n U - \frac{\beta_n\epsilon}{2} > \Ex_n V + \frac{\beta_n\epsilon}{2} . \]
The proof is completed by formalising the following idea: since 
the expectations are ordered $\Ex_n U > \Ex_n V$, we can only rarely
have the opposite ordering $U < V$.  Hence the expression in \eqref{UV} is small.

Formally, by (the conditional version of) Chebyshev's inequality and the union bound, we have
  \begin{align}
    \Prob_n ( U \leq V) 
      &\leq \Prob_n \bigg( U \leq \Ex_n U - \frac{\beta_n\epsilon}{2}
                    \text{ or } V \geq \Ex_n V + \frac{\beta_n\epsilon}{2} \bigg) \notag \\
      &\leq \Prob_n \bigg( |U - \Ex_n U| \geq \frac{\beta_n\epsilon}{2} \bigg)
              + \Prob_n \bigg( |V - \Ex_n V| \geq \frac{\beta_n\epsilon}{2} \bigg) \notag \\
      &\leq \bigg(\frac{2}{\beta_n\epsilon}\bigg)^2 \Var_n U + \bigg(\frac{2}{\beta_n\epsilon}\bigg)^2 \Var_n V 
                                                                                                     \notag \\
      &= \frac{4}{\beta_n^2\epsilon^2} ( \Var_n U + \Var_n V ) . \label{UV2}
  \end{align}

Using Lemma \ref{Var} we can bound these variances as
  \begin{align*}
    \Var_n U &= \frac{1}{n^2(n-1)^2} \Var_n \left( \sum_{i\neq j} B_{ij} (r_i + r_j) \right) \\
                 &\leq \frac{1}{n^2(n-1)^2} O(n^3) 16 n^\eta \\ &= O(n^{-(1-\eta)}). \\
    \Var_n V &= \frac{1}{n^2(n-1)^2} O(n^3) (2E+\epsilon)^2 = O(n^{-1}) ,
  \end{align*}
where we used \eqref{boundR} to bound $\Var_n U$.  Choosing $\eta$ to be
less than $1$, we can ensure $N$ is sufficiently large that for all
$n \geq N$,
  \[ \Var_n U + \Var_n V \leq \frac{\beta_n^2 \delta \epsilon^2}{8} . \]


\noindent This makes \eqref{UV2} into $\Prob_n (U \leq V) < \delta /2$.
Together with \eqref{UV} and \eqref{bound}, this yields the result.
\end{proof}

\section{Conclusion}

In this chapter we have defined IID interference networks with power-law attenuation.  We have shown
that this setup fulfills necessary properties for the average per-user capacity
$\Csum/n$ to tend in probability to $\frac12 \EE \log(1 + 2 \SNR)$.  We have
also noted that this result is not unique to our setup.  We briefly mention one more example.

Suppose, for example, that Rayleigh fading is added to our model.  That is, now let
$\SNR_i := |H_{ii}|^2 a(\|\vec T_i-\vec R_i\|)$ and
$\INR_{ji} := |H_{ji}|^2 a(\|\vec T_i-\vec R_j\|)$, where the $H_{ji}$
are IID standard complex Gaussian random variables.

Because ergodic interference alignment still works with Rayleigh fading \cite[Section IV]{Nazer},
the direct part of the theorem still holds.  But also, because the fading coefficients
are IID, the independence structure from the non-fading case remains, ensuring Lemmas
4.12--4.14 hold.  Hence, the theorem is still true.

Characterising \emph{all} networks for which such a limit for average per-user capacity
exists is an open problem.

At the moment, Theorem 4.3 should perhaps be regarded as being of
theoretical interest. That is, our major contribution is to provide a sharp
upper bound on the performance of interference networks. However, the
lower bound relies on an ergodic interference alignment which, while
rigorously proved, may not be feasible to implement in practice for large
number of users. Examination of the proof of the effectiveness of ergodic
interference alignment \cite[Theorem 1]{Nazer} shows that,
even for a model with alphabet size $q$, the channel needs to be used
$O( (q-1)^{n^2})$ times. Even for $n \sim 10$, this is a prohibitive
requirement.  We approach this problem in the next chapter.

\section*{Notes}
\addcontentsline{toc}{section}{Notes}

The new work in this chapter -- specifically Theorem 4.3 and its proof, and the definition of the IID network --
is joint work with Oliver Johnson and Robert Piechocki.  This chapter is
based on two papers we wrote \cite{JAP,AJP}.  This work benefited from
the advice of anonymous reviewers for from \emph{IEEE Transactions on Information Theory} and
the 2010 IEEE International Symposium on Information Theory.

The ideas in this chapter were first studied by Jafar \cite{Jafar} --
in particular, the concept (although not the exact definition) of bottleneck
links comes from that paper.

The earlier of our two papers \cite{JAP} (which actually appeared in publication later)
considered only the standard dense network, but defined many of the important
concepts in this chapter.  The full general proof was first published in the later
of our two papers \cite{AJP} (which appeared earlier).

With the exception of our new proof of Theorem 4.1 from our earlier paper \cite{JAP}, the material on Jafar networks
(Section 4.3) is due to Jafar.

\addtocontents{lof}{\protect\addvspace{20 pt}}

\chapter{Delay--rate tradeoff for ergodic interference alignment}

\section{Introduction}

Earlier, in Section 2.8, we discussed the ergodic interference alignment scheme
of Nazer, Gastpar, Jafar, and Viswanath (which we hereafter to refer
to as the NGJV scheme).

Recall that
we considered a model of communication over a finite field $\F_q$ of size 
$q$. Since the NGJV scheme (see Section 2.8) requires a particular $n \times n$ channel matrix
with entries in $\F_q \setminus \{ 0 \}$ to occur, the expected delay for a particular 
message is $(q-1)^{n^2}$ (which is roughly $q^{n^2}$ for large $q$).
It is clear that even for $n$ and $q$ relatively small,
this is not a practical delay.  (For $n=6$ and $q=3$, for example,
the delay is $2^{36} \approx 7 \times 10^{10}$.)

There are five questions we would like to try to answer:
  \begin{enumerate}
    \item Can we find a scheme that, like NGJV, achieves half the single-user
      rate, but at a lower time delay?
    \item Can we find schemes that have lower time delays than NGJV, even
      at some cost to the rate achieved?
    \item Specifically, which schemes from Question 2 perform well for
      situations where we have few users ($n$ small)?
    \item Specifically, which schemes from Question 2 perform well for
      situations where we have many users ($n \to \infty$)?
    \item What is a lower bound on the best time delay possible for any
      scheme achieving a given rate for a given number of users?
  \end{enumerate}
  
In Sections \ref{sec:schemes} and 5.4, we define a new set of schemes, called
JAP (Subsection 5.3.1), a beamforming extension JAP-B (Subsection 5.3.4),
and child schemes derived from them (Section 5.4) 
that have lower time delays than the NGJV scheme, for
a variety of different rates, answering Question 2.  As a special case,
examined in Subsection 5.3.5, the $\JAPB([n])$ schemes
achieve half the single-user rate, like NGJV, while reducing the time delay from
$q^{n^2}$ to $q^{(n-1)(n-2)}$, answering Question 1.  In Section
5.5, we answer Questions 3 and 4, by finding and analysing the JAP schemes
that perform the best for small and large $n$; the table on page \pageref{table} 
and the graphs on page \pageref{graphs} illustrate the best schemes for small $n$, and Theorems
5.6 and 5.7 give the asymptotic behaviour of the schemes.  Question 5 remains an
open problem (although we do give a lower bound on the delay achievable for the
schemes listed above).

Koo, Wu, and Gill \cite{Koo} have previously attempted to answer Questions 2 and 3.
We briefly outline their work at the end of Section 5.2.

\section{Model} \label{sec:model}

We give our results in the context of the finite-field interference network
with fast-fading.

Recall that the single-user capacity of this channel is $\log q - \HH(Z) =: \D(Z)$.
Extending our previous definition of degrees of freedom (Definition 1.13) to multiple users, we
have the following:

\addcontentsline{lof}{figure}{\numberline{\textbf{Definition 5.1}} Degrees of freedom (finite field network)}
\begin{definition}
  Given an achievable symmetric rate point $(r,r,\dots,r)$, we
  define the \emph{symmetric per-user degrees of freedom} to be
  $\DOF = r/\D(Z)$.
\end{definition}

In particular, it's clear that a single user can achieve $1$ degree of freedom.

We define the expected time delay for the NGJV scheme to be the average number of time slots we must wait
after seeing a channel matrix $\mt H$ until we see the corresponding matrix $\mt I-\mt H$.  The
time delay is geometrically distributed with parameter $p$, where $p$ is
the probability that the random channel matrix takes the value $\mt I - \mt H$.
The mean of this random variable is $1/p$; hence the problem of finding the average
time delay is reduced to a problem of finding the probability that a desired
matrix appears in the next time slot.  Since a channel matrix has $n^2$ entries,
each of which needs to take the correct one value of $q-1$ possible values, the
average time delay is
  \begin{equation} D = \frac{1}{\big(\frac{1}{q-1}\big)^{n^2}} = (q-1)^{n^2} \sim q^{n^2}. \label{eqdelay} \end{equation}
(Here and elsewhere, we write $f(q) \sim g(q)$ if $f(q)/g(q) \to 1$
as $q \to \infty$.)

As we mentioned before, this expected delay will be quite large even for modest values
of $q$ and $n$.
For this reason, we will concentrate on the delay exponent. 

\addcontentsline{lof}{figure}{\numberline{\textbf{Definition 5.2}} Delay exponent, delay coefficient}
\begin{definition} An interference alignment scheme with expected
delay $D \sim k q^T$ for some $k$ and $T$ has \emph{delay exponent}
$T$ and \emph{delay coefficient} $k$.  More specifically, we have
$T := \lim_{q\to\infty} \log D/\log q$.
\end{definition}

We regard reduction of the delay exponent as the key aim,
with the delay coefficient playing a secondary role.
In particular, the finite field model is in some sense an abstraction
of the model where channel coefficients are Gaussians quantized into a set of
size $q$, where $q$ is chosen large enough to reduce quantization
error.  When $q$ is large, the delay exponent $T$ dominates the
delay coefficient $k$ in determining size of the expected delay $D$.

From Theorem 2.14, we know that the NGJV schemes achieve $\DOF=1/2$,
and we have just shown in \eqref{eqdelay} it requires a delay
exponent of $n^2$.

%

We also mention some new schemes outlined in a recent paper by Koo, Wu, and
Gill \cite{Koo}.  They attempted to answer our Questions 2
and 3, by finding schemes -- we call them KWG schemes -- with lower delay than the NGJV scheme.
The KWG schemes  suggest matching
a larger class of matrices than simply $\mt{H}$ and $\mt{I} - \mt{H}$.
By analysing the hitting probability of an associated Markov chain,
they were able to reduce the expected delay, at the cost of a reduction
in rate (and hence degrees of freedom). However, their
schemes only affect the delay by a constant multiple, with
the shortest-delay scheme only reducing the delay to $0.64 (q-1)^{n^2} \sim 0.64 q^{n^2}$
with a sum-rate of $0.79 \D(Z)$  \cite[page 5]{Koo}.
That is, the KWG schemes only reduce the delay coefficient $k$, leaving
the delay exponent as $T = n^2$.  For modest $q$ and $n$ (say $q=3$, $n=6$, again),
we regard this delay as still impractical.
Since the KWG schemes achieve a lower rate than the NGJV scheme for the
same delay exponent, we shall only compare our results with the NGJV scheme.

\section{New alignment schemes: JAP and JAP-B} \label{sec:schemes}

\subsection{Three important observations}

In the NGJV scheme, all receivers were able to decode their message
by summing their two pseudomessages
  \[ \sum_{i=1}^n h_{ji}[t_0] \vc m_i + \sum_{i=1}^n h_{ji}[t_1] \vc m_i = \vc m_j \qquad \text{for $j=1,\dots,n$.} \]
In other words, the NGJV scheme relies on the linear dependence
  \[ \mt H[t_0] + \mt H[t_1] = \mt I . \]

This scheme has a large delay, because, given $\mt H[t_0]$, there is only
one matrix, $\mt H[t_1] = \mt I - \mt H[t_0]$, that can complete the linear dependence.
If there were a large collection of matrices that could complete the
dependence, then the delay would be lower.
  
We make three observations to this end.

First, while NGJV matches two channel states $\mt H[t_0]$ and
$\mt H[t_1]$ to form this linear dependence, we could use more than
two.  That is, if we have $K+1$ channel matrices
$\mt H[t_0], \mt H[t_1], \dots, \mt H[t_K]$ such that
  \[ \mt H[t_0] + \mt H[t_1] + \cdots + \mt H[t_K] = \mt I , \]
than receivers $j$ can sum the $K+1$ pseudomessages to
recover their message,
  \[ \sum_{i=1}^n h_{ji}[t_0] \vc m_i + \sum_{i=1}^n h_{ji}[t_1] \vc m_i +\cdots + \sum_{i=1}^n h_{ji}[t_K] \vc m_i = \vc m_j . \]
Note that the transmission of a single message is now split among
$K+1$ channel states, rather than $2$ as in NGJV.  This means
that the degrees of freedom of this scheme is reduced to $1/(K+1)$ from NGJV's
$1/2$.

Second, any linear combination of channel state matrices that
sums to $\mt I$ is sufficient.  That is, if there exist
scalars $\lambda_0, \lambda_1 \in \F_q$ such that
  \[ \lambda_0 \mt H[t_0] + \lambda_1 \mt H[t_1] = \mt I , \]
then all receivers can recover their message by forming the linear
combination of pseudocodewords
  \[ \lambda_0 \sum_{i=1}^n h_{ji}[t_0] \vc m_i + \lambda_1 \sum_{i=1}^n h_{ji}[t_1] \vc m_i = \vc m_j \qquad \text{for $j=1,\dots,n$.} \]
  
Third, NGJV requires all receivers to be able to decode their messages
at the same time.  However, receiver $j$ can decode its message if
  \[ \sum_{i=1}^n h_{ji}[t_0] \vc m_i + \sum_{i=1}^n h_{ji}[t_1] \vc m_i = \vc m_j \]
regardless of whether this equality holds for other receivers as well.  In other words,
receiver $j$ can decode its message if
  \begin{align*}
    h_{jj}[t_0] + h_{jj}[t_1] &= \one  \\
    h_{ji}[t_0] + h_{ji}[t_1] &= \zero \qquad \text{for $i\neq j$}.
  \end{align*}

Putting these three observations together, we make the following conclusion:
Let $\mt H[t_0], \mt H[t_1], \dots, \mt H[t_K]$ be a sequence of
$K+1$ channel state matrices.  If there exist scalars
$\lambda_0, \lambda_1, \dots, \lambda_K$ such that for some $j$
  \begin{align}
    \lambda_0 h_{jj}[t_0] + \lambda_1 h_{jj}[t_1] + \cdots + \lambda_K h_{jj}[t_K] &= \one  \\
    \lambda_0 h_{ji}[t_0] + \lambda_1 h_{ji}[t_1] + \cdots + \lambda_K h_{ji}[t_K] &= \zero \qquad \text{$i\neq j$},
  \end{align}
then receiver $j$ can recover its message by forming the linear combination
of pseudocodewords
  \[ \lambda_0 \sum_{i=1}^n h_{ji}[t_0] \vc m_i + \cdots + \lambda_K \sum_{i=1}^n h_{ji}[t_K] \vc m_i = \vc m_j . \]

In fact, we only require 
  \begin{align*}
    \lambda_0 h_{jj}[t_0] + \lambda_1 h_{jj}[t_1] + \cdots + \lambda_K h_{jj}[t_K] &\neq \zero  \\
    \lambda_0 h_{ji}[t_0] + \lambda_1 h_{ji}[t_1] + \cdots + \lambda_K h_{ji}[t_K] &= \zero \qquad \text{for $i\neq j$}.
  \end{align*}
since the coefficients $\lambda_k$ can be rescaled to make the top equation (5.2)
equal to $\one$ without breaking the bottom equation (5.3). 

Or, writing $\vc h_j^{\text{int}}$ for the interference vector
  \[ \vc h_j^{\text{int}} = (h_{j1}, \dots, h_{j\,j-1}, h_{j\,j+1}, \dots, h_{jn}) , \]
we can again rewrite the requirement as
  \begin{align}
    \lambda_0 h_{jj}[t_0] + \lambda_1 h_{jj}[t_1] + \cdots + \lambda_K h_{jj}[t_K] &\neq \zero  \label{one} \\
    \lambda_0 \vc h_j^{\text{int}}[t_0] + \lambda_1 \vc h_j^{\text{int}}[t_1] + \cdots + \lambda_K \vc h_j^{\text{int}}[t_K] &= \vc 0 . \label{two}
  \end{align}

If $n$ equalities like the \eqref{one} and \eqref{two} above hold, we say that ``receiver $j$
can recover its message from $\mt H[t_0$], $\mt H[t_1], \dots,$ $\mt H[t_K]$.''

The time delay of this scheme is $t_K - t_0$.

Recall that the average delay is the reciprocal of the probability that a
random matrix allows a receiver to recover its message.  Thus it will be useful
to note the following lemma.

\addcontentsline{lof}{figure}{\numberline{\textbf{Lemma 5.3}} Probability of message recovery}
\begin{lemma}
  Conditional on the interference vectors
  $\vc H_j^{\text{int}}[t_0],\dots, \vc H_j^{\text{int}}[t_K]$ being linearly
  dependent, the probability that receiver $j$ can recover its message
  is $1-O(q^{-1})$.
\end{lemma}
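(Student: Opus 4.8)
The plan is to reformulate message recovery as a simple non-containment condition for the ``direct-link'' vector inside a subspace determined by the interference vectors, and then to bound the probability of the bad event by counting points in a subspace. First I would collect the diagonal coefficients into a vector $\vc d_j := (h_{jj}[t_0], h_{jj}[t_1], \dots, h_{jj}[t_K]) \in \F_q^{K+1}$, and let $W \subseteq \F_q^{K+1}$ be the space of linear-dependence coefficients of the interference vectors,
\[ W := \Big\{ \vek\lambda = (\lambda_0, \dots, \lambda_K) \in \F_q^{K+1} : \sum_{k=0}^K \lambda_k \vc H_j^{\text{int}}[t_k] = \vc 0 \Big\}. \]
In this language, condition \eqref{two} says exactly $\vek\lambda \in W$, while condition \eqref{one} says $\langle \vek\lambda, \vc d_j\rangle \neq \zero$. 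Hence receiver $j$ can recover its message precisely when the linear functional $\vek\lambda \mapsto \langle \vek\lambda, \vc d_j\rangle$ fails to vanish identically on $W$; equivalently, recovery \emph{fails} if and only if $\vc d_j \in W^\perp$.

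Next I would exploit two structural observations. Conditioning on the interference vectors being linearly dependent forces $W \neq \{\vc 0\}$, so $\dim W \geq 1$ and therefore $\dim W^\perp = (K+1) - \dim W \leq K$, giving $|W^\perp| \leq q^K$. Moreover the diagonal entries $h_{jj}[t_k]$ are entries of the channel matrices distinct from the interference entries, so $\vc d_j$ is independent of the interference vectors; since the conditioning event is measurable with respect to the interference vectors alone, conditioning leaves $\vc d_j$ uniformly distributed on $(\F_q \setminus \{\zero\})^{K+1}$ and independent of $W$.

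Combining these, the conditional failure probability would be bounded, uniformly over every admissible realisation of $W$, by
\[ \Prob\big( \vc d_j \in W^\perp \,\big|\, W \big) = \frac{\big| W^\perp \cap (\F_q \setminus \{\zero\})^{K+1}\big|}{(q-1)^{K+1}} \leq \frac{q^K}{(q-1)^{K+1}} = \frac{1}{q-1}\Big(\frac{q}{q-1}\Big)^K , \]
which for fixed $K$ is $O(q^{-1})$ as $q \to \infty$. Averaging over $W$ (conditional on linear dependence) preserves the bound, so the recovery probability is $1 - O(q^{-1})$, as claimed.

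I expect the real work to be purely in the bookkeeping rather than in any deep estimate. The point to get exactly right is the existential quantifier: recovery requires merely \emph{some} $\vek\lambda \in W$ with $\langle \vek\lambda, \vc d_j\rangle \neq \zero$, and this is precisely the negation of $\vc d_j \perp W$, so no case analysis over individual dependences is needed. The only other care is that the all-nonzero-entries constraint on $\vc d_j$ can only shrink the numerator, so bounding $|W^\perp \cap (\F_q \setminus \{\zero\})^{K+1}|$ by $|W^\perp| \leq q^K$ is legitimate; everything else is the elementary fact that a uniform vector lands in a proper subspace with probability $O(q^{-1})$.
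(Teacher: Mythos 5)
Your proof is correct, but it takes a genuinely different route from the paper's. The paper does not work with the full solution space $W$: it simply fixes \emph{one} nonzero dependence $\vek\lambda$ among the interference vectors (guaranteed to exist by the conditioning), observes that recovery succeeds as soon as that single witness gives $\sum_k \lambda_k H_{jj}[t_k] \neq \zero$, and then computes the law of this sum exactly. Writing the distribution of each nonzero term $\lambda_k H_{jj}[t_k]$ as $(1+\rho)U - \rho\delta_\zero$ with $U$ uniform on $\F_q$ and $\rho = 1/(q-1)$, the $L$-fold convolution ($L$ the number of nonzero $\lambda_k$) is $(1-(-\rho)^L)U + (-\rho)^L\delta_\zero$, so the failure probability for that fixed witness is $\tfrac1q + O(q^{-L})$. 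Your argument instead characterises failure exactly as $\vc d_j \in W^\perp$ and bounds the probability by counting points of a subspace of dimension at most $K$, giving $q^K/(q-1)^{K+1}$. Each approach buys something: the paper's convolution computation yields the precise failure probability for a single dependence (and in particular shows it is exactly $0$ when $L=1$), whereas yours handles the existential quantifier cleanly, needs no case analysis on $L$, and avoids the convolution identity entirely. Both rest on the same implicit (and legitimate) independence of the diagonal entries from the interference entries, and both give $O(q^{-1})$ for fixed $K$. One small remark: your exact characterisation of failure as $\vc d_j \perp W$ is strictly sharper than what the lemma needs, since the paper only requires the one-witness lower bound on the success probability.
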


Note that we only use a matrix when we know for certain that it fulfills our desired criteria
-- we merely need to know what the probability is that the next matrix will do, in order
to calculate the expected delay.

\begin{proof}
  Since the interference vectors are linearly dependent, there exists
  a linear combination
    \[ \lambda_0 \vc H_j^{\text{int}}[t_0] + \lambda_1 \vc H_j^{\text{int}}[t_1] + \cdots + \lambda_K \vc H_j^{\text{int}}[t_K] = \vc 0 \]
  where $L > 0$ of the $\lambda_k$ are nonzero.  Thus, receiver $j$ can
  recover its message provided that the corresponding linear combination
    \begin{equation}
      \lambda_0 H_{jj}[t_0] + \lambda_1 H_{jj}[t_1] + \cdots + \lambda_K H_{jj}[t_K] \label{comb}
    \end{equation}
  is nonzero; call the probability that this happens $p$.
  
  When $\lambda_k \neq \zero$, then $\lambda_k H_{jj}[t_k] =: V_k$ is uniform on $\mathbb F_q \setminus \{\zero\}$,
  and when $\lambda_k = \zero$, then $\lambda_k H_{jj}[t_k] = \zero$ too.
  So \eqref{comb} is the sum of $L$ random variables $V_k$ IID uniform on $\mathbb F_q \setminus \{\zero\}$.
  We can write the mass function of each $V_k$ as
  $(1+\rho)U - \rho \delta_\zero$,  where $U$
  is uniform on $\mathbb F_q$, $\delta_\zero$ is a point mass on $\zero$, and
  $\rho = 1/(q-1)$.  Then the mass function of the $L$-fold convolution is
    \[ (1-(-\rho)^L)U + (-\rho)^L\delta_\zero . \]
    
  Hence, the probability that \eqref{comb} is zero
  is
    \[ 1 - p = \big( 1 - (-\rho)^L) \frac{1}{q} + (-\rho)^L = \frac{1}{q} + \frac{1}{q(q-1)^{L-1}} = O(q^{-1}). \]
  The result follows.
\end{proof}

\subsection{The scheme $\JAP(\vc a)$} \label{subsec:NGJVesque}

We now present our new scheme.

The idea behind the scheme is as follows: We start by seeing
some channel state $\mt H[t_0]$.  We then set $t_1$ to be the first
time slot that allows receivers $1$ to $a_1$ to recover their message
(where $a_1$ is decided on in advance).  Next, we set $t_2$ to be the first
time slot that allows receivers the next $a_2$ receivers to recover their message.
And so on, until all $n$ receivers have recovered their message.

Specifically, fix $K\leq n$ and a sequence $[a_1, a_2, \dots, a_K] =: \vc a$ of length
$K$ and weight $n$; that is, in the set
  \[ \mathcal A(n,K) := \left\{ \vc a \in \bZ_+^K : \sum_{k=1}^K a_k = n \right\}. \]
We write $A_k$ for the partial sums
$A_k := a_1 + a_2 + \cdots + a_k$ (so in particular $A_1 = a_1$
and $A_K = n$).

Then we define the scheme $\JAP(\vc a)$
as consisting of the following $K+1$ steps:
  \begin{description}
    \item[Step 0:] Start with a matrix $\mt H[t_0]$.
    \item[Step 1:] Set $t_1$ to be the first
      timeslot that allows the first $a_1$ receivers
      $1, 2, \dots, A_1$ to recover their message from
      $\mt H[t_0], \mt H[t_1]$.
    \item \mbox{} $\vdots$
    \item[Step \emph{k}:] Set $t_k$ to be the first
      timeslot that allows the next $a_k$ receivers
      $A_{k-1}+1, A_{k-1}+2, \dots, A_{k}$ to recover their message from
      $\mt H[t_0], \mt H[t_1], \dots, \mt H[t_k]$.
    \item \mbox{} $\vdots$
    \item[Step \emph{K}:] Set $t_K$ to be the first
      timeslot that allows the final $a_K$ receivers
      $A_{K-1}+1, A_{K-1}+2, \dots, A_{K}$ to recover their message from
      $\mt H[t_0], \mt H[t_1], \dots, \mt H[t_K]$.
  \end{description}
By the end of this process, all $n=A_K$ receivers have recovered their
message.

Since the message was split over $K+1$ time slots, the common rate of communication
is $\D(Z)/(K+1)$, which corresponds to $\DOF = 1/(K+1)$.

\subsection{Delay exponent of  JAP schemes}

We now examine the delay exponent for our new schemes.

\addcontentsline{lof}{figure}{\numberline{\textbf{Theorem 5.4}} Delay exponent of JAP scheme}
  \begin{theorem}
    Consider the $n$-user finite field interference network.
    Fix $K$ and $\vc a \in \mathcal A(n,K)$.  We use the scheme $\JAP(\vc a)$ as outlined
    above.  Then
      \begin{enumerate}
        \item the expected time for the $k$th round to take place
          is $D \sim q^{T_k(\vc a)}$, where 
          $T_k(\vc a) = a_k (n-k-1)$;
        \item the delay exponent for the whole scheme is
          \[ T(\vc a) := \max_{1\leq k\leq K} T_k(\vc a) = \max_{1\leq k\leq K} a_k (n-k-1) . \]
     \end{enumerate}
  \end{theorem}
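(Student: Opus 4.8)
The plan is to reduce the recovery condition to a statement in linear algebra, compute the per-round hitting probability, treat each round as a sequence of independent Bernoulli trials, and finally combine the rounds. First I would recall the characterisation derived from the three observations together with Lemma 5.3: receiver $j$ can recover its message from $\mt H[t_0],\dots,\mt H[t_k]$ exactly when there exist scalars $\lambda_0,\dots,\lambda_k$ with $\sum_l \lambda_l \vc H_j^{\text{int}}[t_l] = \vc 0$ and $\sum_l \lambda_l H_{jj}[t_l] \neq \zero$. By Lemma 5.3, conditional on the interference vectors $\vc H_j^{\text{int}}[t_0],\dots,\vc H_j^{\text{int}}[t_k]$ being linearly dependent, the direct-link condition holds with probability $1 - O(q^{-1})$. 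Hence, up to a factor that cannot affect the delay exponent, the event ``receiver $j$ recovers at step $k$'' coincides with the event that its $k+1$ interference vectors are linearly dependent.

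Second, and this is the delicate step, I would pin down the distribution of the relevant vectors at the start of round $k$. Fix a receiver $j$ with $A_{k-1} < j \leq A_k$. The times $t_1,\dots,t_{k-1}$ were chosen so as to make the receivers of the earlier rounds recoverable, and the selection rule at round $l<k$ inspects only rows $A_{l-1}+1,\dots,A_l$ of the candidate matrices, never row $j$. Since the rows of each channel matrix are independent and matrices at distinct times are independent, conditioning on the values $t_0,\dots,t_{k-1}$ leaves row $j$ of each accepted matrix uniform on $(\F_q\setminus\{\zero\})^{n-1}$ and mutually independent; moreover these data are independent across the $a_k$ receivers of round $k$. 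A short count shows that $k$ such vectors are linearly independent with probability $1 - O(q^{-1})$, in which case their span $W$ is a $k$-dimensional subspace of $\F_q^{n-1}$ containing $q^k\big(1 + O(q^{-1})\big)$ vectors with all coordinates nonzero.

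Third I would compute the hitting probability. Conditioned on a good configuration of the previous data (span $W$ of dimension $k$), a fresh matrix makes receiver $j$ recoverable precisely when its new interference vector $\vc H_j^{\text{int}}[t_k]$, uniform on $(\F_q\setminus\{\zero\})^{n-1}$, lands in $W$; this occurs with probability
\[ \frac{q^k\big(1+O(q^{-1})\big)}{(q-1)^{n-1}} \sim q^{-(n-k-1)} . \]
Because the rows governing the $a_k$ receivers are independent, one fresh matrix serves all of them simultaneously with probability $\sim q^{-a_k(n-k-1)}$, the Lemma 5.3 factor $\big(1-O(q^{-1})\big)^{a_k}$ not altering the exponent. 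The successive candidate matrices are i.i.d., so the round-$k$ waiting time is geometric with mean the reciprocal of this probability, namely $\sim q^{a_k(n-k-1)} = q^{T_k(\vc a)}$; averaging over the previous data changes nothing, since the bad configurations have probability $O(q^{-1})$ and only shorten the wait. This proves part (1). (For $k\geq n-1$ the span $W$ is already all of $\F_q^{n-1}$, so the round finishes in $O(1)$ expected slots, consistent with the non-positive value of $T_k(\vc a)$.)

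Finally, writing $D = \sum_{k=1}^K D_k$ for the total delay with $D_k \sim q^{T_k(\vc a)}$, and using that $K$ is fixed, we have $\max_k q^{T_k(\vc a)} \leq D \leq K \max_k q^{T_k(\vc a)}$, whence $\log D/\log q \to \max_{1\leq k\leq K} T_k(\vc a)$ as $q\to\infty$. By Definition 5.2 this is exactly the delay exponent, proving part (2). I expect the main obstacle to be the conditioning argument of the second step: one must verify carefully that the random stopping times $t_1,\dots,t_{k-1}$ do not bias the law of the current round's rows, and must track the $O(q^{-1})$ corrections — from the nonzero-entry constraint, from possible degeneracy of the earlier vectors, and from Lemma 5.3 — tightly enough to justify the asymptotic equivalence ``$\sim$'' rather than a mere order-of-magnitude bound.
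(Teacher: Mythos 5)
Your proposal is correct and follows essentially the same route as the paper: reduce recovery to linear dependence of the interference vectors via Lemma 5.3, compute the per-round hitting probability as the chance that a fresh uniform vector on $(\F_q\setminus\{\zero\})^{n-1}$ lands in the $k$-dimensional span of the previous ones (giving $\sim q^{-(n-k-1)}$, raised to the power $a_k$ by independence across receivers), and observe that the slowest round dominates the total delay. The only difference is that you make explicit the conditioning argument justifying that the stopping rule for earlier rounds does not bias the law of the current round's rows, a point the paper's proof takes for granted.
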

  
\begin{proof}
  Recall that the expected delay is the reciprocal of the probability
  the desired match can be made.

  Suppose we are about to begin stage $k$ of a scheme $\JAP(\vc a)$.
  By Lemma 5.3, the probability we can complete the stage is $1-O(q^{-1})$
  multiplied by the probability that the interference vectors
  for the next $a_k$ receivers 
    \[ \vc H_j^{\text{int}}[t_0], \vc H_j^{\text{int}}[t_1],\dots, \vc H_j^{\text{int}}[t_k] \]
  are linearly dependent.
  
  If the first $k-1$ interference vectors are already linearly dependent,
  then we are done (with high probability, by Lemma 5.3).  Assume they are not.
  
  Write $\mathcal S$ for the span of the first $k-1$ interference
  vectors for one of the desired $a_k$ receivers $j$, so
    \[ \mathcal S := \text{span} \{ \vc H_j^{\text{int}}[t_0],\dots, \vc H_j^{\text{int}}[t_{k-1}] \} . \]
  Since all possible interference vectors in $(\F_q\setminus\{\zero\})^{n-1}$
  are equally likely, the probability that the next matrix completes
  a linear dependence is
    \[ \frac{| \mathcal S \cap (\F_q\setminus\{\zero\})^k|}{|(\F_q\setminus\{\zero\})^{n-1}|}
         = \frac{q^k s}{(q-1)^{n-1}} , \]
  where $s$ is the proportion of vectors in $\mathcal S$ with no
  zero entries. By counting the possible coefficients in $\mathbb F_q$ used
  in the span, the inclusion--exclusion formula gives us
    \[ s = 1 - (K-1)\frac{1}{q} + O\left(\frac{1}{q^2}\right) = 1 - O(q^{-1}) . \]
  Hence, the desired probability is
    \[
      \frac{q^k s}{(q-1)^{n-1}} \big(1-O(q^{-1}) \big) = \frac{q^k \big( 1 - O(q^{-1}) \big)}{(q-1)^{n-1}} \big(1-O(q^{-1})\big) \sim q^{-(n-k-1)}  \]
  (where the $1-O(q^{-1})$ term comes from Lemma 5.3).
    
  This property that a linear dependence is completed must hold for all $a_k$ receivers, which happens with probability
  $(q^{-(n-k-1)})^{a_k} = q^{-a_k(n-k-1)}$, hence the first result.
  
  For the second result, note that, as $q \to \infty$, the delay
  is dominated by the delay for the slowest round.
\end{proof}

\subsection{Improving delay with beamforming: $\JAPB$}

Beamforming slightly improves the performance of $\JAP(\vc a)$ schemes,
combining ideas from the original Cadambe--Jafar interference alignment
\cite{CadambeJafar} with the JAP scheme.

In round $k$ we can guarantee that the interference matches up for receiver
$l := A_{k-1}+1$.
Each transmitter $i$, instead of repeating their message $w_i$, rather encodes
$( h_{li}[t_k]] )^{-1} h_{li}[t_0] \vec m_i$.  (Since the coefficient $h_{li}$ cannot be $\zero$
and $q$ is prime,
the inverse term certainly exists.)  The total received interferences
at receiver $l$ at times $t_0$ and $t_k$ are
both equal to $\sum_{i \neq l} h_{li}[t_0] \vec m_i$, so can be estimated and
cancelled.

We refer to such schemes that take advantage of beamforming
as $\JAPB(\vc a)$ schemes.

\addcontentsline{lof}{figure}{\numberline{\textbf{Theorem 5.5}} Delay exponent of JAP-B scheme}
\begin{theorem}
The delay exponent of a $\JAPB(\vc a)$ scheme indexed
by sequence $\vc{a}$ is
  \[ T_B(\vc a) := \max_{1\leq k\leq K} (a_k-1)(n-k-1) . \]
\end{theorem}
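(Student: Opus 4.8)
The plan is to reuse the proof of Theorem 5.4 almost verbatim, changing only one feature: in each round the beamforming disposes of one receiver deterministically, so the number of ``lucky draws'' required in round $k$ drops from $a_k$ to $a_k-1$. Everything else -- the span-counting estimate and the passage $q\to\infty$ -- is inherited unchanged.

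First I would recall the per-round bookkeeping from Theorem 5.4. In round $k$ we hold the times $t_0,\dots,t_{k-1}$ fixed and wait for a matrix $\mt H[t_k]$ that lets the $a_k$ receivers $A_{k-1}+1,\dots,A_k$ recover their messages. Each such receiver $j$ needs its interference vectors $\vc H_j^{\text{int}}[t_0],\dots,\vc H_j^{\text{int}}[t_k]$ to become linearly dependent, which by the span-counting argument of Theorem 5.4 together with Lemma 5.3 happens with probability $\sim q^{-(n-k-1)}$; in the unbeamformed case these events are independent across $j$ (distinct receivers involve disjoint channel coefficients), so the round has probability $\sim q^{-a_k(n-k-1)}$.

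Next I would invoke the beamforming construction preceding the theorem to remove one receiver for free. With $l:=A_{k-1}+1$ and each transmitter $i$ scaling its round-$k$ symbol by $(h_{li}[t_k])^{-1}h_{li}[t_0]$, the interference reaching receiver $l$ at times $t_0$ and $t_k$ coincides, so receiver $l$ cancels it and recovers $\vec m_l$ for \emph{every} choice of $\mt H[t_k]$ (up to the same $1-O(q^{-1})$ factor already present in Lemma 5.3). Thus receiver $l$ imposes no waiting-time constraint, and only the remaining $a_k-1$ receivers $A_{k-1}+2,\dots,A_k$ depend on the random matrix.

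The crux -- and the step I expect to require the most care -- is to confirm that the beamforming does not spoil the linear-dependence count for these $a_k-1$ receivers. The subtlety is that the round-$k$ rescaling is a single coordinate-wise (diagonal) transformation common to \emph{all} receivers, so a priori it could correlate their events. I would handle this by conditioning on the target coefficients $h_{li}[t_0],h_{li}[t_k]$: given these, the rescaling is a fixed invertible diagonal map, under which each receiver $j\neq l$ still sees interference vectors that are uniform on $(\F_q\setminus\{\zero\})^{n-1}$ and independent across $j$ (receiver $j$'s vectors involve the coefficients $h_{ji}$, which are independent of the target's coefficients $h_{li}$). A diagonal rescaling preserves both the span-dimension and the inclusion--exclusion estimate of Theorem 5.4, so each of these receivers completes a linear dependence with conditional probability $\sim q^{-(n-k-1)}$, conditionally independently; integrating out the conditioning leaves the round probability $\sim q^{-(a_k-1)(n-k-1)}$. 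One should also note that a receiver decoded in an earlier round $k'<k$ combines only the signals from times $t_0,\dots,t_{k'}$, so the round-$k$ beamforming at $t_k$ never enters its decoding and cannot corrupt it. Finally, exactly as in Theorem 5.4, sending $q\to\infty$ makes the total delay dominated by the slowest round, yielding $T_B(\vc a)=\max_{1\le k\le K}(a_k-1)(n-k-1)$.
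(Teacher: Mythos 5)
Your proposal is correct and follows essentially the same route as the paper's own (much terser) proof: beamforming removes one receiver per round for free, and the independence of the coefficients $h_{ji}$ from the target's coefficients $h_{li}$ guarantees the remaining $a_k-1$ receivers face exactly the counting problem of Theorem 5.4. Your conditioning argument showing that the diagonal rescaling preserves uniformity and conditional independence of the effective interference vectors is precisely the detail the paper compresses into its parenthetical remark, so nothing further is needed.
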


\begin{proof}
At each round, receiver $l = A_{k-1}+1$ will automatically be able
to recover its message, leaving the JAP scheme to align interference for
the other $a_k -1$ users.  (Independence of the coefficients $h_{ji}$ ensures
that the scheme still has the same problem to solve.)
\end{proof}

In particular, the $\JAPB$ scheme will always outperform the $\JAP$ scheme
with the same sequence $\vc a$.

\subsection{An interesting special case: $\JAPB([n])$}

An interesting special case of the JAP-B schemes is the case when
$K=1$ and $a_1 = n$; we call this scheme $\JAPB([n])$.

In this case we have $1/(K+1) = 1/2$ degrees of freedom for a rate of $\D(Z)/2$.
From Theorem 5, we see that the delay exponent is
  \[ T_B([n]) = (a_1-1)(n-1-1) = (n-1)(n-2) . \]

Effectively, the $\JAPB([n])$ scheme works by using beamforming to
automatically cancel transmitter $1$'s interference, then for users
$2,3,\dots,n$ requiring the existence of diagonal matrices $\mt D_0,
\mt D_1$ such that $\mt D_0 \mt H[t_0] + \mt D_1 \mt H[t_1] = \mt I$.

Note that this is the same rate as is achieved by the original NGJV
scheme, but that the delay exponent has been reduced from NGJV's
$n^2$ to 
  \[ (n-1)(n-2) = n^2 - (3n-2) . \]
For small $n$ in particular,
this is a worthwhile improvement (see figure, p.\ \pageref{graphs}).

\section{Child schemes: using time-sharing}

Another way to generate new alignment schemes is by time-sharing schemes
designed for a smaller number of users.

Call the NGJV, KWG, JAP and JAP-B schemes `parent schemes'.
Given a parent scheme for the $m$-user network, we can modify
for the any $n$-user network with $n>m$, giving what we call a
`child scheme'.

Specifically, we use resource division by time (see Subsection 2.6.1) to split the network into $\binom nm$ subnetworks,
each of which contains a unique collection of just $m < n$ of the users.
Within each of these $m$-user subnetworks, a parent scheme is used, while
the other $n-m$ transmitters remain silent.

Resource division by time is often known as \defn{time-division multiple access} or TDMA -- we
use that abbreviation in the rest of this chapter.

Such a child scheme clearly has the same delay exponent as the parent scheme,
with the rate -- and thus the degrees of freedom -- reduced by a factor of
$m/n$.  So an $m$-user JAP-B scheme shared between
$n$ users gives $\DOF = m/n(K+1)$.

In particular, time-sharing the NGJV schemes for smaller networks gives a
collection of schemes with a lower delay exponent $m^2 < n^2$ than the
main NGJV scheme for a given number of users, reducing the degrees of freedom
from $1/2$ to $m/2n$.  

(We are not aware that the idea of time-sharing NGJV schemes has previously
appeared in the literature.  However, the idea seems simple enough that we regard
this as the `current benchmark' against which we should compare our new schemes.)

Interestingly, it seems that child schemes derived from time-sharing
an NGJV-like $\JAPB([n])$ parent scheme are particularly effective, and
very often performs better than other $\JAPB$ schemes.  We discuss this
point further in the next section.

\section{Best schemes}\label{sec:best}

\subsection{General case}

Given a number of users $n$ and a desired number of degrees of freedom
$\DOF = 1/(K+1)$, we wish to find a scheme with the lowest delay exponent.

For $K=n-1$ or $n$, when $\DOF = 1/n$ or $1/(n+1)$, the best JAP-B schemes have delay
exponent $T_B([1,\dots,1,2]) = T_B([1,\dots,1,1,1]) = 0$. This is the same
delay exponent as TDMA, which has $\DOF = 1/n$ also.  Thus we need not
consider schemes with $K=n-1$ or $n$.

For $K \leq n-2$ the best parent scheme will be a
JAP-B scheme with parameter vector $\vc a \in \mathcal A(n,K)$.  We write $T(n,K)$
for this best delay exponent, that is
  \[
    T(n,K) := \min_{\vc a \in \mathcal A(n,K)} T_B(\vc a) 
            = \min_{\vc a \in \mathcal A(n,K)} \ \max_{1\leq k\leq K} (a_k - 1)(n-k-1) .
  \]

We can bound $T(n,K)$ as follows.

\addcontentsline{lof}{figure}{\numberline{\textbf{Theorem 5.6}} Bounds on best delay exponents}
\begin{theorem} \label{thm:genbounds}
  Fix $n$ and $K \leq n-2$.
  For $T(n,K)$ as defined above, we have the following bounds:
    \[ \frac{n}{K} (n-2) - (2n-K-2) \leq T(n,K) \leq \frac{n}{K}  (n-2) \]
\end{theorem}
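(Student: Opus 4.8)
The plan is to treat this as a discrete min--max resource-allocation (``balancing'') problem and to attack the two inequalities separately, exhibiting a good $\vc a$ for the upper bound and arguing over all $\vc a$ for the lower bound.

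First I would reparametrise. Writing $b_k := a_k - 1 \ge 0$ and $c_k := n-k-1$, the constraint $\vc a \in \mathcal A(n,K)$ becomes $\sum_{k=1}^K b_k = n-K$ with each $b_k$ a non-negative integer, and the quantity to be optimised is $T_B(\vc a) = \max_{1\le k\le K} b_k c_k$. Because $K \le n-2$ we have $c_k = n-k-1 \ge n-K-1 \ge 1$ for every $k$, so all the multipliers $c_k$ are positive, decreasing in $k$ from $c_1 = n-2$ down to $c_K = n-K-1$. Thus $T(n,K) = \min_{\vc b} \max_k b_k c_k$ is the best one can do by spreading the total mass $n-K$ so as to make the products $b_k c_k$ as nearly equal as possible.

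For the upper bound it suffices to exhibit one admissible $\vc a$, and I would take the near-uniform allocation: choose each $a_k \in \{\lfloor n/K\rfloor, \lceil n/K\rceil\}$ with exactly the right number of each so that $\sum_k a_k = n$, which is possible since $K \le n-2 < n$ forces $\lfloor n/K\rfloor \ge 1$. For this choice $a_k - 1 \le \lceil n/K\rceil - 1$ and $c_k \le n-2$ for every $k$, so $T_B(\vc a) \le (\lceil n/K\rceil - 1)(n-2)$. Since $\lceil n/K\rceil < n/K + 1$, i.e. $\lceil n/K\rceil - 1 < n/K$, this yields $T(n,K) \le T_B(\vc a) \le \tfrac{n}{K}(n-2)$, as claimed.

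For the lower bound I would fix an arbitrary $\vc a \in \mathcal A(n,K)$ and set $M := T_B(\vc a)$. By definition $b_k c_k \le M$ for each $k$, so $a_k = 1 + b_k \le 1 + M/c_k$ (the division being legitimate as $c_k \ge 1$). Summing over $k$ and using $\sum_k a_k = n$ gives $n - K \le M \sum_{k=1}^K 1/c_k$. The cleanest estimate is the crudest: each $c_k \ge n-K-1$, so $\sum_k 1/c_k \le K/(n-K-1)$, whence $M \ge (n-K)(n-K-1)/K$. It then remains to verify the purely algebraic inequality $(n-K)(n-K-1)/K \ge \tfrac{n}{K}(n-2) - (2n-K-2)$; clearing the (positive) denominator, the difference of the two sides is exactly $n-K$, which is non-negative (indeed $\ge 2$) since $K \le n-2$. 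This gives the stated lower bound for every $\vc a$, hence for the minimum $T(n,K)$.

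The routine-but-essential work is this last algebraic check together with the harmonic-sum estimate; the only genuine care I expect to need is bookkeeping — confirming $c_k \ge 1$ throughout, and that the floor/ceiling allocation is admissible with all parts at least $1$. It is worth noting that the crude bound $1/c_k \le 1/(n-K-1)$ already beats the target by an additive $(n-K)/K$, so the theorem's additive gap of $2n-K-2$ between its two bounds is not tight; a sharper treatment of the genuine harmonic sum $\sum_{j=n-K-1}^{n-2} 1/j$ would narrow it, but is unnecessary for the statement as given.
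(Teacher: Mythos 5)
Your proof is correct, and while it rests on the same underlying idea as the paper's -- balancing the products $(a_k-1)(n-k-1)$ and controlling the partial harmonic sum $\sum_k 1/(n-k-1)$ by $K$ times an extreme term -- the execution differs in two worthwhile ways. For the lower bound, the paper relaxes to real $a_k$ and invokes waterfilling to assert that the relaxed optimum equals $c=(n-K)/S(n,K)$, whereas you derive the same inequality $M\,S(n,K)\geq n-K$ directly by summing $a_k\leq 1+M/(n-k-1)$ over $k$ for an arbitrary admissible $\vc a$; this is more elementary, needs no optimality claim for the continuous problem, and (using the sharp bound $c_k\geq n-K-1$ rather than the paper's $n-K-2$) actually yields a lower bound stronger than the stated one by $(n-K)/K$, as your final algebraic check confirms. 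For the upper bound, the paper rounds up the waterfilling solution, $a_k-1=\lceil c/(n-k-1)\rceil$, which strictly speaking may violate $\sum_k a_k=n$ (a point the paper leaves implicit, though easily repaired by decreasing some entries); your near-uniform witness with $a_k\in\{\lfloor n/K\rfloor,\lceil n/K\rceil\}$ is manifestly admissible and the crude estimate $\max_k b_kc_k\leq(\max_k b_k)(\max_k c_k)\leq(\lceil n/K\rceil-1)(n-2)\leq\frac{n}{K}(n-2)$ suffices. What the paper's waterfilling construction buys in exchange is a better actual scheme (it matches large $a_k$ with small $n-k-1$, as the optimal allocations in the table for small $n$ do), even though both constructions certify the same stated bound.
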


The gap between the upper and lower bounds grows linearly with $n$.

The following lemma on partial harmonic sums will be useful.

\addcontentsline{lof}{figure}{\numberline{\textbf{Lemma 5.7}} Bounds on partial harmonic sums}
\begin{lemma}\label{lem:harm}
  Let $S(n,K)$ be the partial harmonic sum
    \[ S(n,K) := \sum_{k=1}^K \frac{1}{n - k - 1} = \frac{1}{n-2} +  \cdots + \frac{1}{n-K-1} . \]
  Then we have the bounds
    \[ \frac{K}{n-2} \leq S(n,K) \leq \frac{K}{n-K-2} . \]
\end{lemma}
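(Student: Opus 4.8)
The plan is to bound the sum term by term, exploiting the monotonicity of the summands in $k$. Writing the general term as $t_k := 1/(n-k-1)$, note that for $k$ ranging over $1, 2, \dots, K$ the denominator $n-k-1$ is a strictly decreasing function of $k$, and it is strictly positive throughout because the hypothesis $K \le n-2$ (inherited from the setting of Theorem \ref{thm:genbounds}) guarantees $n-K-1 \ge 1$. Hence $t_k$ is increasing in $k$, so the smallest summand is $t_1 = 1/(n-2)$ and the largest is $t_K = 1/(n-K-1)$.

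First I would establish the lower bound. Since each of the $K$ summands is at least the smallest one, $t_k \ge t_1 = 1/(n-2)$ for every $k$, summing over $k = 1, \dots, K$ gives $S(n,K) \ge K/(n-2)$, which is precisely the claimed left-hand inequality. Then, for the upper bound, I would use that each summand is at most the largest one, $t_k \le t_K = 1/(n-K-1)$, so that $S(n,K) \le K/(n-K-1)$. It remains only to weaken this to the stated form: because $n-K-2 \le n-K-1$, taking reciprocals of these positive quantities (valid when $K \le n-3$) yields $1/(n-K-1) \le 1/(n-K-2)$, and therefore $S(n,K) \le K/(n-K-2)$. In the boundary case $K = n-2$ the right-hand side is $+\infty$, so the inequality holds trivially.

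There is essentially no serious obstacle here: the whole argument is a comparison of each summand against the two extreme summands, combined with the fact that there are exactly $K$ of them. The only points requiring a moment's attention are the positivity of all denominators (secured by $K \le n-2$) and the harmless slackening of $n-K-1$ to $n-K-2$ in the upper bound. A sharper estimate via integral comparison of $\sum 1/(n-k-1)$ against $\int \mathrm{d}k/(n-k-1)$ is available, but it is unnecessary, since these crude bracketing bounds are exactly what is needed to feed into the proof of Theorem \ref{thm:genbounds}.
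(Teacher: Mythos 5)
Your proof is correct and takes essentially the same route as the paper's one-line argument: bound each of the $K$ summands by the smallest term $1/(n-2)$ and the largest term, then sum. You are in fact slightly more careful than the paper, which loosely describes $1/(n-K-2)$ as the largest summand when the largest is really $1/(n-K-1)$; your explicit weakening from $K/(n-K-1)$ to $K/(n-K-2)$, together with the remark on the boundary case $K=n-2$, tidies up that small imprecision.
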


Of course, tighter bounds are available by comparing $\sum 1/k$ to $\int 1/x\, \ud x$,
but this suffices for our needs.

\begin{proof}
  There are $K$ terms in the sum, the largest of which is $1/(n-K-2)$ and the
  smallest of which is $1/(n-2)$.
\end{proof}

We can now prove Theorem \ref{thm:genbounds}.

\begin{proof}[Proof of Theorem \ref{thm:genbounds}] 
  The value of $T(n,K)$ is lower-bounded by the value of the same
  minimisation problem relaxed to allow the $a_k$ to be real.
  That is,
    \begin{align*}
      T(n,K) &= \min_{\vc a \in \bZ_+^K : \sum_k a_k = n} \ \max_{1\leq k\leq K} (a_k - 1)(n-k-1) \\
          &\geq \min_{\vc a \in \bR_+^K : \sum_k a_k = n} \ \max_{1\leq k\leq K} (a_k - 1)(n-k-1) .
    \end{align*}
  The relaxed problem is solved by waterfilling, setting $a_k - 1 = c/(n-k-1)$.
  Requiring the weight of $\vc a$ to be $n$ forces
    \[c = \frac{n-K}{S(n,K)} \geq \frac{(n-K)(n-K-2)}{K} , \]
  where we have used Lemma \ref{lem:harm}.  Rearrangement gives the lower bound.
  
  An upper bound is obtained by using the same $c$ and taking
   \[ a_k - 1 = \left\lceil \frac{c}{n-k-1} \right\rceil \leq \frac{c}{n-k-1} + 1 . \]
  This gives
    \begin{align*}
      T_B(\vc a) &\leq c + \max_k (n-k-1) \\
                  &= \frac{n-K}{S(n,K)} + (n-2) \\
               &\leq \frac{(n-K)(n-2)}{K} + (n-2) ,
    \end{align*}
  where we have used Lemma \ref{lem:harm}.
  Rearrangement gives the upper bound.
\end{proof}

\subsection{Few users: small $n$}

For small values of $n$, we can find the best parent JAP-B schemes
by hand.  (The task
is simplified by noting that the optimal $a_k$ will be nonzero
and increasing in $k$.)
The table below gives the delay exponents of the
best JAP-B schemes for $n = 3, \dots, 8$ and $K\leq n-2$. 

\setlength{\tabcolsep}{4pt}

\begin{table}[hb]\label{table}
\begin{center}

Best $\JAPB(\vc a)$ schemes for small values of\\
$n$ and $K$, and their delay exponents. 
\bigskip
\begin{small}
\begin{tabular}{c|cccccc}
\toprule
             & $n=3$ & $n = 4$ & $n= 5$ & $n = 6$ & $n=7$ & $n=8$ \\
\hline 
$K = 1$      & $2$   & $6$     & $12$  & $20$  & $30$  & $42$  \\[-0.15cm]
$\DOF = 1/2$ & $[3]$ & $[4]$   & $[5]$ & $[6]$ & $[7]$ & $[8]$  \\[0.1cm]
$K = 2$      & 0     & 2       & 4  & 8 & 12        & 18 \\[-0.15cm]
$\DOF = 1/3$ & TDMA  & $[1,3]$ & $[2,3]$     & $[3,3]$ & $[3,4]$ & $[4,4]$ \\[0.1cm]
$K = 3$      &       & $0$     & $2$ & $4$   & $6$ & $8$ \\[-0.15cm]
$\DOF = 1/4$ &       & TDMA    & $[1,1,3]^*$ & $[1,2,3]^*$ & $[2,2,3]$ & $[2,3,3]$ \\[0.1cm]
$K = 4$      & & & $0$ & $2$ & $4$ & $6$ \\[-0.15cm]
$\DOF = 1/6$ & & & TDMA & $[1,1,1,3]^*$ & $[1,1,2,3]^*$ & $[1,2,2,3]^*$ \\[0.1cm]
$K = 5$ & & & & $0$ & $2$ & $4$ \\[-0.15cm]
$\DOF = 1/6$ & & & & TDMA & $[1,1,1,1,3]^*$ & $[1,1,1,2,3]^*$ \\[0.1cm]
$K = 6$ & & & & & $0$ & $2$ \\[-0.15cm]
$\DOF = 1/7$ & & & & & TDMA & $[1,1,1,1,1,3]^*$ \\[0.1cm]
$K = 7$ & & & & & & $0$ \\[-0.15cm]
$\DOF = 1/8$ & & & & & & TDMA \\
\bottomrule
\end{tabular}
$^*$Asterisks mean that the choice of $\vc a$ achieving 
this delay exponent is non-unique.
\end{small}
\end{center}
\end{table}
\setlength{\tabcolsep}{6pt}
\begin{figure*}[pth] \label{graphs}
	\begin{center}
	  \noindent Delay-rate tradeoff for TDMA, NGJV and JAP-B \\
	  schemes for $n$-user interference networks
	  
	  \vspace{0.5cm}
		\includegraphics[width=\textwidth]{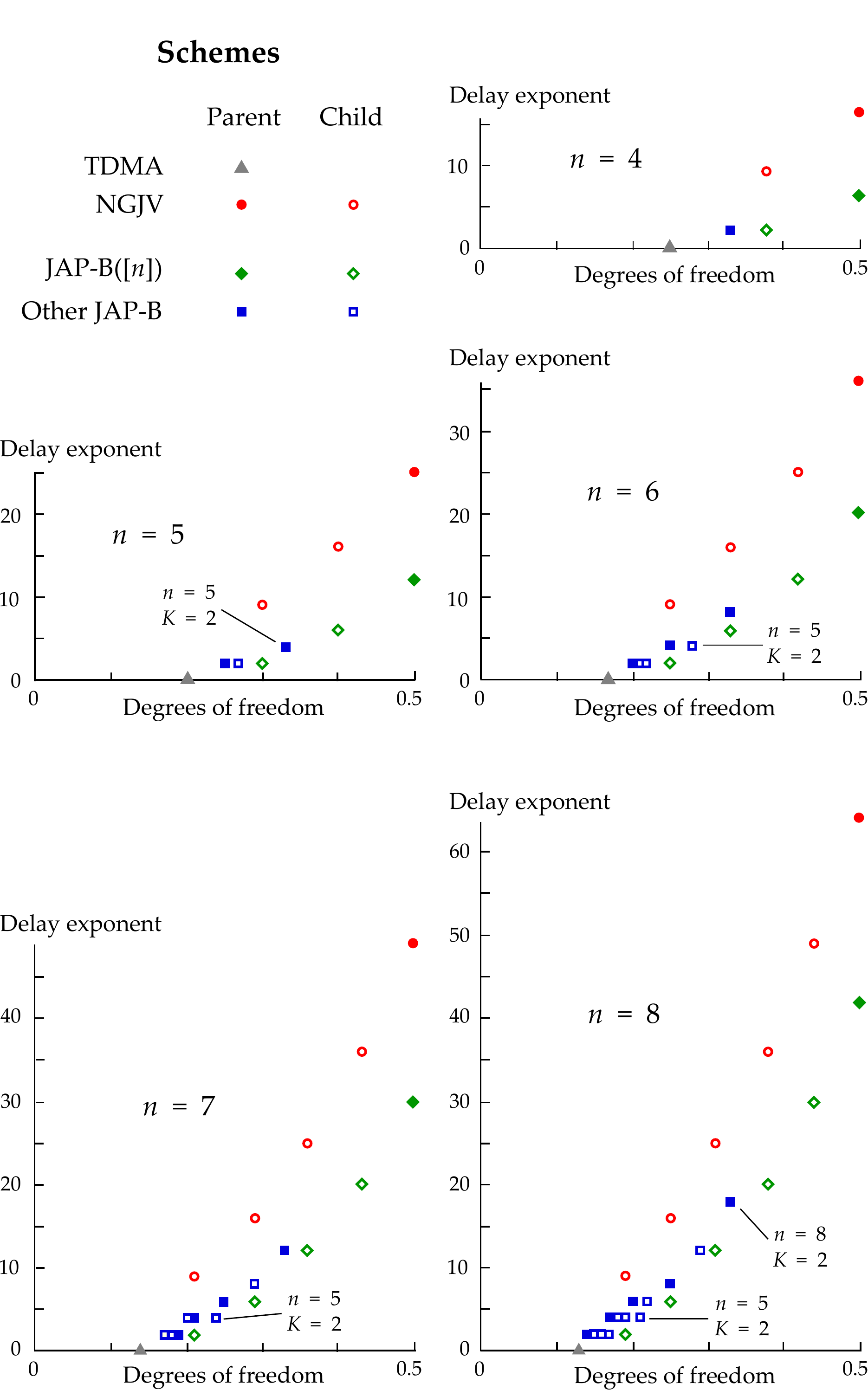}
  \end{center}
\end{figure*}

We can also consider child schemes based on parent JAP-B schemes.
The figure on page \pageref{graphs} plots the performance of NGJV and all JAP-B schemes, as well
as child schemes derived from them, for $n=3, \dots, 7$.  Note
that for many values of $n$ and $\DOF$, the scheme with the lowest
delay exponent is $\JAPB([n])$ or one of the child schemes derived
from it.  (Note however, that the the parent schemes with $n=5, K=2$ and
$n=8, K=2$, as well as child schemes derived from them, outperform 
$\JAPB([n])$ for some degrees of freedom.)

\subsection{Many users: $n \to \infty$}

We now consider the performance of schemes in the many-user limit
$n \to \infty$.

In particular, we are interested in two limiting regimes, specifying
how the degrees of freedom $\DOF(n)$ should scale with the number of users $n$.
In regime I the per-user rate is held constant; in regime II the sum-rate is kept
constant, so each user's individual rate falls like $1/n$.
  \begin{itemize}
    \item \textbf{Regime I}, where we hold the degrees of freedom
      constant as $n \to \infty$.  That is, we want to communicate at fixed
      fraction of the single-user rate, as in the NGJV scheme.
      In this regime I, we take $\DOF(n) = \alpha$ for some
      $\alpha \in (0,1/2]$.  (The NGJV scheme corresponds to
      $\alpha = 1/2$.)
    \item \textbf{Regime II}, where we allow the degrees of freedom to
      fall as the number of users increases, scaling like $1/n$.
      That is, we want to communicate at a fixed multiple of
      the rate allowed by resource division schemes like TDMA.
      In regime II, we take $\DOF(n) = \beta/n$ for some
      $\beta \geq 1$. (TDMA corresponds to $\beta =1$.)
  \end{itemize}

First, we consider how parent JAP-B schemes perform in the
many-user limit.

\addcontentsline{lof}{figure}{\numberline{\textbf{Theorem 5.8}} Scaling of $\JAPB$ parent scheme delay exponents}
\begin{theorem}
  For regimes I and II, as above, and as $n \to \infty$, we have the following
  results for the delay exponent $T(n)$ of parent JAP-B schemes:
  \begin{itemize}
    \item \textbf{Regime I:} Fix $\alpha\in (0,1/2]$.  Then the delay exponent
      for $\DOF(n) = \alpha$ scales quadratically like
        \[ T(n) \sim \frac{1}{\lfloor 1/\alpha \rfloor - 1} n^2 . \]
    \item \textbf{Regime II:} Fix $\beta > 1$.  Then the delay exponent
      for $\DOF(n) = \beta/n$ scales linearly, in that $T(n) = O(n)$, or more specifically,  
       \[ \left( \beta  + \frac{1}{\beta} - 2 \right)n - o(n) \leq T(n) \leq \beta n + o(n). \]
  \end{itemize}
\end{theorem}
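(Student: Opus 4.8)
The plan is to reduce both asymptotic statements to the non-asymptotic two-sided bound on $T(n,K)$ already established in Theorem \ref{thm:genbounds},
\[
  \frac{n}{K}(n-2) - (2n-K-2) \;\le\; T(n,K) \;\le\; \frac{n}{K}(n-2),
\]
by substituting the value of $K$ forced by the degrees-of-freedom constraint in each regime and then extracting the leading-order behaviour as $n\to\infty$. First I would translate each rate requirement into a choice of $K$. Since a parent $\JAPB(\vc a)$ scheme with $\vc a\in\mathcal A(n,K)$ has $\DOF = 1/(K+1)$, and this quantity only takes reciprocal-integer values, the constraint is read as achieving degrees of freedom \emph{at least} the stated target. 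In Regime I this means $1/(K+1)\ge\alpha$, i.e.\ $K+1\le 1/\alpha$; taking the largest admissible $K$ (which minimises the delay exponent) gives $K=\lfloor 1/\alpha\rfloor-1$, a constant independent of $n$. In Regime II the requirement $1/(K+1)\ge\beta/n$ forces $K+1\le n/\beta$, so the optimal choice is $K=\lfloor n/\beta\rfloor-1\sim n/\beta$, which now grows linearly with $n$. In both cases $K\le n-2$ for $n$ large (using $\alpha\le 1/2$, resp.\ $\beta>1$), so Theorem \ref{thm:genbounds} applies.

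For Regime I, with $K$ held fixed, both sides of the bound have leading term $\frac{n}{K}(n-2)=\frac{n^2}{K}-\frac{2n}{K}$, while the additive correction $2n-K-2$ appearing in the lower bound is only $O(n)$. Hence $T(n)=\frac{n^2}{K}+O(n)\sim n^2/K = n^2/(\lfloor 1/\alpha\rfloor-1)$, as claimed; this step is essentially immediate. (As a consistency check, $\alpha=1/2$ gives $K=1$ and $T(n)\sim n^2$, matching the delay exponent $(n-1)(n-2)$ of $\JAPB([n])$ from Section 5.3.5.)

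For Regime II the bookkeeping is more delicate and constitutes the main obstacle, because the subleading $O(n)$ terms in Theorem \ref{thm:genbounds} now contribute at the same order as the answer itself. Writing $K=n/\beta-1-\delta_n$ with $\delta_n=\{n/\beta\}\in[0,1)$ the integer-rounding remainder, I would expand $\frac{n}{K}=\beta\bigl(1-\beta(1+\delta_n)/n\bigr)^{-1}=\beta+O(1/n)$ by a geometric series, yielding the upper bound $\frac{n}{K}(n-2)=\beta n+O(1)$ and hence $T(n)\le\beta n+o(n)$. For the lower bound I would combine this with the exact identity $2n-K-2=(2-1/\beta)\,n+O(1)$ to obtain
\[
  T(n)\;\ge\;\beta n - \Bigl(2-\tfrac{1}{\beta}\Bigr)n + O(1) \;=\; \Bigl(\beta+\tfrac{1}{\beta}-2\Bigr)n + O(1).
\]
The key point to verify is that rounding $K$ by an integer perturbs $n/K$ only at order $1/n$: since $\frac{d}{dK}(n^2/K)=-n^2/K^2=-\beta^2+o(1)$ at $K\sim n/\beta$, a unit change in $K$ moves $n^2/K$ by $O(1)$, so all rounding effects are absorbed into the $o(n)$ error term, and $\beta>1$ keeps the denominator $K$ bounded away from $0$ relative to $n$. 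Assembling the two estimates gives $(\beta+\tfrac1\beta-2)n-o(n)\le T(n)\le\beta n+o(n)$, which is the stated Regime II conclusion, and in particular shows $T(n)=O(n)$.
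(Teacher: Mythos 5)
Your proposal is correct and follows essentially the same route as the paper: both proofs fix $K=\lfloor 1/\alpha\rfloor-1$ (Regime I) or $K=n/\beta+O(1)$ (Regime II) from the degrees-of-freedom constraint and then substitute into the two-sided bound of Theorem \ref{thm:genbounds}, extracting $T(n)\sim n^2/K$ in the first case and the linear bounds $(\beta+1/\beta-2)n-o(n)\le T(n)\le\beta n+o(n)$ in the second. Your treatment of the integer-rounding of $K$ in Regime II is somewhat more explicit than the paper's, but the argument is the same.
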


\begin{proof}
  For regime I, we need $\DOF = 1/(K+1) \geq \alpha$, so we 
  take
    \[ K = \left\lfloor \frac{1}{\alpha} - 1 \right\rfloor = \left\lfloor \frac{1}{\alpha} \right\rfloor - 1 . \]
  But the
  general bounds on delay exponents from Theorem \ref{thm:genbounds} tell
  us that for fixed $K$ we have $T(n,K) \sim \frac{1}{K} n^2$.
  The result follows.
  
  For regime II, $1/(K+1) = \DOF = \beta/n$, so we need
    \[ K = \left\lfloor \frac{n}{\beta} - 1 \right\rfloor = \frac{n}{\beta} + O(1) . \]
  Hence
    \[ \frac nK = \frac{n}{n/\beta + O(1)} = \beta + o(1). \]
  Putting this into the bounds from Theorem 5.6 gives
    \[
      (\beta + o(1))(n-2) - \left(2n - \frac{n}{\beta} + O(1) \right) \leq T(n) 
           \leq (\beta + o(1))(n-2)   . \]
  Rearranging gives the result.
\end{proof}

Note that in regime I with $\alpha = 1/2$, we get $T(n) \sim n^2$, the
same as NGJV.
  
We noted previously that child schemes produced by sharing the parent
scheme $\JAPB([m])$ were particularly effective.  The following theorem
shows this.

\addcontentsline{lof}{figure}{\numberline{\textbf{Theorem 5.9}} Scaling of $\JAPB([m])$ child scheme delay exponents}
\begin{theorem}
  For regimes I and II, as above, and as $n \to \infty$, we have the following
  results for the delay exponent $T(n)$ of child schemes based on $\JAPB([m])$ parent schemes:
  \begin{itemize}
    \item \textbf{Regime I:} Fix $\alpha\in (0,1/2]$.  Then the delay exponent
      for $\DOF(n) = \alpha$ scales quadratically, in that
        \[ T(n) = 4\alpha^2n^2 - 6\alpha n + O(1) \sim 4 \alpha^2 n^2 . \]
    \item \textbf{Regime II:} Fix $\beta > 1$.  Then the delay exponent
      for $\DOF(n) = \beta/n$ is constant, in that
       \[ T(n) = (\lfloor 2 \beta \rfloor - 1)(\lfloor 2 \beta \rfloor - 2) . \]
  \end{itemize}
\end{theorem}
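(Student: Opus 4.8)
The plan is to reduce the entire statement to two facts about child schemes that have already been established, and then to solve for the right subnetwork size $m$ in each regime. Recall that a child scheme obtained by time-sharing the parent $\JAPB([m])$ among $n$ users has per-user degrees of freedom $\DOF = m/(2n)$ — this is the parent value $1/(K+1) = 1/2$ for $K=1$, reduced by the time-sharing factor $m/n$ — and that it inherits the parent's delay exponent, which by Theorem 5.5 (specialised as in Subsection 5.3.5) equals $T_B([m]) = (m-1)(m-2)$. So I would first record this dictionary explicitly, after which the theorem is purely a matter of picking the integer $m$ realising the prescribed $\DOF(n)$ and substituting it into $(m-1)(m-2)$.

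For \textbf{Regime I}, I would set the target $\DOF(n) = \alpha$ equal to $m/(2n)$ and solve $m = 2\alpha n$, taking the integer choice $m = \lceil 2\alpha n \rceil$ so that $\DOF \geq \alpha$ with the smallest admissible subnetwork. Writing $m = 2\alpha n + O(1)$ and substituting gives
\[ T(n) = (m-1)(m-2) = (2\alpha n - 1)(2\alpha n - 2) + O(1) = 4\alpha^2 n^2 - 6\alpha n + O(1), \]
and hence $T(n) \sim 4\alpha^2 n^2$. The expansion $m^2 - 3m + 2$ produces the leading $4\alpha^2 n^2$ and the linear term $-6\alpha n$ directly, with the constant $+2$ absorbed into the $O(1)$.

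For \textbf{Regime II}, the decisive observation is qualitative: the target $\DOF(n) = \beta/n$ forces $m/(2n) = \beta/n$, i.e.\ $m = 2\beta$, which does \emph{not} grow with $n$. I would take $m = \lfloor 2\beta \rfloor$, choosing the largest integer subnetwork size whose sum-rate does not exceed the target $\beta\,\D(Z)$; this rounding convention is exactly what yields a floor rather than a ceiling. Since $m$ is then a fixed constant, the delay exponent is constant in $n$:
\[ T(n) = (m-1)(m-2) = (\lfloor 2\beta \rfloor - 1)(\lfloor 2\beta \rfloor - 2). \]

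The computations are routine; the only genuine obstacle is the bookkeeping around integrality of $m$. In Regime II this is benign — once $m = \lfloor 2\beta \rfloor$ is fixed the result is immediate because $m$ is constant. In Regime I it is more delicate: writing $m = 2\alpha n + \theta_n$ with $\theta_n \in [0,1)$ gives a cross term $4\alpha n\,\theta_n$ which is $O(n)$ but with an $n$-dependent fractional coefficient, so the \emph{exact} identity $T(n) = 4\alpha^2 n^2 - 6\alpha n + O(1)$ holds cleanly along the subsequence $2\alpha n \in \bZ$, while in general one recovers the stated leading asymptotic $T(n) \sim 4\alpha^2 n^2$ with the linear coefficient fluctuating within a bounded band. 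I would present both regimes in parallel, stating the $\JAPB([m])$-child dictionary once at the start and then merely specialising the choice of $m$, flagging the integrality caveat where the exact linear coefficient is asserted.
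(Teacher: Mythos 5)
Your proposal is correct and follows essentially the same route as the paper: state the dictionary $\DOF = m/(2n)$, $T = (m-1)(m-2)$ for a $\JAPB([m])$ child scheme, solve for $m$ in each regime, and substitute. The only differences are cosmetic --- the paper takes $m = \lfloor 2\alpha n\rfloor$ where you take the ceiling, which does not affect the asymptotics --- and your explicit caveat that the exact linear coefficient $-6\alpha n$ holds cleanly only when $2\alpha n$ is an integer is a fair point that the paper's own proof passes over silently.
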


\begin{proof}
  Recall from Section 5.4 that sharing the scheme $\JAPB([m])$ amongst
  $n$ users gives $\DOF = m/2n$ for delay exponent $T = (m-1)(m-2)$.
  
  For regime I, note that $m/2n = \DOF(n) = \alpha$, so we need to
  take $m = \lfloor 2 \alpha n \rfloor$, giving
  $T(n) = (\lfloor 2 \alpha n \rfloor - 1)(\lfloor 2 \alpha n \rfloor - 2)$.
  The result follows.
  
  For regime II, note that $m/2n = \DOF(n) = \beta/n$, so we need to
  take $m = \lfloor 2 \beta \rfloor $, giving
  $T(n) = (\lfloor 2 \beta \rfloor - 1)(\lfloor 2 \beta \rfloor - 2)$.
\end{proof}

Note that asymptotically, this means that in both regimes child schemes
from $\JAPB([m])$ parent schemes are asymptotically more effective than any other
parent scheme.  This is because
  \[ 4\alpha^2n^2 \leq \frac{1}{\lfloor 1/\alpha \rfloor - 1} n^2 \]
(with inequality unless $\alpha = 1/2$, when no child scheme will achieve
the desired degrees of freedom) and any constant is less than $(\beta - 2)n$ 
for $n$ sufficiently large.

Note also that by the same argument as the above proof, sharing the
NGJV parent scheme gives $T(n) = 4\alpha^2 n^2$ in regime I, which is less
good than sharing $\JAPB([m])$, but the same to first-order terms.

\section{Conclusion}

In the Secion 5.1, the questions we attempted to answer were:
  \begin{enumerate}
    \item Can we find a scheme that, like NGJV, achieves half the single-user
      rate, but at a lower time delay?
    \item Can we find schemes that have lower time delays than NGJV, even
      at some cost to the rate achieved?
    \item Specifically, which schemes from Question 2 perform well for
      situations where we have few users ($n$ small)?
    \item Specifically, which schemes from Question 2 perform well for
      situations where we have many users ($n \to \infty$)?
    \item What is a lower bound on the best time delay possible for any
      scheme achieving a given rate for a given number of users?
  \end{enumerate}

In answer to question 2, we defined the new sets of parent schemes JAP and
the even more effective JAP-B, and also derived child schemes from them. We noted
that these had lower time delays -- and sometimes significantly lower -- at the costs
of some loss in rate (or equivalently degrees of freedom). We saw that the
child schemes from $\JAPB([n])$ schemes were often particularly effective.

In answer to question 1, we noted that the $\JAPB([n])$ schemes keep
the degrees of freedom to $1/2$ while reducing the delay exponent
from $n^2$ to $(n-1)(n-2) = n^2 - (3n - 2)$.
  
In answer to Questions 3 and 4, we explicitly found the best schemes
$\JAPB$ schemes for $n \leq 8$, and analysed the asymptotic behaviour
of our schemes as $n \to \infty$.

Question 5 remains an open problem.

\section*{Notes}
\addcontentsline{toc}{section}{Notes}

This chapter is joint work with Oliver Johnson and Robert Piechocki,
and is based on our paper \cite{JAPdelay}.

The NJGV scheme is due to Nazer, Gastpar, Jafar, and Vishwanath \cite{Nazer}.

The delay--rate tradeoff problem was first studied by Koo, Wu, and Gill \cite{Koo}.

\addtocontents{lof}{\protect\addvspace{20 pt}}

\chapter[Interference, group testing, and channel coding][Interference, group testing, and channel coding]{Interference, group testing, and channel coding}

\section{Building the interference graph}

In Section 2.6, we looked at resource division schemes such as resource division by time (Subsection 2.6.1).  We noted that for a user to communicate without interference, it was necessary for all of the other users to stand idle.  However, if not every receiver gets interference for every transmitter, than it might be possible for more than one user to communicate through the network at once.

For concreteness, consider the $N$-user finite field interference network with fixed fading, so
  \[ Y_{jt} = \sum_{i=1}^N h_{ji} x_{it} + Z_t \pmod{q} . \]
For the moment, so that we can concentrate on the interference, we will assume that the noise $Z$ is $\zero$ with probability $1$, so
  \[ Y_{jt} = \sum_{i=1}^N h_{ji} x_{it} \pmod{q} . \]

If we have for some $i\neq j$ that $h_{ji} = h_{ij} = \zero$, then both users $i$ and $j$ can communicate simultaneously and interference-free.  (Recall that we use the word ``user'' to mean a transmitter--receiver pair.  So we mean that if $h_{ji} = h_{ij} = \zero$, then transmitter $i$ can communicate to receiver $i$ and simultaneously transmitter $j$ can  communicate to receiver $j$, both links without interference.)

More generally, we can build the \defn{interference graph} to show which users interfere with which.

\addcontentsline{lof}{figure}{\numberline{\textbf{Definition 6.1}} Interference graph}
\begin{definition}
  The \defn{interference graph} of an $N$-user interference network with
  fixed fading has vertex set
    \[ \mathcal V := \{ 1,2,\dots, N\} \]
  and edge set
    \[ \mathcal E := \{ ij : h_{ji} \neq \zero
                               \text{ or } h_{ij} \neq \zero \} . \]
\end{definition}

The figure below shows the nonzero links in a network (transmitters on the left; receivers on the right), and
the interference graph derived from it.

  \begin{center}
    \includegraphics[scale=0.89]{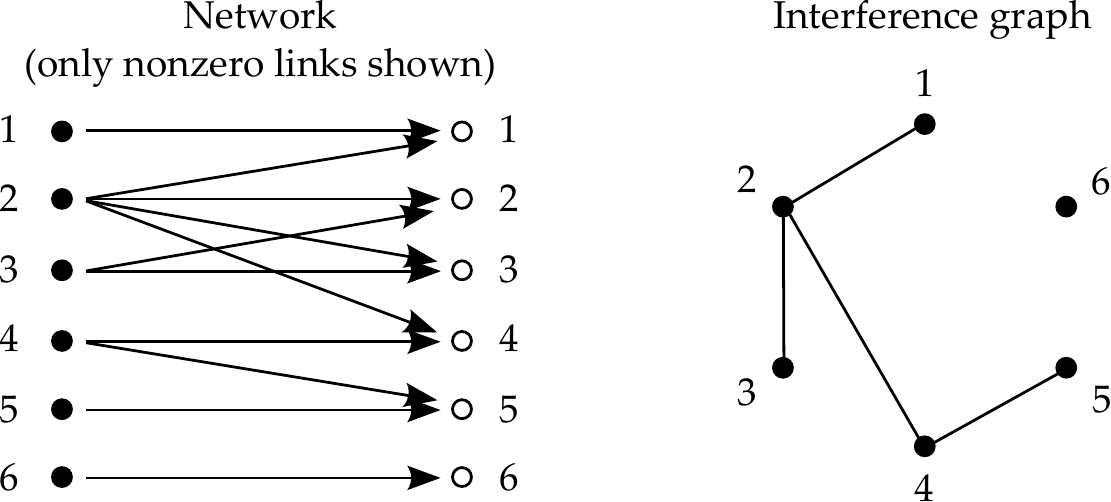}
  \end{center}

So two users $i$ and $j$ can communicate simultaneously and interference-free if they are not joined by an edge in the interference graph.  More generally, any independent set of the interference graph can communicate at the same time.  (Recall that a set of vertices $\mathcal U \subseteq \mathcal V$ is called \defn{independent} if no edge in the graph joins one vertex in $\mathcal U$ to another.)

Optimal use of such a resource division strategy requires operating using only maximal independent sets.  In particular, the maximum degrees of freedom achievable by a resource division scheme is $\dof = \alpha$, where the \defn{interference number} $\alpha$ of the graph is the size of the largest independent set.

  \begin{center}
    \includegraphics[scale=0.89]{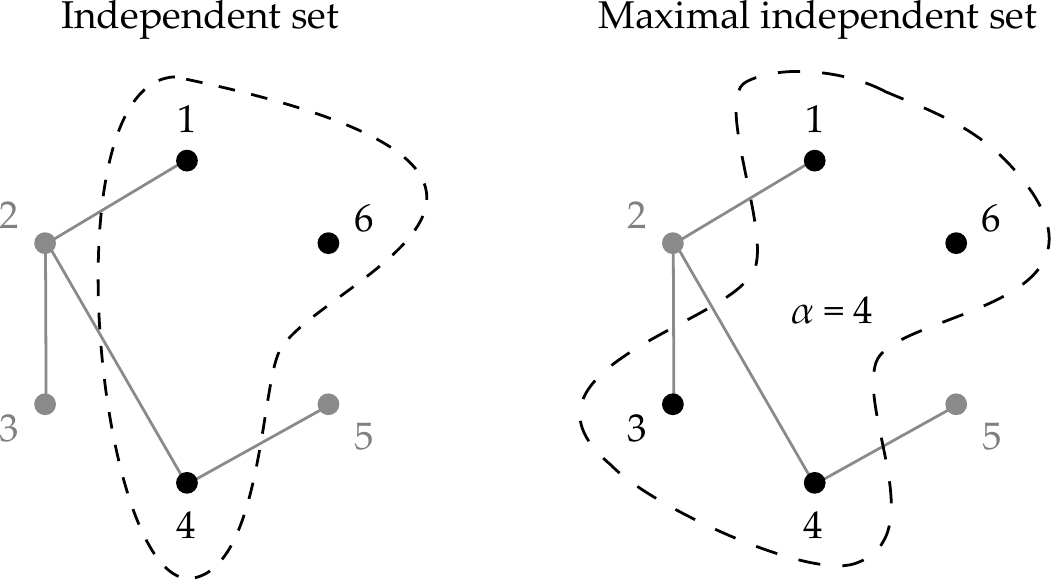}
  \end{center}
  
Given such a network, the users need to find out which other users they interfere with.  Here is one method they could use to do so: Let each transmitter $i$ choose a random nonzero message $m_i \in \mathbb{F}_q \setminus \{\zero\}$.  Then, for each of $T$ timeslots, each transmitter $i$ either sends $m_i$ or just sends the empty message $\zero$.  In other words, let $x_{it} = \one$ denote that transmitter $i$ communicates in timeslot $t$, and $x_{it} = \zero$ that she does not.  Then each receiver $j$ receives the signal
  \[ y_{jt} = \sum_{i=1}^N h_{ji} x_{it}m_i . \]

For large $q$, the probability that $y_{jt} = \zero$ when at least one transmitter has $x_{it} = \one$ is small -- for the moment we neglect this.  Then if $y_{jt} = \zero$, receiver $j$ knows that $h_{ji} = \zero$ for all $i$ with $x_{it} = \one$; conversely, if $y_{jt} \neq \zero$, receiver $j$ knows that $h_{ji} \neq \zero$ for at least one $i$ with $x_{it} = \one$.

Receiver $j$ wants to discover which $h_{ji}$ are nonzero (but doesn't need to know their actual values) in as few tests $T$ as possible -- this $T$ will depend on the number of users $N$, the number $K$ of transmitters $i$ for which $h_{ji} \neq \zero$, and the acceptable error probability $\epsilon$.  This is equivalent to the problem of \defn{group testing}, which we will outline more fully in the next section. 

Our model would be more accurate if we included a noise term $Z$ that wasn't always $\zero$ and if we did not neglect the possibility that signals cancel each other out.  Thus we would like our group testing protocols to be robust to this noise and to still have a small probability of error $\epsilon$.  In this chapter, we investigate how a channel coding approach to group testing can help with this.

Similar problems in multiuser networks have also been studied from a group testing perspective.  Berger and coauthors \cite{Berger} and Capetanakis \cite{Capetanakis} studied this problem using a model where more than one interfering message results in a collision where all messages are lost (rather than our model where signals are superposed at the receiver).  Zhang, Luo, and Guo \cite{ZhangLuoGuo} studied the Gaussian network, where low-interference links $h_{ji} \approx 0$ are assumed to be zero and those signals are treated as noise.

\bigskip

\noindent In the rest of this chapter, we outline the problem of group testing, and explore a new approach to it
using techniques from channel coding (as outlined in Chapter 1).

We define for the first time \defn{group testing channels}, which operate much like communications channels, and identify an important property where `only defects matter' that allows us to prove a theorem on the number of tests needed for accurate group testing to be possible.
We also give the first information theoretic bound on adaptive group testing, by drawing an analogy to channel coding
with feedback.

\section{Group testing: a very short introduction}

The problem of \defn{group testing} concerns detecting the defective members of a set of items through the means of pooled tests.  (In our previous example, for receiver $j$, think of the `defective items' as the interfering transmitters with $h_{ji} \neq \zero$.)  Group testing as a subject dates back to the work of Dorfman \cite{Dorfman} in 1940s studying practical ways of testing soldiers' blood for syphilis, and has received much attention from combinatorialists and probabilists since.

The setup is as follows: Suppose we have of a set $N$ items, of which a subset $\K$ of size $K$ is defective.
To identify $\K$, we could test each of the $N$ items individually for defectiveness.
However, when $K$ is small compared to $N$, most of the tests will give negative results.
A less wasteful method is to test \defn{pools} of numerous items together at the same time.
After a number $T$ of such pooled tests, it should be possible to deduce which items were defective.

Let $x_{it} = \one$ denote that item $i$ is included in test $t$.  In the so-called \defn{deterministic case}, a test of $n_t := |\{ i : x_{it}=\one\}|$ items of which $k_t := |\{ i \in \K : x_{it}=\one\}|$ are defective gives a negative result $y_t = \zero$ if no defects are tested ($k_t = 0$) and a positive result if at least one defect is pooled into the test ($k_t \geq 1$).

After $T$ tests, we make an estimate $\Khat$ of the defective set, with some average probability of error $\epsilon$.  We want to choose our tests in such a way that $\epsilon$ is small, while keeping $T$ as  low as possible.

Traditionally, this has been seen as a combinatorial problem: given $N$ and $K$, one aims to find an $N \times T$ \defn{testing matrix} $\mat X = (x_{it})$ such that all $\binom NK$ possible defective sets $\K$ give a different sequence of test results $y_1, \dots, y_T$.  This gives a zero error probability $\epsilon=0$, and one is interested in how small $T$ can be made. (See, for example, the textbook of Du and Hwang \cite{DuHwang} for more details on the combinatorial approach to group testing.)  

However, an alternative approach is to use random pools.  That is, we set $X_{it}$ to be random $\zero$s or $\one$s -- typically, $X_{it} = \one$ with some probability $p$ IID across $i$ and $t$, where $p$ may depend on $K$ and $N$. One then investigates how big $T$ must be compared to $N$ and $K$ in order to keep the average error probability $\epsilon$ low.

Recent progress has been made on this channel coding approach by Atia and Saligrama \cite{AtiaSaligrama}, by comparing the problem to the classical problem of channel coding, first studied by Shannon \cite{Shannon}. (See Chapter 1 for more details on channel coding.)

The two figures below show an interesting similarity between the two problems.  Note that our goals are slightly different, though -- in channel coding we wish to maximise the number of messages $M$ for large blocklengths $T$; whereas in group testing we wish to minimise the number of tests $T$ for a large number of items $N$.

\newpage

	\begin{center}
	
	Channel coding:

	\smallskip	
	
		\begin{picture}(335,40)(0,0)

		  \put(0,0){\framebox(55,30)}
		  \put(9,18){Message}
		  \put(23,3){$m$}
		  \put(24,33){}
		  
   	      \put(55,15){\vector(1,0){35}}
		  \put(71,18){}
		  
		  \put(90,0){\framebox(55,30)}
		  \put(94,18){Codeword}
		  \put(108,3){$\vec x(m)$}
		  
		  \put(145,15){\vector(1,0){45}}
		  \put(151,20){$p(y\given x)$}
		  \put(108,33){}
		  \put(151,4){$T$ times}
		  
		  \put(190,0){\framebox(55,30)}
		  \put(197,18){Received}
		  \put(200,3){signal $\vec y$}
		  \put(214,34){}
		  
		  \put(245,15){\vector(1,0){35}}
		  \put(259,18){}
		  
		  \put(280,0){\framebox(55,30)}
		  \put(289,18){Message}
		  \put(284,3){estimate $\hat m$}
		  \put(300,33){}
		  
    \end{picture}

  \bigskip

	Group testing:
	
	\smallskip
	
		\begin{picture}(335,40)(0,0)

		  \put(0,0){\framebox(55,30)}
		  \put(7,18){Defective}
		  \put(17,3){set $\K$}
		  \put(24,33){}
		  
     	  \put(55,15){\vector(1,0){35}}
		  \put(71,18){}
		  
		  \put(90,0){\framebox(55,30)}
		  \put(108,18){Test}
		  \put(98,3){design $\mat X$}
		  
		  \put(145,15){\vector(1,0){45}}
		  \put(146,20){$p(y\given n,k)$}
		  \put(108,33){}
		  \put(153,4){$T$ tests}
		  
		  \put(190,0){\framebox(55,30)}
		  \put(208,18){Test}
		  \put(192,3){outcomes $\vec y$}
		  \put(214,34){}
		  
		  \put(245,15){\vector(1,0){35}}
		  \put(259,18){}
		  
		  \put(280,0){\framebox(55,30)}
		  \put(287,18){Defective}
		  \put(284,3){set estm.\ $\Khat$}
		  \put(300,33){}
		  
    \end{picture}

\end{center}

In single-user point-to-point channel coding, $M$ messages are encoded into codewords $(x_{m1}, \dots, x_{mT})$ of length $T$.  After being sent through some channel, the codewords are received as $(y_1, \dots, y_T)$, and the original message is estimated as $\hat m$, with the hope that the average error probability $\epsilon$ will be small. 

Shannon's celebrated channel coding theorem \cite[Theorem 11]{Shannon} (see Theorem \ref{Shannons2} of this thesis) tells us how large we can make $M$ compared to $T$, while still being sure that the error probability stays small.  Shannon's breakthrough was to study a random coding scheme, were the $X_{mt}$ are all IID according to some distribution $X$.  One way to phrase the achievability part of Shannon's theorem -- Shannon offered a similar phrasing as an alternative in his original paper \cite[Theorem 12]{Shannon} -- is the following:

\addcontentsline{lof}{figure}{\numberline{\textbf{Theorem 6.2}} Shannon's channel coding theorem}
\begin{theorem}[Shannon's channel coding theorem]
  Consider a communications channel $(\mathcal X, \mathcal Y, p(y \given x))$.
  Let $M^* = M^*(T,\epsilon)$ be the maximum number of messages that can be sent
  through the channel with blocklength $T$ and error probability at most $\epsilon \in (0,1)$.
  Then
    \[ M^* \geq 2^{T \max_X \I(X:Y) + o(T)} \qquad \text{as $T\to\infty$.} \]
\end{theorem}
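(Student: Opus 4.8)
The plan is to recast the random-coding / joint-typicality argument already sketched for Theorem \ref{Shannons2} so that it yields a lower bound on $M^*(T,\epsilon)$, and then to let the coding rate creep up to capacity slowly with $T$.

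First I would fix an input distribution $X$ attaining (or, to sidestep existence issues on a countably infinite alphabet, approaching to within arbitrary precision) the supremum $C := \max_X \I(X:Y)$. For a target rate $R$ I set $M = \lceil 2^{TR} \rceil$, generate the codebook by drawing all letters $X(m)[t]$ IID according to $X$, and decode by joint typicality. The analysis in the sketch proof of Theorem \ref{Shannons2} bounds the expected (over codebooks) average error probability by a term $\epsilon_1(T)$ --- the probability that the transmitted codeword and its output fail to be jointly typical, which vanishes by the asymptotic equipartition property (Section 1.2) --- plus a union-bound term of the form $(M-1)\,2^{-T(\I(X:Y) - \delta(T))}$, where $\delta(T)$ is the slack built into the typical set and can be taken to $0$. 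By the probabilistic method there is then at least one deterministic codebook whose average error probability is no larger than this expectation, so $M^*(T,\epsilon)\geq \lceil 2^{TR}\rceil$ whenever that error bound falls below $\epsilon$.

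Next I would push the rate up to capacity. Write $R = C - \gamma(T)$ and choose $\gamma(T)\to 0$ decaying slowly enough that $T\gamma(T)\to\infty$ (for instance $\gamma(T) = T^{-1/2}$), while sending the typical-set slack $\delta(T)\to 0$ with $\delta(T) = o(\gamma(T))$. Then the union-bound exponent $T(\I(X:Y) - \delta(T) - R) = T(\gamma(T) - \delta(T)) \to +\infty$, so that term tends to $0$, as does $\epsilon_1(T)$; hence for all $T$ large enough the total error probability lies below the fixed $\epsilon \in (0,1)$. This gives $M^*(T,\epsilon) \geq 2^{TR} = 2^{TC - T\gamma(T)}$, and since $\gamma(T)\to 0$ forces $T\gamma(T) = o(T)$, the exponent is $TC + o(T)$, which is the claim.

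The main obstacle is precisely this balancing act: the rate penalty $\gamma(T)$ must shrink (so the exponent reads $TC + o(T)$) yet not so fast that the reliability exponent $T(\gamma(T)-\delta(T))$ fails to diverge, which is what keeps the error below $\epsilon$. Everything else is bookkeeping --- the AEP estimates are quoted from Section 1.2, the passage from an average over codebooks to a single good codebook is the standard expectation argument, and the minor question of whether a maximizing $X$ exists on a countably infinite alphabet is handled by instead working with a distribution within $\gamma(T)/2$ of the supremum.
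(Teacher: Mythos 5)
Your proposal is correct and follows essentially the same route as the paper: it is the random-coding, joint-typicality achievability argument already sketched for Theorem \ref{Shannons2}, repackaged via the expectation-over-codebooks step to give a lower bound on $M^*(T,\epsilon)$. One small caution: with $\gamma(T)=T^{-1/2}$ and $\delta(T)=o(\gamma(T))$ the quantitative AEP (Chebyshev) bound on the atypicality probability behaves like $1/(T\delta(T)^2)$, which would then fail to vanish, so you should either rebalance the exponents (e.g.\ $\gamma(T)=T^{-1/4}$, $\delta(T)=T^{-1/3}$) or, more simply, note that the fixed-rate statement ``every $r<\max_X\I(X:Y)$ is achievable'' already yields $\liminf_T \log M^*(T,\epsilon)/T \geq \max_X \I(X:Y)$, which is exactly the claimed $M^*\geq 2^{T\max_X\I(X:Y)+o(T)}$ without any $T$-dependent rate.
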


Atia and Saligrama \cite[Theorem III.1]{AtiaSaligrama} adapted Gallager's  proof \cite{Gallager} of the achievability part of Shannon's channel coding theorem  to give a similar result. This time, we're interested in how many tests $T$ are required to keep the error probability arbitrarily low.

\addcontentsline{lof}{figure}{\numberline{\textbf{Theorem 6.3}} Group testing theorem: deterministic case}
\begin{theorem}  
%
  Consider group testing in the deterministic case.
  
  Let $T^* = T^*(N,K,\epsilon)$ be the minimum number of tests necessary
  to identify $K$ defects among $N$ items with error probability at most $\epsilon \in (0,1)$.
  Then $T^* \leq \overline T + o(\log N)$ as $N\to\infty$, where
    \[ \overline T =  \min_p \max_{\L\subset\K}
                    \frac{\log_2 \binom{N-K}{|\L|} \binom{K}{|\L|}}
                         {\I (\vec X_{\K\setminus\L} : \vec X_{\L}, Y)} . \]
\end{theorem}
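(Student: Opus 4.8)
The plan is to follow the random-coding / error-exponent route that Gallager used for Shannon's theorem (flagged for reuse in Chapter 1), adapted to the deterministic group-testing channel. First I would fix a test-design distribution: draw each entry of the testing matrix independently, $X_{it}\sim\text{Bernoulli}(p)$, so that within a single test the membership vector has IID coordinates and the outcome $Y$ is produced from the defective columns through the deterministic transition law $p(y\mid n,k)$ (here $y=\zero$ iff $k=0$). Because the $T$ tests are performed IID across time slots, the block quantities factorise and the per-test mutual information appearing in $\overline T$ is simply multiplied by $T$ over the whole block; this is why $\overline T$ has a single-test information in its denominator. I would pair this random design with a decoder that outputs the size-$K$ set whose columns are jointly typical with the observed outcome sequence $\vec y$, declaring an error when no such set, or more than one, exists.

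Next I would classify the error events by overlap. Writing $\Khat$ for the decoded set, any error has $\Khat\neq\K$, and since both sets have size $K$ I may set $\L:=\K\setminus\Khat$ (missed defectives) and $\L':=\Khat\setminus\K$ (false positives), with $|\L|=|\L'|=:\ell\geq 1$. For each $\ell$ there are $\binom{K}{|\L|}$ choices of which defectives to miss and $\binom{N-K}{|\L|}$ choices of which non-defectives to substitute, which is exactly the combinatorial factor $\binom{N-K}{|\L|}\binom{K}{|\L|}$. A union bound over candidate confusable sets then gives
\[
  \Prob(\Khat\neq\K)\;\le\;\sum_{\L\subset\K}\binom{N-K}{|\L|}\,\Prob\big(\text{a fixed substitute set is jointly typical with }\vec y\big),
\]
where the sum over $\L\subset\K$ already enumerates the $\binom{K}{|\L|}$ choices of missed defectives.

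The heart of the argument is the per-candidate confusion probability, and here I would exploit the only-defects-matter property of this channel: the false-positive columns $\vec X_{\L'}$ are statistically independent of the outcomes $\vec Y$ (which depend on the columns only through the true defectives), while the columns $\vec X_{\K\setminus\L}$ are shared between the true and the confusable hypotheses. A Gallager-style error-exponent estimate (equivalently an AEP / joint-typicality count) then shows the per-candidate probability decays like $2^{-T\,\I(\vec X_{\K\setminus\L}:\vec X_\L,Y)+o(T)}$, the per-test mutual information in the denominator of $\overline T$; note that independence of the columns lets this joint mutual information collapse to a conditional one, $\I(\vec X_{\K\setminus\L}:\vec X_\L,Y)=\I(\vec X_{\K\setminus\L}:Y\mid\vec X_\L)$, measuring how distinguishable the two hypotheses are per test. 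Substituting into the union bound yields
\[
  \Prob(\Khat\neq\K)\;\le\;\sum_{\ell=1}^{K}2^{\,\log_2\binom{N-K}{\ell}\binom{K}{\ell}-T\,\I_\ell+o(T)},
\]
with $\I_\ell$ the mutual information for overlap deficit $\ell$. Each summand tends to $0$ once $T>\log_2\!\big[\binom{N-K}{\ell}\binom{K}{\ell}\big]/\I_\ell$, so taking the maximum over $\ell$ (equivalently over $\L\subset\K$) and then minimising the required number of tests over the free design parameter $p$ gives $T^\ast\le\overline T+o(\log N)$; the $o(\log N)$ slack absorbs the $o(T)$ typicality correction and the integer rounding of $T$, and since $\binom{N-K}{\ell}=O(N^\ell)$ the whole threshold is $\Theta(\log N)$.

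The main obstacle I expect is the per-candidate bound of the third paragraph: making the Gallager error exponent rigorous requires cleanly separating the contribution of the independent false-positive columns from that of the shared defective columns, and checking that the only-defects-matter property is precisely what forces the distinguishing information to reduce to the (conditional) mutual information in $\overline T$ rather than to some larger quantity. A secondary technical point is controlling the union bound uniformly over the polynomially-many (for fixed $\ell$) candidate sets while the exponent $T\I_\ell$ grows like $\log N$, so that the binding constraint is the single worst $\ell$; this is exactly why a maximum over $\L\subset\K$, and not a sum, appears in $\overline T$.
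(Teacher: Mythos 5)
Your outline is the same route the paper takes (the thesis itself only sketches this result, deferring to Atia and Saligrama's adaptation of Gallager's random-coding argument): an IID Bernoulli$(p)$ test design, a decomposition of the error event by the overlap between $\Khat$ and $\K$, a union bound over the $\binom{N-K}{|\L|}\binom{K}{|\L|}$ confusable sets at each overlap level, and a per-candidate confusion probability decaying like $2^{-T\cdot(\text{mutual information})}$. There are, however, two points where your version would fail as written.

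First, the exponent. With your convention that $\L$ is the set of \emph{missed} defectives, the quantity governing the per-candidate confusion probability is $\I(\vec X_{\L} : \vec X_{\K\setminus\L}, Y) = \I(\vec X_{\L} : Y \mid \vec X_{\K\setminus\L})$ --- the information the swapped-out columns carry about the outcome beyond what the shared columns already provide --- and not $\I(\vec X_{\K\setminus\L} : Y \mid \vec X_{\L})$ as you write. These are genuinely different conditional mutual informations, and your own heuristic (an independent replacement block $\vec X_{\L'}$ must land in the set jointly typical with the shared block and $\vec Y$) produces the former. The distinction is not cosmetic: for $|\L|=1$ your version of the denominator is the information carried by $K-1$ defective columns given one, which is $\Theta(1)$, whereas the correct denominator is the information carried by one column given the other $K-1$, which is $\Theta(1/K)$; using the wrong one makes the hardest error event (a single misplaced defective) look easy and destroys the $K\log N$ scaling. (In fairness, the thesis's statement is itself muddled here --- its sketch of the generalisation, Theorem 6.12, takes $\L$ to be the overlap and uses $\binom{N-K}{K-|\L|}$ in the numerator --- but the two readings are not equivalent and you must fix one consistent convention.)

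Second, the decoder. You commit to joint-typicality decoding and then invoke a ``Gallager-style'' estimate; these are not interchangeable here. The difficulty you correctly flag --- that the shared columns $\vec X_{\K\setminus\L}$ are common to both the true and the confusable hypothesis, so the standard independent-codeword typicality count does not apply --- is exactly why Atia and Saligrama replaced their earlier typical-set argument (which the thesis describes as ``less rigorous'') with a maximum-likelihood decoder and a Gallager $\rho$-parameterised bound conditioned on the shared columns. To make your proof rigorous you should adopt that ML route rather than attempt to patch the typicality count.
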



(It's worth noting that the term inside the maximisation depends only on the cardinality $|\L|$ of $\L$, not on $\L$ itself.)

In channel coding, the main interest is in finding the value of $M^*$ for different channels -- the limit  
  \[ \lim_{T\to\infty} \frac{\log M^*}{T} = \max_X \I(X:Y) =:c \]
exists and is called the channel capacity (see Definitions 1.6 and 1.10).

In group testing, for the deterministic case, Atia and Saligrama
\cite[Theorem V.1]{AtiaSaligrama} showed that we have
  \[ \overline T = O(K \log N) \qquad \text{as $K \to\infty$ and $N\to\infty$.} \]
a bound that Sejdinovic and Johnson \cite[Theorem 2]{Sejdinovic} improved to
  \[ \overline T \sim \e K \frac{ \log (K(N-K)) }{\log K} =
              \e K \left(1 + \frac{\log(N-K)}{\log K} \right) \quad \text{as $K \to\infty$ and $N\to\infty$.} \]
(Here, as elsewhere, we use $f(x) \sim g(x)$ to mean that $f(x)/g(x) \to 1$.)  
  
Here, we will attempt to find to what range of \defn{group testing channels} the result of Atia and Saligrama can be extended, and some bounds on $\overline T$ -- and hence $T^*$ -- for those channels. We also investigate further insights that channel coding can give to group testing.

\section{Channels}

In channel coding, many different types of communication can be modelled by using different channels.  Recall from Definition 1.1 that a communication channel is defined by stating what inputs $x \in \mathcal X$ and outputs $y \in \mathcal Y$ the channel can have, and what the probability $p(y \given x)$of each output is given each input.


We want to do the same for group testing.  The input is already constrained: there are $N$ items each of which can be in ($x_i = \one$) or not in ($x_i = \zero$) the pool, so the input alphabet is $\zeroone^N$.  So we must define the output alphabet and the probability function.

We will assume that in a testing pool there is no `order' to the items, nor will any elements not placed in the pool affect the outcome of the test, nor can we distinguish between the items other than whether or not they are defective.  Hence, the outcome can only depend on two things, the number of items in the test pool $n$, and the number of those items that are defective $k$.

\addcontentsline{lof}{figure}{\numberline{\textbf{Definition 6.4}} Group testing channel}
\begin{definition} \label{channel}
  A \defn{group testing channel} consists of
    \begin{itemize}
      \item an output alphabet $\mathcal Y$,
      \item a probability transition function $p(y \given n, k)$ relating
        the number of items $n$ and defectives $k$ in a testing pool to the
        test outcome $y$.
    \end{itemize}
\end{definition}

The problem of group testing is then to come up with test designs that have a low
error probability for as few tests as possible.

\addcontentsline{lof}{figure}{\numberline{\textbf{Definition 6.5}} Testing pool, test design}
\begin{definition}
  A \defn{testing pool} for $N$ items consists of a vector $\vec x = (x_i) \in \zeroone^N$,
  where $x_i = \one$ denotes that item $i$ is included in the pool and $x_i = \zero$
  denotes that item $i$ is not included in the pool.  We define $n := |\{ i : x_{i}=\one\}|$
  to be the total number of items in the pool and $k := |\{ i \in \K : x_{i}=\one\}|$ to be the
  number of defective items in the pool.

  A \defn{test design} of $T$ tests for $N$ items consists of
    \begin{itemize}
      \item a sequence $(\vec x_1, \vec x_2, \dots, \vec x_T)$
         of $T$ \defn{testing pools} (which can be summarized by the \defn{testing matrix}
         $\mat X = (x_{it}) \in \zeroone^{N \times T}$);
      \item a \defn{defective set detection function} $\Khat \colon \mathcal Y^T \to [N]^{(K)}$,
        where $[N]^{(K)}$ is the collection of subsets of $\{1,2,\dots, N\}$ of size $K$.
    \end{itemize}
\end{definition}

We can now describe the deterministic case discussed earlier
as an example of a group testing channel under Defintion \ref{channel}.

\addcontentsline{lof}{figure}{\numberline{\textbf{Definition 6.6}} Deterministic channel}
\begin{definition}
  The \defn{deterministic channel} has output alphabet
  $\mathcal Y = \{ \zero, \one \}$ and probability transition function
    \[ p(\one \given n,k) = \begin{cases} 0 & \text{if $k = 0$,} \\
                                        1 & \text{if $k \geq 1$,} \end{cases} 
       \qquad
       p(\zero \given n,k) = \begin{cases} 1 & \text{if $k = 0$,} \\
                                        0 & \text{if $k \geq 1$.} \end{cases} \]
\end{definition}

Atia and Saligrama \cite[Subsection II-C]{AtiaSaligrama} also studied two ways in which error could be introduced into group testing,
which they called the additive and dilution models.  They showed that their main result (Theorem 6.3) also holds true for these two channels.

In the additive model, a negative pool can actually return a false positive result, with some fixed probability $q$.
This could happen in our interference graph example from Section 6.1 if we included a noise term $Z$.  This model
can be recast as an example of a group testing channel.

\addcontentsline{lof}{figure}{\numberline{\textbf{Definition 6.7}} Addition channel}
\begin{definition}
  The \defn{addition channel} with addition probability $q>0$ has output alphabet
  $\mathcal Y = \{ \zero, \one \}$ and probability transition function
    \[ p(\one \given n,k) = \begin{cases} q & \text{if $k = 0$,} \\
                                        1 & \text{if $k \geq 1$,} \end{cases} 
       \qquad
       p(\zero \given n,k) = \begin{cases} 1-q & \text{if $k = 0$,} \\
                                        0 & \text{if $k \geq 1$.} \end{cases} \]
\end{definition}

Atia and Saligrama \cite[Table 1]{AtiaSaligrama} calculated that for the addition channel,
as $K\to\infty$ and $N\to\infty$, we have
  \[ \overline T = O \left( \frac{K \log N}{1-q} \right) , \]
Using the work of Sejdinovic and Johnson \cite[Theorem 6]{Sejdinovic} we can improve this to
  \[ \overline T \sim\ e  K \frac{ \log (K(N-K)) }{\log \frac 1q} , \]
by setting $u = 0$ and optimising $\alpha = 1$ in their equation (22) and rearranging.
(Interestingly, this is discontinuous with the deterministic channel at $q = 0$.)

The dilution model describes the case where a very small number of defective items in a testing pool might be `drowned out' by the nondefective items. Specifically, any defective items in the test may each evade the test independently with some probability $u$, with the potential to cause false negative results. This could happen in our interference graph example from Section 6.1 if we did not neglect the possibility that superposed interfering signals can cancel each other out.

\addcontentsline{lof}{figure}{\numberline{\textbf{Definition 6.8}} Dilution channel}
\begin{definition}
  The \defn{dilution channel} with dilution probabilty $u\geq 0$ has output alphabet
  $\mathcal Y = \{ \zero, \one \}$ and probability transition function
    \[ p(\one \given n,k) = \begin{cases} 0 & \text{if $k = 0$,} \\
                                       1-u^k & \text{if $k \geq 1$,} \end{cases} 
       \qquad
       p(\zero \given n,k) = \begin{cases} 1 & \text{if $k = 0$,} \\
                                        u^k & \text{if $k \geq 1$.} \end{cases} \]
\end{definition}

Atia and Saligrama \cite[Table 1]{AtiaSaligrama} calculated that for the addition channel we have
  \[ \overline T = O \left( \frac{K \log N}{(1-u)^2} \right) , \]
Again, using the work of Sejdinovic and Johnson \cite[Theorem 6]{Sejdinovic} we can improve this to
  \[ \overline T \sim \e^{1+u+f(u)} K \left(1 + \frac{\log(N-K)}{\log K} \right) \]
where
  \[ 0 \leq f(u) \leq \frac{u^2}{1-u} = u^2 + u^3 + \cdots ; \]
we do this by optimising $\alpha = 1/(1-u)$ in their equation (24) and rearranging.  In other words, the addition channel requires about $e^u \approx 1+u$ times as many tests as the deterministic channel, for small $u$.

Sejdinovic and Johnson also considered a channel that combines the additive and dilutive noise models.

\addcontentsline{lof}{figure}{\numberline{\textbf{Definition 6.9}} Addition/dilution channel}
\begin{definition}
  The \defn{addition/dilution channel} with addition probability $q>0$ and dilution probabilty $u\geq 0$ has output alphabet
  $\mathcal Y = \{ \zero, \one \}$ and probability transition function
    \[ p(\one \given n,k) = \begin{cases} q & \text{if $k = 0$,} \\
                                       1-u^k & \text{if $k \geq 1$,} \end{cases} 
       \qquad
       p(\zero \given n,k) = \begin{cases} 1-q & \text{if $k = 0$,} \\
                                        u^k & \text{if $k \geq 1$.} \end{cases} \]
\end{definition}

Working on the unproven assumption that Atia and Saligrama's result (Theorem 6.3) also holds for the addition/dilution channel, Sejdinovic and Johnson \cite[Theorem 3]{Sejdinovic} prove
a similar result for the addition/dilution channel.  (Although note that Sejdinovic and Johnson \cite[reference 1]{Sejdinovic} were working from an earlier preprint of Atia and Saligrama's paper \cite[version 2]{AtiaSaligrama}.  This had a different, and less rigorous, proof based on typical set decoding, rather than Gallager's maximum likelihood approach as in the most recent version \cite[version 4]{AtiaSaligrama}.)

Atia and Saligrama and Sejdinovic and Johnson defined their channels in terms of complicated Boolean sums and products of random vectors and matrices with the testing matrix $\mat X$. But our definition in terms of probability transition functions makes the behaviour of such channels clearer, and should allow the proof of more universal theorems.  It also makes it much easier to define new channels to model testing behaviour -- and many other existing models can be reformulated as group testing channels.

\addcontentsline{lof}{figure}{\numberline{\textbf{Definition 6.10}} More group testing channels}
\begin{definition}
The \defn{erasure channel} is a model that works like the deterministic channel, but fails to produce a result with some fixed erasure probability $\epsilon$.  That is, $\mathcal Y = \{ \zero, \quest, \one \}$ and
  \[ p(\one \given n,k) = \begin{cases} 0 & \text{if $k = 0$,} \\
                                 1-\epsilon & \text{if $k \geq 1$,} \end{cases} 
       \ 
       p( \quest \given n,k) = \epsilon,
       \ 
       p(\zero \given n,k) = \begin{cases} 1-\epsilon & \text{if $k = 0$,} \\
                                        0 & \text{if $k \geq 1$.} \end{cases} \]
                                        
The \defn{dilution threshold} only gives a positive result if a sufficient proportion of the tested items are defective, above some threshold $\theta \in (0,1)$.  That is, $\mathcal Y = \{ \zero, \one \}$ and
 \[ p(\one \given n,k) = \begin{cases} 0 & \text{if $k/n < \theta$,} \\
                                    1 & \text{if $k/n \geq \theta$,} \end{cases} 
       \qquad
       p(\zero \given n,k) = \begin{cases} 1 & \text{if $k/n < \theta$,} \\
                                 0 & \text{if $k/n \geq \theta$.} \end{cases} \]

The \defn{counting channel} gives as the output the number of defective items in the set.  That is, the probability transition function $p(y \given n,k)$ defined implicitly be the relation
  \[ Y = k . \]
It's worth noting that group testing under the counting channel model is equivalent to $\zero$--$\one$ compressed sensing with sparsity exactly $k$.  (A model equivalent to the count channel has previously been studied by Shapiro and Fine \cite{Fine}, Erd\H os and R\'enyi \cite{ErdosRenyi2}, and others -- for more details see the textbook of
Du and Hwang \cite[Section 11.2]{DuHwang}.)
                                 
The \defn{overflow channel} gives as a result exactly how many defective items were in the test, up to some limit $l$.  That is, $\mathcal Y = \{ 0,1,2,\dots, l\}$ and probability function definied implicitly by the relation
  \[ Y = \max\{k,l\} . \]
(When $l=n$, this is equivalent to the counting channel; when $l=1$ this is equivalent to the deterministic channel; when $l=2$ this is the message collision model
of Capetanakis \cite{Capetanakis}.)
  
The \defn{symmetric channel} gives a negative result if all items are nondefective, a positive result if all items are defective, and an uncertain result if there is a mixture of defective and nondefective items. That is, $\mathcal Y = \{ \zero, \quest, \one \}$ and relation
  \[ Y = \begin{cases} \zero  & \text{if $k=0$} \\
                       \quest & \text{if $0<k<n$} \\
                       \one   & \text{if $k=n$.} \end{cases} \]
(A model equivalent to the symmetric channel has previously been studied by Sobel, Kumar, and Blumenthal \cite{Sobel} and Hwang \cite{Hwang}.)
\end{definition}                  

These are just a few examples -- many more realistic error models for group testing can be formulated
as channels, and perhaps wider use can be made of this new concept.

\section{When only defects matter}

Note that for many of the group testing channels we have mentioned -- including all those studied by Atia and Saligrama and by Sejdinovic and Johnson -- the output depends only on $k$, the number of defects in the test, and not on $n$, the total number of items in the test.  In other words, `only defects matter', and the number of nondefects in the test is irrelevant.

\addcontentsline{lof}{figure}{\numberline{\textbf{Definition 6.11}} Only-defects-matter property}
\begin{definition}
  A channel $(\mathcal Y, p)$ whose probability function $p(y \given n,k) = p(y \given k)$ is
  dependent only on $k$ and not on $n$ is said to have the
  \defn{only-defects-matter} property.
\end{definition}

In the examples from Defintion 6.11 and earlier, the deterministic, addition, dilution, addition/dilution, erasure, counting,
and overflow channels have the only-defects-matter property.  The dilution threshold and symmetric channels do not have the
only-defects-matter property 

Another way to state the only-defects-matter property is that for a channel where only defects matter, we can make the simplification
  \[ \Prob (Y \given \vec X) = \Prob (Y \given \vec X_{\mathcal K }) . \]
Making this simplification is crucial to the proof of Atia and Saligrama \cite[for example Section III.A]{AtiaSaligrama}.  
Indeed, this is the only specific point about the deterministic, additive, and dilution channels
that Atia and Saligrama use in their proof.  Hence, we have the following:

\addcontentsline{lof}{figure}{\numberline{\textbf{Theorem 6.12}} Group testing theorem: only-defects-matter case}
\begin{theorem}  \label{ODMthm}
  Consider a group testing channel where only defects matter.
  Let $T^* = T^*(N,K,\epsilon)$ be the minimum number of tests necessary
  to identify $K$ defects among $N$ items with error probability at most $\epsilon \in (0,1)$.
  Then $T^* \leq \overline T + o(\log N)$ as $N\to\infty$, where
    \[ \overline T =  \min_p \max_{\L\subset\K}
                    \frac{\log_2 \binom{N-K}{|\L|} \binom{K}{|\L|}}
                         {\I (\vec X_{\K\setminus\L} : \vec X_{\L}, Y)} . \]
\end{theorem}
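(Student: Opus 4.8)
The plan is to recognise this as essentially a corollary of Theorem 6.3, and to prove it by auditing the argument of Atia and Saligrama rather than re-deriving everything. The key claim to establish is that the \emph{only} structural fact their proof uses about the deterministic, addition, and dilution channels is the only-defects-matter simplification $\Prob(\vec Y \given \vec X) = \Prob(\vec Y \given \vec X_\K)$ recorded in Definition 6.11; once this is verified, the same chain of inequalities yields the same bound for any channel satisfying that property.

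First I would recall the skeleton of the random-coding / maximum-likelihood argument (following Gallager, as sketched in Section 1.3). The testing matrix $\mat X = (X_{it})$ is drawn with IID entries distributed Bernoulli$(p)$, and the defective set is estimated by a maximum-likelihood rule. The average error probability is controlled by a union bound over the ways a wrong estimate $\Khat$ can be confused with the true set $\K$, organised by the size $|\L|$ of the overlap $\L = \K \cap \Khat$. For each overlap size there are $\binom{N-K}{|\L|}\binom{K}{|\L|}$ candidate confusable sets, and the probability that any one of them beats the truth decays like $2^{-T \I(\vec X_{\K\setminus\L}:\vec X_\L,Y)}$; balancing the count against this exponent and optimising over $p$ and $\L$ produces the $\min_p\max_\L$ expression for $\overline{T}$. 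By exchangeability of the columns, both the count and the mutual information depend on $\L$ only through $|\L|$, which is why the maximum is effectively over overlap sizes.

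Next I would pin down precisely where the channel law $p(y\given n,k)$ enters. The likelihood of an observed test outcome given the full input vector is $\Prob(Y \given \vec X)$, and the only-defects-matter property says exactly that this equals $\Prob(Y \given \vec X_\K)$, so the $N-K$ nondefective columns are conditionally irrelevant. This is the identity that lets one write the controlling quantity as the mutual information $\I(\vec X_{\K\setminus\L}:\vec X_\L,Y)$ between defective columns and output alone, with no dependence on the nondefective columns; it is the hinge of the whole error bound. Since Atia and Saligrama invoke no further property of their three channels, replacing their specific $p$ by a general only-defects-matter transition law changes nothing in the estimates, and the conclusion $T^* \leq \overline{T} + o(\log N)$ follows with the same formula for $\overline{T}$.

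The hard part will be the audit itself: confirming that no other feature of the specific channels is used silently. In particular I would check that the analysis never relies on the output alphabet being binary (the counting, overflow, and erasure channels have larger $\mathcal Y$), nor on any monotonicity or thresholding of $p(y\given k)$ in $k$, and that the typicality/counting steps and the finiteness of the relevant entropies go through for an arbitrary countable $\mathcal Y$. If any step tacitly assumes $|\mathcal Y| = 2$ or a particular ordering of outcomes, that is where the extension would need a genuine (if routine) patch; absent such a dependence, the theorem is immediate from Theorem 6.3.
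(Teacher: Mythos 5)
Your proposal is correct and follows essentially the same route as the paper: the paper's proof likewise observes that $\Prob(Y \given \vec X) = \Prob(Y \given \vec X_{\K})$ is the only property of the deterministic, additive, and dilution channels that Atia and Saligrama's Gallager-style union bound over overlap sets $\L$ actually uses, and then reruns that argument verbatim. Your additional audit point about not tacitly assuming a binary output alphabet is a sensible check but does not change the argument.
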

The proof is the same as that of Atia and Saligrama's theorem \cite[Theorem III.1]{AtiaSaligrama}.
We briefly outline the proof here.

\begin{proof}[Sketch proof]
  We need to analyse the error probability of a test design.  To do this, we will analyse the probability
  our estimated defective set $\Khat$ of cardinality $K$ overlaps with the true defective set $\K$ on a set
  $\mathcal L$.
  
  Given a set $\L$ There are $\binom{N-K}{K-|\L|}$ such sets, and $\binom{K}{|\L|}$ sets $\L$ of each
  possible cardinality.
  Using a technique similar to Gallager's proof of Shannon's coding theorem, we can bound the
  probability that we make an error in $|\L|$ places as
    \[ \epsilon \leq \binom{N-K}{K-|\L|} \binom{K}{|\L|} 2^{-T\I (\vec X_{\K\setminus\L} : \vec X_{\L}, Y)} . \]
  Hence we require
    \[ T > \frac{\log \binom{N-K}{K-|\L|} \binom{K}{|\L|}}
                         {\I (\vec X_{\K\setminus\L} : \vec X_{\L}, Y)} . \]
                         
  We need this to be true for every $\L \subset \K$, and can optimise the result over the test design parameter $p$.
\end{proof}

Identifying the only-defects-matter property as the crucial factor for proving Theorem \ref{ODMthm} means
that Sejdinovic and Johnson's bound on $\overline T$ for the addition/dilution channel is now rigorously proven.

Whether Theorem \ref{ODMthm} -- or a similar theorem -- holds for channels without the only-defects-matter property is an open problem.

Calculating, or finding good bounds for, the value of $\overline T$ or $T^*$ for the erasure, counting, and overflow channels is also an open problem.

\section{Converse part and adaptive testing}

Atia and Saligrama \cite[Theorem IV.1]{AtiaSaligrama} also provide a lower bound on the number of tests needed in group
testing.  The proof is along the lines of Shannon's converse to the channel coding theorem,
and uses Fano's inequality.  As before, the Atia--Saligrama proof in fact applies to all channels where only defects matter.

\addcontentsline{lof}{figure}{\numberline{\textbf{Theorem 6.13}} Group testing theorem: converse part}
\begin{theorem}\label{conversegt}
    Consider a group testing channel where only defects matter.
  Let $T^* = T^*(N,K,\epsilon)$ be the minimum number of tests necessary
  to identify $K$ defects among $N$ items with error probability at most $\epsilon \in (0,1)$.
  Then $T^* \geq \underline T - o(\log N)$ as $N\to\infty$, where
    \[ \underline T =  \min_p \max_{\L\subset\K}
                    \frac{\log_2 \binom{N-|\L|}{K-|\L|} }
                         {\I (\vec X_{\K\setminus\L} : \vec X_{\L}, Y)}. \]
\end{theorem}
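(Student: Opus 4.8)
The plan is to establish a weak converse in the style of Shannon's converse to the channel coding theorem (the converse half of Theorem \ref{Shannons2}), driven by Fano's inequality together with a genie argument, and leaning on the only-defects-matter property at the one place where it is genuinely needed. Throughout I fix an IID Bernoulli($p$) test design and lower-bound the number of tests it requires; the worst splitting of the defective set produces the $\max_{\L\subset\K}$, and choosing the best design at the very end produces the $\min_p$ in $\underline T$. Concretely, since for every design $p$ the number of tests is at least $\max_{\L}\log_2\binom{N-|\L|}{K-|\L|}/\I(\vec X_{\K\setminus\L}:\vec X_\L,Y)$, the optimal design must use at least the minimum of this over $p$, which is exactly $\underline T$.

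First I would place the uniform prior on $\K$ over all $\binom{N}{K}$ defective sets and fix $\ell=|\L|$. For a subset $\L\subset\K$, introduce a genie that reveals $\L$ (and that its items are defective), handing the detector the design columns $\vec X_\L$. Conditioned on this, the detector must still identify $\J:=\K\setminus\L$, a uniform $(K-\ell)$-subset of the remaining $N-\ell$ items, so that $\HH(\J\mid\vec X_\L)=\HH(\J)=\log_2\binom{N-\ell}{K-\ell}$. A genie can only help, so the genie-aided error is at most $\epsilon$, and Fano's inequality gives $\HH(\J\mid\vec Y,\vec X_\L)\le h(\epsilon)+\epsilon\log_2\binom{N-\ell}{K-\ell}$ with $h$ the binary entropy. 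Rearranging,
\[ \I(\J:\vec Y\mid\vec X_\L)\ \ge\ (1-\epsilon)\log_2\binom{N-\ell}{K-\ell}-h(\epsilon). \]

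The remaining two steps upper-bound $\I(\J:\vec Y\mid\vec X_\L)$ by $T$ times the single-test quantity. The key reduction, and the sole use of only-defects-matter, is that the outcomes depend on $\K$ only through $\vec X_\K=(\vec X_\J,\vec X_\L)$, so $\vec Y$ is conditionally independent of the labels $\J$ given $(\vec X_\J,\vec X_\L)$; hence $\I(\J:\vec Y\mid\vec X_\J,\vec X_\L)=0$ and therefore $\I(\J:\vec Y\mid\vec X_\L)\le\I(\vec X_\J:\vec Y\mid\vec X_\L)$. Then, since the channel is memoryless and the columns are drawn IID across the $T$ tests (with the block $\vec X_\J$ independent of $\vec X_\L$), the standard single-letterisation — using that $Y_t$ depends on the inputs only through the defective coordinates $\vec X_\K[t]$ and is conditionally independent across tests — yields $\I(\vec X_\J:\vec Y\mid\vec X_\L)\le T\,\I(\vec X_{\K\setminus\L}:\vec X_\L,Y)$, where I have also used $\I(\vec X_{\K\setminus\L}:Y\mid\vec X_\L)=\I(\vec X_{\K\setminus\L}:\vec X_\L,Y)$ because the two column-blocks are independent. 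Combining with the previous display gives
\[ T\ \ge\ \frac{\log_2\binom{N-\ell}{K-\ell}}{\I(\vec X_{\K\setminus\L}:\vec X_\L,Y)}-o(\log N), \]
and taking the maximum over $\L$ and the minimum over $p$ delivers $T^*\ge\underline T-o(\log N)$.

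The main obstacle I anticipate is not any single inequality but the bookkeeping that converts the weak-converse output into the clean additive form stated: as written, Fano leaves a multiplicative factor $(1-\epsilon)$ and an additive $h(\epsilon)$, and turning these into a genuine $-o(\log N)$ correction requires the same careful accounting as in Atia and Saligrama \cite[Theorem IV.1]{AtiaSaligrama}, absorbing the $\epsilon$-dependent terms into lower-order contributions in the regime where $\log_2\binom{N-\ell}{K-\ell}$ is dominant. Two secondary points also need care: verifying that the genie-aided error is uniformly at most $\epsilon$ over all splits $\L$, and justifying the $\min_p/\max_\L$ interchange for the restricted class of IID designs to which the argument (like the achievability argument of Theorem \ref{ODMthm}) applies.
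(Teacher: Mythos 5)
Your proposal is correct and follows essentially the same route as the paper's proof: a genie revealing $\L$, Fano's inequality applied to the residual uncertainty $\log_2\binom{N-|\L|}{K-|\L|}$, the only-defects-matter property combined with data processing to pass to $\I(\mat X_{\K\setminus\L}:\vec Y\mid\mat X_\L)$, single-letterisation via the chain rule and memorylessness, and finally the independence of the column blocks to rewrite the conditional mutual information as $\I(\vec X_{\K\setminus\L}:\vec X_\L,Y)$. The bookkeeping concern you raise about the $(1-\epsilon)$ factor is handled in the paper in exactly the loose weak-converse style you describe, so no adjustment is needed.
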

  
Note that, unlike for channel coding, the bounds $\overline T$ and $\underline T$ do not coincide.  Therefore, the exact number of tests needed for group testing to work is not known.

We now present a proof of the converse that is slightly simpler than Atia and Saligrama's.  Our proof is based on theirs \cite[Theorem IV.1]{AtiaSaligrama}, with some simplifications based on the standard proof of the converse of Shannon's coding theorem, as exposited by Cover and Thomas \cite[Section 7.9]{CoverThomas}.

\begin{proof}
  Suppose a genie reveals to us some subset $\mathcal L \subset \K$ of the defective set, leaving us
  to work out the remaining $K-|\mathcal L|$ defective items.  Given $\mathcal L$, let $\mathcal K$ 
  be the random defective set chosen uniformly among the possible
  $\binom{N-|\mathcal L|}{K-|\mathcal L|}$ sets of size $K$ of which $\mathcal L$ is a subset.
  
  Then we have the following:
    \begin{align}
      \log \binom{N-|\L|}{K-|\L|} 
        &= \HH(\K \mid \L)  \tag{definition of entropy} \\
        &= \HH(\K \mid \Khat, \L) + \I(\K : \Khat \mid \L)  \tag{HB7} \\
        &\leq 1 + \epsilon \log \binom{N-|\L|}{K-|\L|} + \I(\K : \Khat \mid \L)
             \tag{Fano's inequality} \\
        &\leq 1 + \epsilon \log \binom{N-|\L|}{K-|\L|} + \I(\mat X_{\K\setminus\L} : \vec Y \mid \mat X_\L) ,  
             \label{hello}
    \end{align}
  where the final step is uses the data-processing inequality and the fact that
  only defects matter.
    
  We can bound the mutual information term in \eqref{hello} as
    \begin{align}
      \I(&\mat X_{\K\setminus\L} : \vec Y \mid \mat X_\L) \notag \\
        &\quad {}= \HH(\vec Y \mid \mat X_\L) - \HH(\vec Y \mid \mat X_\K)  \tag{HB7} \\
        &\quad {}= \sum_{t=1}^T \left( \HH(Y_t \mid Y_1, \dots, Y_{t-1}, \mat X_\L)
             - \HH(Y_t \mid Y_1, \dots, Y_{t-1}, \mat X_{\K}) \right)
             \tag{chain rule} \\
        &\quad {}= \sum_{t=1}^T \left( \HH(Y_t \mid Y_1, \dots, Y_{t-1}, \mat X_\L)
             - \HH(Y_t \mid \vec X_{\K t}) \right)
             \tag{memorylessness} \\
        &\quad {}\leq \sum_{t=1}^T \left( \HH(Y_t \mid \vec X_{\L t})
             - \HH(Y_t \mid \vec X_{\K t}) \right)
             \tag{conditioning reduces entropy} \\
        &\quad {}= \sum_{t=1}^T \I(\vec X_{\K\setminus\L \  t} : Y_t \mid \vec X_{\L t}) 
             \tag{HB7} \\
        &\quad {}= T \I(\vec X_{\K\setminus\L} : Y \mid \vec X_{\L}) .
             \label{helloagain}
    \end{align}
  (Here, we write $\vec X_{\K t} := (X_{it} : i \in \K)$  for fixed $t$.
  
  But we can rewrite this mutual information as
    \begin{equation}
      \I(\vec X_{\K\setminus\L} : Y \mid \vec X_{\L})
        = \I (\vec X_{\K\setminus\L} : \vec X_{\L}, Y) - \I (\vec X_{\K\setminus\L} : \vec X_{\L})
        = \I (\vec X_{\K\setminus\L} : \vec X_{\L}, Y) , \label{oncemore}
    \end{equation}
  since $\vec X_{\K\setminus\L}$ and $\vec X_{\L}$ are independent.
  
  Putting together \eqref{oncemore}, \eqref{helloagain}, and \eqref{hello}, we get
    \[ \log \binom{N-|\L|}{K-|\L|}
         \leq 1 + \epsilon \log \binom{N-|\L|}{K-|\L|} + T \I (\vec X_{\K\setminus\L} : \vec X_{\L}, Y) . \]
  We can rearrange this to get
    \[ \epsilon \geq 1 - T \frac{\I (\vec X_{\K\setminus\L} : \vec X_{\L}, Y)}{\log \binom{N-|\L|}{K-|\L|}}
         - \frac{1}{\log \binom{N-|\L|}{K-|\L|}} . \]
  Sending $N \to \infty$, it is clear that we require
    \[ T \geq \frac{\log \binom{N-|\L|}{K-|\L|}}{\I (\vec X_{\K\setminus\L} : \vec X_{\L}, Y)} \]
  to get the error probability arbitrarily low.
  
  This has to be true for all $\L \subset \K$, and we can optimise over the test inclusion
  parameter $p$.  This gives the result.
\end{proof}

\bigskip

\noindent So far, we have been looking at \defn{nonadaptive group testing}, where all the test pools are decided on ahead of time.

Instead, we could consider \defn{adaptive group testing} where tests are performed sequentially, and the makeup of a testing pool can depend on the results of previous tests.  That is, the $x_{it}$ are functions of the previous test outcomes $(y_1, y_2, \dots, y_{t-1})$.

\addcontentsline{lof}{figure}{\numberline{\textbf{Definition 6.14}} Adaptive test design}
\begin{definition}
  An \defn{adaptive test design} of $T$ tests for $N$ items consists of a sequence
  $(\vec x_1, \vec x_2, \dots, \vec x_T)$ of $T$ testing pools, where the
  testing pool for the $t$th test $\vec x_t = \vec x_t(y_1, y_2, \dots, y_{t-1})$
  can depend on earlier test outcomes.
\end{definition}

This is analogous to channel coding with feedback, where the encoding function $x_t$ at time $t$ can depend on previous channel outputs $y_1, y_2, \dots, y_{t-1}$.

As Shannon \cite[page 15]{Shannonfeedback} showed, for discrete memoryless channels, feedback does not increase the channel capacity.  (Although it can help in simplifying encoding and decoding \cite[Section 7.12]{CoverThomas}.)

Due to the non-tightness of the bounds on testing in the nonadaptive case, we will not be able to show that adaptive group testing requires the \emph{same} number of tests as nonadaptive testing, but we will be able to show that it obeys the same lower bound and requires no more tests than the nonadaptive case.

This is (as far as we are aware) the first application of information theoretic techniques to
adaptive group testing.

\addcontentsline{lof}{figure}{\numberline{\textbf{Theorem 6.15}} Group testing theorem: adaptive case}
\begin{theorem}
  Let $T^*_{NA}$ and $T^*_A$ (dependent on $N$,$K$, and $\epsilon$) be the minimum number of tests necessary
  to identify $K$ defects among $N$ items with error probability at most $\epsilon \in (0,1)$ for
  nonadaptive and adaptive group testing respectively.
  
  Then, as $N\to\infty$, we have the inequalities
    \[ \underline T - o(\log N) \leq T^*_A \leq T^*_{NA} \leq \overline T + o(\log N) \]
  where $\underline T$ and $\overline T$ are as in Theorems 6.3 and 6.13.
\end{theorem}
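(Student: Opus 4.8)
The plan is to establish the four-way chain in three essentially independent pieces, two of which are immediate. The rightmost inequality $T^*_{NA} \le \overline T + o(\log N)$ is precisely the achievability bound of Theorem \ref{ODMthm}, which applies here because the channel in question has the only-defects-matter property; no new argument is required. The middle inequality $T^*_A \le T^*_{NA}$ follows straight from the definitions: any nonadaptive test design is a degenerate adaptive design in which the pool $\vec x_t$ simply ignores its arguments $y_1,\dots,y_{t-1}$. Hence the admissible adaptive designs of a given length include all nonadaptive ones, so minimising the number of tests over the larger class can only lower the optimum.

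The substantive work is the leftmost inequality, the converse for \emph{adaptive} testing. I would reprove Theorem \ref{conversegt}, now permitting the pool at time $t$ to be a function $\vec x_t = \vec x_t(y_1,\dots,y_{t-1})$ of the earlier outcomes. Fix the genie-revealed set $\mathcal L \subset \K$ and let $\K$ be uniform among the $\binom{N-|\L|}{K-|\L|}$ size-$K$ sets containing $\L$. The opening steps of the proof of Theorem \ref{conversegt}---writing $\log \binom{N-|\L|}{K-|\L|} = \HH(\K \mid \L)$, peeling off the detection function via Fano's inequality, and passing to $\I(\mat X_{\K\setminus\L} : \vec Y \mid \mat X_\L)$ by the data-processing inequality together with the only-defects-matter property---transfer verbatim, since none of them invokes the nonadaptive structure.

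The one place where adaptivity genuinely bites is the single-letterisation, i.e. the analogue of \eqref{helloagain}--\eqref{oncemore}. Here I would follow Shannon's feedback-converse argument, as exposited by Cover and Thomas \cite[Section 7.12]{CoverThomas}, and write
\[ \I(\mat X_{\K\setminus\L} : \vec Y \mid \mat X_\L) = \sum_{t=1}^T \big( \HH(Y_t \mid Y_1,\dots,Y_{t-1}, \mat X_\L) - \HH(Y_t \mid Y_1,\dots,Y_{t-1}, \mat X_\K) \big). \]
Conditional on $(Y_1,\dots,Y_{t-1})$ the pool $\vec x_t$ is determined, so the second term collapses by memorylessness to $\HH(Y_t \mid \vec X_{\K t})$ exactly as in the nonadaptive case, while the first is bounded by $\HH(Y_t \mid \vec X_{\L t})$ because conditioning reduces entropy. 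This yields $\I(\mat X_{\K\setminus\L} : \vec Y \mid \mat X_\L) \le T\, \I(\vec X_{\K\setminus\L} : Y \mid \vec X_\L)$, and the final rewriting $\I(\vec X_{\K\setminus\L} : Y \mid \vec X_\L) = \I(\vec X_{\K\setminus\L} : \vec X_\L, Y)$ uses only the independence of the membership indicators \emph{within a single test}, which a feedback design does not disturb. Optimising over the inclusion probability $p$ and maximising over $\L$ then gives $T^*_A \ge \underline T - o(\log N)$, closing the chain.

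The hard part is precisely this single-letterisation step. Under feedback the columns $\mat X_\L$ and $\mat X_{\K\setminus\L}$ are no longer jointly independent across time---the membership of a non-revealed defective at time $t$ depends on the whole past through $(Y_1,\dots,Y_{t-1})$---so one cannot factor the $T$-fold mutual information directly, as was done in the nonadaptive proof. The resolution, and the main technical content of the argument, is to condition on the entire outcome history before invoking memorylessness, which restores the per-test structure term by term and recovers the same single-test mutual information (hence the same denominator) that appears in $\underline T$.
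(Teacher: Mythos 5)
Your proposal is correct and follows essentially the same route as the paper: the upper bound is quoted from the nonadaptive achievability theorem, the middle inequality is the trivial inclusion of designs, and the adaptive converse is obtained by rerunning the Fano/data-processing argument and single-letterising \`a la Shannon's feedback converse, conditioning on the outcome history so that each pool becomes determined before memorylessness is applied. This is exactly the step the paper labels ``$\vec X_{\J t}$ a function of $Y_1,\dots,Y_{t-1},\J$'', so no further comparison is needed.
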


\begin{proof}
  The third inequality was proven in Theorem 6.3.  The second inequality is trivial, as nonadaptive
  group testing is merely a special case of adaptive group testing where the tester chooses
  to ignore the information of previous test results.
  
  To prove the first inequality, we adapt the proof of
  Theorem \ref{conversegt}, and Shannon's proof that feedback fails to improve channel
  capacity \cite[Theorem 6]{Shannonfeedback}, as exposited by Cover and Thomas \cite[Theorem 7.12.1]{CoverThomas}.
  
  We begin exactly the same  way as the proof of Theorem \ref{conversegt}, to get
    \begin{equation} 
      \log \binom{N-|\L|}{K-|\L|} 
        \leq 1 + \epsilon \log \binom{N-|\L|}{K-|\L|} + \I(\K : \Khat \mid \L) . \label{first}
    \end{equation}
  We again use the data processing inequality (but in a slightly different way)
  and the only-defects-matter property on the mutual information term in \eqref{first},
  to write
    \begin{align}
      \I(&\K : \Khat \mid \L) \notag \\
        &\leq \I(\K\setminus\L : \vec Y \mid \L) \tag{data processing and only-defects-matter} \\
        &= \HH(\vec Y \mid \L) - \HH(\vec Y \mid \K) \tag{HB7} \\
        &= \sum_{t=1}^T \big( \HH(Y_t \mid Y_1, \dots, Y_{t-1}, \K)
             - \HH(Y_t \mid Y_1, \dots, Y_{t-1}, \L) \big) \tag{chain rule} \\
        &= \sum_{t=1}^T \big( \HH(Y_t \mid Y_1, \dots, Y_{t-1}, \L, \vec X_{\L t})
             - \HH(Y_t \mid Y_1, \dots, Y_{t-1}, \K, \vec X_{\K t}) \big)
             \tag{$\vec X_{\J t}$ a function of $Y_1, \dots, Y_{t-1}, \J$} \\
        &\leq \sum_{t=1}^T \big( \HH(Y_t \mid \vec X_{\L t})
             - \HH(Y_t \mid Y_1, \dots, Y_{t-1}, \K, \vec X_{\K t}) \big)
             \tag{cond.\ reduces entr.} \\     
        &= \sum_{t=1}^T \big( \HH(Y_t \mid \vec X_{\L t}
             - \HH(Y_t \mid \vec X_{\K t}) \big) , \label{third}
    \end{align}
  where \eqref{third} is because, conditional on $\vec X_{\K t}$, we know that $Y_t$ is independent
  of previous $Y$s and the defective set $\K$.
  
  We can now pick back up with the proof of Theorem \ref{conversegt}, two lines above \eqref{helloagain}, to complete the proof.
\end{proof}

It is tempting to wonder if in fact non-adaptive and adaptive group testing require exactly the same number of tests $T^*_A = T^*_{NA}$.
This would be in contrast to results for the deterministic model in the zero-error case, where it is known that adaptive group testing can be performed in strictly fewer tests than nonadaptive (in the $K\to \infty, N\to\infty$ regime) \cite[page 139]{DuHwang}.

Similarly, Shannon showed that feedback for channel coding does help in the zero-error case, but not the arbitrarily-small-error case \cite{Shannonfeedback}.

\section{Further work}

The investigation of applying information theoretic ideas to group testing is at a very early
stage and there are many open questions.

\begin{description}
\item[Can we find an analogue of Theorem \ref{ODMthm} where nondefects also matter?] \mbox{} \newline
That is, can we drop the only-defects-matter property.
The proof of Theorem \ref{ODMthm} relies on the fact that, given $\K \cap \Khat =: \L$, the random variables
$\Prob (Y \given \vec X, \K \text{ defective})$ and $\Prob (Y \given \vec X, \Khat \text{ defective})$ are
conditionally independent.  While this is no longer true when nondefects matter too, they are still conditionally
independent given $\L$ and the total number of items $n$ in the test.  Perhaps this promises a way forwards.

\item[What is $T^*$ for different channels?] Is there a reliable method to calculate, exactly or approximately, $T^*$ for different
channels.  Is there an easy way to do so? What are the minimising choices of $p$?

\item[Can we close the gap between $\overline T$ and $\underline T$] to tighten our bounds?

\item[Does nonadaptive group testing require more tests] than adaptive? That is, does $T^*_A = T^*_{NA}$ asymptotically.

\item[The wider view:] What other information-theoretic techniques can be useful?
Sejdinovic and Johnson \cite{Sejdinovic} have tried using message-passing decoding algorithms
for computationally feasible defective set detection. Cheraghchi and coauthors \cite{Cheraghchi} have
used information theoretic techniques to study group testing on graphs. What else is there to try?
\end{description}

That's just a few brief suggestions.

\section*{Notes}
\addcontentsline{toc}{section}{Notes}

This chapter has benefited from discussions with Oliver Johnson, Dino Sejdinovic, and Robert Piechocki.

Group testing was first studied by Dorfman \cite{Dorfman} in the context of testing soldiers' blood for syphilis.

The information theoretic approach to group testing is due to Atia and Saligrama \cite{AtiaSaligrama}. A recent paper
of Sejdinovic and Johnson was also useful \cite{Sejdinovic}.

The textbook of Du and Hwang \cite{DuHwang} was useful for material on group testing. The textbook of Cover and Thomas
\cite{CoverThomas} was useful for material on channel coding.

\chapter{Conclusions and further work}

In this thesis, we've examined a number of different ways of combating interference in wireless networks.

\begin{itemize}
\item In Chapter 3 we saw how the simple interference-as-noise technique can be effective over short
hops in well-structured networks.

\item In Chapter 4, we saw how interference alignment can give tight bounds on the performance
of large random networks.

\item In Chapter 5, we examined the tradeoff between delay and communication rate when using
ergodic interference alignment.

\item In Chapter 6, we examined how group testing can help with network performance, and how
channel coding techniques can help with group testing.
\end{itemize}

As we've gone through, we have left a number of pointers to open questions and further
work (in particular, see Sections 3.4, 4.6, 5.6, and 6.6).  We now have the opportunity to
take a brief look at the wider picture.

\begin{itemize}
\item While interference alignment has been an important theoretical breakthrough, it
has had few practical benefits to date.  The work in Chapter 5 of this thesis on delay--rate
tradeoff is a step in the right direction, but more work is needed.  How can we reduce the
complexity of schemes?  Can we reduce the amount of channel state information required?
Can we reduce blocklengths further?  Are the schemes robust to errors in channel estimation?

\item There is no Shannon's channel coding theorem for networks.  That is, there is no general
result that tells one how to calculate the capacity region (or even just the sum-capacity) of
a network.  (The best current result is the cutset bound of Cover and Thomas \cite[Theorem 15.10.1]{CoverThomas}.)
Even a result for the Gaussian case seems far off -- Jafar refers to a result in the Gaussian case for only
interference networks as ``the holy grail of network information theory'' \cite[page 1]{Jafar}.

\item Most capacity theorems for networks -- including those in this thesis -- are nonconstructive.
That is, no explicit codes are given.  Are some coding schemes particularly well suited to
interference alignment schemes, or can standard codes (like low-density parity-check codes)
be adapted to this new area?

\item Theorems about sum-capacity (such as ours in Chapter 4) do not give everything we want to know
about a network.  After all, a network operating solely at its optimal sum-rate may provide very poor
performance for some unfortunate users.  Concepts of fairness and cooperation also need to be taken into account.

\item As wireless devices become ever more ubiquitous, many engineering problems in this area
become ever more severe.  Networks must be set up `ad hoc' with little prior knowledge (the example
in Section 6.1 shows one way of approaching this); also, interference
will become a much bigger problem, and `green' technologies with low power consumption will be important.

\item Work on the connection between group testing and channel coding is at a very early stage --
in Section 6.6 we listed a number of future directions for research.  Are other mathematical problems
that involve inference about unknown quantities approachable from an information theoretic
point of view?  How far can Shannon's work at a 1940s telephone company take us?
\end{itemize}

\addtocontents{toc}{\protect\addvspace{10 pt}}
\addcontentsline{toc}{section}{\protect\hspace{-1.5 em} \emph{References}}
\nobibintoc



%

\end{document}